\begin{document}

\title{\sc A Taxonomy of Non-dictatorial Unidimensional Domains\thanks{We would like to thank the Editor Felix Brandt and the Associate Editor
for their constructive comments and suggestions. We are particularly grateful to two anonymous referees for their  very detailed and insightful reports that have substantially improved the paper.
We are grateful to Clemens Puppe and John Weymark for their helpful comments and suggestions. The research reported here was supported by
the Program for Professor of Special Appointment (Eastern Scholar) at Shanghai Institutions of Higher Learning (No.~2019140015), and
the Fundamental Research Funds for the Central Universities (No.~2018110153). This paper was previously circulated under the title ``\emph{A Taxonomy of Non-dictatorial Domains}'' \href{https://ink.library.smu.edu.sg/soe_research/2408/}{\textcolor[rgb]{0.00,0.00,1.00}{SMU Economics and Statistics Working Paper Series, Paper No.~22-2020}.}}}
\author{Shurojit Chatterji\thanks{School of Economics, Singapore Management University, Singapore.}~ and
Huaxia Zeng\thanks{School of Economics, Shanghai University of Finance and Economics, and
the Key Laboratory of Mathematical Economics (SUFE), Ministry of Education, Shanghai 200433, China.}}
\date{\today}
\maketitle

\begin{abstract}
\noindent
Non-dictatorial preference domains allow the design of unanimous social choice functions (henceforth, rules) that are non-dictatorial and strategy-proof. On a class of preference domains called unidimensional domains, we show that the unique seconds property
(introduced by \citealp*{ACS2003}) characterizes all non-dictatorial domains. Subsequently, we provide an exhaustive classification of all non-dictatorial, unidimensional domains, based on a simple property of two-voter rules called invariance. The domains constituting the classification are semi-single-peaked domains (introduced by \citealp*{CSS2013}) and semi-hybrid domains (introduced here) which are two appropriate weakenings of single-peaked domains and shown to allow strategy-proof rules to depend on non-peak information of voters' preferences; the canonical strategy-proof rules for these domains are projection rules and hybrid rules respectively. As a refinement of the classification, single-peaked domains and hybrid domains emerge as the only unidimensional domains that force strategy-proof rules to be determined completely by preference peaks.

\medskip
\noindent \textit{Keywords}: Strategy-proofness; invariance; unidimensional domains; semi-single-peaked preference; semi-hybrid preference

\noindent \textit{JEL Classification}: D71.
\end{abstract}

\section{Introduction}\label{sec:introduction}

An overarching theme in the theory of incentives is that unanimous social choice functions
(henceforth, \emph{rules})
that are non-manipulable are dictatorial (and hence unsuitable for social decisions),
unless preferences of voters are restricted in particular ways so as to yield \textit{non-dictatorial domains} that allow the design of non-dictatorial, strategy-proof rules.
Indeed, the domain of single-peaked preferences in the classical voting model \citep{M1980} and the domain of quasi-linear preferences in models with monetary compensations \citep{R1979}
are leading instances of non-dictatorial domains.
In this paper, we restrict attention to the voting model, where the large literature
notwithstanding\footnote{The literature is taken up in Section \ref{sec:literature}.},
a comprehensive classification of \emph{all} non-dictatorial domains in terms of the design opportunities they afford has remained elusive,
and where in particular, not much is known about the structure of preference domains
that allow strategy-proof rules to vary with non-peak information on preferences.

Single-peaked domains are the most prominent instance of non-dictatorial domains in the voting model and are widely applied to models in public good provision, electoral competition, location theory among others.
A natural question that recent literature has made progress on is whether the full force of single-peakedness is needed to guarantee the existence of strategy-proof rules.
It is known that a particular weakening of single-peakedness, \emph{semi-single-peakedness} introduced by \citet{CSS2013}, is compatible with strategy-proofness and two other attractive axioms.
The first of these is anonymity, an axiom that spreads power evenly across voters and is hence in a sense the polar opposite of dictatorship. The second is the tops-only property which asserts that the rule is completely determined by the peaks of the voters preferences.\footnote{Rules that possess the tops-only property are easier to describe and operationalize.
}
But there are presumably other ways of relaxing single-peakedness than semi-single-peakedness, which may remain compatible with strategy-proofness but not with anonymity.
Moreover, are there other instances of non-dictatorial domains that are significantly different from single-peaked domains in their underlying description, but that allow strategy-proof rules to vary with non-peak information on voters' preferences and consequently afford very different design opportunities? These questions lie at the heart of the theory of incentives since they follow naturally from the Gibbard-Satterthwaite Theorem \citep{G1973,S1975}, and are evidently important from the practical standpoint of applying the theory to the actual design of mechanisms. These are the sort of questions this paper is concerned with. We identify three variants of single-peakedness which along with single-peakedness constitute an exhaustive classification of non-dictatorial domains (that satisfy a condition called unidimensionality) and specify properties of canonical strategy-proof rules that they admit.

\begin{figure}[t]
\centering
\subfigure[\label{fig:a}]{
\includegraphics[width=0.3\textwidth]{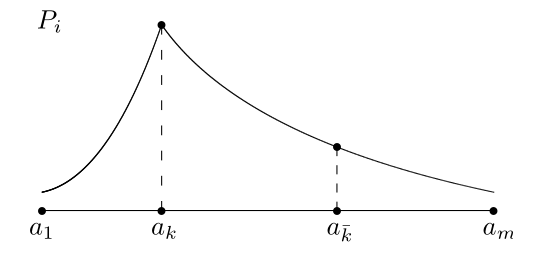}}
\subfigure[\label{fig:b}]{
\includegraphics[width=0.3\textwidth]{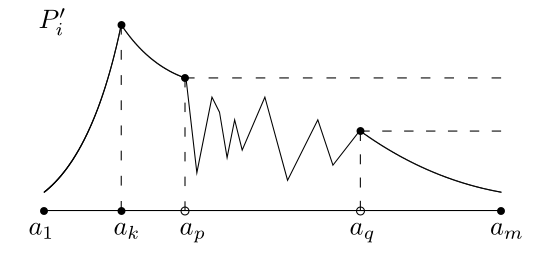}}
\subfigure[\label{fig:c}]{
\includegraphics[width=0.3\textwidth]{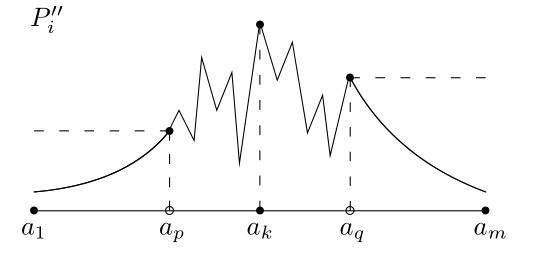}}\\[-0.5em]
\subfigure[\label{fig:d}]{
\includegraphics[width=0.3\textwidth]{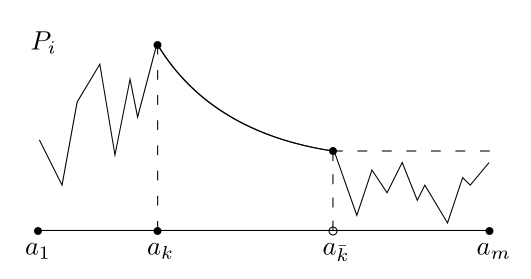}}
\subfigure[\label{fig:e}]{
\includegraphics[width=0.3\textwidth]{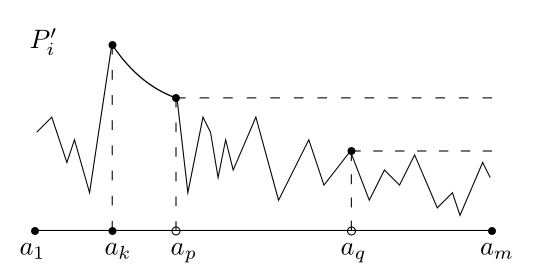}}
\subfigure[\label{fig:f}]{
\includegraphics[width=0.3\textwidth]{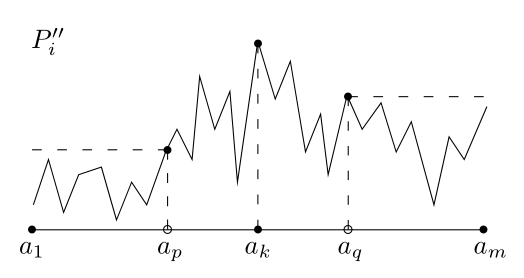}}
\caption{A single-peaked preference (a), two hybrid preferences (b) - (c),
a semi-single-peaked preference (d) and two semi-hybrid preferences (e) - (f) on the line $\mathcal{L}^A$}\label{fig:heuristic}
\end{figure}

\subsection{Preference domains}
We briefly introduce single-peaked preferences and three variants here.
These require an underlying tree on the alternatives.
However, for ease of presentation, we assume henceforth that all alternatives are exogenously located on a \emph{line}, i.e., a tree that linearly orders all alternatives, say $\mathcal{L}^A = (a_1, \dots, a_m)$.\footnote{The original definition of single-peaked preferences hypothesized a linear order on alternatives, which was subsequently generalized to single-peakedness on a tree by \citet{D1982}. Trees expand the class of models to which the theory can be applied. More importantly, the versions formulated on trees are the ones that arise as necessary conditions in our formulation.}
A preference is single-peaked on the line $\mathcal{L}^A$, if there exists a top-ranked alternative, called the peak, such that the preference declines on each side of the peak when alternatives moves farther away from the peak (see Figure \ref{fig:a}).
A hybrid preference weakens the single-peakedness restriction by letting the relative rankings of alternatives clustered in a fixed interval of the line be more permissive.
First, two distinct alternatives, called \emph{dual-thresholds}, are exogenously fixed on the line $\mathcal{L}^A$, say $a_p$, $a_q $ and $1 \leq p< q \leq m$, which separate the line $\mathcal{L}^A$ into the left interval between $a_1$ and $a_p$, the middle interval between $a_p$ and $a_q$, and
the right interval between $a_q$ and $a_m$.
In a hybrid preference, the conventional single-peakedness restriction only prevails on the left and right intervals (see Figure \ref{fig:b} and Figure \ref{fig:c}),
while all alternatives in the middle interval are arbitrarily ranked subject to an additional condition:
if the preference peak is located to the left of $a_p$ (respectively, the right of $a_q$), then $a_p$ (respectively, $a_q$) is the best alternative within the middle interval (see Figure \ref{fig:b}).
On the other hand, the notion of semi-single-peakedness weakens single-peakedness differently.
Instead of dual-thresholds, one alternative $a_{\bar{k}}$, called a \emph{threshold}, is exogenously fixed.
In a semi-single-peaked preference, the conventional single-peakedness restriction only prevails on the ``middle'' interval between the peak and the threshold, while any alternative not in this interval is ranked below its projection on the interval in an arbitrary manner (see Figure \ref{fig:d}).
Finally, we generate a semi-hybrid preference by weakening the single-peakedness restriction imposed on a hybrid preference to semi-single-peakedness.
According to the dual-thresholds $a_p$ and $a_q$ on the line $\mathcal{L}^A$, a semi-hybrid preference whose preference peak is located to the left of $a_p$, is first semi-single-peaked on the line $\mathcal{L}^A$ w.r.t.~the threshold $a_p$, and moreover ranks $a_q$
above all alternatives in the right interval (see Figure \ref{fig:e}). A symmetric condition holds when the peak is located to the right of $a_q$.
A semi-hybrid preference that has the peak located in the middle interval between $a_p$ and $a_q$ is significantly less restrictive, as it only requires the dual-thresholds $a_p$ and $a_q$ to be top-ranked within the left and right intervals respectively (see Figure \ref{fig:f}).

Single-peaked preferences were initially introduced by \citet{B1948} towards resolving the Condorcet paradox and have since become the cornerstone of modern political economy, aggregation theory, social choice theory and mechanism design theory.
Semi-single-peakedness significantly weakens the  restriction of single-peakedness, but suffices for sustaining a strategy-proof rule that satisfies anonymity and the tops-only property in the aforementioned formulations.
Hybrid preferences can naturally arise in a multidimensional strategic voting model, when a voter aggregates her multidimensional assessments over the same set of candidates to formulate a one-dimensional preference \citep[see Section 4 of][]{R2015}, or
when a voter's multidimensional preference is reduced to a one-dimensional preference under voting constraints of \citet{BMN1997}. Hybrid preferences may also emerge naturally in the public good allocation environment. This paper introduces semi-hybrid preferences as a significant weakening of the hybridness requirement and thereby expands the scope of design in these formulations.

We recall and adapt the intuitive introduction of hybrid preferences in the introduction of \citet{CRSSZ2022} to
illustrate the relevance of semi-hybrid preferences in the public good allocation environment.
Imagine a region where an urban zone stands in the center and is surrounded by a large suburban area.
A railway connects two towns in the suburban area, and goes through the central urban zone.
Each station represents a possible location for allocating a good public facility, like a sports complex.
Two particular stations (see for instance $a_p$ and $a_q$ in Figure \ref{fig:e}) in the central urban zone separate the whole railway into three intervals;
all stations lying in the middle interval (e.g., the interval between $a_p$ and $a_q$ in Figure \ref{fig:e}) are in the central urban zone and hence called urban stations, while each station in one of the two other intervals (e.g., the interval left to $a_p$ and the interval right to $a_q$ in Figure \ref{fig:e}) is in the suburban area and hence called a suburban station.
The central urban zone also possesses a modern metro transportation system that fully connects all urban stations and complements the railway, whereas
 all suburban stations are only connected via the railway.
Thus, the two particular stations can be viewed as two transportation hubs that serve as gates to the respective suburban zones.
A citizen living nearby a suburban station formulates her preference over all locations according to the following two principles:
(i) any suburban station beyond the distant transportation hub is not acceptable, and
(ii) all stations between her own location and the distant transportation hub are compared according to the distance measured by both the railway and the metro.
Thus, the citizen's preference must be semi-hybrid on the locations along the railway w.r.t.~the two transportation hubs:
single-peakedness prevails on the interval between her location and the proximate transportation hub,
the proximate transportation hub is ranked above any other urban stations, and
the distant transportation hub is better than all suburban stations beyond (see the preference in Figure \ref{fig:e}).
Similarly, a citizen living in the central urban zone does not accept any suburban station, and has arbitrary preferences on all urban stations that differ from her peak. This indicates that the citizen also has a semi-hybrid preference (see the preference in Figure \ref{fig:f}).

It is well known from the seminal work \citet{M1980} that all anonymous, tops-only and strategy-proof rules on the domain of all single-peaked preferences are characterized to be \emph{phantom voter rules}.
The domain of all semi-single-peaked preferences admits an anonymous, tops-only and strategy-proof rule,
called a \emph{projection rule} (see Definition \ref{def:projection} in Section \ref{sec:rules}), which is indeed  the phantom voter rule that fixes all phantom voters' ballots to be the threshold.
Recently, \citet{CSS2013}  showed that semi-single-peakedness is also necessary for the existence of an anonymous, tops-only and strategy-proof rule on a class of rich domains. We introduce here a fairness property of two-voter rules called \emph{invariance}
which is substantially weaker than anonymity in that it pertains to the behavior of the rule at exactly two preference profiles -
invariance requires the SCF to choose the same social outcome at two test preference profiles where the two voters are endowed with two completely reversed preferences.
We note that on both the domain of all hybrid preferences and the domain of all semi-hybrid preferences, invariance is incompatible with any two-voter, tops-only and strategy-proof rule. Indeed, the relative rankings of alternatives in the middle part are too permissive, and consequently every two-voter, tops-only and strategy-proof rule behaves like a dictatorship on the middle interval (see for instance a \emph{hybrid rule} in Definition \ref{def:hybridrule} of Section \ref{sec:rules}) and is accordingly incompatible with invariance.\footnote{More specifically, consider the domain of all hyrid/semi-hybrid preferences on the line $\mathcal{L}^A$ w.r.t.~the dual-thresholds $a_p$ and $a_q$ in Figure \ref{fig:b} or \ref{fig:e}. It is clear that the domain contains two completely reversed preferences where $a_1$ and $a_m$ are respectively top-ranked. Given a two-voter, tops-only and strategy-proof rule, assume w.l.o.g.~that voter 1 dictates on the middle interval. Then, at the test profile where voter 1's preference peak is $a_1$ and voter 2's peak is $a_m$,
in view of voter 1's dictatorship on the middle interval and strategy-proofness,
the social outcome must lie in the interval between $a_1$ and $a_p$,
while at the other test profile, the social outcome lies in the interval between $a_q$ and $a_m$.
Therefore, the SCF violates invariance.}
We accordingly conclude that preserving single-peakedness  on the middle region  (as is the case with semi-single peaked preferences) is critical for the existence of invariant, tops-only and strategy-proof rules, and provide a characterization of all such rules as projection rules.

We now turn to the second theme.
Note first that the domain of all single-peaked preferences is strictly included in the domain of all semi-single-peaked preferences.
We introduce a notion called a \emph{critical spot} to formally address the flexibility embedded in a domain of semi-single-peaked preferences, and show that its presence is necessary and sufficient in avoiding the reduction of a domain of semi-single-peaked preferences to a domain of single-peaked preferences (see Section \ref{sec:PNT}).
Now, starting with a domain of single-peaked preferences, as we add non-single-peaked preferences to move towards a domain of semi-single-peaked preferences, the set of anonymous, tops-only and strategy-proof rules of course shrinks\footnote{In most cases, only the projection rule in the class of phantom voter rules survives according to the characterization result of \citet{BM2020}.},
but more importantly these non-single-peaked preferences create space for the emergence of strategy-proof rules that utilize non-peak information on voters' preferences.
We show that critical spots turn out to be sufficient for the existence of non-tops-only and strategy-proof rules.
Analogously, critical spots are also embedded in a domain of semi-hybrid preferences, distinguish the domain from a domain of hybrid preferences, and support the design of non-tops-only and strategy-proof rules.
Thus, while semi-single-peaked and semi-hybrid preferences are more permissive (which is desirable for applications of mechanism design),
they do however admit non-tops-only and strategy-proof rules which make the task  confronting the designer, namely that of  characterizing and choosing among rules, correspondingly harder, as a full characterization of strategy-proof rules then depends delicately on how the domain is expanded.

\subsection{An outline of the results}

We conclude with an overview of our methodology and results.

We classify all non-dictatorial domains in a class of unidimensional domains.
These domains satisfy \emph{path-connectedness}\footnote{This is a ``richness'' condition that is formulated as a connectedness property of a graph on the set of alternatives; it is used here and in \citet{CSS2013} as a way of incorporating sufficiently many preferences so as to give  bite to the axiom of strategy-proofness.},
a condition called \emph{diversity} which requires the presence of two completely reversed preferences, and
a technical condition called \emph{leaf symmetry}.\footnote{This is introduced to handle some preferences whose peaks are the leaves of the graph generated according to the condition of path-connectedness.}
We first show that a unidimensional domain is a non-dictatorial domain if and only if it satisfies \emph{the unique seconds property} of \citet{ACS2003} (see the Auxiliary Proposition in Section \ref{sec:auxiliary});
in particular, for any domain satisfying the unique seconds property, we construct a strategy-proof rule that is ``almost'' dictatorial (see Section \ref{sec:auxiliary}).

We subsequently focus on the non-dictatorial domains identified above and explore more meaningful non-dictatorial strategy-proof rules than almost dictatorships.
This indeed requires us to investigate the details of the structure of non-dictatorial domains from a global perspective that goes well beyond the unique seconds property. We proceed by investigating the existence and non-existence of an invariant, tops-only and strategy-proof rule on a non-dictatorial, unidimensional domain.
Theorem \ref{thm:invariance} in Section \ref{sec:domainclassification} provides the following classification of non-dictatorial, unidimensional domains: semi-single-peakedness restriction is necessary and sufficient for the existence of an invariant, tops-only and strategy-proof rule, and
furthermore all such rules are two-voter projection rules,
while semi-hybridness is necessary and sufficient for the non-existence of an invariant, tops-only and strategy-proof rule, and
furthermore every two-voter, tops-only and strategy-proof rule is a hybrid rule that restricted to the middle interval behaves like a dictatorship.
We further use this result to extend the classification to the case of $n$-voter SCFs by replacing invariance by anonymity (see Corollary \ref{cor:anonymity} in Section \ref{sec:domainclassification}). To sum up, the existence of a two-voter, tops-only and strategy-proof rule satisfying invariance on unidimensional domains leads to a comprehensive classification of the design possibilities for such domains.
The resulting classification may be seen as reinforcing the view
that appropriate weakenings of single-peakedness characterize non-dictatorial domains, addressing thereby a long standing conjecture in this field
(see Section 6.5.2 in \citealp{B2011} and Section 4.5 in \citealp{BBM2020}).

Finally, we specialize to tops-only domains, i.e., domains where all strategy-proof rules are endogenously tops-only rules.
It is evident that a tops-only domain can never accommodate a critical spot (which ensures the existence of a non-tops-only and strategy-proof rule). Accordingly, we refine the aforementioned classification by showing that on a non-dictatorial, tops-only, unidimensional domain,
the existence of an anonymous and strategy-proof rule leads us to a classical domain of single-peaked preferences, while its non-existence, to a domain of hybrid preferences (see Corollary \ref{cor:anonymity} in Section \ref{sec:refinement}).

The paper is organized as follows.
In Section \ref{sec:preliminaries}, we specify the model.
In Section \ref{sec:domains}, we formally introduce all preferences domains.
All results are presented in Section \ref{sec:results}, while Section \ref{sec:conclusion} contains a review of the literature and some final remarks.
All proofs are gathered in an Appendix.

\section{Preliminaries}\label{sec:preliminaries}

Let $A=\{a, b, c, \dots\}$ be a finite set of alternatives with $|A| = m \geq 3$.
Let $N=\{1, \dots, n\}$ be a finite set of voters with $|N|=n \geq 2$.
Each voter $i$ has a (strict) preference order $P_{i}$ over $A$ which is a linear order.
For any $a, b \in A$, $a\mathrel{P_{i}}b$ is interpreted as ``$a$ is strictly preferred to $b$ according to $P_{i}$".\footnote{In a table,
we specify a preference ``vertically''. In a sentence, we specify a preference ``horizontally''.
For instance, $P_i=(a\,b\,c\,\cdots)$ represents a preference where $a$ is the top, $b$ is the second best,
$c$ is the third ranked alternative while the rest of rankings in $P_{i}$ are arbitrary.}
Let $r_{k}(P_{i})$ denote the $k$th ranked alternative in $P_{i}$ for all $k \in \{1, \dots, m\}$.
Given a subset $B \subset A$,\footnote{Throughout the paper, $\subset$ and $\subseteq$ denote the strict and weak inclusions respectively.}
let $\max^{P_i}(B)$ and $\min^{P_i}(B)$ respectively denote the most and the least preferred alternatives in $B$ according to $P_i$.
Two preferences $P_i$ and $P_i'$ are \textbf{completely reversed} if for all $a, b \in A$, $[a\mathrel{P_i}b] \Leftrightarrow [b\mathrel{P_i'}a]$.
Let $\mathbb{P}$ denote the set containing all linear orders over $A$.
The set of all admissible orders is a set $\mathbb{D} \subseteq \mathbb{P}$,
referred to as the \textbf{preference domain}.
In particular, we call $\mathbb{P}$ \textbf{the universal domain}. When $\mathbb{D} \neq \mathbb{P}$, $\mathbb{D}$ is referred to as a restricted domain.
For notational convenience,
let $\mathbb{D}^{a} = \{P_{i}\in \mathbb{D}:r_{1}(P_{i}) = a\}$
denote the set of preferences with the peak $a$.
Accordingly, a domain $\mathbb{D}$ is \textbf{minimally rich} if $\mathbb{D}^{a} \neq \emptyset$ for all $a \in A$.
A preference profile $P = (P_1, \ldots, P_n) = (P_{i}, P_{-i}) \in {\mathbb D}^n$ is an $n$-tuple of orders where
$P_{-i}$ represents a collection of $n-1$ voters' preferences without considering voter $i$.

Fixing a domain $\mathbb{D}$,
two alternatives $a, b \in A$ are said \textbf{adjacent}, denoted $a \sim b$, if there exist $P_i, P_i' \in \mathbb{D}$ such that
$r_1(P_i) =r_2(P_i') = a$, $r_1(P_i') = r_2(P_i) = b$ and $r_k(P_i) = r_k(P_i')$ for all $k \in \{3, \dots, m\}$.
Domain $\mathbb{D}$ is called a \textbf{path-connected} domain if for all distinct $a,b \in A$,
there exists a sequence of non-repeated alternatives $(x_1, \dots, x_v)$ such that $x_1 = a$, $x_v = b$ and $x_k \sim x_{k+1}$ for all $k \in \{1, \dots, v-1\}$.
It is evident that the universal domain $\mathbb{P}$ is a path-connected domain as any two distinct alternatives are adjacent.
Clearly, path-connectedness implies minimal richness.
Moreover, $\mathbb{D}$ is said to satisfy \textbf{diversity} if it contains two completely reversed preferences.\footnote{Diversity has been widely presumed in the Condorcet domain literature,
e.g., \citet{M2009} and \citet{P2018}, where it plays a key role in pinning down maximal Condorcet domains.}
Throughout the paper, we fix $\underline{P}_i$ and $\overline{P}_i$ as two completely reversed preferences in a domain satisfying diversity, and moreover let $\underline{P}_i$ and $\overline{P}_i$ be such that
$a_k\mathrel{\underline{P}_i}a_{k+1}$ and $a_{k+1}\mathrel{\overline{P}_i}a_k$ for all $k \in \{ 1, \dots, m-1\}$,
by relabelling alternatives as necessary.

\subsection{Social Choice Functions}

A \textbf{Social Choice Function} (or SCF) is a map $f: \mathbb{D}^{n} \rightarrow A$.
At every profile $P \in \mathbb{D}^{n}$, $f(P)$ is referred to as the ``socially desirable'' outcome associated to this preference profile.
An SCF $f: \mathbb{D}^{n} \rightarrow A$ is \textbf{unanimous}
if for all $a \in A$ and $P \in \mathbb{D}^{n}$, we have $[r_{1}(P_{i}) = a$ for all $i \in N] \Rightarrow [f(P) = a]$.
Henceforth, for simplicity, we call a unanimous SCF a \textbf{rule}.
An SCF $f: \mathbb{D}^{n} \rightarrow A$ is \textbf{strategy-proof}
if for all $i \in N$, $P_{i}, P_{i}' \in \mathbb{D}$ and $P_{-i} \in \mathbb{D}^{n-1}$,
we have either $f(P_i, P_{-i}) = f(P_i', P_{-i})$ or $f(P_i, P_{-i})\mathrel{P_i}f(P_i', P_{-i})$.
A prominent class of SCFs is the class of tops-only SCFs.
The value of these SCFs at every preference profile depends only on voters' peaks.
Formally, an SCF $f: \mathbb{D}^{n} \rightarrow A$ satisfies {\bf the tops-only property}
if for all $P, P' \in \mathbb{D}^{n}$, we have $[r_{1}(P_{i}) = r_{1}(P_{i}')$ for all $i \in N]\Rightarrow [f(P) = f(P')]$.
Last, an SCF $f: \mathbb{D}^{n} \rightarrow A$ is \textbf{anonymous} if for all preference profiles $(P_1, \dots, P_n) \in \mathbb{D}^n$ and
permutations $\sigma: N \rightarrow N$, we have $f(P_1, \dots, P_n) = f\big(P_{\sigma(1)}, \dots, P_{\sigma(n)}\big)$.
In addition, we weaken anonymity to a new axiom called invariance on a two-voter SCF,
which requires that on a domain satisfying diversity,
the SCF chooses the same alternative at the two profiles where the voters are endowed with the two completely reversed preferences.
Formally, given the two completely reversed preferences $\underline{P}_i,\overline{P}_i \in \mathbb{D}$,
a two-voter SCF $f: \mathbb{D}^2 \rightarrow A$ is \textbf{invariant} if we have $f(\underline{P}_1, \overline{P}_2) = f(\overline{P}_1, \underline{P}_2)$.

Dictatorships are rules that are tops-only and strategy-proof on  arbitrary domains.
Formally, an SCF $f: \mathbb{D}^n \rightarrow A$ is a \textbf{dictatorship} if there exists $i \in N$ such that $f(P) = r_1(P_i)$ for all $P \in \mathbb{D}^n$. It is clear that anonymity is polar opposite to a dictatorship, and invariance contrasts dictatorships as well.
Given a nonempty subset $B \subseteq A$, we say that an SCF $f: \mathbb{D}^n \rightarrow A$ \textbf{behaves like a dictatorship on $B$} if there exists $i \in N$ such that $f(P_1, \dots, P_n) = r_1(P_i)$ for all $(P_1, \dots, P_n) \in \mathbb{D}^n$ with $r_1(P_1), \dots, r_1(P_n) \in B$.
The Gibbard-Satterthewaite Theorem shows that on the universal domain $\mathbb{P}$,
an SCF is a strategy-proof rule if and only if it is a dictatorship.
The same dictatorship characterization result also holds on some restricted domains (see the literature listed in Section \ref{sec:literature}).
We call a domain $\mathbb{D}$ a \textbf{dictatorial domain} if every strategy-proof rule $f: \mathbb{D}^n \rightarrow A$, $n\geq 2$, is a dictatorship, and call any domain that admits a non-dictatorial, strategy-proof rule a \textbf{non-dictatorial domain}.
Clearly, a domain that admits an anonymous/invariant and strategy-proof rule is a non-dictatorial domain.
Conversely, a non-dictatorial domain may not admit an anonymous/invariant and strategy-proof rule.

\subsection{Graphs}\label{sec:graph}

Let $G^A = \langle A, \mathcal{E}^A\rangle$ denote an \textbf{undirected graph} where $A$ is the vertex set and $\mathcal{E}^A$ is the set of edges.\footnote{If $(a, b) \in \mathcal{E}^A$, then $a \neq b$ and $(b,a)\in \mathcal{E}^A$.}
Given $x, y \in A$, a \textbf{path} in $G^A$ connecting $x$ and $y$ is a sequence of non-repeated vertices $(x_1, \dots, x_t)$ such that $x_1 = x$, $x_t = y$ and $(x_k, x_{k+1}) \in \mathcal{E}^A$ for all $k \in \{1, \dots, t-1\}$.
The graph $G^A$ is \textbf{connected} if for every pair of distinct vertices, there exists a path connecting them.
In particular, the graph $G^A$ is called a \textbf{complete graph} if any two distinct vertices form an edge, i.e., $(a, b) \in \mathcal{E}^A$ for all distinct $a, b\in A$.
Given $a \in A$, let $\mathcal{N}^{A}(a) = \{b \in A: (a,b) \in \mathcal{E}^A\}$ denote the set of alternatives that are neighbor to $a$ in $G^A$.
Given a graph $G^A$, a vertex $a \in A$ is called a \textbf{leaf} if it has a unique neighbor, i.e., $|\mathcal{N}^A(a)| = 1$.
Accordingly, let $\textrm{Leaf}(G^A) = \big\{x \in A: |\mathcal{N}^{A}(x)| = 1\big\}$ collect all leaves in $G^A$.
Given a subset $B \subset A$, let $G^B = \langle B, \mathcal{E}^{B}\rangle$ denote the subgraph of $G^A$ where the vertex set is $B$ and
the edge set is $\mathcal{E}^B = \{(a, b) \in \mathcal{E}^A: a, b \in B\}$.

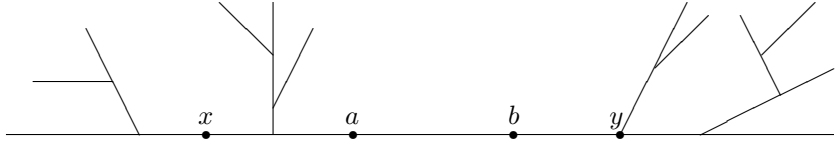
\begin{figure}[t]
\begin{tikzpicture}
\node at (-0.7,1.8) {};

		\put(50, 0){\line(1,0){310}}

		\put(125, 0){\circle*{3}}
		\put(180, 0){\circle*{3}}
        \put(240, 0){\circle*{3}}
		\put(280, 0){\circle*{3}}
		
		\put(122, 4){\footnotesize{$x$}}
		\put(177, 4){\footnotesize{$a$}}
		\put(238, 4){\footnotesize{$b$}}
		\put(276, 4){\footnotesize{$y$}}

        \put(100, 0){\line(-1,2){20}}
        \put(90, 20){\line(-1,0){30}}
        \put(150, 0){\line(0,1){50}}
        \put(150, 10){\line(1,2){15}}
        \put(150, 30){\line(-1,1){20}}

        \put(280, 0){\line(1,2){25}}
        \put(293,25){\line(1,1){20}}
        \put(310, 0){\line(2,1){50}}
        \put(340,15){\line(-1,2){15}}
        \put(333,30){\line(1,1){20}}
\end{tikzpicture}
\bigskip\bigskip\bigskip\bigskip
\caption{A tree $\mathcal{T}^A$}\label{fig:tree}
\end{figure}

A \textbf{tree} $\mathcal{T}^A = \langle A, \mathcal{E}^A\rangle$ is a connected graph where each pair of distinct vertices is connected by a unique path (see Figure \ref{fig:tree}).
A \textbf{line} is a particular tree which has exactly two leaves.
Throughout the paper, we fix $\mathcal{L}^A = (a_1,\dots, a_m)$ to be the line where $a_k$ and $a_{k+1}$ form an edge for all $k \in \{1, \dots, m-1\}$.
Fix a tree $\mathcal{T}^A$. Given $x, y \in A$, let $\langle x, y |\mathcal{T}^A\rangle$ denote the unique path connecting $x$ and $y$ in $\mathcal{T}^A$.\footnote{For notational convenience, we also use $\langle x, y |\mathcal{T}^A\rangle$ to denote the set of alternatives in the path between $x$ and $y$. We also call $\langle x, y |\mathcal{T}^A\rangle$ the interval between $x$ and $y$ in $\mathcal{T}^A$.}
Given a subset $B \subset A$ such that the path between any two alternatives of $B$ is also included in $B$,
i.e., $[a, b \in B]\Rightarrow \big[\langle a, b |\mathcal{T}^A\rangle \subseteq B\big]$, the subgraph $\mathcal{T}^B = \langle B, \mathcal{E}^B\rangle$ is also a tree.
Furthermore, given $a \in A$, if $a \in B$, it is natural to let $\mathop{\textrm{Proj}}(a, \mathcal{T}^B) = a$ denote the \textbf{projection} of $a$ on the subtree $\mathcal{T}^B$; otherwise,
there exists a unique $a' \in B$ such that $a' \in \langle a, b|\mathcal{T}^A\rangle$ for all $b \in B$, and then let $\mathop{\textrm{Proj}}(a, \mathcal{T}^B) = a'$ denote the projection of $a$ on $\mathcal{T}^B$.
Given a preference profile $P$, we construct the set
$\Gamma(P) = \big\{a \in A: a \in \langle r_1(P_i), r_1(P_j)|\mathcal{T}^A\rangle\; \textrm{for some}\; i, j \in N\big\}$
which includes all voters' preference peaks and alternatives that are located between voters' preference peaks.
Thus, $\mathcal{T}^{\Gamma(P)}$ is the \textbf{minimal subtree} nested in $\mathcal{T}^A$ that covers all voters' preference peaks.
Given two distinct alternatives $x, y \in A$,
we fix the set $A^{x \rightharpoonup y} = \big\{z \in A: x \in \langle z, y|\mathcal{T}^A\rangle\big\}$ to include every alternative whose path to $y$ always goes through $x$.
Therefore, $\mathcal{T}^{A^{x \rightharpoonup y}}$ is a subtree nested in $\mathcal{T}^A$.
Given a tree $\mathcal{T}^A$, we consider two distinct alternatives $a$ and $b$ that completely separate $\mathcal{T}^A$
into the middle interval $\langle a, b|\mathcal{T}^A\rangle$ and the two subtrees $\mathcal{T}^{A^{a \rightharpoonup b}}$ and $\mathcal{T}^{A^{b \rightharpoonup a}}$ (see Figure \ref{fig:tree}).
Thus, $\mathop{\textrm{Proj}}(c, \langle a, b|\mathcal{T}^A\rangle) \in \{a, b\}$ for all $c \in A\backslash \langle a, b|\mathcal{T}^A\rangle$.
We call $a$ and $b$ \textbf{dual-thresholds} in $\mathcal{T}^A$ (see Figure \ref{fig:tree}).\footnote{In particular, if $a$ and $b$ form an edge in $\mathcal{T}^A$, they are naturally dual-thresholds.}
According to the vertices $x$ and $y$ in Figure \ref{fig:tree}, we have
the subtree $\mathcal{T}^{A^{x\rightharpoonup y}}$, the interval $\langle x, y|\mathcal{T}^A\rangle$ and the subtree $\mathcal{T}^{A^{y\rightharpoonup x}}$,
which in combination however does not recover the tree $\mathcal{T}^A$ as the branch attached to the interior of the interval $\langle x, y|\mathcal{T}^A\rangle$ is not covered. Hence, $x$ and $y$ are not dual-thresholds.
Also note that $x$ is a leaf of the subtree $\mathcal{T}^{A^{x \rightharpoonup y}}$, whereas $y$ is not a leaf of the subtree $\mathcal{T}^{A^{y \rightharpoonup x}}$.

We conclude this section by adopting the terminology of a connected graph to represent a connected domain.
Given a domain $\mathbb{D}$, we construct a graph $G_{\sim}^A = \langle A, \mathcal{E}_{\sim}^A\rangle$, called an \textbf{adjacency graph},
where the vertex set is $A$, and two alternatives form an edge if and only if they are adjacent, i.e., $\mathcal{E}_{\sim}^A = \{(a,b)\in A^2: a \sim b\}$.
Then, it is clear that $\mathbb{D}$ is a path-connected domain if and only if $G_{\sim}^A$ is a connected graph.
Given $x \in A$, let $\mathcal{N}_{\sim}^A(x) = \{y \in A: (x, y) \in \mathcal{E}_{\sim}^A\}$ collect all neighbors of $x$ in the adjacency graph $G_{\sim}^A$.

\section{Preference Domains and Tops-only Rules}\label{sec:domains}

In this section, we introduce preference domains and rules that constitute our classification.

\subsection{Single-peaked domain and hybrid domain}

We first introduce the classical single-peaked domain and its recently introduced variant called the hybrid domain.\footnote{The idea of the hybrid domain originates from the multiple single-peaked domain of \citet{R2015}. \citet{AR2020} and \citet{CRSSZ2022} establish the formal definition of a hybrid preference on a line, and study strategy-proof rules and random SCFs.
We generalize the notion to trees.}

\begin{definition}[\citet{D1982}]
Fixing a tree $\mathcal{T}^A$, a preference $P_i$ is \textbf{single-peaked} on $\mathcal{T}^A$ if
for all distinct $a, b \in A$, we have $\big[a \in \langle r_1(P_i), b|\mathcal{T}^A\rangle\big]\Rightarrow [a\mathrel{P_i}b]$.
Let $\mathbb{D}_{\emph{SP}}(\mathcal{T}^A)$ denote \textbf{the single-peaked domain} of all single-peaked preferences on $\mathcal{T}^A$.
\end{definition}

The single-peaked domain $\mathbb{D}_{\textrm{SP}}(\mathcal{T}^A)$ is naturally a path-connected domain as its adjacency graph is identical to $\mathcal{T}^A$, and
it satisfies diversity if and only if $\mathcal{T}^A$ is a line.



\newpage

\begin{definition}\label{def:h}
Fixing a tree $\mathcal{T}^A$ and dual-thresholds $a, b \in A$,
a preference $P_i$ is \textbf{$\bm{(a, b)}$-hybrid} on $\mathcal{T}^A$ if it satisfies the following two conditions:
\begin{itemize}
\item[\rm (i)]
for all distinct $y, z \in A^{a\rightharpoonup b}$ or $y, z \in A^{b \rightharpoonup a}$,
$\big[y \in \langle r_1(P_i), z|\mathcal{T}^A\rangle\big] \Rightarrow [y\mathrel{P_i}z]$, and

\item[\rm (ii)] $[r_1(P_i) \in A^{a\rightharpoonup b}\backslash \{a\}] \Rightarrow
[\max^{P_i}(\langle a, b|\mathcal{T}^A\rangle) = a]$ and\\
$[r_1(P_i) \in A^{b\rightharpoonup a}\backslash \{b\}] \Rightarrow [\max^{P_i}(\langle a, b|\mathcal{T}^A\rangle) = b]$.
\end{itemize}
Let $\mathbb{D}_{\emph{H}}(\mathcal{T}^A, a,b)$ denote \textbf{the hybrid domain}
of all $(a, b)$-hybrid preferences on $\mathcal{T}^A$.
\end{definition}

The hybrid domain $\mathbb{D}_{\textrm{H}}(\mathcal{T}^A, a, b)$ is naturally a path-connected domain as its adjacency graph includes $\mathcal{T}^A$ as a subgraph,
and it satisfies diversity if and only if $\mathcal{T}^A$ is a line.
Note that $\mathbb{D}_{\textrm{SP}}(\mathcal{T}^A) \subseteq \mathbb{D}_{\textrm{H}}(\mathcal{T}^A, a, b)$, where the equality holds
when the dual-thresholds $a$ and $b$ form an edge in $\mathcal{T}^A$.
In another extreme circumstance, when $\langle a, b|\mathcal{T}^A\rangle = A$, we have $\mathbb{D}_{\textrm{H}}(\mathcal{T}^A, a, b)= \mathbb{P}$.


\begin{definition}\label{def:asph}
A domain $\mathbb{D}$ is called \textbf{a single-peaked domain} if there exists a tree $\mathcal{T}^A$ such that $\mathbb{D}\subseteq \mathbb{D}_{\emph{SP}}(\mathcal{T}^A)$, and $G_{\sim}^A$ is a connected graph.
A domain $\mathbb{D}$ is called \textbf{a hybrid domain} if the following three conditions are satisfied:
\begin{itemize}
\item[\rm (i)] there exist a tree $\mathcal{T}^A$ and dual-thresholds $a,b\in A$ such that $\mathbb{D}\subseteq \mathbb{D}_{\emph{H}}(\mathcal{T}^A,a,b)$, and $G_{\sim}^A$ is a connected graph,

\item[\rm (ii)] there exist no tree $\widehat{\mathcal{T}}^A$ and dual-thresholds $\hat{a},\hat{b} \in A$ such that
$\mathbb{D} \subseteq \mathbb{D}_{\emph{H}}(\widehat{\mathcal{T}}^A, \hat{a}, \hat{b})$ and
$\langle \hat{a}, \hat{b}\big|\widehat{\mathcal{T}}^A\rangle \subset \langle a,b|\mathcal{T}^A\rangle$\footnote{The notation $\langle \hat{a}, \hat{b}\big|\widehat{\mathcal{T}}^A\rangle \subset \langle a,b|\mathcal{T}^A\rangle$ here only concerns the inclusion relation between the two subsets of alternatives, not the inclusion relation between the two graphs of intervals.}, and

\item[\rm (iii)] $\big|\langle a, b|\mathcal{T}^A\rangle\big| \geq 3$.\footnote{We impose $\big|\langle a, b|\mathcal{T}^A\rangle\big| \geq 3$ to avoid the case that a hybrid domain reduces to a single-peaked domain.}
\end{itemize}
We call the interval $\langle a, b|\mathcal{T}^A\rangle$ here the \textbf{free zone}.
To highlight the tree $\mathcal{T}^A$ and the dual-thresholds $a$ and $b$,
we further call $\mathbb{D}$ \textbf{an $\bm{(a,b)}$-hybrid domain on $\bm{\mathcal{T}^A}$}.
In particular, $\mathbb{D}$ is said to be
\textbf{non-degenerate} if $A^{a \rightharpoonup b} \neq \{a\}$ or $A^{b \rightharpoonup a} \neq \{b\}$, and \textbf{degenerate} otherwise.
\end{definition}

\subsection{Semi-single-peaked domain and semi-hybrid domain}\label{sec:SSP-SH}

Next, we weaken single-peakedness and hybridness to the notions of semi-single-peakedness and semi-hybridness respectively.

We fix an alternative $\bar{x}$ in a tree $\mathcal{T}^{A}$, and call it a \textbf{threshold} in establishing a semi-single-peaked preference.
In a semi-single-peaked preference, the full force of single-peakedness  only prevails on the relative rankings of alternatives located in the interval between the preference peak and the threshold $\bar{x}$,
while an alternative elsewhere is only required to be ranked below its projection on the interval between the preference peak and the threshold $\bar{x}$ (e.g., recall Figure \ref{fig:d}).

\newpage
\begin{definition}[\citet*{CSS2013}]\label{def:ssp}
Fixing a tree $\mathcal{T}^A$ and a threshold $\bar{x} \in A$,
a preference $P_i$ is \textbf{semi-single-peaked} on $\mathcal{T}^A$ w.r.t.~$\bar{x}$ if it satisfies the following two conditions:
\begin{itemize}
\item[\rm (i)] for all distinct $a, b \in \langle r_1(P_i), \bar{x} |\mathcal{T}^A\rangle$, $\big[a \in \langle r_1(P_i), b |\mathcal{T}^A\rangle\big] \Rightarrow [a\mathrel{P_i}b]$, and

\item[\rm (ii)] for all $a \notin \langle r_1(P_i), \bar{x} |\mathcal{T}^A\rangle$,
$\big[\mathop{\emph{Proj}}\big(a, \langle r_1(P_i), \bar{x}|\mathcal{T}^A\rangle\big)=a'\big] \Rightarrow [a'\mathrel{P_i}a]$.
\end{itemize}
Let $\mathbb{D}_{\emph{SSP}}(\mathcal{T}^A, \bar{x})$ denote \textbf{the semi-single-peaked domain}
of all semi-single-peaked preferences on $\mathcal{T}^A$ w.r.t.~$\bar{x}$.
\end{definition}

The semi-single-peaked domain $\mathbb{D}_{\textrm{SSP}}(\mathcal{T}^A, \bar{x})$ is path-connected as its adjacency graph coincides with $\mathcal{T}^A$,
and it satisfies diversity if and only if $|\mathcal{N}^A(\bar{x})| \leq 2$ (see Clarification \ref{cla:ssp} in Appendix \ref{app:clarification}).
Moreover, it is clear that $\mathbb{D}_{\textrm{SP}}(\mathcal{T}^A) = \cap_{\bar{x} \in A}\mathbb{D}_{\textrm{SSP}}(\mathcal{T}^A, \bar{x})$.
\medskip


To establish a semi-hybrid preference, we fix dual-thresholds $a$ and $b$ in a tree $\mathcal{T}^A$.
A semi-hybrid preference, whose preference peak is located in the subtree $\mathcal{T}^{A^{a \rightharpoonup b}}$, is semi-single-peaked on $\mathcal{T}^A$ w.r.t.~$a$, and in addition ranks $b$ above all other alternatives of $A^{b \rightharpoonup a}$ (e.g., recall Figure \ref{fig:e}).
Analogous conditions are imposed on a semi-hybrid preference with the peak located in $\mathcal{T}^{A^{b \rightharpoonup a}}$. Otherwise, the preference peak is located between $a$ and $b$, and then $a$ and $b$ are required to be top-ranked within $A^{a \rightharpoonup b}$ and
$A^{b \rightharpoonup a}$ respectively (e.g., recall Figure \ref{fig:f}).
Thus, a semi-hybrid preference is significantly more permissive than its counterpart hybrid preference which follows the full restriction of single-peakedness  on both subtrees $\mathcal{T}^{A^{a \rightharpoonup b}}$ and $\mathcal{T}^{A^{b \rightharpoonup a}}$ .

\begin{definition}\label{def:sh}
Fixing a tree $\mathcal{T}^A$ and dual-thresholds $a, b \in A$,
a preference $P_i$ is \textbf{$\bm{(a, b)}$-semi-hybrid} on $\mathcal{T}^A$ if it satisfies the following three conditions:
\begin{itemize}
\item[\rm (i)] $\big[r_1(P_i) \in A^{a \rightharpoonup b}\backslash \{a\}\big] \Rightarrow
\left[
\begin{array}{l}
\!P_i \;\textrm{is semi-single-peaked on}\; \mathcal{T}^A\; \textrm{w.r.t.}~a
\; \textrm{and}\\
\!\max^{P_i}(A^{b \rightharpoonup a}) = b
\end{array}
\right]$,

\item[\rm (ii)] $\big[r_1(P_i) \in A^{b \rightharpoonup a}\backslash \{b\}\big] \Rightarrow
\left[
\begin{array}{l}
\!P_i \;\textrm{is semi-single-peaked on}\; \mathcal{T}^A\; \textrm{w.r.t.}~b
\; \textrm{and}\\
\!\max^{P_i}(A^{a \rightharpoonup b}) = a
\end{array}
\right]$, and

\item[\rm (iii)] $\big[r_1(P_i) \in \langle a, b |\mathcal{T}^A\rangle\big]\Rightarrow
\big[\max^{P_i}(A^{a \rightharpoonup b}) = a\; \textrm{and}\; \max^{P_i}(A^{b \rightharpoonup a}) = b\big]$.
\end{itemize}
Let $\mathbb{D}_{\emph{SH}}(\mathcal{T}^A,a,b)$ denote \textbf{the semi-hybrid domain}
of all $(a, b)$-semi-hybrid preferences on $\mathcal{T}^A$.
\end{definition}

The semi-hybrid domain $\mathbb{D}_{\textrm{SH}}(\mathcal{T}^A, a, b)$ is a path-connected domain as its adjacency graph includes $\mathcal{T}^A$ as a subgraph.
More specifically, the adjacency graph of $\mathbb{D}_{\textrm{SH}}(\mathcal{T}^A, a,b)$
is a combination of the adjacency subgraph $G_{\sim}^{A^{a \rightharpoonup b}}$, which coincides with the subtree $\mathcal{T}^{A^{a \rightharpoonup b}}$, the adjacency subgraph over the set $\langle a, b|\mathcal{T}^A\rangle$, denoted $G_{\sim}^{\langle a, b|\mathcal{T}^A\rangle}$,
which is a complete subgraph, and
the adjacency subgraph $G_{\sim}^{A^{b \rightharpoonup a}}$, which coincides with the subtree $\mathcal{T}^{A^{b \rightharpoonup a}}$
(see Clarification \ref{cla:sh} in Appendix \ref{app:clarification}).
The semi-hybrid domain $\mathbb{D}_{\textrm{SH}}(\mathcal{T}^A, a, b)$ satisfies diversity if and only if we have
$[A^{a \rightharpoonup b} \neq \{a\}] \Rightarrow [a \in \textrm{Leaf}(\mathcal{T}^{A^{a \rightharpoonup b}})]$ and
$[A^{b \rightharpoonup a} \neq \{b\}] \Rightarrow [b \in \textrm{Leaf}(\mathcal{T}^{A^{b \rightharpoonup a}})]$ (see Clarification \ref{cla:sh} in Appendix \ref{app:clarification}).
Note that when $\big|\langle a, b|\mathcal{T}^A\rangle\big| = 2$, we have $\mathbb{D}_{\textrm{SH}}(\mathcal{T}^A, a, b)=\mathbb{D}_{\textrm{SSP}}(\mathcal{T}^A, a)\cap \mathbb{D}_{\textrm{SSP}}(\mathcal{T}^A, b)$, and
when $\langle a, b|\mathcal{T}^A\rangle = A$, all three conditions in Definition \ref{def:sh} become ineffective and impose no restriction on the preference, and consequently the semi-hybrid domain expands to the universal domain, i.e., $\mathbb{D}_{\textrm{SH}}(\mathcal{T}^A, a, b)= \mathbb{P}$.\footnote{This indicates that not all the semi-hybrid domains are non-dictatorial domains.}
Moreover, it is obvious that $\mathbb{D}_{\textrm{H}}(\mathcal{T}^A, a, b) \subseteq \mathbb{D}_{\textrm{SH}}(\mathcal{T}^A, a, b)$, where the equality holds if and only if
$|A^{a \rightharpoonup b}| \leq 2$ and $|A^{b \rightharpoonup a}| \leq 2$.

\begin{definition}\label{def:asspsh}
A domain $\mathbb{D}$ is called \textbf{a semi-single-peaked domain} if there exist a tree $\mathcal{T}^A$ and a threshold $\bar{x} \in A$ such that
$\mathbb{D} \subseteq \mathbb{D}_{\emph{SSP}}(\mathcal{T}^A, \bar{x})$, and $G_{\sim}^A$ is a connected graph.
A domain $\mathbb{D}$ is called \textbf{a semi-hybrid domain} if
the following three conditions are satisfied:
\begin{itemize}
\item[\rm (i)] there exist a tree $\mathcal{T}^A$ and dual-thresholds $a,b\in A$ such that
$\mathbb{D}\subseteq \mathbb{D}_{\emph{SH}}(\mathcal{T}^A,a,b)$, and $G_{\sim}^A$ is a connected graph,\footnote{This condition implies that the adjacency graph $G_{\sim}^A$ is a combination of $G_{\sim}^{A^{a \rightharpoonup b}}$ that coincides with the subtree $\mathcal{T}^{A^{a \rightharpoonup b}}$,
the connected adjacency subgraph $G_{\sim}^{\langle a, b |\mathcal{T}^A\rangle}$ that may be different from the interval $\langle a, b|\mathcal{T}^A\rangle$ in $\mathcal{T}^A$ (see Example \ref{exm:non-trivialness}),
and $G_{\sim}^{A^{b \rightharpoonup a}}$ that coincides with the subtree $\mathcal{T}^{A^{b \rightharpoonup a}}$.}

\item[\rm (ii)] there exist no tree $\widehat{\mathcal{T}}^A$ and dual-thresholds $\hat{a},\hat{b} \in A$ such that
$\mathbb{D} \subseteq \mathbb{D}_{\emph{SH}}(\widehat{\mathcal{T}}^A, \hat{a}, \hat{b})$ and
$\langle \hat{a}, \hat{b}\big|\widehat{\mathcal{T}}^A\rangle \subset \langle a,b|\mathcal{T}^A\rangle$,\footnote{Note that any arbitrary domain is contained in $\mathbb{D}_{\textrm{SH}}(\mathcal{L}^A, a_1, a_m)$.
Condition (ii) is not content with information delivered by $(a_1,a_m)$-semi-hybridness on $\mathcal{L}^A$, but seeks to
push the middle interval to its minimal form to reveal the key preference restrictions via the notion of semi-hybridness, which in return guides the design of strategy-proof rules.} and

\item[\rm (iii)] if $G_{\sim}^A$ is a tree, then for each $x \in \emph{Leaf}\big(G_{\sim}^{\langle a, b|\mathcal{T}^A\rangle}\big)$,
there exists a preference $P_i\in \mathbb{D}$ such that $P_i$ is not semi-single-peaked on $G_{\sim}^A$ w.r.t.~$x$.\footnote{Note that if $G_{\sim}^A$ is a tree, condition (i) implies that the subgraph $G_{\sim}^{\langle a, b|\mathcal{T}^A\rangle}$ is also a tree.
This condition implies $|\langle a, b|\mathcal{T}^A\rangle| \geq 3$. More importantly, it in conjunction with condition (ii) ensures that a semi-hybrid domain satisfying diversity is never a semi-single-peaked domain (see Lemma \ref{lem:nssp} in Appendix \ref{app:Theorem}).}
\end{itemize}
We call the interval $\langle a, b|\mathcal{T}^A\rangle$ here the \textbf{free zone}.
To highlight the tree $\mathcal{T}^A$ and the dual-thresholds $a$ and $b$,
we further call $\mathbb{D}$ \textbf{an $\bm{(a,b)}$-semi-hybrid domain on $\mathcal{T}^A$}.
In particular, $\mathbb{D}$ is said to be \textbf{non-degenerate} if $A^{a \rightharpoonup b} \neq \{a\}$ or $A^{b \rightharpoonup a} \neq \{b\}$, and \textbf{degenerate} otherwise.
\end{definition}

We provide an example to illustrate a semi-hybrid domain.

\begin{table}[t]
\centering
\begin{tabular}{ccccccccccccccc}
$P_1$ & $P_2$ & $P_3$ & $P_4$ & $P_5$ & $P_6$ &  $P_7$ & $P_8$ & $P_9$ & $P_{10}$ & $P_{11}$ & $P_{12}$ \\
$a_1$ & $a_1$ & $a_1$ & $a_2$ & $a_2$ & $a_3$ &  $a_4$ & $a_4$ & $a_4$ & $a_5$    & $a_5$    & $a_6$    \\[-0.2em]
$a_2$ & $a_2$ & $a_2$ & $a_1$ & $a_4$ & $a_4$ &  $a_2$ & $a_3$ & $a_5$ & $a_4$    & $a_6$    & $a_5$    \\[-0.2em]
$a_3$ & $a_5$ & $a_4$ & $a_4$ & $a_1$ & $a_2$ &  $a_1$ & $a_2$ & $a_3$ & $a_3$    & $a_4$    & $a_4$    \\[-0.2em]
$a_4$ & $a_4$ & $a_3$ & $a_3$ & $a_3$ & $a_1$ &  $a_3$ & $a_1$ & $a_2$ & $a_2$    & $a_3$    & $a_3$    \\[-0.2em]
$a_5$ & $a_3$ & $a_5$ & $a_5$ & $a_5$ & $a_5$ &  $a_5$ & $a_5$ & $a_6$ & $a_6$    & $a_2$    & $a_2$    \\[-0.2em]
$a_6$ & $a_6$ & $a_6$ & $a_6$ & $a_6$ & $a_6$ &  $a_6$ & $a_6$ & $a_1$ & $a_1$    & $a_1$    & $a_1$
\end{tabular}
\caption{Domain $\mathbb{D}_1$}\label{tab:non-trivial}
\end{table}

\begin{figure}[t]
\begin{tikzpicture}

\node at (-2.9, 3) {};

\put(-68, 100){\line(1,0){175}}
\put(-68, 100){\circle*{3}}
\put(-33, 100){\circle*{3}}
\put(2, 100){\circle*{3}}
\put(37, 100){\circle*{3}}
\put(72, 100){\circle*{3}}
\put(107, 100){\circle*{3}}

\put(-72, 91){\footnotesize{$a_1$}}
\put(-37, 91){\footnotesize{$a_2$}}
\put(-2, 91){\footnotesize{$a_3$}}
\put(33, 91){\footnotesize{$a_4$}}
\put(68, 91){\footnotesize{$a_5$}}
\put(103, 91){\footnotesize{$a_6$}}

\node at (4.8, 3.5) {\small{$\mathcal{L}^A$}};

\put(195, 100){\line(1,0){105}}
\put(195, 100){\circle*{3}}
\put(230, 100){\circle*{3}}
\put(265, 100){\circle*{3}}
\put(300, 100){\circle*{3}}

\put(191, 91){\footnotesize{$a_2$}}
\put(226, 91){\footnotesize{$a_3$}}
\put(261, 91){\footnotesize{$a_4$}}
\put(296, 91){\footnotesize{$a_5$}}

\node at (12.2, 3.5) {\small{$\langle a_2, a_5|\mathcal{L}^A\rangle$}};

		\put(-61, 50){\line(1,0){160}}
		\put(-61, 50){\circle*{3}}
		\put(-21, 50){\circle*{3}}
        \put(19,  50){\circle*{3}}
        \put(59,  50){\circle*{3}}
		\put(99,  50){\circle*{3}}
		
		\put(-65, 41){\footnotesize{$a_1$}}
		\put(-25, 41){\footnotesize{$a_2$}}
		\put(15,   41){\footnotesize{$a_4$}}
		\put(55,  41){\footnotesize{$a_5$}}
		\put(95,  41){\footnotesize{$a_6$}}

        \put(19, 70){\circle*{3}}
		\put(22, 68){\footnotesize{$a_3$}}
        \put(19, 70){\line(0,-1){20}}

        \node at (4.8, 1.8) {\small{$G_{\sim}^A$}};

        \put(209, 50){\line(1,0){80}}
		\put(209, 50){\circle*{3}}
        \put(249, 50){\circle*{3}}
        \put(289, 50){\circle*{3}}
		
		\put(205, 41){\footnotesize{$a_2$}}
		\put(245, 41){\footnotesize{$a_4$}}
		\put(285, 41){\footnotesize{$a_5$}}

        \put(249, 70){\circle*{3}}
		\put(252, 68){\footnotesize{$a_3$}}
        \put(249, 70){\line(0,-1){20}}

        \node at (12.3, 1.8) {\small{$G_{\sim}^{\langle a_2, a_5|\mathcal{L}^A\rangle}$}};
\end{tikzpicture}
\caption{Line, interval, adjacency graph and adjacency subgraph}\label{fig:connectedness}
\end{figure}
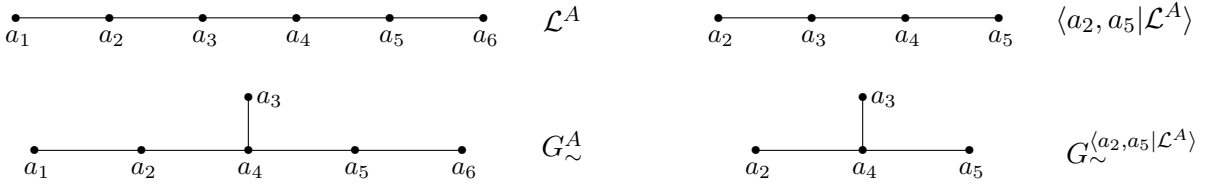

\begin{example}\label{exm:non-trivialness}\rm
Let $A = \{a_1, a_2, a_3, a_4, a_5, a_6\}$.
We specify a domain $\mathbb{D}_1$ of 12 preferences in Table \ref{tab:non-trivial} that are all $(a_2, a_5)$-semi-hybrid on the line $\mathcal{L}^A$.
The line $\mathcal{L}^A$, interval $\langle a_2, a_5|\mathcal{L}^A\rangle$, adjacency graph $G_{\sim}^A$  and adjacency subgraph $G_{\sim}^{\langle a_2, a_5|\mathcal{L}^A\rangle}$ are all specified in Figure \ref{fig:connectedness}, respectively.
One can immediately notice the difference between the
interval $\langle a_2, a_5|\mathcal{L}^A\rangle$ and the adjacency subgraph $G_{\sim}^{\langle a_2, a_5|\mathcal{L}^A\rangle}$ in Figure \ref{fig:connectedness}.

Domain $\mathbb{D}_1$ is a path-connected domain as indicated by the adjacency graph $G_{\sim}^A$ of Figure \ref{fig:connectedness}, and
satisfies diversity according to the preferences $P_1$ and $P_{12}$ in Table \ref{tab:non-trivial}.
Next, we check whether $\mathbb{D}_1$ is an $(a_2, a_5)$-semi-hybrid domain on $\mathcal{L}^A$.
Clearly, condition (i) of Definition \ref{def:sh} is satisfied: $\mathbb{D}_1 \subset \mathbb{D}_{\textrm{SH}}(\mathcal{L}^A,a_2, a_5)$, and $G_{\sim}^A$ is a connected graph.
We next claim that condition (ii) of Definition \ref{def:sh} holds.
Suppose by contradiction that there exist a tree $\widehat{\mathcal{T}}^A$ and dual-thresholds $\hat{a}, \hat{b} \in A$ such that
$\mathbb{D}_1 \subseteq \mathbb{D}_{\textrm{SH}}(\widehat{\mathcal{T}}^A, \hat{a}, \hat{b})$ and $\langle \hat{a}, \hat{b}|\widehat{\mathcal{T}}^A\rangle \subset \langle a_2, a_5|\mathcal{L}^A\rangle$.
Let $\hat{A}^{\hat{a} \rightharpoonup \hat{b}} = \big\{a \in A: \hat{a} \in \langle a, \hat{b}|\widehat{\mathcal{T}}^A\rangle\big\}$ and
$\hat{A}^{\hat{b} \rightharpoonup \hat{a}} = \big\{a \in A: \hat{b} \in \langle a, \hat{a}|\widehat{\mathcal{T}}^A\rangle\big\}$.
According to the connected graph $G_{\sim}^A$ in Figure \ref{fig:connectedness} and the contradictory hypothesis,
we know that $G_{\sim}^{\langle \hat{a}, \hat{b}|\widehat{\mathcal{T}}^A\rangle}$ must be a connected graph and strictly nested in $G_{\sim}^{\langle a_2, a_5|\mathcal{L}^A\rangle}$ in Figure \ref{fig:connectedness}.
Then, one of the following five cases must occur:\\
(1) $\hat{a} = a_2$, $\hat{b} = a_4$ and $G_{\sim}^{\langle \hat{a}, \hat{b}|\widehat{\mathcal{T}}^A\rangle} = (a_2, a_4, a_3)$,\footnote{The case $\hat{a} = a_4$, $\hat{b} = a_2$ and $G_{\sim}^{\langle \hat{a}, \hat{b}|\widehat{\mathcal{T}}^A\rangle} = (a_2, a_4, a_3)$ is symmetric, and therefore is omitted.}\\
(2) $\hat{a} = a_4$, $\hat{b} = a_5$ and $G_{\sim}^{\langle \hat{a}, \hat{b}|\widehat{\mathcal{T}}^A\rangle} = (a_3, a_4, a_5)$,\\
(3) $\hat{a} = a_2$, $\hat{b} = a_4$ and $G_{\sim}^{\langle \hat{a}, \hat{b}|\widehat{\mathcal{T}}^A\rangle} = (a_2, a_4)$,\\
(4) $\hat{a} = a_4$, $\hat{b} = a_5$ and $G_{\sim}^{\langle \hat{a}, \hat{b}|\widehat{\mathcal{T}}^A\rangle} = (a_4, a_5)$, and\\
(5) $\hat{a} = a_4$, $\hat{b} = a_3$ and $G_{\sim}^{\langle \hat{a}, \hat{b}|\widehat{\mathcal{T}}^A\rangle} = (a_4, a_3)$.\\
In each case, note that $a_1 \in \hat{A}^{\hat{a} \rightharpoonup \hat{b}}\backslash \{\hat{a}\}$.
Then,  by $(\hat{a}, \hat{b})$-semi-hybridness on $\widehat{\mathcal{T}}^A$, we know that in each one of the first four cases, $a_4$ must be ranked above $a_5$ in every preference with the peak $a_1$, which contradicts $P_2$ in Table \ref{tab:non-trivial},
while in the fifth case, $a_4$ must be ranked above $a_3$ in every preference with the peak $a_1$, which contradicts $P_1$ in Table \ref{tab:non-trivial}.
Last, we notice that condition (iii) of Definition \ref{def:sh} is violated, since $G_{\sim}^A$ is a tree,
$a_2$ is a leaf of $G_{\sim}^{\langle a_2, a_5|\mathcal{L}^A\rangle}$, and
all preferences in Table \ref{tab:non-trivial} are semi-single-peaked on $G_{\sim}^A$ w.r.t.~$a_2$.
Note that if we add a preference $P_{13} = (a_5\,a_3\,a_2\,a_1\,a_4\,a_6)$,
which is also $(a_2, a_5)$-semi-hybrid on $\mathcal{L}^A$,
the new domain $\widehat{\mathbb{D}}_1 = \mathbb{D}_1\cup \{P_{13}\}$ satisfies condition (iii) of Definition \ref{def:sh}, and hence
becomes an $(a_2, a_5)$-semi-hybrid domain on $\mathcal{L}^A$.\footnote{The adjacency graph of $\widehat{\mathbb{D}}_1$ is also the tree $G_{\sim}^A$ in Figure \ref{fig:connectedness}.
Conditions (i) and (ii) of Definition \ref{def:sh} continue to hold for domain $\widehat{\mathbb{D}}_1$,
while condition (iii) becomes valid, i.e., given $\textrm{Leaf}(G_{\sim}^{\langle a_2, a_5|\mathcal{L}^A\rangle}) = \{a_2, a_3, a_5\}$,
$P_{13}$ is not semi-single-peaked on $G_{\sim}^A$ w.r.t.~$a_2$, and
$P_1$ is not semi-single-peaked on $G_{\sim}^A$ w.r.t.~$a_3$ or $a_5$.}
\hfill$\Box$
\end{example}

\subsection{Projection rule and hybrid rule}\label{sec:rules}

In this section, we introduce two specific tops-only rules that are related to our investigation.

The first tops-only rule is called the projection rule.
Given a tree $\mathcal{T}^A$, we fix an alternative $\bar{x} \in A$.
Taking an arbitrary preference profile $P$ as an example, we first identify the minimal subtree $\mathcal{T}^{\Gamma(P)}$ that covers all preference peaks. Then, the projection of $\bar{x}$ on $\mathcal{T}^{\Gamma(P)}$ is selected by the projection rule as the social outcome.

\begin{definition}\label{def:projection}
An SCF $f: \mathbb{D}^n \rightarrow A$ is \textbf{a projection rule} if
there exist a tree $\mathcal{T}^A$ and an alternative $\bar{x} \in A$ such that  for all $P \in \mathbb{D}^n$,
\begin{align*}
f(P) = \mathop{\emph{Proj}}(\bar{x}, \mathcal{T}^{\Gamma(P)}).
\end{align*}
To highlight the tree $\mathcal{T}^A$ and the alternative $\bar{x}$,
we further call $f$ \textbf{the projection rule} on $\mathcal{T}^A$ w.r.t.~$\bar{x}$.
\end{definition}

By the sufficiency part of the Theorem of \citet{CSS2013}, we know that
given a tree $\mathcal{T}^A$ and a threshold $\bar{x} \in A$,
the semi-single-peaked domain $\mathbb{D}_{\textrm{SSP}}(\mathcal{T}^A, \bar{x})$ admits
the projection rule on $\mathcal{T}^A$ w.r.t.~$\bar{x}$ as an anonymous, tops-only and strategy-proof rule.
Furthermore, Corollary 1 of \citet{BM2020} implies that
when an additional condition is imposed on the location of the threshold, that is, $\bar{x}$ is never a neighbor to a leaf of $\mathcal{T}^A$, i.e., $\mathcal{N}^A(\bar{x})\cap \textrm{Leaf}(\mathcal{T}^A) = \emptyset$,
the projection rule on $\mathcal{T}^A$ w.r.t.~$\bar{x}$ is the \emph{unique} anonymous, tops-only and strategy-proof SCF admitted by the semi-single-peaked domain $\mathbb{D}_{\textrm{SSP}}(\mathcal{T}^A, \bar{x})$.\footnote{Further investigation on projection rules can be found in \citet{T1993} and \citet{V1999}.}

The second tops-only rule is the hybrid rule, which can be viewed as a variant of the projection rule.
Given a tree $\mathcal{T}^A$, we fix some dual-thresholds $a, b \in A$.
Moreover, a particular voter $i \in N$ is fixed in advance.
Taking an arbitrary preference profile $P$ as an example, the hybrid rule first detects whether voter $i$'s preference peak $r_1(P_i)$ is located in the interval $\langle a, b|\mathcal{T}^A\rangle$, or the subtree $\mathcal{T}^{A^{a \rightharpoonup b}}$, or the subtree $\mathcal{T}^{A^{b \rightharpoonup a}}$.
Then, in the first case, $r_1(P_i)$ is selected by the hybrid rule as the social outcome, while
in the second case (respectively, the third case), the social outcome changes to the projection of $a$ (respectively, $b$) on the minimal subtree $\mathcal{T}^{\Gamma(P)}$.

\begin{definition}\label{def:hybridrule}
An SCF $f: \mathbb{D}^n \rightarrow A$ is \textbf{a hybrid rule} if
there exist a tree $\mathcal{T}^A$, dual-thresholds $a,b \in A$ with $|\langle a, b|\mathcal{T}^A\rangle| \geq 3$, and a voter $i \in N$ such that
for all $P \in \mathbb{D}^n$,
\begin{align*}
f(P) = \left\{
\begin{array}{ll}
r_1(P_i) & \emph{if}\; r_1(P_i) \in \langle a, b|\mathcal{T}^A\rangle,\\[0.4em]
\mathop{\emph{Proj}}\big(a, \mathcal{T}^{\Gamma(P)}\big) & \emph{if}\; r_1(P_i) \in A^{a \rightharpoonup b}\backslash \{a\},\;\textrm{and}\\[0.4em]
\mathop{\emph{Proj}}\big(b, \mathcal{T}^{\Gamma(P)}\big) & \emph{if}\; r_1(P_i) \in A^{b \rightharpoonup a}\backslash \{b\}.
\end{array}
\right.
\end{align*}
To highlight the tree $\mathcal{T}^A$ and the dual-thresholds $a$ and $b$,
we further call $f$ \textbf{an $\bm{(a,b)}$-hybrid rule} on $\mathcal{T}^A$.
\end{definition}

It is easy to show that given a tree $\mathcal{T}^A$ and dual-thresholds $a,b\in A$,
an $(a,b)$-hybrid rule on $\mathcal{T}^A$ is a tops-only and strategy-proof rule on a domain
$\mathbb{D} \subseteq \mathbb{D}_{\textrm{SH}}(\mathcal{T}^A, a, b)$ (see Clarification \ref{cla:hybridrule} in Appendix \ref{app:clarification}).\footnote{Proposition 3 of \citet{CZ2020} characterizes all tops-only and strategy-proof rules on the semi-hybrid domain $\mathbb{D}_{\textrm{SH}}(\mathcal{T}^A, a, b)$, given $|A^{a \rightharpoonup b}| \neq 2$ and $|A^{b \rightharpoonup a}| \neq 2$.}
Clearly, an $(a,b)$-hybrid rule on a tree $\mathcal{T}^A$ behaves like a dictatorship on the interval $\langle a, b|\mathcal{T}^A\rangle$, and
it becomes a dictatorship when the interval $\langle a, b|\mathcal{T}^A\rangle$ expands to the whole alternative set.
Therefore, an $(a,b)$-hybrid rule on $\mathcal{T}^A$ is never anonymous.
Moreover, note that the two peaks of the completely reversed preferences, by the restriction of $(a,b)$-semi-hybridness on a tree $\mathcal{T}^A$, can never be both in $A^{a \rightharpoonup b}$ (or both in $A^{b \rightharpoonup a}$).
Consequently, a two-voter $(a,b)$-hybrid rule on $\mathcal{T}^A$ must choose distinct social outcomes at the two profiles where the two voters are endowed with the two completely reversed preferences, and hence violates invariance.

\section{Results}\label{sec:results}

\subsection{Non-dictatorial unidimensional domains}\label{sec:auxiliary}

In this section, we provide a complete characterization of non-dictatorial domains.
We first introduce some notation and an additional richness condition.

Fix a domain $\mathbb{D}$.
Given $a \in A$,
let $\mathcal{S}(\mathbb{D}^a) = \{b \in A: b=r_2(P_i)\; \textrm{for some}\; P_i \in \mathbb{D}^a\}$
collect all alternatives that are second ranked in the preferences of $\mathbb{D}^a$.
Given a leaf of the adjacency graph $G_{\sim}^A$, say $x \in \textrm{Leaf}(G_{\sim}^A)$,
it is evident that $|\mathcal{S}(\mathbb{D}^x)| \geq 1$.
More specifically, we know that either $\mathcal{S}(\mathbb{D}^x)$ is a singleton set of $x$'s unique neighbor in $G_{\sim}^A$, which implies $|\mathcal{S}(\mathbb{D}^x)| =1$,
or $\mathcal{S}(\mathbb{D}^x)$ contains some alternative other than the unique neighbor of $x$, which implies  $|\mathcal{S}(\mathbb{D}^x)| > 1$.
We then introduce a technical richness condition called \textit{leaf symmetry} to handle each leaf in the second case.
Formally, domain $\mathbb{D}$ is said to satisfy \textbf{leaf symmetry}
if for each $x \in \textrm{Leaf}(G_{\sim}^{A})$ with $|\mathcal{S}(\mathbb{D}^x)| > 1$,
there exists $z \in \mathcal{S}(\mathbb{D}^x)$ such that $z \notin \mathcal{N}_{\sim}^A(x)$ and $x \in \mathcal{S}(\mathbb{D}^z)$.\footnote{If $\textrm{Leaf}(G_{\sim}^A) = \emptyset$, or $\textrm{Leaf}(G_{\sim}^A) \neq \emptyset$ and $|\mathcal{S}(\mathbb{D}^x)| = 1$ for all $x \in \textrm{Leaf}(G_{\sim}^A)$,
domain $\mathbb{D}$ satisfies leaf symmetry vacuously.
Under leaf symmetry, given $x \in \textrm{Leaf}(G_{\sim}^{A})$,
$\mathcal{N}_{\sim}^A(x) = \{y\}$ and $|\mathcal{S}(\mathbb{D}^x)|>1$,
we have some $z \in A\backslash \{x, y\}$ and $P_i, P_i' \in \mathbb{D}$ such that
$r_1(P_i) = r_2(P_i) = x$ and $r_1(P_i') = r_2(P_i) = z$.
This indicates that $x$ and $z$ form an edge analogous to, but weaker than the edge of adjacency.
Furthermore, by path-connectedness, we have a path $(x_1, \dots, x_v)$ in $G_{\sim}^A$ connecting $x$ and $z$.
Since $\mathcal{N}_{\sim}^A(x) = \{y\}$, it must be the case that $y = x_2$.
Thus, by combining the path $(x_1, \dots, x_v)$ and the ``weaker edge'' between $x$ and $z$,
we formulate a cycle, which, analogous to a circular domain of \citet{S2010}, ensures that all strategy-proof rules behave like dictatorships on the set $\{x, y, z\}$ (see Lemma \ref{lem:dictatorship1} in Appendix \ref{app:AP}).}
Henceforth, we call a domain a \textbf{unidimensional domain} if it satisfies path-connectedness, diversity and leaf symmetry.

\begin{remark}\rm
A single-peaked domain satisfies leaf symmetry vacuously.\footnote{Given a single-peaked domain $\mathbb{D} \subseteq \mathbb{D}_{\textrm{SP}}(\mathcal{T}^A)$, since $G_{\sim}^A = \mathcal{T}^A$, the restriction of single-peakedness implies $|\mathcal{S}(\mathbb{D}^x)| = 1$ for all $x \in \textrm{Leaf}(G_{\sim}^A)$.}
A hybrid/semi-hybrid domain satisfies leaf symmetry vacuously if its adjacency subgraph on the free zone has no leaf.\footnote{Given an $(a,b)$-hybrid/semi-hybrid domain $\mathbb{D}$ on a tree $\mathcal{T}^A$, let $\textrm{Leaf}(G_{\sim}^A) \neq \emptyset$ and $\textrm{Leaf}\big(G_{\sim}^{\langle a, b|\mathcal{T}^A\rangle}\big)=\emptyset$. Clearly, $A \neq \langle a, b|\mathcal{T}^A\rangle$.
Thus, we have one of the following three cases: (i) $A^{a \rightharpoonup b} \neq \{a\}$ and $A^{b \rightharpoonup a} = \{b\}$ which imply $\textrm{Leaf}(G_{\sim}^A) = \textrm{Leaf}(\mathcal{T}^A)\backslash \{b\}$,
(ii) $A^{a \rightharpoonup b} = \{a\}$ and $A^{b \rightharpoonup a} \neq \{b\}$ which imply $\textrm{Leaf}(G_{\sim}^A) = \textrm{Leaf}(\mathcal{T}^A)\backslash \{a\}$,
or (iii) $A^{a \rightharpoonup b} \neq \{a\}$ and $A^{b \rightharpoonup a} \neq \{b\}$ which imply $\textrm{Leaf}(G_{\sim}^A) = \textrm{Leaf}(\mathcal{T}^A)$.
Then, by the restriction of $(a,b)$-hybridness/semi-hybridness on $\mathcal{T}^A$,
we have $|\mathcal{S}(\mathbb{D}^x)| = 1$ for all $x \in \textrm{Leaf}(G_{\sim}^A)$.
In Example \ref{exm:indispensability} behind, we provide an example of a semi-hybrid domain that violates leaf symmetry.}
A semi-single-peaked domain $\mathbb{D}\subseteq \mathbb{D}_{\textrm{SSP}}(\mathcal{T}^A, \bar{x})$
satisfies leaf symmetry if and only if either $\bar{x} \notin \textrm{Leaf}(\mathcal{T}^A)$,
or $\bar{x} \in \textrm{Leaf}(\mathcal{T}^A)$ and $\mathbb{D}\subseteq \mathbb{D}_{\textrm{SSP}}(\mathcal{T}^A, \bar{x})\cap \mathbb{D}_{\textrm{SSP}}(\mathcal{T}^A, x)$ where $\mathcal{N}^A(\bar{x}) = \{x\}$
(see Clarification \ref{cla:leafsymmetry} in Appendix \ref{app:clarification}).
\end{remark}

\begin{remark}\label{rem:richness}\rm
Many preference domains studied in the literature are unidimensional domains, e.g., the universal domain, some linked domains of \citet{ACS2003}\footnote{\citet{ACS2003} introduced a notion between two alternatives,
which we call \emph{weak adjacency}. Formally, two alternatives $a$ and $b$ are said \textbf{weakly adjacent}, denoted $a \nsim b$, if there exist $P_i, P_i' \in \mathbb{D}$ such that $r_1(P_i) = r_2(P_i') = a$ and $r_1(P_i') = r_2(P_i) = b$. It is clear that weak adjacency is significantly less demanding than the notion of adjacency. Accordingly, a domain $\mathbb{D}$ is said \textbf{weakly path-connected}, if the weak adjacency graph $G_{\nsim}^A = \langle A, \mathcal{E}_{\nsim}^A\rangle$, where two alternatives formulate an edge if and only if they are weakly adjacent, is a connected graph. It is clear that $G_{\sim}^A \subseteq G_{\nsim}^A$. \citet{ACS2003} showed that a \emph{linked domain}, i.e., all alternatives are able to be relabeled as $a_1, \dots, a_m$ such that $a_1 \nsim a_2$ and for each $k \in \{3, \dots, m\}$, $a_k \nsim a_s$ and $a_k \nsim a_t$ for some distinct $s, t \in \{1, \dots, k-1\}$, is a dictatorial domain. Indeed, the weak adjacency graph of a linked domain is a connected graph, has no leaf, and contains at least $2m-3$ edges.\label{footnote:weakadjacency}},
the single-peaked domain of \citet{B1948} and \citet{M1980},
the single-crossing domain of \citet{S2009} and the multiple single-peaked domains of \citet{R2015}.
More generally,  the class of no-restoration domains of \citet{S2013} that satisfies minimal richness and diversity are unidimensional domains (implied by Theorem 1 of \citet{CRSSZ2022}).\footnote{No-restoration is not only concerned with the richness of a domain, but also ensures that all preferences in a domain are well organized: given two preferences and two alternatives, one given preference is transformed to the other via a sequence of preferences in the domain that switches two contiguously ranked alternative across each pair of consecutive preferences, and moreover the relative ranking of the two given alternatives is switched at most once. \citet{CRSSZ2022} show that a domain satisfying minimal richness, no-restoration and diversity is either a single-peaked domain on $\mathcal{L}^A$, or a hybrid domain $\mathcal{L}^A$ such that the adjacency subgraph on the free zone has no leaf. Path-connectedness significantly weakens no-restoration, so as to accommodate more permissive preference restrictions that are the main concern of this paper.}
The single-peaked domain on a tree (not a line) of \citet{D1982} satisfies path-connectedness and leaf symmetry, but is excluded by the class of unidimensional domains due to the violation of diversity.
The class of unidimensional domains also excludes multidimensional domains, like the (inclusion/exclusion) separable domain of \citet{BSZ1991}, the multidimensional single-peaked domain of \citet{BGS1993}, the separable domain of \citet{LS1999} and the top-separable domain of \citet{LW1999},
	as they all fail to satisfy path-connectedness.\footnote{More specifically, the notion of adjacency is not applicable on each one of these multidimensional domains, since no two alternatives are adjacent. We intensionally adopt the notion of adjacency to introduce the terminology ``unidimensionality'' so as to exclude these multidimensional domains from our analysis.
In fact, the notion of weak adjacency is applicable on these multidimensional domains. However, weak adjacency is too flexible for us to explore our analysis in trackable way. For instance, give a domain $\mathbb{D}$, if the adjacency graph $G_{\sim}^A$ is a connected graph and has no leaf, then it is a dictatorial domain (see Observation \ref{obs:cycle} in Appendix \ref{app:AP}). However, this important result fails if we replace $G_{\sim}^A$ by $G_{\nsim}^A$, since each one of these multidimensional domain has a no-leaf connected weak adjacency graph and is a non-dictatorial domain.}
\end{remark}

\citet{ACS2003} introduced \textbf{the unique seconds property} on a domain $\mathbb{D}$,
which says that there exists $x \in A$ such that $|\mathcal{S}(\mathbb{D}^x)| = 1$,
and showed that it is sufficient for $\mathbb{D}$ to be a non-dictatorial domain (also see the \emph{inseparable top-pair property} of \citet{KR1980}).
The Auxiliary Proposition shows that the unique seconds property is also necessary, provided that path-connectedness and leaf symmetry hold.
The proof is contained in Appendix \ref{app:AP}.

\begin{AP}\label{AP}
Let a domain $\mathbb{D}$ satisfy path-connectedness and leaf symmetry.
Then, $\mathbb{D}$ is a non-dictatorial domain if and only if it satisfies the unique seconds property.
\end{AP}

Under the unique seconds property of a domain $\mathbb{D}$, say $\mathcal{S}(\mathbb{D}^x) = \{y\}$, we can construct the following non-dictatorial, strategy-proof rule that is loosely speaking called an ``almost dictatorship'' (it follows a dictatorship \emph{almost everywhere} and avoids dictatorship at few particular preference profiles): fixing two distinct voters $i,j \in N$, for all $P\in \mathbb{D}^n$, let
\begin{align*}
f(P) = \left\{
\begin{array}{ll}
r_1(P_i) & \textrm{if}\; r_1(P_i) \neq x,\; \textrm{and}\\
\max^{P_j}(\{x,y\}) & \textrm{otherwise.}
\end{array}
\right.
\end{align*}
The unique seconds property only addresses a  preference restriction that is \emph{locally} embedded in a domain - it only concerns one common second best alternative in preferences with one common peak, and hence cannot be further utilized for the social planner's task of designing meaningful non-dictatorial, strategy-proof rules beyond almost dictatorships.
In order to explore the scope of designing non-dictatorial, strategy-proof rules on non-dictatorial domains,
we go beyond the unique seconds property, and uncover more information on preference restrictions that are \emph{globally} obeyed by the rankings of all alternatives in all preferences in a domain, like single-peakedness/semi-single-peakedness and hybridness/semi-hybridness introduced in Section \ref{sec:domains}.

\begin{remark}\rm
All single-peaked domains, semi-single-peaked domains, non-degenerate hybrid domains and non-degenerate semi-hybrid domains satisfy the unique seconds property, and therefore are non-dictatorial domains.
Indeed, a degenerate hybrid/semi-hybrid domain sometimes violates the unique second property, e.g., the adjacency graph has no leaf,
and hence by the Auxiliary Proposition, is a dictatorial domain;
when it satisfies the unique seconds property (see the example provided in Clarification \ref{cla:example} of Appendix \ref{app:clarification}), it is a non-dictatorial domain.
\end{remark}

\begin{remark}\label{rem:usp}\rm
\citet{RS2019} provide three other domain richness conditions, and show that the unique seconds property is necessary and sufficient for non-dictatorial domains.
Path-connectedness strengthens their first condition, and is more transparent and easier to verify than their third condition, while
leaf symmetry significantly weakens their second condition as it only concerns the leaves of the adjacency graph.
This weakening is meaningful and critical to our analysis because it accommodates semi-single-peaked domains which however are ruled out by their second condition.
See the detailed explanation in Clarification \ref{cla:RS2019} of Appendix \ref{app:clarification}.
\end{remark}

We conclude this section by addressing the indispensability of the two richness conditions in establishing the Auxiliary Proposition.
First, all multidimensional domains mentioned in Remark \ref{rem:richness} are non-dictatorial domains, but are not covered by the Auxiliary Proposition since they all fail to satisfy path-connectedness.
We also provide another example to show the indispensability of path-connectedness,
which would provide a better understand on the role played by path-connectedness in establishing the Auxiliary Proposition.

\begin{table}[t]
\centering
\begin{tabular}{ccccccccccccccc}
$P_1$ & $P_2$ & $P_3$ & $P_4$ & $P_5$ & $P_6$ & $P_7$ & $P_8$ & $P_9$ & $P_{10}$ & $P_{11}$ & $P_{12}$\\
$a  $ & $a  $ & $b  $ & $b  $ & $c  $ & $c  $ & $x  $ & $x  $ & $y  $ & $y  $    & $z  $    & $z  $   \\[-0.2em]
$b  $ & $c  $ & $a  $ & $c  $ & $a  $ & $b  $ & $y  $ & $z  $ & $x  $ & $z  $    & $x  $    & $y  $   \\[-0.2em]
$c  $ & $b  $ & $c  $ & $a  $ & $b  $ & $a  $ & $z  $ & $y  $ & $z  $ & $x  $    & $y  $    & $x  $   \\[-0.2em]
$x  $ & $x  $ & $x  $ & $x  $ & $x  $ & $x  $ & $a  $ & $a  $ & $a  $ & $a  $    & $a  $    & $a  $   \\[-0.2em]
$y  $ & $y  $ & $y  $ & $y  $ & $y  $ & $y  $ & $b  $ & $b  $ & $b  $ & $b  $    & $b  $    & $b  $   \\[-0.2em]
$z  $ & $z  $ & $z  $ & $z  $ & $z  $ & $z  $ & $c  $ & $c  $ & $c  $ & $c  $    & $c  $    & $c  $
\end{tabular}
\caption{Domain $\mathbb{D}_2$}\label{tab:violatepath-connectedness}
\end{table}

\begin{example}\label{exm:violatepath-connectedness}\rm
Let $A = \{a, b, c, x, y, z\}$.
We specify a domain $\mathbb{D}_2$ of 12 preferences in Table \ref{tab:violatepath-connectedness}.
Let $B = \{a,b,c\}$.
One can easily observe that $G_{\sim}^A$ consists of two isolated triangles $G_{\sim}^{B}$ and $G_{\sim}^{A\backslash B}$.
Therefore, $\mathbb{D}_2$ violates path-connectedness and satisfies
leaf symmetry vacuously.
Next, we construct the following SCF: for all $P_i, P_j \in \mathbb{D}_2$,
\begin{align*}
f(P_i, P_j) = \left\{
\begin{array}{ll}
\max^{P_i}(B) & \textrm{if}\; r_1(P_j) \in B,\; \textrm{and}\\
r_1(P_j) & \textrm{otherwise}.
\end{array}
\right.
\end{align*}

\noindent
It is clear that $f$ satisfies unanimity, and is non-dictatorial.
The verification of strategy-proofness is put in Clarification \ref{cla:violatepath-connectedness} of Appendix \ref{app:clarification}.
Note that the SCF $f$, supported by the isolation of $G_{\sim}^{B}$ and $G_{\sim}^{A\backslash B}$, accommodates two distinct dictatorships on $B$ and $A\backslash B$ respectively, i.e., voter $i$ dictates on $B$, while voter $j$ dictates on $A\backslash B$.
This contrasts the unified dictatorship established in Lemma \ref{lem:union} of Appendix \ref{app:AP}.
\hfill$\Box$
\end{example}

\begin{table}[t]
\centering
\begin{tabular}{ccccccccccccccc}
$P_1$ & $P_2$ & $P_3$ & $P_4$ & $P_5$ & $P_6$ & $P_7$ & $P_8$ & $P_9$ \\
$a  $ & $b  $ & $b  $ & $b  $ & $c  $ & $d  $ & $a  $ & $c  $ & $d  $ \\[-0.2em]
$b  $ & $a  $ & $c  $ & $d  $ & $b  $ & $b  $ & $d  $ & $a  $ & $c  $ \\[-0.2em]
$c  $ & $c  $ & $d  $ & $a  $ & $d  $ & $a  $ & $b  $ & $b  $ & $b  $ \\[-0.2em]
$d  $ & $d  $ & $a  $ & $c  $ & $a  $ & $c  $ & $c  $ & $d  $ & $a  $
\end{tabular}
\caption{Domain $\mathbb{D}_3$}\label{tab:special}
\end{table}

\begin{figure}[t]
\begin{tikzpicture}
\node at (-0.9,1) {};

		\put(180, 0){\line(1,0){50}}
        \put(205, 20){\line(0,-1){20}}

		\put(180, 0){\circle*{3}}
        \put(205, 0){\circle*{3}}
        \put(205, 20){\circle*{3}}
        \put(230, 0){\circle*{3}}
		
		\put(177, -10){\footnotesize{$a$}}
		\put(203, -10){\footnotesize{$b$}}
		\put(227, -10){\footnotesize{$c$}}
		\put(202, 23){\footnotesize{$d$}}

\end{tikzpicture}
\vspace{2.5em}
\caption{The star-shape tree $\mathcal{T}^A$}\label{fig:sptree}
\end{figure}
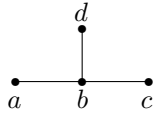

Last, we provide a specific example to illustrate the indispensability of leaf symmetry.

\begin{example}\label{exm:indispensability}\rm
Let $A = \{a, b, c, d\}$ be allocated on a star-shape tree $\mathcal{T}^A$ of Figure \ref{fig:sptree}.
We specify a domain $\mathbb{D}_3$ of 9 preferences in Table \ref{tab:special}.
It is easy to verify $G_{\sim}^A = \mathcal{T}^A$. Therefore, $\mathbb{D}_3$ is a path-connected domain.
We claim that $\mathbb{D}_3$ violates leaf symmetry.
Suppose by contradiction that $\mathbb{D}_3$ satisfies leaf symmetry.
Given $a \in \textrm{Leaf}(G_{\sim}^A)$,
since $\mathcal{S}(\mathbb{D}_3^a) = \{b, d\}$ and $\mathcal{N}_{\sim}^A(a) = \{b\}$,
by leaf symmetry, we must induce a contradiction: $a \in \mathcal{S}(\mathbb{D}_3^d) = \{b, c\}$.
Next, we construct the following SCF to illustrate that $\mathbb{D}_3$ is a non-dictatorial domain:
for all $P_i, P_j \in \mathbb{D}_3$,
\begin{align*}
f(P_i, P_j) =\left\{
\begin{array}{l}
d ~~~~~~~~~~~~~~~~~~~~~~\, \textrm{if}\; P_i = P_7\; \textrm{and}\; P_j \in \mathbb{D}_3^d,\; \textrm{or}\; P_i \in \mathbb{D}_3^d\; \textrm{and}\; P_j = P_7, \\
a ~~~~~~~~~~~~~~~~~~~~~~\, \textrm{if}\; P_i = P_8\; \textrm{and}\; P_j \in \mathbb{D}_3^a,\; \textrm{or}\; P_i \in \mathbb{D}_3^a\; \textrm{and}\; P_j = P_8, \\
c ~~~~~~~~~~~~~~~~~~~~~~\,\, \textrm{if}\; P_i = P_9\; \textrm{and}\; P_j \in \mathbb{D}_3^c,\; \textrm{or}\; P_i \in \mathbb{D}_3^c\; \textrm{and}\; P_j = P_9, \; \textrm{and}\\
\mathop{\textrm{Proj}}\big(b, \langle r_1(P_i), r_1(P_j)|\mathcal{T}^A\rangle\big) \quad \textrm{otherwise}.
\end{array}
\right.
\end{align*}
It is clear that $f$ satisfies unanimity and anonymity, and violates the tops-only property.
The verification of strategy-proofness is put in Clarification \ref{cla:strategy-proofness} of Appendix \ref{app:clarification}.
Last, we observe $|\mathcal{S}(\mathbb{D}_3^x)|\geq 2$ for all $x \in A$
which suggests the violation of the unique seconds property.
\hfill$\Box$
\end{example}

\subsection{A classification of non-dictatorial, unidimensional domains}\label{sec:domainclassification}

In this section, we establish a classification of non-dictatorial, unidimensional domains using the notions of semi-single-peakedness and semi-hybridness via the existence and non-existence respectively of an invariant, tops-only and strategy-proof rule.
The following is the main result of the paper.

\newpage
\begin{theorem}\label{thm:invariance}
Let $\mathbb{D}$ be a non-dictatorial, unidimensional domain.
Then, the following two statements hold:
\begin{itemize}
	\item[\rm (i)]
There exists an invariant, tops-only and strategy-proof rule,
\emph{(a)} if and only if $\mathbb{D}$ is a semi-single-peaked domain, and
\emph{(b)} if and only if $\mathbb{D}$ admits a two-voter, strategy-proof projection rule.
Furthermore, every invariant, tops-only and strategy-proof rule is a projection rule.

%
	
\item[\rm (ii)] There exists no invariant, tops-only and strategy-proof rule
if and only if $\mathbb{D}$ is a semi-hybrid domain satisfying the unique seconds property.
Furthermore, every two-voter, tops-only and strategy-proof rule is a hybrid rule that behaves like a dictatorship on a weak superset of the free zone.

\end{itemize}
\end{theorem}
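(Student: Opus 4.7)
The plan is to establish the dichotomy by first proving the two sufficiency directions, and then jointly classifying all non-dictatorial unidimensional domains via a necessity argument.

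For sufficiency in part (i), suppose $\mathbb{D}\subseteq \mathbb{D}_{\textrm{SSP}}(\mathcal{T}^A, \bar{x})$ is a semi-single-peaked domain. The sufficiency part of the Theorem of \citet{CSS2013}, cited after Definition \ref{def:projection}, ensures that the projection rule on $\mathcal{T}^A$ w.r.t.~$\bar{x}$ is anonymous, tops-only and strategy-proof on $\mathbb{D}^n$ for every $n$. Since anonymity implies invariance on two voters, both the existence claim in (a) and the equivalence with admitting a two-voter strategy-proof projection rule in (b) follow immediately.

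For sufficiency in part (ii), I would adapt the argument of the footnote on semi-hybrid domains. Let $\mathbb{D}$ be an $(a,b)$-semi-hybrid domain on $\mathcal{T}^A$ satisfying the unique seconds property, and let $f$ be any two-voter, tops-only, strategy-proof rule. Exploiting the permissive ranking of alternatives inside the free zone $\langle a, b|\mathcal{T}^A\rangle$, I would first prove that $f$ must behave like a dictatorship on $\langle a, b|\mathcal{T}^A\rangle$; this delivers the core of the characterization claim in the ``furthermore'' part of (ii). Diversity together with $(a,b)$-semi-hybridness then forces the peaks of $\underline{P}_i, \overline{P}_i$ to lie on opposite sides of the free zone, say in $A^{a\rightharpoonup b}\setminus\{a\}$ and $A^{b\rightharpoonup a}\setminus\{b\}$ respectively. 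Strategy-proofness combined with the dictatorial behavior on the free zone then pushes $f(\underline{P}_1, \overline{P}_2)$ and $f(\overline{P}_1, \underline{P}_2)$ into opposite subtrees of $\mathcal{T}^A$, thereby violating invariance.

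The necessity directions form the main obstacle and would be handled jointly. By the Auxiliary Proposition, any non-dictatorial unidimensional domain $\mathbb{D}$ satisfies the unique seconds property. The strategy is to construct the relevant tree $\mathcal{T}^A$ out of the adjacency graph $G_{\sim}^A$ using path-connectedness and leaf symmetry; then, given an invariant, tops-only and strategy-proof rule $f$, to analyze the outcome forced at the test profile $(\underline{P}_1, \overline{P}_2)$—call it $\bar{x}^\ast$—and argue that $\bar{x}^\ast$ serves as a valid (single) threshold. If no two-sided flexibility is present around $\bar{x}^\ast$, one can show that every $P_i\in\mathbb{D}$ is semi-single-peaked on $\mathcal{T}^A$ w.r.t.~$\bar{x}^\ast$ by translating strategy-proofness at carefully chosen two-voter profiles into the two defining clauses of semi-single-peakedness, and simultaneously show that $f$ matches the projection rule at every profile. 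If such an $f$ does not exist, the obstruction must localize into an interval $\langle a, b|\mathcal{T}^A\rangle$ whose internal rankings are too permissive; its endpoints serve as dual-thresholds, and a parallel analysis shows $\mathbb{D}\subseteq \mathbb{D}_{\textrm{SH}}(\mathcal{T}^A, a, b)$ with every two-voter tops-only strategy-proof rule forced into the $(a,b)$-hybrid rule form that is dictatorial on a weak superset of the free zone.

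The hardest step is extracting the tree structure and the (dual-)threshold(s) from the abstract invariance and strategy-proofness data of $f$ at the two reversed profiles, and propagating this local information into global constraints on every preference of $\mathbb{D}$ via chains of strategy-proofness deductions along path-connected sequences of peak changes. This is precisely where path-connectedness and leaf symmetry become indispensable, ensuring that every preference can be ``reached'' by such a deductive chain and that the behaviour forced by invariance at $(\underline{P}_1, \overline{P}_2)$ propagates consistently throughout $\mathbb{D}$.
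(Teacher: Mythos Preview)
Your sufficiency argument for (i) matches the paper exactly.

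For the ``if'' direction of (ii), however, your plan inverts the paper's logical order and contains a genuine gap. You propose to first show that every two-voter tops-only strategy-proof rule $f$ is dictatorial on the free zone, and then derive the failure of invariance. But proving dictatorship on the free zone for an \emph{arbitrary} semi-hybrid subdomain (not the maximal one treated in the footnote you cite) is precisely the hardest part of the theorem---the ``furthermore'' claim---and the paper's proof of it (Lemmas \ref{lem:RFBR2}--\ref{lem:superset}) takes the \emph{failure} of invariance as its starting datum, not semi-hybridness. The paper instead proves the ``if'' direction of (ii) by a pure domain argument (Lemma \ref{lem:nssp}): using conditions (ii) and (iii) of Definition \ref{def:asspsh}, it shows that a semi-hybrid domain satisfying diversity can never be semi-single-peaked, and then applies the contrapositive of Statement (i). Your route would require an independent proof of the dictatorship claim, which you have not supplied. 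Moreover, your assertion that ``diversity together with $(a,b)$-semi-hybridness forces the peaks of $\underline{P}_i,\overline{P}_i$ to lie in $A^{a\rightharpoonup b}\setminus\{a\}$ and $A^{b\rightharpoonup a}\setminus\{b\}$'' is false for degenerate semi-hybrid domains (where those sets are empty); the paper handles this via a WLOG reduction (Lemma \ref{lem:wlog}) that may replace $\mathcal{T}^A$ and $(a,b)$ by a different tree and thresholds.

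For the necessity direction of (i), your sketch misses the paper's central structural lemma: the existence of an invariant, tops-only, strategy-proof rule forces the adjacency graph $G_\sim^A$ itself to be a \emph{tree} (Lemmas \ref{lem:uniquepath} and \ref{lem:tree}). You speak of ``constructing the relevant tree $\mathcal{T}^A$ out of $G_\sim^A$,'' but without first killing all cycles in $G_\sim^A$ there is no candidate tree on which to define semi-single-peakedness; cycles in $G_\sim^A$ force dictatorship along them (Observation \ref{obs:cycle}), which is incompatible with invariance. Only after $G_\sim^A$ is shown to be a tree does the paper identify $f$ with the projection rule (Lemma \ref{lem:projection}) and then read off semi-single-peakedness from strategy-proofness (Lemma \ref{lem:semisinglepeaked}). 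For the necessity direction of (ii), the semi-hybrid structure is obtained almost for free (every domain sits in $\mathbb{D}_{\textrm{SH}}(\mathcal{L}^A,a_1,a_m)$; one minimizes the interval and checks condition (iii) via the failure of semi-single-peakedness); the substantive work is the ``furthermore'' claim, which the paper establishes by building a rule-dependent tree $\mathcal{T}_f^A$ from $f(a_1,a_m)$ and $f(a_m,a_1)$, characterizing $f$ as an $(x_{\underline{k}},x_{\overline{k}})$-hybrid rule on it, and finally showing $\langle a,b|\mathcal{T}^A\rangle\subseteq\langle x_{\underline{k}},x_{\overline{k}}|\mathcal{T}_f^A\rangle$ (Lemma \ref{lem:superset}).
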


The proof of Theorem \ref{thm:invariance} is contained in Appendix \ref{app:Theorem}.

\begin{remark}\rm
By Theorem \ref{thm:invariance} and its proof, we know that on a non-dictatorial, unidimensional domain $\mathbb{D}$,
the set of all two-voter, tops-only and strategy-proof rules consists of a set of projection rules which can be either an empty set or not, and
a set of hybrid rules, which at least includes a dictatorship (equivalently, the $(a_1, a_m)$-hybrid rule on the line $\mathcal{L}^A$).\footnote{Even if the domain $\mathbb{D}$ turns out to be a semi-single-peaked domain, the proof of Statement (ii) still implies that every two-voter, tops-only and strategy-proof rule that violates invariance is a hybrid rule.}
Therefore, by Statement (i) of Theorem \ref{thm:invariance}, whether the domain $\mathbb{D}$ is a semi-single-peaked domain completely depends on the existence of a two-voter, strategy-proof projection rule.
Furthermore, if domain $\mathbb{D}$ turns out to be an $(a,b)$-semi-hybrid domain on a tree $\mathcal{T}^A$,
among all two-voter, strategy-proof hybrid rules, by Statement (ii) of Theorem \ref{thm:invariance},
the $(a,b)$-hybrid rule on $\mathcal{T}^A$ is \emph{the most desirable one},
as it minimizes the set of alternatives on which a dictatorship inevitably prevails.
\end{remark}

%
%
%
%
%
%

By Statement (ii) of Theorem \ref{thm:invariance},
we know that once a unidimensional domain is revealed to be a semi-hybrid domain, all two-voter, tops-only and strategy-proof rules behave like dictatorships on the free zone.
We then further show via a Ramification Theorem (see the statement in Appendix \ref{app:anonymity}) that
every tops-only and strategy-proof rule with an arbitrary number of voters also behaves like a dictatorship on the free zone.
This helps us strengthen the classification provided in Theorem \ref{thm:invariance} which concentrates on two-voter rules and the axiom of invariance, by showing in the following corollary that the same classification emerges when we expand to $n$-voter rules and replace invariance by anonymity.
This also suggests that for the purpose of domain classification, there is no loss of generality in restricting attention to the class of two-voter, strategy-proof rules.

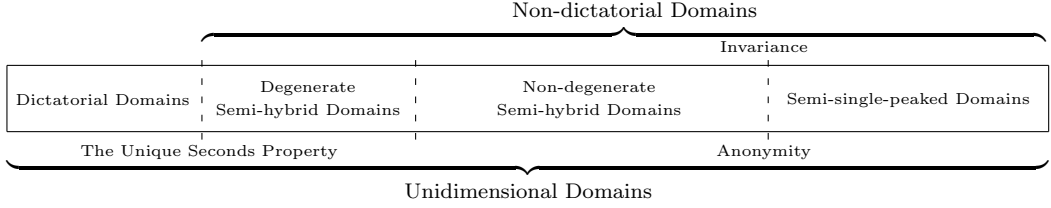
\begin{figure}[t]
\begin{tikzpicture}
\node at (2,0) {};

\node at (11.6,1.8) {\scriptsize{Non-dictatorial Domains}};
\node at (10.2,-0.6) {\scriptsize{Unidimensional Domains}};
\node at (6,-0.1) {\tiny{The Unique Seconds Property}};
\node at (13.3,1.3) {\tiny{Invariance}};
\node at (13.3,-0.1) {\tiny{Anonymity}};

\node at (4.6,0.6) {\tiny{Dictatorial Domains}};
\dashline{3}(168, 32)(168, 2)

\node at (7.3,0.75) {\tiny{Degenerate}};
\node at (7.3,0.45) {\tiny{Semi-hybrid Domains}};
\dashline{3}(248, 32)(248, 2)

\node at (11,0.75) {\tiny{Non-degenerate}};
\node at (11,0.45) {\tiny{Semi-hybrid Domains}};
\dashline{3}(380, 32)(380, 2)

\node at (15.2,0.6) {\tiny{Semi-single-peaked Domains}};

\put(95, 5){\line(1,0){390}}
\put(95, 30){\line(1,0){390}}
\put(95, 5){\line(0,1){25}}
\put(485, 5){\line(0,1){25}}

\put(168,38){\footnotesize{$\overbrace{\rule[0mm]{11.1cm}{0mm}}$}}
\put(95,-5){\footnotesize{$\underbrace{\rule[0mm]{13.7cm}{0mm}}$}}
\end{tikzpicture}
\vspace{-0.5em}
\caption{A classification of non-dictatorial, unidimensional domains}\label{fig:classification}
\end{figure}

\begin{corollary}\label{cor:anonymity}
Let $\mathbb{D}$ be a non-dictatorial, unidimensional domain.
Then, the following two statements hold:
\begin{itemize}
\item[\rm (i)]
There exists an anonymous, tops-only and strategy-proof rule,
\emph{(a)} if and only if $\mathbb{D}$ is a semi-single-peaked domain, and
\emph{(b)} if and only if $\mathbb{D}$ admits a strategy-proof projection rule.
	
	\item[\rm (ii)] There exists no anonymous, tops-only and strategy-proof rule,
if and only if $\mathbb{D}$ is a semi-hybrid domain satisfying the unique seconds property.
Furthermore, every tops-only and strategy-proof rule behaves like a dictatorship on a weak superset of the free zone.
\end{itemize}
\end{corollary}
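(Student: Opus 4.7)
The plan is to deduce the corollary from Theorem~\ref{thm:invariance} together with the Ramification Theorem stated in Appendix~\ref{app:anonymity}, which lifts the ``behaves like a dictatorship on the free zone'' conclusion of Theorem~\ref{thm:invariance}(ii) from two voters to arbitrary $n\geq 2$. These two ingredients suffice because Theorem~\ref{thm:invariance} partitions every non-dictatorial unidimensional domain into exactly two mutually exclusive classes: semi-single-peaked domains, on which an invariant two-voter projection rule exists, and semi-hybrid domains satisfying the unique seconds property, on which no such rule exists.

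Sufficiency for Statement~(i) follows by inspection of the projection rule. By Definition~\ref{def:projection} any projection rule is anonymous and tops-only, so the existence of a strategy-proof projection rule in clause~(b) immediately witnesses clause~(a). Conversely, if $\mathbb{D}$ is a semi-single-peaked domain with $\mathbb{D}\subseteq \mathbb{D}_{\textrm{SSP}}(\mathcal{T}^A,\bar{x})$, the result of \citet{CSS2013} recalled in Section~\ref{sec:rules} guarantees that the projection rule on $\mathcal{T}^A$ w.r.t.~$\bar{x}$ is anonymous, tops-only and strategy-proof on $\mathbb{D}_{\textrm{SSP}}(\mathcal{T}^A,\bar{x})$, hence also on $\mathbb{D}$; this delivers both (a) and (b).

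The remaining ``only if'' direction of Statement~(i) and the equivalence in Statement~(ii) are handled by a single contrapositive argument. Suppose $\mathbb{D}$ is a semi-hybrid domain satisfying the unique seconds property. Theorem~\ref{thm:invariance}(ii) tells us that every two-voter tops-only strategy-proof rule on $\mathbb{D}$ is a hybrid rule, hence behaves like a dictatorship on a weak superset of the free zone; the Ramification Theorem then promotes this to the analogous conclusion for tops-only strategy-proof rules with any number of voters, which is precisely the final clause of Statement~(ii). Since the free zone contains at least two alternatives, a rule that dictates on a weak superset of the free zone cannot be anonymous: swapping the putative dictator with another voter at a profile where the two peaks differ within the free zone would produce different outcomes, contradicting the swap invariance demanded by anonymity. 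Thus no anonymous tops-only strategy-proof rule exists on $\mathbb{D}$. Combined with the Theorem~\ref{thm:invariance} dichotomy, this yields both required equivalences: non-existence in Statement~(ii) is equivalent to $\mathbb{D}$ being semi-hybrid with the unique seconds property, and the failure of that condition — namely $\mathbb{D}$ being semi-single-peaked — is equivalent to the existence required in Statement~(i).

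The main obstacle is the Ramification Theorem itself. Naive attempts to reduce an $n$-voter rule to a two-voter rule by fixing the preferences of $n-2$ voters typically destroy unanimity or force the induced map out of the domain, so the argument must carefully exploit how strategy-proofness and the tops-only property interact across varying profiles to propagate the two-voter dictatorial behavior on the free zone up to arbitrary $n$. Since this work is carried out in Appendix~\ref{app:anonymity}, my proof of the corollary would invoke the Ramification Theorem as a black box and concentrate on the bookkeeping needed to translate Theorem~\ref{thm:invariance}'s invariance-based dichotomy into the anonymity-based classification asserted here.
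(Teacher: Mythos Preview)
Your proposal is correct and follows essentially the same approach as the paper's proof: both invoke the Ramification Theorem as a black box to lift the two-voter dictatorial behavior on the free zone from Theorem~\ref{thm:invariance}(ii) to arbitrary $n$, then combine this with the semi-single-peaked/semi-hybrid dichotomy of Theorem~\ref{thm:invariance} and the sufficiency result of \citet{CSS2013} for the projection rule. Your organization into a single contrapositive handling both the ``only if'' of (i) and the equivalence in (ii) is a mild repackaging of the paper's separate treatment, but the ingredients and logic are identical.
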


The proof of Corollary \ref{cor:anonymity} is contained in Appendix \ref{app:anonymity}.

\begin{remark}\label{rem:classification}\rm
Theorem \ref{thm:invariance} and Corollary \ref{cor:anonymity} refine the characterization of non-dictatorial domains obtained in the Auxiliary Proposition by showing that all non-dictatorial, unidimensional domains can be classified into one of the three variants illustrated in Figure \ref{fig:classification}.
First, semi-single-peaked domains are sorted out as the unique ones that admit an invariant/anonymous, tops-only and strategy-proof rule (indeed, a strategy-proof projection rule), while every other non-dictatorial, unidimensional domain is shown to be a semi-hybrid domain, which of course is either non-degenerate or degenerate.
Next, as the free zones of semi-hybrid domains expand towards the whole alternative set,
semi-hybrid domains expand ``closer to'' dictatorial domains
since every tops-only and strategy-proof rule has to behave like a dictatorship on the free zone and consequently gradually degenerates to a dictatorship.
Furthermore, if a non-dictatorial, unidimensional domain turns out to be a degenerate semi-hybrid domain (see the example in Clarification \ref{cla:example} of Appendix \ref{app:clarification}),
since all tops-only and strategy-proof rules are dictatorships,
in order to meet the non-dictatorial-domain hypothesis, the Auxiliary Proposition must mandate the imposition of the unique seconds property, and
consequently the corresponding admissible non-dictatorial, strategy-proof rule (recall the almost dictatorship specified in Section \ref{sec:auxiliary}) must violate the tops-only property.
\end{remark}

\begin{remark}\rm
\citet{CSS2013} have shown that semi-single-peakedness on a path-connected domain is implied by the existence of an anonymous, tops-only and strategy-proof rule with an even number of voters.
By mildly strengthening their richness condition,
we obtain a significantly sharper result:
statement (i) of Corollary \ref{cor:anonymity} dispenses with their requirement on the number of voters, while statement (ii) describes the complementary configuration implied by the non-existence of an anonymous, tops-only and strategy-proof rule.
Last, note that when the number of voters increases more than two in Corollary \ref{cor:anonymity},
the set of tops-only and strategy-proof rules expands significantly beyond projection rules and hybrid rules characterized in Theorem \ref{thm:invariance} under the two-voter condition.
Indeed, the full characterization of tops-only and strategy-proof rules in the case of more than two voters depends delicately on the richness of a semi-single-peaked domain (see Theorem 1 of \citet{BM2020}) and the richness of a semi-hybrid domain (see Proposition 3 of \citet{CZ2020}) respectively.
\end{remark}

\subsection{Non-tops-only rules}\label{sec:PNT}

We now introduce a rule that on a semi-single-peaked (respectively, semi-hybrid) domain can extract non-peak information from some preference profiles
while remaining strategy-proof, and identify ``critical spots'' as configurations that allow such rules to arise.
These critical spots vanish if and only if the domain is refined to be single-peaked (respectively, hybrid).

We fix a tree $\mathcal{T}^A$, an edge $(x,y)$ which separates $\mathcal{T}^A$ into two subtrees $\mathcal{T}^{A^{x \rightharpoonup y}}$ and $\mathcal{T}^{A^{y \rightharpoonup x}}$, and two distinct voters $i, j \in N$.
The construction of a non-tops-only rule consists of three steps.
First, at each preference profile, the social outcome equals voter $i$'s most preferred alternative if it belongs to $A^{y \rightharpoonup x}$.
Next, if both voters $i$ and $j$ have their preference peaks in $A^{x \rightharpoonup y}$, the social outcome is the projection of $x$ on the minimal subtree of the preference profile.
Last, when the two most preferred alternatives of voters $i$ and $j$ lie respectively in $A^{x \rightharpoonup y}$ and $A^{y \rightharpoonup x}$, the social outcome varies according to voter $j$'s preference over $x$ and $y$.

\begin{definition}\label{def:PNT}
	An SCF $f:\mathbb{D}^n \rightarrow A$ is a \textbf{Possibly Non-Tops-only} (or \textbf{PNT}) SCF on a tree $\mathcal{T}^A$ w.r.t.~an edge $(x, y)$
	if there exist distinct $i, j \in N$ such that
	\begin{align*}
	f(P)= \left\{
	\begin{array}{ll}
	r_1(P_i) & \emph{if}\; r_1(P_i) \in A^{y \rightharpoonup x},\\
	\mathop{\emph{Proj}}\big(x, \mathcal{T}^{\Gamma(P)}\big) & \emph{if}\; r_1(P_i) \in A^{x\rightharpoonup y}\; \emph{and}\; r_1(P_j) \in A^{x\rightharpoonup y},\; \textrm{and}\\
	\max^{P_j}(\{x,y\}) & \emph{if}\; r_1(P_i) \in A^{x\rightharpoonup y}\; \emph{and}\; r_1(P_j) \in A^{y \rightharpoonup x}.
	\end{array}
	\right.
	\end{align*}
\end{definition}

By construction, a PNT SCF is unanimous and will henceforth be referred to as a PNT rule.
A PNT rule defined on a minimally rich domain is by definition non-dictatorial.\footnote{One can easily observe that the PNT rule generalizes the almost dictatorship specified in Section \ref{sec:auxiliary}. Indeed, given a line $\widehat{\mathcal{L}}^A = (x, y, \dots)$, where $x$ is a leaf and $y$ is the unique neighbor of $x$, the almost dictatorship is a PNT rule on $\widehat{\mathcal{L}}^A$ w.r.t.~the edge $(x, y)$.}
Moreover, the following fact pins down the necessary and sufficient condition for PNT rules to be strategy-proof and non-tops-only.
The proof is contained in Appendix \ref{app:Fact2}.

\begin{fact}\label{fact:PNTrule}
Fix a minimally rich domain $\mathbb{D}$, a tree $\mathcal{T}^A$ and an edge $(x,y)$.
For all $n \geq 2$,
the PNT rule $f: \mathbb{D}^n \rightarrow A$ on $\mathcal{T}^A$ w.r.t.~$(x, y)$ is strategy-proof if and only if the following two conditions are satisfied: for all $P_i \in \mathbb{D}$,
\begin{itemize}
\item[\rm (i)] if $r_1(P_i) \in A^{x \rightharpoonup y}$, then $P_i$ is semi-single-peaked on $\mathcal{T}^A$ w.r.t.~$y$, and
		
\item[\rm (ii)] if $r_1(P_i) \in A^{y \rightharpoonup x}$, then $\max^{P_i}(A^{x \rightharpoonup y}) = x$.
\end{itemize}
Moreover, $f$ violates the tops-only property if and only if an additional condition is satisfied:
\begin{itemize}
\item[\rm (iii)] there exist $P_i, P_i' \in \mathbb{D}$ such that $r_1(P_i) = r_1(P_i') \in A^{y \rightharpoonup x}$, $y\mathrel{P_i}x$ and $x\mathrel{P_i'}y$.
\end{itemize}
\end{fact}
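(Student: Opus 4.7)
The plan is to prove each direction separately: sufficiency of (i)--(ii) for strategy-proofness, necessity of (i), necessity of (ii), and the tops-only characterization via (iii).

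For sufficiency, I would fix an arbitrary voter $k$, a true report $P_k$, a misreport $\tilde{P}_k$, and a profile $P_{-k}$ of the remaining voters, and compare $f(P_k, P_{-k})$ and $f(\tilde{P}_k, P_{-k})$ under $P_k$. The argument is a case analysis on which branch of the rule is active at each profile and on whether $k \in \{i, j\}$ or not. Condition (i) ensures that a voter with peak in $A^{x \rightharpoonup y}$ is single-peaked along $\langle r_1(P_k), y | \mathcal{T}^A\rangle$ (so she accepts projection-style outcomes in branch 2) and ranks $y$ above every other alternative of $A^{y \rightharpoonup x}$ (so she does not wish to switch her peak into $A^{y \rightharpoonup x}$ and trigger branch 1); condition (ii) ensures that a voter with peak in $A^{y \rightharpoonup x}$ ranks $x$ above every other alternative of $A^{x \rightharpoonup y}$, so voter $j$ cannot profitably switch from branch 3 to branch 2, and a third voter with peak in $A^{y \rightharpoonup x}$ cannot profitably collapse $\mathcal{T}^{\Gamma(P)}$ away from $x$ in branch 2.

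For necessity of (i), fix $P_k \in \mathbb{D}$ with $r_1(P_k) = a \in A^{x \rightharpoonup y}$. To enforce the single-peaked clause along $\langle a, y | \mathcal{T}^A\rangle$, let voter $i$ report $P_k$ and voter $j$ report a preference with peak $b_1 \in \langle a, x | \mathcal{T}^A\rangle$; branch 2 fires with outcome $\mathop{\textrm{Proj}}(x, \langle a, b_1 | \mathcal{T}^A\rangle) = b_1$, while a deviation by voter $i$ to a report with peak $b_2$ farther along $\langle a, x | \mathcal{T}^A\rangle$ changes the outcome to $b_2$, forcing $b_1 \mathrel{P_k} b_2$ by strategy-proofness. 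The ranking $b_1 \mathrel{P_k} y$ is obtained by instead letting voter $i$ deviate to a report with peak $y$, which triggers branch 1 with outcome $y$. The projection clause of semi-single-peakedness is handled analogously: for $c \in A^{x \rightharpoonup y} \setminus \langle a, x | \mathcal{T}^A\rangle$, voter $j$'s peak at $c$ yields $\mathop{\textrm{Proj}}(x, \langle a, c | \mathcal{T}^A\rangle) = \mathop{\textrm{Proj}}(c, \langle a, y | \mathcal{T}^A\rangle)$, while voter $i$'s deviation to peak $c$ collapses $\mathcal{T}^{\Gamma(P)}$ to $\{c\}$ with outcome $c$; for $c \in A^{y \rightharpoonup x} \setminus \{y\}$, voter $j$'s peak at $y$ yields outcome $y$, while voter $i$'s deviation to peak $c$ triggers branch 1 with outcome $c$. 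In each construction strategy-proofness yields the required semi-single-peaked ranking.

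For necessity of (ii), fix $P_k$ with peak $c \in A^{y \rightharpoonup x}$ and let $b := \max^{P_k}(A^{x \rightharpoonup y})$; the goal is $b = x$. For $n \geq 3$, assign voters $i$ and $j$ reports with peak $b$ (guaranteed by minimal richness) and let a third voter $k \notin \{i, j\}$ report $P_k$ truthfully; branch 2 fires, $\mathcal{T}^{\Gamma(P)}$ spans $b$ through $x$ and $y$ to $c$ and hence contains $x$, so the truthful outcome is $x$. A deviation by voter $k$ to a report with peak $b$ collapses the hull to $\{b\}$ with outcome $b$; if $b \neq x$ then $b \mathrel{P_k} x$ by definition of $b$, contradicting strategy-proofness. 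Finally, the non-tops-only characterization via (iii) follows by inspection: assuming strategy-proofness under (i) and (ii), branches 1 and 2 depend only on voters' peaks, so any violation of the tops-only property must originate in branch 3, where the outcome $\max^{P_j}(\{x, y\})$ depends on $P_j$ beyond its peak exactly when there exist two preferences in $\mathbb{D}$ with a common peak in $A^{y \rightharpoonup x}$ that disagree on the ranking of $\{x, y\}$, which is exactly (iii). The main technical obstacle throughout is the bookkeeping for $\mathop{\textrm{Proj}}(x, \mathcal{T}^{\Gamma(P)})$ on the changing minimal subtree and the careful tracking of which branch of the rule is active before and after each deviation; once the outcomes are pinned down, every strategy-proofness constraint reduces to a direct preference comparison under the relevant voter's preference.
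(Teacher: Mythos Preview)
Your proposal is correct and follows essentially the same approach as the paper's proof: both establish sufficiency by a branch-by-branch case analysis on the deviating voter, extract condition~(i) by having voter~$i$ (the designated dictator on $A^{y\rightharpoonup x}$) manipulate against suitably placed opponents, extract condition~(ii) via a third voter $k\notin\{i,j\}$ (hence only for $n\geq 3$, exactly as the paper does), and read off condition~(iii) from the fact that branch~3 is the only place non-peak information enters. The profile constructions differ cosmetically---you vary voter~$j$'s peak along $\langle a,x|\mathcal{T}^A\rangle$ whereas the paper parks all remaining voters at a single point---but the resulting strategy-proofness inequalities are the same; just be sure in the write-up to specify the reports of voters outside $\{i,j\}$ so that $\Gamma(P)$ collapses to the intended interval.
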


Given a domain $\mathbb{D}$ and a tree $\mathcal{T}^A$, we call an edge $(x, y)$ a \textbf{critical spot} in $\mathcal{T}^A$,
if all conditions (i), (ii) and (iii) of Fact \ref{fact:PNTrule} are satisfied.
Proposition \ref{prop:nontopsonly} below shows that the existence of a critical spot is necessary and sufficient for distinguishing a semi-single-peaked domain from a single-peaked domain (respectively distinguishing a semi-hybrid domain from a hybrid domain), and
therefore by Fact \ref{fact:PNTrule} supports a strategy-proof PNT rule that violates the tops-only property.
The proof is contained in Appendix \ref{app:Proposition1}.

\newpage
\begin{proposition}\label{prop:nontopsonly}
The following two statements hold:
\begin{itemize}
\item[\rm (i)] Given a semi-single-peaked domain $\mathbb{D}$ on a tree $\mathcal{T}^A$,
we have $\mathbb{D} \nsubseteq \mathbb{D}_{\emph{SP}}(\mathcal{T}^A)$ if and only if there exists a critical spot $(x, y)$ in $\mathcal{T}^A$.

\item[\rm (ii)] Given an $(a,b)$-semi-hybrid domain $\mathbb{D}$ on a tree $\mathcal{T}^A$,
we have $\mathbb{D} \nsubseteq \mathbb{D}_{\emph{H}}(\mathcal{T}^A, a, b)$
if and only if there exists a critical spot $(x, y)$ in $\mathcal{T}^A$ such that
either $x, y \in A^{a \rightharpoonup b}\backslash \{a\}$ or $x, y \in A^{b\rightharpoonup a}\backslash \{b\}$.
\end{itemize}
\end{proposition}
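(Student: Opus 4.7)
}

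The ``$\Leftarrow$'' direction of both statements is short and I would dispatch it first. Suppose $(x,y)$ is a critical spot. Condition (iii) of Fact~\ref{fact:PNTrule} produces $P_i'\in \mathbb{D}$ with $r_1(P_i')\in A^{y\rightharpoonup x}$ and $x\mathrel{P_i'}y$. Since $r_1(P_i')\in A^{y\rightharpoonup x}$, by definition $y$ lies on $\langle r_1(P_i'),x|\mathcal{T}^A\rangle$, so single-peakedness on $\mathcal{T}^A$ would force $y\mathrel{P_i'}x$; hence $P_i'$ is not single-peaked on $\mathcal{T}^A$, proving $\mathbb{D}\nsubseteq \mathbb{D}_{\text{SP}}(\mathcal{T}^A)$. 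For statement (ii), the extra location requirement $x,y\in A^{a\rightharpoonup b}\setminus\{a\}$ (or the symmetric one) ensures the violated comparison $x\mathrel{P_i'}y$ is an internal comparison on the subtree $\mathcal{T}^{A^{a\rightharpoonup b}}$, on which $(a,b)$-hybridness demands full single-peakedness; so $P_i'\notin \mathbb{D}_{\text{H}}(\mathcal{T}^A,a,b)$ and $\mathbb{D}\nsubseteq \mathbb{D}_{\text{H}}(\mathcal{T}^A,a,b)$.

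For the ``$\Rightarrow$'' direction of (i), fix $\bar{x}$ with $\mathbb{D}\subseteq \mathbb{D}_{\text{SSP}}(\mathcal{T}^A,\bar{x})$ and $P^*\in\mathbb{D}\setminus \mathbb{D}_{\text{SP}}(\mathcal{T}^A)$ with peak $p^*$. A case analysis on Definition~\ref{def:ssp} shows that any single-peakedness violation $c\mathrel{P^*}d$ with $d\in\langle p^*,c|\mathcal{T}^A\rangle$ must have $c,d$ in a common branch off the interval $\langle p^*,\bar{x}|\mathcal{T}^A\rangle$, and along the path from the attachment point to $c$ there exists an \emph{adjacent} pair $(u,v)\in \mathcal{E}^A$ with $u\in\langle p^*,v|\mathcal{T}^A\rangle$ and $v\mathrel{P^*}u$ (because a strictly decreasing sequence on that path would contradict $c\mathrel{P^*}d$). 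My candidate critical spot is the edge $(x,y)=(v,u)$. Condition~(i) of Fact~\ref{fact:PNTrule} follows from a structural lemma that I would state and prove at the start: \emph{if $w\in\langle q,\bar{x}|\mathcal{T}^A\rangle$ then any $P_i$ semi-single-peaked on $\mathcal{T}^A$ w.r.t.~$\bar{x}$ with $r_1(P_i)=q$ is also semi-single-peaked w.r.t.~$w$}; apply this with $w=u$ to any peak in $A^{v\rightharpoonup u}$. Condition~(iii) follows from the fact that $P^*$ itself provides $x\mathrel{P^*}y$ while every preference with peak $u$ (guaranteed by minimal richness, hence by path-connectedness) must, by semi-single-peakedness on $\langle u,\bar{x}|\mathcal{T}^A\rangle$, satisfy $y\mathrel{}x$; combined with a transfer argument along adjacent preferences in $\mathbb{D}$ this yields two preferences with a common peak disagreeing on $\{x,y\}$.

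The main obstacle is Condition~(ii), which is not automatic at an arbitrary witnessing edge. My plan is a \emph{descent/minimality} argument: among all candidate edges $(v',u')$ arising this way from \emph{any} $P\in\mathbb{D}$ violating single-peakedness, pick one for which the subtree $A^{v'\rightharpoonup u'}$ is minimal for set inclusion. If Condition~(ii) failed at this minimal choice, some $P'\in\mathbb{D}$ with peak in $A^{u'\rightharpoonup v'}$ would have $\max^{P'}(A^{v'\rightharpoonup u'})\neq v'$, producing a strictly inner adjacent violation inside $A^{v'\rightharpoonup u'}$ and hence a candidate edge with strictly smaller subtree, contradicting minimality. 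In the extreme case the minimal edge satisfies $A^{v'\rightharpoonup u'}=\{v'\}$ (so $v'$ is a leaf of $\mathcal{T}^A$) and Condition~(ii) is trivial. Carefully propagating single-peakedness one edge deeper inside the branch, and making sure the strictly inner violation actually produces a legal candidate edge (i.e.~with $\bar{x}$ still on the $y$-side), is the delicate part I would spend the most time on.

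Statement (ii) follows by localizing the same argument inside the subtree where hybridness fails. An $(a,b)$-semi-hybrid preference in $\mathbb{D}_{\text{SH}}(\mathcal{T}^A,a,b)$ restricts, on each of $\mathcal{T}^{A^{a\rightharpoonup b}}$ and $\mathcal{T}^{A^{b\rightharpoonup a}}$, to a semi-single-peaked preference (with thresholds $a$ and $b$ respectively by Definition~\ref{def:sh}); a preference in $\mathbb{D}\setminus\mathbb{D}_{\text{H}}(\mathcal{T}^A,a,b)$ must violate single-peakedness on one of these two subtrees, WLOG on $\mathcal{T}^{A^{a\rightharpoonup b}}$. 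Running the descent argument above entirely inside $\mathcal{T}^{A^{a\rightharpoonup b}}$ with the role of $\bar{x}$ played by $a$ locates an adjacent pair $(u,v)\subseteq A^{a\rightharpoonup b}\setminus\{a\}$ that is a critical spot of the required form; the converse direction is the straightforward ``$\Leftarrow$'' argument already given, noting that the violated comparison $x\mathrel{P_i'}y$ lies in $A^{a\rightharpoonup b}\setminus\{a\}$ (or in $A^{b\rightharpoonup a}\setminus\{b\}$) and therefore genuinely breaks $(a,b)$-hybridness rather than merely single-peakedness on $\mathcal{T}^A$.
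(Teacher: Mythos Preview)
Your proposal is essentially correct and shares its skeleton with the paper's proof, but the two diverge in how the critical edge is located. The paper works with the entire set $\mathcal{Z}=\{z:\mathbb{D}\subseteq\mathbb{D}_{\text{SSP}}(\mathcal{T}^A,z)\}$, picks a boundary threshold $x'\in\mathcal{Z}$, and performs a \emph{directed search from a leaf of $\mathcal{T}^A$ towards $x'$}: at each step it asks whether all preferences with peaks on the far side are single-peaked on the growing subtree, and the first edge where this fails is the critical spot. This gives condition~(ii) of Fact~\ref{fact:PNTrule} for free (the previous step's success is exactly single-peakedness on $\mathcal{T}^{A^{x\rightharpoonup y}}$, hence $\max^{P_i}(A^{x\rightharpoonup y})=x$), whereas your minimality/descent argument has to earn it. Your descent does work---if condition~(ii) fails via $c\mathrel{P'}x$ with $c\in A^{x\rightharpoonup y}\setminus\{x\}$, then since $P'$ is semi-single-peaked w.r.t.\ $y$ (your condition~(i)) one has $y\mathrel{P'}x$, so the adjacent increase on the path from $x$ to $c$ lies strictly inside $A^{x\rightharpoonup y}$, yielding a smaller candidate---but you should state this explicitly rather than flag it as ``delicate''; the worry about $\bar{x}$ ending up on the wrong side is unfounded, since $A^{u''\rightharpoonup v''}\supsetneq A^{y\rightharpoonup x}\ni\bar{x}$.

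Two points need tightening. First, your condition~(iii) argument is written for the \emph{original} edge from $P^*$, but after descent you are at a (possibly different) minimal edge generated by some other $P$; you should redo~(iii) there. The ``transfer argument'' you invoke is exactly the paper's: partition peaks in $A^{y\rightharpoonup x}$ into those whose preferences all rank $y\mathrel{P}x$ versus $x\mathrel{P}y$, observe both sets are nonempty, and use that adjacency-witnessing preferences agree below rank~2 (and $x\notin A^{y\rightharpoonup x}$) to contradict connectedness of $G_\sim^{A^{y\rightharpoonup x}}=\mathcal{T}^{A^{y\rightharpoonup x}}$. Second, for statement~(ii) the paper explicitly passes to the restricted domain $\mathbb{D}|_{A^{a\rightharpoonup b}}$, applies~(i) there, and then lifts the critical spot back to $\mathcal{T}^A$ using $(a,b)$-semi-hybridness; you should make this lift explicit, since conditions (i)--(iii) of Fact~\ref{fact:PNTrule} for $\mathcal{T}^A$ involve preferences with peaks anywhere in the large set $A^{y\rightharpoonup x}$, not just in $A^{a\rightharpoonup b}$.
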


\subsection{A refinement of the classification}\label{sec:refinement}

Last, we further restrict the unidimensional domains in question to be tops-only domains, where strategy-proof rules are endogenously completely determined by voters' preference peaks.
Formally, a domain $\mathbb{D}$ is a \textbf{tops-only domain} if every strategy-proof rule $f: \mathbb{D}^n \rightarrow A$, $n\geq 2$, satisfies the tops-only property.

Clearly, as non-tops-only and strategy-proof rules are ruled out, Fact \ref{fact:PNTrule} implies that no critical spot is embedded in a tops-only domain.
Therefore,
by applying Proposition \ref{prop:nontopsonly},
our classification result established in Section \ref{sec:domainclassification} restricted to tops-only domains is refined in the following three ways:
(i) degenerate semi-hybrid domains that exogenously satisfy the unique seconds property are explicitly excluded from the classification
as they admit a non-tops-only and strategy-proof rule (recall Remark \ref{rem:classification}),
(ii) non-degenerate semi-hybrid domains are refined to non-degenerate hybrid domains on the line $\mathcal{L}^A$ (by statement (ii) of Proposition \ref{prop:nontopsonly} and diversity), and
(iii) semi-single-peaked domains are refined to be single-peaked on $\mathcal{L}^A$ (by statement (i) of Proposition \ref{prop:nontopsonly} and diversity).
We use Figure \ref{fig:refinement} to illustrate the refined classification.

\begin{figure}[t]
\begin{tikzpicture}
\node at (2.1,0) {};

\node at (11.8,1.6) {\scriptsize{Non-dictatorial Domains}};
\node at (10.4,-0.6) {\scriptsize{Tops-only, Unidimensional Domains}};
\node at (6.4,-0.1) {\tiny{The Unique Seconds Property}};
\node at (13.4,-0.1) {\tiny{Anonymity}};

\node at (5,0.6) {\tiny{Dictatorial Domains}};
\dashline{3}(180, 32)(180, 2)

\node at (9.8,0.6) {\tiny{Non-degenerate Hybrid Domains on $\mathcal{L}^A$}};
\dashline{3}(379, 32)(379, 2)

\node at (15.2,0.6) {\tiny{Single-peaked Domains on $\mathcal{L}^A$}};

\put(105, 5){\line(1,0){380}}
\put(105, 30){\line(1,0){380}}
\put(105, 5){\line(0,1){25}}
\put(485, 5){\line(0,1){25}}
\put(180,33){\footnotesize{$\overbrace{\rule[0mm]{10.65cm}{0mm}}$}}
\put(105,-5){\footnotesize{$\underbrace{\rule[0mm]{13.35cm}{0mm}}$}}
\end{tikzpicture}
\vspace{-0.5em}
\caption{A classification of non-dictatorial, tops-only, unidimensional domains}\label{fig:refinement}
\end{figure}
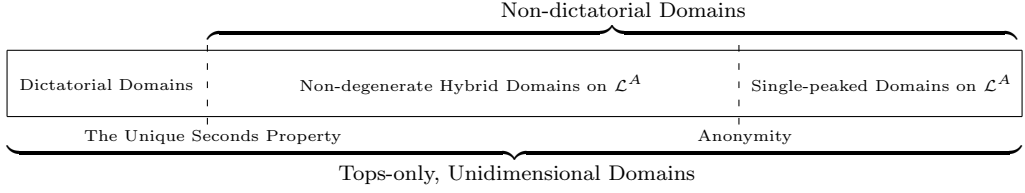

\begin{corollary}\label{cor:topsonly}
Let $\mathbb{D}$ be a non-dictatorial, tops-only, unidimensional domain.
Then, the following two statements hold:
\begin{itemize}
\item[\rm (i)]
There exists an anonymous and strategy-proof rule if and only if $\mathbb{D}$ is a single-peaked domain on $\mathcal{L}^A$.
	
\item[\rm (ii)] There exists no anonymous and strategy-proof rule if and only if $\mathbb{D}$ is a non-degenerate hybrid domain on $\mathcal{L}^A$.
\end{itemize}
\end{corollary}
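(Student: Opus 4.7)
The strategy is to specialize Corollary \ref{cor:anonymity} using the tops-only hypothesis, which eliminates critical spots and thereby collapses the ``semi-'' variants onto their single-peaked and hybrid counterparts via Proposition \ref{prop:nontopsonly}. Since $\mathbb{D}$ is a tops-only domain, every strategy-proof rule is automatically tops-only, so ``anonymous and strategy-proof'' coincides with the condition treated in Corollary \ref{cor:anonymity}. Applying that corollary to the non-dictatorial unidimensional domain $\mathbb{D}$ yields an exhaustive and mutually exclusive dichotomy: (a) $\mathbb{D}$ is a semi-single-peaked domain on some tree $\mathcal{T}^A$ and admits an anonymous strategy-proof rule, or (b) $\mathbb{D}$ is a semi-hybrid domain on some tree $\mathcal{T}^A$ satisfying the unique seconds property and admits no anonymous strategy-proof rule.

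The key bridging observation is that a tops-only domain cannot admit any critical spot in any underlying tree: if it did, Fact \ref{fact:PNTrule} would produce a strategy-proof PNT rule that violates the tops-only property, contradicting the tops-only hypothesis. In case (a), combining this with Proposition \ref{prop:nontopsonly}(i) forces $\mathbb{D} \subseteq \mathbb{D}_{\emph{SP}}(\mathcal{T}^A)$, so $\mathbb{D}$ is a single-peaked domain on $\mathcal{T}^A$; the diversity component of unidimensionality then forces $\mathcal{T}^A$ to be a line, which after relabeling is $\mathcal{L}^A$. Conversely, a single-peaked domain on $\mathcal{L}^A$ is semi-single-peaked on $\mathcal{L}^A$ w.r.t.~any threshold, so Corollary \ref{cor:anonymity}(i) supplies an anonymous strategy-proof projection rule, completing statement (i).

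In case (b), I first exclude the degenerate subcase: a degenerate semi-hybrid domain satisfying the unique seconds property admits the almost-dictatorship constructed in Section \ref{sec:auxiliary}, which is strategy-proof but not tops-only, contradicting the tops-only hypothesis. Hence $\mathbb{D}$ is a non-degenerate $(a,b)$-semi-hybrid domain on some tree $\mathcal{T}^A$. Proposition \ref{prop:nontopsonly}(ii) together with the absence of critical spots yields $\mathbb{D} \subseteq \mathbb{D}_{\emph{H}}(\mathcal{T}^A, a, b)$; diversity again forces $\mathcal{T}^A = \mathcal{L}^A$, and the free zone must have at least three alternatives (else $\mathbb{D}_{\emph{H}}(\mathcal{L}^A, a, b) = \mathbb{D}_{\emph{SP}}(\mathcal{L}^A)$ would push $\mathbb{D}$ into case (a), contradicting the non-existence of an anonymous rule). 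For the converse of (ii), a non-degenerate hybrid domain on $\mathcal{L}^A$ cannot be single-peaked (its free zone admits middle rankings that violate single-peakedness on the line), hence cannot lie in case (a), so by mutual exclusivity of the dichotomy it sits in case (b) and admits no anonymous strategy-proof rule.

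I expect the main obstacle to be the bookkeeping around the minimality condition of Definition \ref{def:asph} for the hybrid classification: one must verify that the dual-thresholds $a,b$ identified as minimal for the semi-hybrid containment remain minimal for the hybrid containment. This should reduce to the inclusion $\mathbb{D}_{\emph{H}} \subseteq \mathbb{D}_{\emph{SH}}$, which ensures that any hybrid shrinking of the free zone would also be a semi-hybrid shrinking, contradicting the original minimality of $\langle a, b|\mathcal{L}^A\rangle$.
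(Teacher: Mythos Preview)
Your proposal is correct and follows essentially the same route as the paper: invoke Corollary~\ref{cor:anonymity} for the semi-single-peaked/semi-hybrid dichotomy, use the tops-only hypothesis with Fact~\ref{fact:PNTrule} and Proposition~\ref{prop:nontopsonly} to eliminate critical spots and collapse the ``semi-'' variants onto single-peakedness and hybridness, and let diversity force the underlying tree to be the line $\mathcal{L}^A$. Your anticipation of the minimality bookkeeping (via $\mathbb{D}_{\textrm{H}}\subseteq\mathbb{D}_{\textrm{SH}}$) is exactly what the paper does.

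The one substantive difference is how you exclude the degenerate case. You argue, following Remark~\ref{rem:classification}, that the almost-dictatorship is a non-tops-only strategy-proof rule, contradicting the tops-only hypothesis directly. The paper instead postpones degeneracy to the end and argues that in the degenerate case every two-voter tops-only strategy-proof rule is a dictatorship (by Theorem~\ref{thm:invariance}(ii)), hence by the tops-only hypothesis every two-voter strategy-proof rule is a dictatorship, and then the ramification theorem of \citet{ACS2003} makes $\mathbb{D}$ dictatorial. Both work; your route is slightly more elementary in that it avoids the ramification theorem, though you should be explicit that the almost-dictatorship's failure of tops-onlyness is not read off its definition but is deduced from the fact that all tops-only strategy-proof rules are dictatorships in that case. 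The paper is also more careful than your sketch about pinning down the line as $\mathcal{L}^A$ specifically (showing the leaves are $a_1,a_m$ and that the labeling agrees with $\underline{P}_i,\overline{P}_i$ outside the free zone), which is worth spelling out.
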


The proof of Corollary \ref{cor:topsonly} is contained in Appendix \ref{app:topsonly}.

\begin{remark}\rm
By Theorem 2 of \citet{CRSSZ2022} and its proof,
one can further decode all strategy-proof rules on a non-dictatorial, tops-only, unidimensional domain $\mathbb{D}$:
when $\mathbb{D}$ is a single-peaked domain on $\mathcal{L}^A$, an SCF is a strategy-proof rule if and only if it is a \emph{fixed ballot rule} (introduced in Proposition 3 of \citealp{M1980});
when $\mathbb{D}$ is a non-degenerate hybrid domain on $\mathcal{L}^A$,
an SCF is a strategy-proof rule if and only if it is a fixed ballot rule that in addition behaves like a dictatorship on a weak superset of the free zone.
\end{remark}

\section{Literature Review and Final Remarks}\label{sec:conclusion}

\subsection{A  review of the literature}\label{sec:literature}

Following the seminal Gibbard-Satterthwaite Theorem,
domain restrictions have received much attention in the literature on strategic voting.
One stream of the literature examines the robustness of the Gibbard-Satterthwaite Theorem by showing that some sparse restricted domains
(see for instance, linked domains of \citet{ACS2003}, circular domains of \citet{S2010}, FPT (Free Pair at the Top) domains of \citet{CS2011}, and
the $\bm{\beta}$ and $\bm{\gamma}$ domains of \citet{P2015}) are in fact dictatorial domains.
These papers use richness assumptions on the domain variously to construct  connectedness relations between alternatives,
while the violation of these richness conditions appear, somewhat surprisingly, to lead to the  unique seconds property in the sense that if the unique seconds property holds, all the aforementioned richness conditions that precipitate dictatorship are violated.\footnote{This assertion can be made more precise by observing that in the case $|A| = 3$, any domain other than the universal domain satisfies the unique seconds property.}
Recently, \citet{RS2019} have shown the role of the unique seconds property in characterizing non-dictatorial domains.
The Auxiliary Proposition here is in the same vein but uses different richness conditions (recall Remark \ref{rem:usp}). More importantly, our focus on the classification of non-dictatorial domains uncovers more meaningful non-dictatorial, strategy-proof rules - projection rules and hybrid rules,
compared to the almost dictatorship associated with the unique seconds property.

Another stream of the literature starts with a specific restricted domain that not only helps escape the Gibbard-Satterthwaite impossibility,
but also accommodates the design of various well-behaved strategy-proof rules.
Almost all such domains are variants of the notion of single-peakedness.
On the single-peaked domain,
the seminal paper \citet{M1980} characterized all anonymous, tops-only and strategy-proof rules as phantom voter rules,
and all tops-only and strategy-proof rules as fixed ballot rules.
In the past four decades, several key variants of single-peakedness have been developed, and non-dictatorial, strategy-proof rules have been explored. \citet{D1982} introduced single-peakedness on a tree and \citet{SV2002} extended Moulin's fixed ballot rules;
	\citet{BGS1993} generalized single-peakedness from a unidimensional underlying line to a multidimensional grid, and discovered an important class of strategy-proof rules: \emph{multidimensional generalized median voter rules};
	\citet{NP2007} adopted a ternary relation to generally address the geometric relation among alternatives,
	invented the notion of \emph{generalized single-peakedness},\footnote{Using the terminology of \citet{NP2007}, the (inclusion/exclusion) separable domain of \citet{BSZ1991}, the multidimensional single-peaked domain of \citet{BGS1993} and the separable domain of \citet{LS1999} can be equivalently translated to generalized single-peaked domains according to three analogous ternary relations respectively.} and characterized all strategy-proof rules: \emph{voting by issues}; and
	recently, \citet{R2015} provided a transition from the single-peaked domain to the universal domain by taking unions of multiple single-peaked domains that are constructed according to different underlying lines, established the notion of a multiple single-peaked domain and characterized all strategy-proof rules as a specific subset of fixed ballot rules which simultaneously preserve features of a dictatorship and of a median voter rule.
	Two comprehensive survey papers, \citet{S1995} and \citet{B2011}, provide more detailed discussions on the development of single-peakedness restrictions and non-dictatorial, strategy-proof rules.
All preference domains considered in this literature are in fact tops-only domains.
We depart from this literature by considering non-tops-only rules; Proposition \ref{prop:nontopsonly}  identifies a critical spot that supports a non-tops-only and strategy-proof rule.

A third stream of the literature poses the following natural ``converse'' question:
is single-peakedness a consequence of the existence of a well-behaved strategy-proof rule?
Earlier literature \citet{BGS1993} showed that if a minimally rich domain admits the median voter rule as a strategy-proof rule,
the domain must be single-peaked.
	Instead of considering a specific rule,
	\citet{CSS2013} established that on a path-connected domain, semi-single-peakedness, rather than single-peakedness,
	is necessary for the existence of an anonymous, tops-only and strategy-proof rule, and
	\citet{CM2018} showed that semilattice single-peakedness, a generalization of semi-single-peakedness, arises as a consequence of the existence of an anonymous, tops-only and strategy-proof rule on a rich domain (where the richness condition is formulated relative to the particular rule that is assumed to exist).
	Recently, \citet{BBM2020} provide an insightful survey that covers these and other related issues.
This literature too restricts attention to the class of strategy-proof rules that in addition satisfy the tops-only property and anonymity,
and is therefore silent on domains that admit tops-only and strategy-proof rules that violate anonymity but remain non-dictatorial.
Our classification theorem
essentially demonstrates that appropriate weakenings of single-peakedness\footnote{\citet{BM2011} introduced another approach of weakening single-peakedness, called \emph{top-monotonicity}.
We briefly introduce the definition of top-monotonicity using our model here.
A preference profile $P$ is said to satisfy \textbf{top-monotonicity} if there exists a line $\widehat{\mathcal{L}}^A$ such that
for each $i \in N$, say $r_1(P_i) = a$, and for all distinct alternatives $b, c \in A$ where $b = r_1(P_j)$ for some $j \in N\backslash \{i\}$,
we have $\big[b \in \langle a, c|\widehat{\mathcal{L}}^A\rangle\big] \Rightarrow [b\mathrel{P}_ic]$.
It is clear that top-monotonicity allows some flexibility in the ranking of an alternative that is never top-ranked at any preference in the profile, and hence weakens the single-peakedness restriction. The restrictions of semi-single-peakedness and semi-hybridness investigated in our paper are independent. For instance, one can construct a profile of semi-single-peaked preferences that fails to meet top-monotonicity. The detailed example is available on request.} characterizes all non-dictatorial domains, and in particular uncovers domains that allow the design of non-tops-only and non-anonymous rules.

Our refinement of the classification of non-dictatorial domains provided in Corollary \ref{cor:topsonly} is also related to the literature on tops-only domains.
In this literature, various restricted domains have been shown to be tops-only domains
\citep[see for instance][]{BSZ1991,BGS1993,Ching1997,LS1999,LW1999,NP2007,W2008,R2015}.
\citet{CS2011} provided two general sufficient conditions for tops-only domains.
Corollary \ref{cor:topsonly}, to our knowledge, is the first result that characterizes necessary conditions for tops-only domains, and therefore reveals the important role of the full single-peakedness requirement, imposed on either the whole line, \newpage

\noindent
or
both the left and right parts of the line,
in establishing a tops-only domain.\footnote{Though Corollary \ref{cor:topsonly} is concerned with non-dictatorial domains,
its proof can also be adopted to show that a dictatorial, unidimensional domain, which of course is a tops-only domain, is an $(a_1, a_m)$-hybrid domain on the line $\mathcal{L}^A$. Conversely, Appendix E of \citet{CZ2020} shows that both single-peakedness and hybridness in conjunction with an additional technical condition on the free zone called \emph{non-trivialness}, are sufficient for a unidimensional domain to be a tops-only domain.}
In a model with single-peaked preferences on the real line that accommodates indifference relations,
\citet{BJ1994} established that a strategy-proof SCF that has a \emph{non-connected} range can violate the tops-only property,
while \citet{W2011} showed that the non-connected range is necessary for the violation of the tops-only property.
The non-tops-only and strategy-proof rules investigated in our paper are independent of this literature as all strategy-proof rules studied here have a full range.

Lastly, we relate our approach to the literature characterizing Condorcet domains.
Given minimal richness and the presence of two completely reversed preferences, \citet{P2018} showed that a ``connected'' domain (i.e., any two distinct preferences of the domain are connected via a path of preferences in the domain where across each consecutive pair on the path, exactly two contiguously ranked alternatives are switched)
is a maximal Condorcet domain if and only if it is single-peaked on a line.
It is clear that a Condorcet domain is a non-dictatorial domain as it supports majority voting as a strategy-proof rule.
However, non-dictatorial, strategy-proof rules  can obtain in settings where the acyclicity of the majority relation  (that is ensured in a Condorcet domain) does not hold.
Our classification theorem indicates that expanding the search for strategy-proof rules to settings where the acyclicity of the majority relation need not hold significantly enlarges the class of domains that admit non-dictatorial, strategy-proof rules.


\subsection{Final remarks}

To conclude, this paper has introduced a methodology based on the analysis of two-voter rules and a simple axiom (invariance) on unidimensional domains in the voting model, using which
we exhaustively classify all unidimensional, non-dictatorial domains as either semi-single-peaked domains or semi-hybrid domains,
which are respectively two weakenings of single-peaked domains that complement each other.
This expands the possibilities for design to models where the restriction of single-peakedness is too demanding (see for instance, multidimensional voting under constraints and allocation of public goods on a transportation network).
We provide some preliminary observations on multidimensional voting in Appendix \ref{app:extension}, and leave its detailed exploration to future work.
This methodology may also be useful beyond the specific issue of classification of non-dictatorial domains that we studied here; for instance, within the voting model it could be used to explore the structure of locally strategy-proof rules or ordinally Bayesian incentive compatible rules. It would be particularly interesting to extend this methodology beyond the voting model to more general setups that include private goods and/or monetary compensation.

\newpage


\newpage
{\small
\appendix
\section*{Appendix}

\section{Proof of the Auxiliary Proposition}\label{app:AP}

By Theorem 5.1 of \citet{ACS2003}, it is clear that if a domain satisfies the unique seconds property,
it is a non-dictatorial domain.
We henceforth focus on showing that given a domain $\mathbb{D}$ satisfying path-connectedness and leaf symmetry, if it is a non-dictatorial domain, it satisfies the unique seconds property.
We first provide 4 important independent lemmas (Lemmas \ref{lem:step1} - \ref{lem:union2}).
For Lemmas \ref{lem:step1} - \ref{lem:union2},
we fix $N = \{1,2\}$, a path-connected domain $\mathbb{D}$ and a strategy-proof rule $f: \mathbb{D}^2 \rightarrow A$.
For ease of presentation, let $\big((x\cdots), (y\cdots)\big)$ denote a profile
where voter $1$ reports an \emph{arbitrary} preference with the peak $x$ and voter 2 reports an \emph{arbitrary} preference with the peak $y$.
More importantly, let $f\big((x\cdots), (y\cdots)\big) = a$ denote ``$f(P_1, P_2) = a$ for all $P_1 \in \mathbb{D}^{x}$ and $P_2 \in \mathbb{D}^y$.''
For notational convenience,
given distinct $x, y \in A$, let $P_i^{\,x,y}$ denote a preference in the domain such that $x$ is top-ranked and $y$ is second ranked.

\begin{lemma}\label{lem:step1}
Given a path $\pi = (x_1, \dots, x_v)$ in $G_{\sim}^A$, the following statements hold:
\begin{itemize}
\item[\rm (i)] if $f\big((x_1\cdots), (x_2\cdots)\big) = x_1$,
$f\big((x_k\cdots), (x_{k'}\cdots)\big) = x_k$ for all $1 \leq k \leq k' \leq v$,

\item[\rm (ii)] if $f\big((x_2\cdots), (x_1\cdots)\big) = x_1$,
$f\big((x_{k'}\cdots), (x_k\cdots)\big) = x_k$ for all $1 \leq k \leq k' \leq v$.

\item[\rm (iii)] if $f\big((x_v\cdots), (x_{v-1}\cdots)\big) = x_v$, $f\big((x_k\cdots), (x_{k'}\cdots)\big) = x_k$ for all $1 \leq k' \leq k \leq v$, and

\item[\rm (iv)] if $f\big((x_{v-1}\cdots), (x_v\cdots)\big) = x_v$, $f\big((x_{k'}\cdots), (x_k\cdots)\big) = x_k$ for all $1 \leq k' \leq k \leq v$.
\end{itemize}
\end{lemma}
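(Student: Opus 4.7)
The plan is to prove statement (i) by induction on the path length $v$, and to derive statements (ii)--(iv) from (i) by symmetric relabelings: swap the two voters for (ii), reverse the path $(x_v, \ldots, x_1)$ for (iii), and combine both operations for (iv). The base case $v = 2$ is immediate from the hypothesis together with unanimity. For the inductive step, I assume (i) on the subpath $(x_1, \ldots, x_{v-1})$, which together with the hypothesis yields $f((x_k\cdots), (x_{k'}\cdots)) = x_k$ for all $1 \leq k \leq k' \leq v-1$; it remains to establish $f((x_k\cdots), (x_v\cdots)) = x_k$ for each $k \in \{1, \ldots, v-1\}$ (the case $k = v$ is unanimity). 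This proceeds in two sub-steps: (a) prove the claim at a specific adjacency preference $Q_2' \in \mathbb{D}^{x_v}$, then (b) upgrade to every $P_2' \in \mathbb{D}^{x_v}$.

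For sub-step (a), I exploit the adjacency $x_{v-1} \sim x_v$ by picking $Q_2 \in \mathbb{D}^{x_{v-1}}$ and $Q_2' \in \mathbb{D}^{x_v}$ that differ only by swapping their top two alternatives. Fix $P_1 \in \mathbb{D}^{x_k}$ and set $y := f(P_1, Q_2')$. The IH (or unanimity when $k = v-1$) delivers $f(P_1, Q_2) = x_k$, and voter 2's strategy-proofness between $Q_2$ and $Q_2'$ in both directions constrains $y$ by $x_k \mathrel{Q_2} y$ and $y \mathrel{Q_2'} x_k$. For $k \leq v-2$, the ranks of $x_k$ in $Q_2$ and $Q_2'$ coincide, and a brief case analysis on whether $y \in \{x_{v-1}, x_v\}$ forces $y = x_k$. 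For $k = v-1$, the inequalities leave $y \in \{x_{v-1}, x_v\}$; ruling out $y = x_v$ combines the already-established $k = 1$ conclusion $f(\tilde P_1, Q_2') = x_1$ for $\tilde P_1 \in \mathbb{D}^{x_1}$ with voter 1's strategy-proofness under a deviation from $\tilde P_1$ to a judiciously chosen $P_1 \in \mathbb{D}^{x_{v-1}}$, such as the adjacency preference of $x_{v-2} \sim x_{v-1}$.

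For sub-step (b), let $w := f(P_1, P_2')$ for an arbitrary $P_2' \in \mathbb{D}^{x_v}$. Voter 2's strategy-proofness between $P_2'$ and every $\widehat P_2 \in \mathbb{D}^{x_{v-1}}$, combined with the universal IH $f(P_1, \widehat P_2) = x_k$, yields $x_k \mathrel{\widehat P_2} w$ for every $\widehat P_2 \in \mathbb{D}^{x_{v-1}}$ together with $w \mathrel{P_2'} x_k$. For $k \leq v-2$, this universal constraint rules out $w = x_{v-1}$ (top of every peak-$x_{v-1}$ preference) and $w = x_v$ (via the adjacency preference of $x_{v-1} \sim x_v$, in which $x_v$ sits above $x_k$); remaining candidates are eliminated by testing against the peak-$x_{v-1}$ preferences furnished by the adjacency structure along a path in $G_\sim^A$ connecting $w$ to $x_{v-1}$. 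For $k = v-1$, an additional voter 1 deviation argument parallel to sub-step (a) is required to exclude $w = x_v$. The main obstacle will be this upgrade step, since path-connectedness constrains alternatives rather than preferences, so transferring the conclusion from the single $Q_2'$ to every $P_2' \in \mathbb{D}^{x_v}$ demands careful mobilization of the universal IH together with systematic candidate elimination using the adjacency preferences along $\pi$; the remaining pieces --- the base case, the adjacency manipulations in (a), and the symmetric derivations of (ii)--(iv) --- are comparatively routine.
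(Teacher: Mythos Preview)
Your reductions of (ii)--(iv) to (i) and the induction on the path length match the paper. Sub-step (a) for $k\le v-2$ is correct. For $k=v-1$ there is a small slip: comparing with $\tilde P_1\in\mathbb{D}^{x_1}$ only yields $y\mathrel{P_1}x_1$, which does not exclude $y=x_v$ since you have no control over the relative position of $x_v$ and $x_1$ in the adjacency preference $P_1^{x_{v-1},x_{v-2}}$. Using $k=v-2$ instead (i.e.\ comparing $P_1^{x_{v-1},x_{v-2}}$ with its adjacency partner $P_1^{x_{v-2},x_{v-1}}$, for which sub-step~(a) already gives outcome $x_{v-2}$) immediately forces $y\in\{x_{v-1},x_{v-2}\}\cap\{x_{v-1},x_v\}=\{x_{v-1}\}$.

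The real gap is sub-step (b). Your elimination scheme rests on the universal constraint ``$x_k\mathrel{\hat P_2}w$ for every $\hat P_2\in\mathbb{D}^{x_{v-1}}$'', but this does \emph{not} pin down $w=x_k$: nothing prevents the domain from containing only peak-$x_{v-1}$ preferences that all rank $x_k$ above some fixed $w\notin\{x_k,x_{v-1},x_v\}$. Invoking ``adjacency preferences along a path from $w$ to $x_{v-1}$'' does not help, since such preferences control only their top two positions and say nothing about the relative rank of $x_k$ versus $w$ further down. The paper avoids this difficulty entirely by reversing the roles: it first establishes $f\big((x_{k}\cdots),(x_{k+1}\cdots)\big)=x_k$ for \emph{all} preferences along each adjacent pair (via Claims~A and~B of Sen~2001), and then, fixing an \emph{arbitrary} $P_2\in\mathbb{D}^{x_l}$, walks voter~1's peak backward from $x_{l-1}$ to $x_1$ using adjacency swaps on voter~1's side. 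The crucial simplification is that once $f(P_1^{x_{k},x_{k+1}},P_2)=x_k$ is shown, strategy-proofness for voter~1 instantly gives $f\big((x_k\cdots),P_2\big)=x_k$, since $x_k$ is voter~1's own peak. Varying the voter whose peak coincides with the target outcome is what makes the upgrade trivial; varying the other voter, as you propose, does not.
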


\begin{proof}
Note that the first two statements are symmetric, and
the last two statements are symmetric.
Moreover, the third statement is analogous to the first one: given statement (i), after relabeling the path $\pi$ such that
$y_k = x_{v+1-k}$ for all $k \in \{1, \dots, v\}$, we modify the hypothesis of statement (iii) to $f\big((y_1\cdots), (y_2\cdots)\big) = f\big((x_v\cdots), (x_{v-1}\cdots)\big) = x_v = y_1$, and then apply statement (i) on the path $(y_1, \dots, y_v)$ to obtain
$f\big((y_k\cdots), (y_{k'}\cdots)\big) = y_k$ for all $1 \leq k \leq k' \leq v$,
which by the relabeling implies $f\big((x_{k}\cdots), (x_{k'}\cdots)\big) = x_{k}$ for all $1 \leq k' \leq k \leq v$.
Therefore, we in the rest of the proof focus on the verification of the first statement.

Since $x_2 \sim x_3$, by Claims A and B of \citet{S2001} and their proofs,\footnote{Claim A of \citet{S2001} and its proof show that given $a, b \in A$ and $P_i^{\,a,b}, P_i^{\, b, a} \in \mathbb{D}$, we have $f(P_1, P_2) \in \{a,b\}$ for all $P_1 \in \mathbb{D}^a$ and $P_2\in \mathbb{D}^b$. Claim B of \citet{S2001} and its proof show that given $a, b \in A$ and $P_i^{\,a,b}, P_i^{\, b, a} \in \mathbb{D}$, if $f(\hat{P}_1, \hat{P}_2) = a$ (respectively, $f(\hat{P}_1, \hat{P}_2) = b$) for some $\hat{P}_1 \in \mathbb{D}^a$ and
$\hat{P}_2 \in \mathbb{D}^b$, then $f\big((a\cdots), (b\cdots)\big) = a$ (respectively, $f\big((a\cdots), (b\cdots)\big) = b$).
Therefore, given $a, b \in A$ with $a \sim b$, either $f\big((a\cdots),(b\cdots)\big) =a$
or $f\big((a\cdots),(b\cdots)\big) = b$ holds.}
we know that either $f\big((x_2\cdots),(x_3\cdots)\big) =x_2$
or $f\big((x_2\cdots),(x_3\cdots)\big) = x_3$ holds.
Suppose $f\big((x_2\cdots),(x_3\cdots)\big) =x_3$.
Since $x_1 \sim x_2$ and $x_3 \sim x_2$,
we have $P_1^{\,x_1,x_2}, P_1^{\,x_2,x_1}\in \mathbb{D}$ such that
$r_k(P_1^{\,x_1,x_2})=r_k(P_1^{\,x_2,x_1})$ for all $k \in \{3, \dots, m\}$, and
$P_2^{\,x_2, x_3},P_2^{\,x_3,x_2} \in \mathbb{D}$ such that $r_k(P_2^{\,x_2,x_3}) = r_k(P_2^{\,x_3, x_2})$
for all $k \in \{3, \dots, m\}$.
Thus, $f(P_1^{\,x_1,x_2}, P_2^{\,x_2, x_3}) = x_1$ and $f(P_1^{\,x_2,x_1}, P_2^{\,x_3,x_2}) = x_3$.
Then, by strategy-proofness, $f(P_1^{\,x_2,x_1}, P_2^{\,x_3,x_2}) = x_3$ implies $f(P_1^{\,x_1,x_2}, P_2^{\,x_3,x_2}) = x_3$.
Consequently,
voter 2 will manipulate at $(P_1^{\,x_1,x_2}, P_2^{\,x_2, x_3})$ via $P_2^{\,x_3,x_2}$.
Therefore, $f\big((x_2\cdots),(x_3\cdots)\big) =x_2$.
Applying the same argument from $x_3$ to $x_v$ step by step,
we eventually have $f\big((x_k\cdots),(x_{k+1}\cdots)\big) =x_k$ for all $k \in \{1, \dots, v-1\}$.

Now, we show statement (i).
We pick an arbitrary $l \in \{3, \dots, v\}$ and provide an induction hypothesis: $f\big((x_k\cdots),(x_{k'}\cdots)\big) = x_k$ for all $1 \leq k \leq k' < l$.
To verify the induction hypothesis, we show $f\big((x_k\cdots),(x_l\cdots)\big) = x_k$ for all $k \in \{1, \dots, l\}$.
Fix arbitrary $P_2 \in \mathbb{D}^{x_{l}}$.
We first know $f\big((x_k\cdots), P_2\big) = x_k$ for both $k \in \{l-1, l\}$.
We next show $f\big((x_{l-2}\cdots), P_2\big) = x_{l-2}$.
Since $x_{l-2} \sim x_{l-1}$, we have $P_1^{\,x_{l-1},x_{l-2}}, P_1^{\,x_{l-2},x_{l-1}} \in \mathbb{D}$ such that
$r_k(P_1^{\,x_{l-1},x_{l-2}}) = r_k(P_1^{\,x_{l-2},x_{l-1}})$ for all $k \in \{3, \dots, m\}$.
Since $f(P_1^{\,x_{l-1},x_{l-2}}, P_2) = x_{l-1}$, strategy-proofness implies $f(P_1^{\,x_{l-2},x_{l-1}}, P_2) \in \{x_{l-1}, x_{l-2}\}$.
If $f(P_1^{\,x_{l-2},x_{l-1}}, P_2) = x_{l-1}$, strategy-proofness implies $f\big(P_1^{\,x_{l-2},x_{l-1}}, (x_{l-1}\cdots)\big) = x_{l-1}$,
which contradicts the induced fact $f\big((x_{l-2}\cdots), (x_{l-1}\cdots)\big) = x_{l-2}$.
Hence, $f(P_1^{\,x_{l-2},x_{l-1}}, P_2) = x_{l-2}$. Then, strategy-proofness implies $f\big((x_{l-2}\cdots), P_2\big) = x_{l-2}$.
Applying the same argument from $x_{l-2}$ to $x_1$ step by step,
we eventually have $f\big((x_k\cdots), P_2) = x_k$ for all $k \in \{1,\dots, l\}$.
This completes the verification of the induction hypothesis, and hence proves the lemma.
\end{proof}

\vspace{-1.5em}
\begin{lemma}\label{lem:union}
Given two subsets $\bar{A},\hat{A}\subseteq A$ with $|\bar{A}|>1$ and $|\hat{A}|>1$,
let $G_{\sim}^{\bar{A}}$ and $G_{\sim}^{\hat{A}}$ be two connected graphs.
Given a path $\pi = (x_1, \dots, x_v)$ in $G_{\sim}^A$, let $x_1 \in \bar{A}$ and $x_v \in \hat{A}$.
If $f$ behaves like a dictatorship on $\bar{A}$ and $\hat{A}$ respectively, then
$f$ behaves like a dictatorship on $\bar{A} \cup \pi\cup \hat{A}$.
\end{lemma}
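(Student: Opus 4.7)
The plan is to proceed in three stages: (1) identify a common dictator for $\bar{A}$ and $\hat{A}$, (2) extend its dictatorship to the path $\pi$, and (3) extend further to the full union $\bar{A}\cup\pi\cup\hat{A}$.

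For stage (1), I assume without loss of generality that voter~$1$ dictates on $\bar{A}$, and argue by contradiction that voter~$2$ dictates on $\hat{A}$. The idea is to compute $f\bigl((x_1\cdots),(x_v\cdots)\bigr)$ in two incompatible ways. Since $|\bar{A}|>1$ and $G_{\sim}^{\bar{A}}$ is connected, $x_1$ has a neighbor $y\in\bar{A}$, and voter~$1$'s dictatorship yields $f\bigl((y\cdots),(x_1\cdots)\bigr)=y$. Prepending $y$ to $\pi$ produces a path $(y,x_1,\dots,x_v)$ in $G_{\sim}^A$, to which Lemma~\ref{lem:step1}(i) applies and forces $f\bigl((x_1\cdots),(x_v\cdots)\bigr)=x_1$. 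Symmetrically, a neighbor $d\in\hat{A}$ of $x_v$ and voter~$2$'s dictatorship give $f\bigl((x_v\cdots),(d\cdots)\bigr)=d$, and Lemma~\ref{lem:step1}(iv) applied to the path $(x_1,\dots,x_v,d)$ yields $f\bigl((x_1\cdots),(x_v\cdots)\bigr)=x_v$, contradicting $x_1\neq x_v$.

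For stage (2), voter~$1$ now dictates on both $\bar{A}$ and $\hat{A}$; the very same two constructions, kept as is, deliver $f\bigl((x_k\cdots),(x_{k'}\cdots)\bigr)=x_k$ for $k\leq k'$ (via Lemma~\ref{lem:step1}(i) on $(y,x_1,\dots,x_v)$) and for $k'\leq k$ (via Lemma~\ref{lem:step1}(iii) on $(x_1,\dots,x_v,d)$), establishing voter~$1$'s dictatorship on $\pi$. For stage (3), given any $u,w\in\bar{A}\cup\pi\cup\hat{A}$, I build a path in $G_{\sim}^A$ from $u$ to $w$ by concatenating a path internal to the subgraph ($\bar{A}$, $\pi$, or $\hat{A}$) containing $u$, followed if needed by $\pi$, and then by a path internal to the subgraph containing $w$. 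Its initial edge lies inside a set on which voter~$1$ already dictates by stages~(1) and~(2), and Lemma~\ref{lem:step1}(i) transmits voter~$1$'s dictatorship from that first edge to the pair $(u,w)$.

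The main technical obstacle I foresee is that the concatenated paths in stages~(1) and~(3) may contain repeated vertices when $\bar{A}$, $\pi$, and $\hat{A}$ share alternatives. I plan to defuse this by a short case analysis: any repetition of the form $y=x_j$ or $d=x_j$ places the shared vertex in $\bar{A}\cap\pi$ or $\hat{A}\cap\pi$, and then the shortcut $(x_1,x_j,x_{j+1},\dots,x_v)$ (or its analogue) is a genuine $G_{\sim}^A$-path whose first edge lies inside a set on which voter~$1$ already dictates, so Lemma~\ref{lem:step1}(i) applies exactly as before.
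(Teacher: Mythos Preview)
Your three-stage plan matches the paper's proof almost exactly: the paper's Claim~1 establishes a common dictator, Claim~2 extends it to $\pi$, and the final paragraph extends to the full union, all via repeated appeals to Lemma~\ref{lem:step1} just as you propose, including the same care about pruning repeated vertices when concatenating paths. The only cosmetic difference is that the paper proves Claim~1 directly (splitting on whether $\bar{A}\cap\hat{A}=\emptyset$ and, in the disjoint case, trimming $\pi$ to a subpath whose interior avoids $\bar{A}\cup\hat{A}$) rather than by your contradiction argument; this incidentally also disposes of the degenerate case $v=1$, where your contradiction $x_1\neq x_v$ would not bite.
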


\begin{proof}
We assume w.l.o.g.~that voter $1$ dictates on $\bar{A}$,
i.e.,
$f(P_1, P_2) = r_1(P_1)$ for all $P_1, P_2 \in \mathbb{D}$ with $r_1(P_1), r_1(P_2) \in \bar{A}$.
\medskip

\noindent
\textsc{Claim 1}:
Voter $1$ also dictates on $\hat{A}$, i.e.,
$[r_1(P_1), r_1(P_2) \in \hat{A}]\Rightarrow [f(P_1, P_2 ) = r_1(P_1)]$.\medskip

We first consider the case $\bar{A}\cap \hat{A} \neq \emptyset$.
Let $x \in \bar{A}\cap \hat{A}$.
Since $G_{\sim}^{\bar{A}}$ is a connected graph and $|\bar{A}| > 1$,
there exists $y \in \bar{A}$ such that $y \sim x$.
Clearly, $f\big((y\cdots), (x\cdots)\big) = y$.
Symmetrically, according to $G_{\sim}^{\hat{A}}$,
there exists $z \in \hat{A}$ such that $z \sim x$.
Clearly, either $y = z$ or $y \neq z$ holds.
If $y = z$, we have $f\big((z\cdots), (x\cdots)\big) = z$,
which immediately implies that voter 1 dictates on $\hat{A}$, since $f$ is assumed to behave like a dictatorship on $\hat{A}$.
If $y\neq z$, we consider the path $(y, x, z)$.
By statement (i) of Lemma \ref{lem:step1},
$f\big((y\cdots), (x\cdots)\big) = y$ implies $f\big((x\cdots), (z\cdots)\big) = x$.
Last, since $f$ behaves like a dictatorship on $\hat{A}$,
we infer that voter 1 dictates on $\hat{A}$.

Next, we assume $\bar{A}\cap \hat{A} = \emptyset$.
Given $x_1 \in \bar{A}$ and $x_v \in \hat{A}$, we can identify $1 \leq k \leq k' \leq v$ such that
$x_k \in \bar{A}$, $x_{k'} \in \hat{A}$ and $x_l \notin \bar{A}\cup \hat{A}$ for all $l \in \{k+1, \dots, k'-1\}$.
Since $G_{\sim}^{\bar{A}}$ is a connected graph and $|\bar{A}|>1$,
there exists $x \in \bar{A}$ such that $x \sim x_k$.
Symmetrically, there exists $y \in \hat{A}$ such that $y \sim x_{k'}$.
Thus, we have a path $\pi' = (x, x_k, \dots, x_{k'}, y)$.
Since voter 1 dictates on $\bar{A}$, we have $f\big((x\cdots),(x_k\cdots)\big) = x$.
Then, according to $\pi'$,
statement (i) of Lemma \ref{lem:step1} implies $f\big((x_{k'}\cdots),(y\cdots)\big) = x_{k'}$.
Moreover, since $f$ behaves like a dictatorship on $\hat{A}$,
we infer that voter 1 dictates on $\hat{A}$.
This completes the verification of the claim.\medskip

\noindent
\textsc{Claim 2}: Voter 1 dictates on $\pi$, i.e., $[r_1(P_1), r_1(P_2) \in \pi]\Rightarrow [f(P_1,P_2)= r_1(P_1)]$.\medskip

If $x_2 \in \bar{A}$, we have $f\big((x_1\cdots),(x_2\cdots)\big) = x_1$ by voter 1's dictatorship on $\bar{A}$.
If $x_2 \notin \bar{A}$, we identify $x_0 \in \bar{A}$ such that $x_0 \sim x_1$.
Clearly, $x_0 \neq x_2$.
Thus, we have $f\big((x_0\cdots),(x_1\cdots)\big) = x_0$ by voter 1's dictatorship on $\bar{A}$.
Then, according to the path $(x_0, x_1, x_2)$,
statement (i) of Lemma \ref{lem:step1} implies $f\big((x_1\cdots),(x_2\cdots)\big) = x_1$.
Overall, $f\big((x_1\cdots),(x_2\cdots)\big) = x_1$.
Then, according to $\pi$,
statement (i) of Lemma \ref{lem:step1} implies $f\big((x_k\cdots),(x_{k'}\cdots)\big) = x_k$ for all $1 \leq k \leq k' \leq v$.
Symmetrically, by voter 1's dictatorship on $\hat{A}$ and statement (iii) of Lemma \ref{lem:step1} on $\pi$,
we also induce $f\big((x_k\cdots),(x_{k'}\cdots)\big) = x_k$ for all $1 \leq k' \leq k \leq v$.
This completes the verification of the claim.\medskip

Last, we show that voter 1 dictates on $\bar{A}\cup \pi\cup \hat{A}$.
We first show that voter 1 dictates on $\bar{A}\cup \pi$.
Given arbitrary preferences $P_1, P_2 \in \mathbb{D}$, let $r_1(P_1) = x \in \bar{A}\cup \pi$ and $r_1(P_2) = y \in \bar{A}\cup \pi$.
If $x=y$, unanimity implies $f(P_1,P_2) = x = r_1(P_1)$.
Next, assume $x \neq y$.
Evidently, if $x, y \in \bar{A}$ or $x, y \in \pi$, we have $f(P_1,P_2) = x$ by voter 1's dictatorship on $\bar{A}$ and $\pi$ respectively.
Last, we consider two cases: (i) $x \in \bar{A}\backslash \pi$ and $y \in \pi\backslash \bar{A}$, and
(ii) $x \in \pi\backslash \bar{A}$ and $y \in \bar{A}\backslash \pi$.
The two cases are symmetric, and we hence focus on the first one.
In the first case, $x \in \bar{A}\backslash \{x_1\}$ and $y = x_k$ for some $1 < k \leq v$.
Since $G_{\sim}^{\bar{A}}$ is a connected graph,
we have a path $(z_1, \dots, z_l)$ in $G_{\sim}^{\bar{A}}$ connecting $x$ and $x_1$.
Now, according to the paths $(z_1, \dots, z_l)$ and $(x_1, \dots, x_k)$,
since $z_l = x_1$, $z_1=x \in \bar{A}\backslash \pi$ and $x_k = y \in \pi\backslash \bar{A}$,
we can identify $1< s \leq l$ and $1 \leq t < k$ such that $z_s = x_t$ and $\{z_1, \dots, z_{s-1}\}\cap \{x_{t+1}, \dots, x_k\} = \emptyset$.
Then, the concatenated path $\hat{\pi} = (z_1, \dots, z_s = x_t, \dots, x_k)$ connects $x$ and $y$.
First, we have $f\big((z_1\cdots), (z_2\cdots)\big) = z_1$ by voter 1's dictatorship on $\bar{A}$.
Next, according to $\hat{\pi}$, statement (i) of Lemma \ref{lem:step1} implies
$f\big((z_1\cdots), (x_k\cdots)\big) = z_1$, and hence, $f(P_1,P_2) = x$, as required.
Therefore, voter 1 dictates on $\bar{A}\cup \pi$.
Last, note that both $G_{\sim}^{\bar{A}\cup \pi}$ and $G_{\sim}^{\hat{A}}$ are connected graphs,
$[\bar{A}\cup \pi]\cap \hat{A}\neq \emptyset$ and voter 1 dictates on $\bar{A}\cup \pi$ and $\hat{A}$ respectively.
By the same argument,
we can infer that voter 1 dictates on $\big[\bar{A} \cup \pi\big]\cup \hat{A} = \bar{A} \cup \pi\cup \hat{A}$.
\end{proof}

\begin{lemma}\label{lem:dictatorship1}
Given a path $\pi = (x_1,\dots, x_v)$, $v \geq 3$, in $G_{\sim}^A$ and two preferences
$P_i^{\,x_1,x_v},P_i^{\,x_v,x_1} \in \mathbb{D}$,
SCF $f$ behaves like a dictatorship on $\pi$.
\end{lemma}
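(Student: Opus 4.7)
The plan invokes Claims~A and~B of \citet{S2001}, already used in the proof of Lemma~\ref{lem:step1}, on the pair $(x_1,x_v)$ whose swap preferences $P_i^{x_1,x_v}$ and $P_i^{x_v,x_1}$ are assumed to lie in $\mathbb{D}$. This yields that $f((x_1\cdots),(x_v\cdots))$ and $f((x_v\cdots),(x_1\cdots))$ are each a constant in $\{x_1,x_v\}$. After relabelling the two voters if necessary, I may suppose $f((x_1\cdots),(x_v\cdots))=x_1$ and aim to show that voter~1 dictates on $\pi$.

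The central step is to verify $f((x_1\cdots),(x_2\cdots))=x_1$, for then Lemma~\ref{lem:step1}(i) yields $f((x_k\cdots),(x_{k'}\cdots))=x_k$ for all $1\le k\le k'\le v$. Suppose for contradiction this value equals $x_2$ (the other option allowed by Sen's constancy on $x_1\sim x_2$); I split on $f((x_2\cdots),(x_1\cdots))\in\{x_1,x_2\}$. In the sub-case $f((x_2\cdots),(x_1\cdots))=x_1$, Lemma~\ref{lem:step1}(ii) applied to $\pi$ gives $f((x_v\cdots),(x_1\cdots))=x_1$; then at $(P_1^{x_2,x_1},P_2^{x_v,x_1})$, voter~1's deviation to $P_1^{x_1,x_v}$ (new outcome $x_1$) constrains the outcome to $\{x_1,x_2\}$, while voter~2's deviation to $P_2^{x_1,x_v}$ (new outcome $x_1$) constrains it to $\{x_1,x_v\}$, so the outcome is pinned to $x_1$. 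Two further strategy-proofness arguments---at $(P_1^{x_3,x_2},P_2^{x_2,x_1})$ forcing $f((x_3\cdots),(x_2\cdots))=x_2$ (otherwise voter~2 would deviate to $P_2^{x_1,x_2}$, obtaining $f((x_3\cdots),(x_1\cdots))=x_1$ via Lemma~\ref{lem:step1}(ii), which she prefers to $x_3$), and at $(P_1^{x_2,x_3},P_2^{x_3,x_2})$ forcing $f((x_2\cdots),(x_3\cdots))=x_2$ (otherwise voter~1 would deviate to $P_1^{x_3,x_2}$ obtaining $x_2$, which she prefers to $x_3$)---enable Lemma~\ref{lem:step1}(i) on the subpath $(x_2,\ldots,x_v)$ to yield $f((x_2\cdots),(x_v\cdots))=x_2$ uniformly over $\mathbb{D}^{x_2}\times\mathbb{D}^{x_v}$, contradicting the pinned value $x_1$ at $(P_1^{x_2,x_1},P_2^{x_v,x_1})$. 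The sub-case $f((x_2\cdots),(x_1\cdots))=x_2$ is handled by a parallel chain of deductions that further subdivides on $f((x_v\cdots),(x_1\cdots))\in\{x_1,x_v\}$ and again reaches a contradiction routed through the weak-adjacency pair.

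With $f((x_1\cdots),(x_2\cdots))=x_1$ in hand, I verify $f((x_v\cdots),(x_{v-1}\cdots))=x_v$ directly: were this constant value $x_{v-1}$, then at $(P_1^{x_v,x_1},P_2^{x_{v-1},x_v})$ voter~1's deviation to $P_1^{x_1,x_v}$ would yield outcome $f((x_1\cdots),(x_{v-1}\cdots))=x_1$ by the rightward conclusion of Lemma~\ref{lem:step1}(i), and since $x_1$ is ranked second while $x_{v-1}$ sits at rank at least three in $P_1^{x_v,x_1}$, voter~1 strictly prefers the deviation, contradicting strategy-proofness. Lemma~\ref{lem:step1}(iii) now delivers voter~1's dictation in every leftward configuration, so voter~1 dictates on $\pi$. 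I expect the main obstacle to be the sub-case analysis of the central step, where each of the cascading strategy-proofness deductions must be carefully coordinated---particularly in the sub-case $f((x_2\cdots),(x_1\cdots))=x_2$---to ensure contradictions with the anchoring equality $f((x_1\cdots),(x_v\cdots))=x_1$.
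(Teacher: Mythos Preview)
There is a concrete error in your first sub-case. You aim to prove $f\big((x_2\cdots),(x_3\cdots)\big)=x_2$ via the profile $(P_1^{x_2,x_3},P_2^{x_3,x_2})$, claiming that if the outcome there were $x_3$ then voter~1 could deviate to $P_1^{x_3,x_2}$ ``obtaining $x_2$''. But after this deviation both voters have peak $x_3$, so unanimity forces the new outcome to be $x_3$, not $x_2$; the manipulation you invoke does not exist. No nearby local swap rescues this step: neither direction of strategy-proofness across the $(x_2,x_3)$ swap pins down more than $f\big((x_2\cdots),(x_3\cdots)\big)\in\{x_2,x_3\}$, which you already knew. In fact, under your sub-case hypotheses voter~2 dictates on $\{x_1,x_2\}$, and the paper's own propagation argument shows this forces $f\big((x_2\cdots),(x_3\cdots)\big)=x_3$ and more generally voter~2's dictatorship along the whole path; the contradiction with your anchoring equality $f\big((x_1\cdots),(x_v\cdots)\big)=x_1$ then arises directly, not through your pinned value at $(P_1^{x_2,x_1},P_2^{x_v,x_1})$. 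So the sub-case does close, but only by importing the paper's step-1 argument (search for a switch point along $\pi$ and use $P_1^{x_v,x_1}$), which your write-up neither states nor replaces.

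Two smaller points. First, ``relabelling the two voters'' alone does not justify the anchoring: if $f\big((x_1\cdots),(x_v\cdots)\big)=x_v$ and $f\big((x_v\cdots),(x_1\cdots)\big)=x_v$, swapping voters still leaves the value at $x_v$; you must allow reversal of the path as well. Second, the companion sub-case $f\big((x_2\cdots),(x_1\cdots)\big)=x_2$ is precisely the non-dictatorial configuration at the $(x_1,x_2)$ end that the paper isolates in its Claim~1, and dispatching it genuinely requires pairing it with the analogous configuration at the $(x_{v-1},x_v)$ end and using the weak-adjacency pair $(x_1,x_v)$ on both sides; your one-line ``parallel chain'' does not indicate how you avoid the same pitfall. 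The paper's organization---first showing that a dictatorship on either end extends to $\pi$, then using $P_i^{x_1,x_v},P_i^{x_v,x_1}$ to force a dictatorship on one end---sidesteps these difficulties.
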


\begin{proof}
We first show that if $f$ behaves like a dictatorship on $\{x_1, x_2\}$, it behaves like a dictatorship on $\pi$.
We assume w.l.o.g. that voter 1 dictates on $\{x_1, x_2\}$, i.e., $f(P_1, P_2) = r_1(P_1)$ for all $P_1, P_2 \in \mathbb{D}$ with $r_1(P_1), r_1(P_2) \in \{x_1, x_2\}$.
Thus, $f\big((x_1\cdots), (x_2\cdots)\big) = x_1$.
By Lemma \ref{lem:union}, it suffices to show that $f$ behaves like a dictatorship on $\{x_{v-1}, x_v\}$.
Since $x_{v-1} \sim x_v$, one of the following three cases occurs:\\
(1) $f$ behaves like a dictatorship on $\{x_{v-1}, x_v\}$,\\
(2) $f\big((x_{v-1}\cdots), (x_v\cdots)\big) = f\big((x_v\cdots), (x_{v-1}\cdots)\big) = x_{v}$, and\\
(3) $f\big((x_{v-1}\cdots), (x_v\cdots)\big) = f\big((x_v\cdots), (x_{v-1}\cdots)\big) = x_{v-1}$.\\
To complete the verification, we rule out the last two cases.
By statement (i) of Lemma \ref{lem:step1} on the path $\pi$,
$f\big((x_1\cdots), (x_2\cdots)\big) = x_1$  implies
$f\big((x_{v-1}\cdots), (x_v\cdots)\big) = x_{v-1}$, which rules out case (2).
Suppose that case (3) occurs.
Since $f\big((x_2\cdots), (x_1\cdots)\big) = x_2$ and $f\big((x_v\cdots), (x_{v-1}\cdots)\big) = x_{v-1} \neq x_v$,
searching on the path $\pi$ from $x_2$ towards $x_{v-1}$, we can identify $1<\bar{k}<v$ such that
$f\big((x_k\cdots), (x_{k-1}\cdots)\big) = x_k$ for all $k \in \{2, \dots, \bar{k}\}$
and $f\big((x_{\bar{k}+1}\cdots), (x_{\bar{k}}\cdots)\big) \neq x_{\bar{k}+1}$.
Thus, by statement (iii) of Lemma \ref{lem:step1} on the subpath $(x_1, \dots, x_{\bar{k}})$,
$f\big((x_{\bar{k}}\cdots), (x_{\bar{k}-1}\cdots)\big) = x_{\bar{k}}$ implies $f\big((x_k\cdots),(x_{k'}\cdots)\big) = x_k$ for all $1 \leq k' \leq k \leq \overline{k}$.
Meanwhile, since $x_{\bar{k}}\sim x_{\bar{k}+1}$ and $f\big((x_{\bar{k}+1}\cdots), (x_{\bar{k}}\cdots)\big) \neq x_{\bar{k}+1}$,
it is easy to show that $f\big((x_{\bar{k}+1}\cdots), (x_{\bar{k}}\cdots)\big) = x_{\bar{k}}$.
Consequently, by statement (ii) of Lemma \ref{lem:step1} on the subpath $(x_{\bar{k}}, \dots, x_v)$,
$f\big((x_{\bar{k}+1}\cdots), (x_{\bar{k}}\cdots)\big) = x_{\bar{k}}$ implies
$f\big((x_s\cdots),(x_{s'}\cdots)\big) = x_{s'}$ for all $\overline{k} \leq s' \leq s \leq v$.
Therefore, we have $f\big(P_1^{\,x_v,x_1}, (x_{\overline{k}}\cdots)\big) = x_{\overline{k}}$.
Consequently, since $f\big((x_1\cdots), (x_{\overline{k}}\cdots)\big) = x_{1}$ and $x_1\mathrel{P_1^{\,x_v,x_1}}x_{\bar{k}}$,
voter 1 will manipulate at $\big(P_1^{\,x_v,x_1}, (x_{\overline{k}}\cdots)\big)$ via
some $P_1 \in \mathbb{D}^{x_1}$.
Hence, case (3) is ruled out, as required.

Symmetrically, we can show that if $f$ behaves like a dictatorship on $\{x_{v-1}, x_v\}$, it behaves like a dictatorship on $\pi$.

Last, to prove the lemma, we show that $f$ behaves like a dictatorship on either $\{x_1, x_2\}$ or $\{x_{v-1}, x_v\}$.
Suppose that it is not true.\medskip

\noindent
\textsc{Claim 1}:
We have
$f\big((x_1\cdots), (x_2\cdots)\big) = f\big((x_2\cdots), (x_1\cdots)\big) = x_2$ and
$f\big((x_{v-1}\cdots), (x_v\cdots)\big) = f\big((x_v\cdots), (x_{v-1}\cdots)\big) = x_{v-1}$.\medskip

Since $x_1 \sim x_2$ and the contradictory hypothesis rules out dictatorships on $\{x_1, x_2\}$,
by strategy-proofness, we have either
$f\big((x_1\cdots), (x_2\cdots)\big) = f\big((x_2\cdots), (x_1\cdots)\big) = x_1$, or
$f\big((x_1\cdots), (x_2\cdots)\big) = f\big((x_2\cdots), (x_1\cdots)\big) = x_2$.
Suppose $f\big((x_1\cdots), (x_2\cdots)\big) = f\big((x_2\cdots), (x_1\cdots)\big) = x_1$.
Thus, we know $f\big(P_1^{\,x_1,x_v}, (x_2\cdots)\big) = x_1$.
Moreover, by statement(ii) of Lemma \ref{lem:step1} on $\pi$,
$f\big((x_2\cdots), (x_1\cdots)\big) = x_1$ implies $f\big(P_1^{\,x_v,x_1}, (x_2\cdots)\big) = x_2$.
Consequently, voter $1$ will manipulate at $\big(P_1^{\,x_v,x_1}, (x_2\cdots)\big)$ via some $P_1^{\,x_1,x_v}$.
Therefore, $f\big((x_1\cdots), (x_2\cdots)\big) = f\big((x_2\cdots), (x_1\cdots)\big) = x_2$.
Symmetrically, we can show
$f\big((x_{v-1}\cdots), (x_v\cdots)\big) = f\big((x_v\cdots), (x_{v-1}\cdots)\big) = x_{v-1}$.
This proves the claim.\medskip

Now, by Claims A and B of \citet{S2001}, according to preference $P_i^{\,x_1,x_v}$ and $P_i^{\,x_v,x_1}$,
we know that either $f\big((x_1\cdots), (x_v\cdots)\big) = x_1$ or
$f\big((x_1\cdots), (x_v\cdots)\big) = x_v$ holds.
To complete the proof, we will induce a contradiction in each case.
First, let $f\big((x_1\cdots), (x_v\cdots)\big) = x_1$.
Since $x_v \sim x_{v-1}$, we have $P_2^{\,x_v,x_{v-1}}, P_2^{\,x_{v-1},x_v} \in \mathbb{D}$ such that
$r_k(P_2^{\,x_v,x_{v-1}}) = r_k(P_2^{\,x_{v-1},x_v})$ for all $k \in \{3, \dots, m\}$.
Thus, $f(P_1^{\,x_1,x_v}, P_2^{\,x_v,x_{v-1}}) = x_1$, which by strategy-proofness implies $f(P_1^{\,x_1,x_v}, P_2^{\,x_{v-1},x_v}) = x_1$.
Furthermore, by strategy-proofness, $f(P_1^{\,x_1,x_v}, P_2^{\,x_{v-1},x_v}) = x_1$ implies
$f(P_1^{\,x_v,x_1}, P_2^{\,x_{v-1},x_v}) \in \{x_1,x_v\}$.
Since $x_v \sim x_{v-1}$, by Claim A of \citet{S2001}, we have $f(P_1^{\,x_v,x_1}, P_2^{\,x_{v-1},x_v}) \in \{x_v, x_{v-1}\}$.
Therefore, $f(P_1^{\,x_v,x_1}, P_2^{\,x_{v-1},x_v}) \in \{x_1,x_v\}\cap \{x_v, x_{v-1}\} = \{x_v\}$, which contradicts Claim 1, as required.
Last, let $f\big((x_1\cdots), (x_v\cdots)\big) = x_v$.
Since $x_1 \sim x_2$, we have $P_1^{\,x_1,x_2}, P_1^{\,x_2,x_1} \in \mathbb{D}$ such that
$r_k(P_1^{\,x_1,x_2}) = r_k(P_1^{\,x_2,x_1})$ for all $k \in \{3, \dots, m\}$.
Thus, $f(P_1^{\,x_1,x_2}, P_2^{\,x_v,x_1}) = x_v$, which by strategy-proofness implies $f(P_1^{\,x_2,x_1}, P_2^{\,x_v,x_1}) = x_v$.
Furthermore, by strategy-proofness, $f(P_1^{\,x_2,x_1}, P_2^{\,x_v,x_1}) = x_v$ implies $f(P_1^{\,x_2,x_1}, P_2^{\,x_1,x_v}) \in \{x_1,x_v\}$.
Since $x_1 \sim x_2$, by Claim A of \citet{S2001}, we have $f(P_1^{\,x_2,x_1}, P_2^{\,x_1,x_v}) \in \{x_1, x_2\}$.
Therefore, $f(P_1^{\,x_2,x_1}, P_2^{\,x_1,x_v}) \in \{x_1,x_v\}\cap \{x_1, x_2\} = \{x_1\}$, which contradicts Claim 1, as required.
This proves the lemma.
\end{proof}

\vspace{-0.5em}
\begin{observation}\label{obs:cycle}\rm
According to Lemma \ref{lem:dictatorship1}, one would observe that
given a cycle $\mathcal{C} = (x_1, \dots, x_v, x_1)$ in $G_{\sim}^A$, i.e., $v \geq 3$, $x_1, \dots, x_v$ are pairwise distinct, and
$x_k \sim x_{k+1}$ for all $k \in \{1, \dots, v\}$, where $x_{v+1} = x_1$,
each two-voter, strategy-proof rule $f: \mathbb{D}^2 \rightarrow A$ must behave like a dictatorship on $\mathcal{C}$.
\end{observation}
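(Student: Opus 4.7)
The plan is to observe that the cycle $\mathcal{C}$ encodes two pieces of information which together match exactly the hypotheses of Lemma \ref{lem:dictatorship1}: on the one hand, deleting the closing edge yields a path $\pi = (x_1, x_2, \dots, x_v)$ in $G_{\sim}^A$ with $v \geq 3$; on the other hand, the closing edge $x_v \sim x_1$ supplies the required ``endpoint-adjacency'' data.

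More precisely, I would first extract the path $\pi = (x_1, \dots, x_v)$. Since by hypothesis $x_k \sim x_{k+1}$ for all $k \in \{1, \dots, v-1\}$ and the $x_k$ are pairwise distinct, $\pi$ is indeed a path in $G_{\sim}^A$ of length $v \geq 3$. Next, I would exploit the closing edge of the cycle: the condition $x_v \sim x_1$ means, by the definition of the adjacency relation, that there exist preferences $P_i, P_i' \in \mathbb{D}$ with $r_1(P_i) = r_2(P_i') = x_1$, $r_1(P_i') = r_2(P_i) = x_v$ and agreement from rank $3$ onward. In the notation of the appendix these are exactly $P_i^{\,x_1,x_v}, P_i^{\,x_v,x_1} \in \mathbb{D}$, i.e., the endpoints of $\pi$ admit preferences in which they are top-ranked with the other endpoint ranked second.

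These are precisely the hypotheses of Lemma \ref{lem:dictatorship1} applied to the path $\pi$. Invoking that lemma, any two-voter, strategy-proof rule $f: \mathbb{D}^2 \rightarrow A$ behaves like a dictatorship on $\pi = \{x_1, \dots, x_v\} = \mathcal{C}$, which is the desired conclusion.

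There is no real obstacle here: the entire content is that a cycle is a path whose endpoints are adjacent, and Lemma \ref{lem:dictatorship1} was tailor-made to consume exactly this structure. The only thing worth flagging explicitly in the write-up is that the closing edge of the cycle is used \emph{only} to produce the preferences $P_i^{\,x_1,x_v}$ and $P_i^{\,x_v,x_1}$, not as part of the path itself; this is why the bound $v \geq 3$ (rather than $v \geq 2$) from Lemma \ref{lem:dictatorship1} is harmless, since a cycle must by definition have at least three distinct vertices.
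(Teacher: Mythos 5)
Your proposal is correct and is exactly the argument the paper intends: the Observation is stated as an immediate consequence of Lemma \ref{lem:dictatorship1}, obtained by viewing the cycle as the path $(x_1,\dots,x_v)$ together with the closing adjacency $x_v\sim x_1$, which supplies the preferences $P_i^{\,x_1,x_v}$ and $P_i^{\,x_v,x_1}$ required by that lemma. Your remark that adjacency delivers even more than the lemma needs (agreement from rank $3$ onward) and that $v\geq 3$ is automatic for a cycle is accurate.
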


\begin{lemma}\label{lem:union2}
Fixing a subset $B\subseteq A$ with $|B| \geq 3$, let $G_{\sim}^B$ be a connected graph.
Then, the following two statements hold:
\begin{itemize}
\item[\rm (i)] if $\emph{Leaf}(G_{\sim}^B) = \emptyset$, then $f$ behaves likes a dictatorship on $B$, and

\item[\rm (ii)] given $\emph{Leaf}(G_{\sim}^B) \neq \emptyset$,
if $f$ behaves like a dictatorship on $\{x, y\}$ for all $x \in \emph{Leaf}(G_{\sim}^B)$ and $(x,y) \in \mathcal{E}_{\sim}^B$,
then $f$ behaves like a dictatorship on $B$.
\end{itemize}
\end{lemma}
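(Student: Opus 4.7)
The plan is to handle both statements by the same strategy: identify a collection of \emph{seed} dictatorial sets and then iteratively apply Lemma \ref{lem:union} to glue them together along paths in $G_{\sim}^B$ until all of $B$ is covered. The seeds will be the cycles of $G_{\sim}^B$ (each dictatorial by Observation \ref{obs:cycle}) for statement (i), and the cycles together with the leaf edges $\{x, y(x)\}$ (dictatorial by the hypothesis) for statement (ii).

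For statement (i), since $\delta(G_{\sim}^B) \geq 2$, a routine walk argument produces at least one cycle in $G_{\sim}^B$, and each such cycle is dictatorial by Observation \ref{obs:cycle}. I then verify the structural claim that every vertex $v \in B$ either lies on a cycle or on a path in $G_{\sim}^B$ whose two endpoints lie on cycles. Indeed, if $v$ is on no cycle, then every edge at $v$ is a bridge; walking from $v$ along each of two distinct incident edges and continuing greedily along unused edges, the walk must (by finiteness and the absence of leaves) eventually revisit a vertex and thereby produce a cycle, so $v$ lies between two cycles $\mathcal{C}, \mathcal{C}'$. Since $G_{\sim}^B$ is connected any two cycles are joined by a path in $G_{\sim}^B$, so by Claim 1 in the proof of Lemma \ref{lem:union} all cycles share the same dictator; iterating Lemma \ref{lem:union} over the cycles together with their connecting paths absorbs every vertex of $B$ into one dictatorial set, which proves (i).

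For statement (ii), the same template applies, but walks are now additionally permitted to terminate at a leaf, which is itself covered by a leaf-edge seed $\{x, y(x)\}$. The structural claim becomes: every $v \in B$ of degree at least two that lies on no cycle lies on a path whose two endpoints are each either a leaf of $G_{\sim}^B$ or lie on a cycle of $G_{\sim}^B$. Iterating Lemma \ref{lem:union} over the collection of seeds (cycles and leaf edges), linked by paths in the connected graph $G_{\sim}^B$, and using Claim 1 of Lemma \ref{lem:union} to pin down a common dictator whenever two seeds share a vertex or are path-linked, absorbs all of $B$ into a single dictatorial set.

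The main obstacle is the graph-theoretic structural claim: in a connected graph of minimum degree two (resp.\ with leaves), every vertex either lies on a cycle or on a bridge-path between two cycles (resp.\ between two sources drawn from cycles and leaves). This is essentially a block-graph decomposition statement and is handled by the walk-until-you-revisit-or-hit-a-leaf argument sketched above. Once it is settled, the remainder is a direct iterative application of Lemma \ref{lem:union} together with Observation \ref{obs:cycle} and the hypothesis of (ii), with no further subtlety in identifying the common dictator.
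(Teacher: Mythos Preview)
Your proposal is correct and follows essentially the same approach as the paper: both identify cycles (dictatorial by Observation \ref{obs:cycle}) and, for (ii), leaf edges (dictatorial by hypothesis) as seeds, invoke the structural fact that every vertex lies on a seed or on a path between two seeds, and then glue everything together by iterating Lemma \ref{lem:union}. The only cosmetic difference is that the paper, for statement (ii), explicitly splits into the tree and non-tree cases (in the latter extracting a maximal leafless connected subgraph and applying (i) to it), whereas you run a single unified walk-until-you-revisit-or-hit-a-leaf argument; the content is the same.
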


\begin{proof}
First, let $\textrm{Leaf}(G_{\sim}^B) = \emptyset$.
Note that for each $x \in B$, $x$ is included in either a cycle or a path that connects two distinct cycles.
Therefore, by Observation \ref{obs:cycle} and Lemma \ref{lem:union}, we infer that $f$ behaves like a dictatorship on $B$.

Next, let $\textrm{Leaf}(G_{\sim}^B) \neq \emptyset$ and $f$ behave like a dictatorship on $\{x, y\}$ for all $x \in \textrm{Leaf}(G_{\sim}^B)$ and $(x,y) \in \mathcal{E}_{\sim}^B$.
For notational convenience, let $\textrm{Leaf}(G_{\sim}^B) = \{x_1, \dots, x_t\}$ and $(x_k, y_k) \in \mathcal{E}_{\sim}^B$ for all $k \in \{1, \dots, t\}$.
Thus, $f$ behaves like a dictatorship on $\{x_k, y_k\}$ for all $k \in \{1, \dots, t\}$.
We consider two cases: $G_{\sim}^B$ is not a tree and $G_{\sim}^B$ is a tree.

In the first case, $G_{\sim}^B$ must include a cycle $\mathcal{C}$.
Then, we can identify a subset $\bar{B} \subset B$ such that $G_{\sim}^{\bar{B}}$ is a connected graph, $\textrm{Leaf}(G_{\sim}^{\bar{B}}) = \emptyset$ and
$[\bar{B} \subset \hat{B} \subseteq B] \Rightarrow \big[\textrm{Leaf}(G_{\sim}^{\hat{B}}) \neq \emptyset\big]$.
Then, by statement (i), $f$ behaves like a dictatorship on $\bar{B}$.
For each $1 \leq k \leq t$, since $G_{\sim}^B$ is a connected graph,
there exist $z_k \in \bar{B}$ and a path $\pi_k = (x_1, \dots, x_{v-1}, x_v)$ in $G_{\sim}^B$ that connects $z_k$ and $x_k$.
Clearly, $x_{v-1} = y_k$.
Then, Lemma \ref{lem:union} implies that $f$ behaves like a dictatorship on $\bar{B}\cup \pi_k$.
Last, since $G_{\sim}^{B}$ in fact is a combination of $G_{\sim}^{\bar{B}}$ and paths $\pi_1, \dots, \pi_t$,
by repeatedly applying Lemma \ref{lem:union}, we conclude that $f$ behaves like a dictatorship on $B$.

Last, we assume that $G_{\sim}^B$ is a tree.
Evidently, $G_{\sim}^B$ has at least two leaves, i.e., $t \geq 2$.
Note that for any two distinct $x_p, x_q \in \textrm{Leaf}(G_{\sim}^B)$, there exists a unique path $\pi_{p,q} = (z_1, z_2, \dots, z_{v-1}, z_v)$ in $G_{\sim}^B$ connecting $x_p$ and $x_q$. Clearly, $z_2 = y_p$ and $z_{v-1} = y_q$ (it is possible that $y_p= y_q$). Then, Lemma \ref{lem:union} implies that $f$ behaves like a dictatorship on $\pi$.
Last, since $G_{\sim}^{B}$ in fact is a combination of all paths $\{\pi_{p,q}: 1 \leq p< q \leq t\}$,
by repeatedly applying Lemma \ref{lem:union}, we conclude that $f$ behaves like a dictatorship on $B$.
\end{proof}

Now, we are ready to show that if the domain $\mathbb{D}$ satisfying path-connectedness and leaf symmetry is non-dictatorial domain,
then $\mathbb{D}$ satisfies the unique seconds property.
Suppose by contradiction that $\mathbb{D}$ violates the unique seconds property.
We show that $\mathbb{D}$ is a dictatorial domain.
By the ramification theorem of \citet{ACS2003}, it suffices to show that every two-voter, strategy-proof rule is a dictatorship.
Henceforth, we fix $N = \{1,2\}$ and a strategy-proof rule $f: \mathbb{D}^2 \rightarrow A$.

By statement (i) of Lemma \ref{lem:union2}, if $\textrm{Leaf}(G_{\sim}^A) =\emptyset$, then $f$ is a dictatorship.
Last, we assume $\textrm{Leaf}(G_{\sim}^A) \neq \emptyset$.
By statement (ii) of Lemma \ref{lem:union2},
it suffices to show that for each $x \in \textrm{Leaf}(G_{\sim}^A)$, $f$ behaves like a dictatorship on $x$ and its unique neighbor in $G_{\sim}^A$.
Fix arbitrary $x \in \textrm{Leaf}(G_{\sim}^A)$, let $(x, y) \in \mathcal{E}_{\sim}^A$.
we show that $f$ behaves like a dictatorship on $\{x, y\}$.
Clearly, $y \in \mathcal{S}(\mathbb{D}^x)$ and the violation of the unique seconds property implies $|\mathcal{S}(\mathbb{D}^x)|>1$.
Then, by leaf symmetry, we have $z \in \mathcal{S}(\mathbb{D}^x)\backslash \{y\}$ such that $x \in \mathcal{S}(\mathbb{D}^z)$.
Hence, we have  $P_i^{\,x, z},P_i^{\,z,x} \in \mathbb{D}$.
Since $G_{\sim}^A$ is a connected graph, there exists a path $\pi = (x_1,\dots, x_v)$ connecting $x$ and $z$.
Consequently, Lemma \ref{lem:dictatorship1} implies that $f$ behaves like a dictatorship on $\pi$.
Last, since $x \in \textrm{Leaf}(G_{\sim}^A)$ and $(x, y) \in \mathcal{E}_{\sim}^A$, it must be the case that $x_2 = y$.
Therefore, $f$ behaves like a dictatorship on $\{x, y\}$, as required.
This proves the Auxiliary Proposition.

\section{Proof of Theorem \ref{thm:invariance}}\label{app:Theorem}
We first introduce two independent lemmas (Lemmas \ref{lem:transitivity} and \ref{lem:RFBR}) which will be repeatedly applied in the proof of both statements of Theorem \ref{thm:invariance}.

For Lemmas \ref{lem:transitivity} and \ref{lem:RFBR},
we fix $N = \{1, 2\}$ and a two-voter, tops-only and strategy-proof rule $f: \mathbb{D}^2 \rightarrow A$.
Since $f$ satisfies the tops-only property, by abuse of notation $f(a, b)$ will represent the social outcome at a preference profile where voter 1 reports a preference with the peak $a$ and voter 2 reports a preference with the peak $b$. Also, $f(a, P_2)$ represents the social outcome at a profile where voter 1 reports a preference with the peak $a$ and voter 2 reports preference $P_2$.

\begin{lemma}\label{lem:transitivity}
Fixing a path $\pi =(x_1,\dots, x_v)$ in $G_{\sim}^A$,
the following statements hold:
\begin{itemize}
\item[\rm (i)] $f(x_s, x_t) \in \{x_s, \dots, x_t\}$ and $f(x_t, x_s) \in \{x_s, \dots, x_t\}$ for all $1 \leq s < t \leq v$,

\item[\rm (ii)] given $i \in N$, $P_i \in \mathbb{D}$ and $x_s \in \pi$,
if $f(P_i, x_s) = x \notin \pi$, then $f(P_i, x_k) = x$ for all $k \in \{1, \dots, v\}$, and

\item[\rm (iii)] given $i \in N$, $P_i \in \mathbb{D}$ and $x_s \in \pi$,
if $f(P_i, x_s) = x_s$, then $f(P_i, x_p) \in \{x_p , \dots, x_s\}$ for all $p \in \{1, \dots, s-1\}$, and
$f(P_i, x_q) \in \{x_s , \dots, x_q\}$ for all $q \in \{s+1, \dots, v\}$.
\end{itemize}
\end{lemma}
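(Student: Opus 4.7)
The plan is to first isolate a dichotomy for adjacent peaks---essentially a sharpening of Claims A and B of \citet{S2001} that exploits the tops-only hypothesis of the lemma---and then derive each of the three statements by a short induction along the path $\pi$.

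The key sub-claim is the following: for any adjacent pair $a \sim b$ in $G_{\sim}^A$ and any peak $c$ for the other voter, either $f(c, a) = f(c, b)$, or $f(c, a) = a$ and $f(c, b) = b$. To prove it, choose preferences $P^{a,b}, P^{b,a} \in \mathbb{D}$ with top two $(a, b)$ and $(b, a)$ respectively and identical bottom $m-2$ rankings; these exist because $a \sim b$. The tops-only property identifies $f(c, P^{a,b}) = f(c, a)$ and $f(c, P^{b,a}) = f(c, b)$. Strategy-proofness applied to the mutual deviations between $P^{a,b}$ and $P^{b,a}$ yields $f(c, a) \mathrel{P^{a,b}} f(c, b)$ and $f(c, b) \mathrel{P^{b,a}} f(c, a)$ whenever the two outcomes differ. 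Since $P^{a,b}$ and $P^{b,a}$ rank any two alternatives outside $\{a, b\}$ identically, a short case split on whether $f(c, a)$ and $f(c, b)$ lie in $\{a, b\}$ rules out every configuration other than $f(c, a) = a$ and $f(c, b) = b$.

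Given the sub-claim, statement (i) follows by induction on $k - s$ with voter 1's peak fixed at $x_s$: the base case $k = s$ is unanimity, and the inductive step applied to the adjacent pair $(x_k, x_{k+1})$ yields either $f(x_s, x_{k+1}) = f(x_s, x_k) \in \{x_s, \ldots, x_k\}$ or $f(x_s, x_{k+1}) = x_{k+1}$, so in either case $f(x_s, x_{k+1}) \in \{x_s, \ldots, x_{k+1}\}$. The bound on $f(x_t, x_s)$ is symmetric under swapping the roles of the two voters. For statement (ii), fix $P_i$ with $f(P_i, x_s) = x \notin \pi$; since $x \neq x_s$, the sub-claim applied to $(x_s, x_{s+1})$ forces $f(P_i, x_{s+1}) = x$, and iterating this step in both directions along $\pi$ propagates $x$ to every $x_k$. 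For statement (iii), with $f(P_i, x_s) = x_s$, induct outward from $s$: assuming $f(P_i, x_q) \in \{x_s, \ldots, x_q\}$, the sub-claim on $(x_q, x_{q+1})$ gives $f(P_i, x_{q+1}) \in \{f(P_i, x_q), x_{q+1}\} \subseteq \{x_s, \ldots, x_{q+1}\}$; the leftward direction is symmetric. The main technical obstacle is the case analysis in the sub-claim; once it is verified, the three statements reduce to routine inductions along $\pi$.
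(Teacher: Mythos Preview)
Your proposal is correct and follows essentially the same approach as the paper. The paper embeds your ``sub-claim'' (the adjacent-pair dichotomy) inside the induction step for (i) and inside the propagation step for (ii), rather than isolating it first; and it derives (iii) by applying (ii) on the subpath $(x_p,\dots,x_s)$ as a contrapositive instead of your direct outward induction---but these are purely organizational differences, not different ideas.
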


\begin{proof}
Fix $1 \leq s < t \leq v$.
Since $x_s \sim x_{s+1}$, it is easy to show $f(x_s, x_{s+1}) \in \{x_s, x_{s+1}\}$.
Next, we pick an arbitrary integer $l \in \{s+2, \dots, t\}$, and
provide an induction hypothesis: $f(x_s, x_{l'}) \in \{x_s, \dots, x_{l'}\}$ for all $l' \in \{s+1, \dots, l-1\}$.
We show $f(x_s, x_l) \in \{x_s, \dots, x_l\}$.
First, the induction hypothesis implies $f(x_s, x_{l-1}) \in \{x_s, \dots,x_{l-1}\}$.
Next, since $x_{l-1} \sim x_l$, we have $P_2, P_2' \in \mathbb{D}$ such that
$r_1(P_2) = r_2(P_2') = x_{l-1}$, $r_1(P_2') = r_2(P_2) = x_l$ and $r_k(P_2) = r_k(P_2')$ for all $k \in \{3, \dots, m\}$.
If $f(x_s, P_2)= f(x_s, x_{l-1}) \in \{x_s, \dots, x_{l-2}\}$, then strategy-proofness implies
$f(x_s, x_{l}) = f(x_s, P_2') = f(x_s, P_2) \in \{x_s, \dots, x_{l-2}\}$.
If $f(x_s, P_2)= f(x_s, x_{l-1}) = x_{l-1}$, then strategy-proofness implies
$f(x_s, x_{l}) = f(x_s, P_2') \in \{x_{l-1}, x_l\}$.
Overall, $f(x_s, x_{l}) \in \{x_s, \dots, x_{l}\}$.
This completes the verification of the induction hypothesis.
Therefore, $f(x_s, x_{t}) \in \{x_s, \dots, x_{t}\}$.
Symmetrically, $f(x_{t}, x_s) \in \{x_s, \dots, x_{t}\}$.
This completes the verification of statement (i).

Next, we show statement (ii).
By symmetry, we assume w.l.o.g.~that $i = 1$.
Given $1 \leq k \leq v$, either $1 \leq k \leq s$ or $s< k \leq v$ holds.
First, given $s< k \leq v$, we consider the path $(x_s, \dots, x_k)$.
Since $x_s \sim x_{s+1}$, we have $P_2, P_2' \in \mathbb{D}$ such that
$r_1(P_2) = r_2(P_2') = x_{s}$, $r_1(P_2') = r_2(P_2) = x_{s+1}$ and $r_k(P_2) = r_k(P_2')$ for all $k \in \{3, \dots, m\}$.
Since $x \notin \{x_s, x_{s+1}\}$, strategy-proofness implies $f(P_1, x_{s+1}) = f(P_1, P_2') = f(P_1, P_2) = f(P_1, x_s) =x$.
According to the path $(x_s, \dots, x_k)$ from $x_{s+1}$ to $x_k$, by repeatedly applying the same argument step by step,
we eventually have $f(P_1, x_k) = x$.
Symmetrically, if $1 \leq k \leq s$, we also induce $f(P_1, x_k) = x$.
This completes the verification of statement (ii).

Last, we prove statement (iii).
By symmetry, we assume w.l.o.g.~that $i = 1$.
Given $p \in \{1, \dots, s-1\}$,
suppose $f(P_1, x_p) = x \notin \{x_p, \dots, x_s\}$.
Then, according to the path $(x_p, \dots, x_s)$, statement (ii) implies $f(P_1, x_s) = x \neq x_s$ - a contradiction.
Therefore, $f(P_1, x_p) \in \{x_p, \dots, x_s\}$.
Similarly, $f(P_1, x_q) \in \{x_s, \dots, x_q\}$ for all $q \in \{s+1, \dots, v\}$.
This completes the verification of statement (iii).
\end{proof}

\vspace{-0.5em}
\begin{lemma}\label{lem:RFBR}
Fixing a path $\pi = (x_1, \dots, x_v)$, $v \geq 3$, in $G_{\sim}^A$,
let $f(x_1, x_v) = x_{\underline{k}}$ and $f(x_v, x_1) =x_{\overline{k}}$.
The following three statements hold: given $x_s, x_t \in \pi$,
\begin{itemize}

\item[\rm (i)]
$\left[\underline{k}< \overline{k}\,\right] \Rightarrow\;
\left[
f(x_s, x_t) = \left\{
\begin{array}{ll}
x_s & \emph{if}\; \underline{k}\leq s \leq \overline{k},\\
x_{\mathop{\emph{med}}(s,\, t, \,\underline{k})} & \emph{if}\; s< \underline{k},\; \emph{and}\\
x_{\mathop{\emph{med}}(s, \,t,\, \overline{k})} & \emph{if}\; s> \overline{k}.
\end{array}
\right.\right]$,

\item[\rm (ii)]
$\left[\underline{k}> \overline{k}\,\right] \Rightarrow\;
\left[f(x_s, x_t) = \left\{
\begin{array}{ll}
x_t & \emph{if}\; \overline{k}\leq t \leq \underline{k},\\
x_{\mathop{\emph{med}}(s, \,t, \,\overline{k})} & \emph{if}\; t< \overline{k},\;\emph{and}\\
x_{\mathop{\emph{med}}(s, \,t, \,\underline{k})} & \emph{if}\; t> \underline{k}.
\end{array}
\right.\right]$, and

\item[\rm (iii)]
$\left[\underline{k}= \overline{k} = k^{\ast}\,\right] \Rightarrow\;
\left[f(x_s, x_t) = x_{\mathop{\emph{med}}(s,\,t,\,k^{\ast})}\; \textrm{for all}\; s, t \in \{1, \dots, v\}\right]$.
\end{itemize}
\end{lemma}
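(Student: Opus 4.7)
My plan is to derive all three formulas by first pinning down $f$ on the four ``boundary'' collections of profiles (where one voter's peak is at an endpoint $x_1$ or $x_v$), and then sweeping inward to an arbitrary profile by combining strategy-proofness on adjacent pairs with the interval restriction given by Lemma~\ref{lem:transitivity}(i).

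\textbf{Step 1 (boundary formulas).} I will establish by induction along consecutive vertices of $\pi$ that
\begin{align*}
f(x_s,x_v) = x_{\max(s,\underline{k})}, \quad & f(x_1,x_t) = x_{\min(t,\underline{k})},\\
f(x_v,x_t) = x_{\max(t,\overline{k})}, \quad & f(x_s,x_1) = x_{\min(s,\overline{k})}.
\end{align*}
The inductive step is the standard swap-pair argument: for $x_k\sim x_{k+1}$, pick $P,P'\in\mathbb{D}$ whose top pair is $(x_k,x_{k+1})$ and $(x_{k+1},x_k)$ respectively and which agree on all lower rankings. If the previous outcome $y$ lies in $\{x_k,x_{k+1}\}$, strategy-proofness of the moving voter forces the new outcome into $\{x_k,x_{k+1}\}$ and Lemma~\ref{lem:transitivity}(i) selects the unique value consistent with the path constraint; if $y\notin\{x_k,x_{k+1}\}$, the ranks of $y$ in $P$ and $P'$ agree, so two-sided strategy-proofness forces the outcome to remain $y$. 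Starting from $f(x_1,x_v)=x_{\underline{k}}$ this yields the first two identities, and starting from $f(x_v,x_1)=x_{\overline{k}}$ yields the latter two.

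\textbf{Step 2 (middle-zone dictatorship, case (i)).} Assume $\underline{k}<\overline{k}$ and fix $s$ with $\underline{k}\le s\le\overline{k}$. Step~1 already gives $f(x_s,x_v)=x_s$ and $f(x_s,x_1)=x_s$. I then sweep voter 2's peak inward from each endpoint using the same swap-pair argument: since $x_s$ sits below the swapped top pair in each swap preference and has identical rank in the two, strategy-proofness on voter~2 forces the outcome to remain $x_s$ at every intermediate peak. This proves $f(x_s,x_t)=x_s$ for all $t$, establishing that voter~1 dictates on the middle zone.

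\textbf{Step 3 (median formula outside the middle zone, case (i)).} For $s<\underline{k}$, I combine the boundary values $f(x_s,x_v)=x_{\underline{k}}$, $f(x_s,x_s)=x_s$ (unanimity) and $f(x_s,x_1)=x_s$ with a three-segment induction on $t$: sweeping voter~2's peak from $x_v$ down to $x_{\underline{k}}$ keeps the outcome at $x_{\underline{k}}$ (since $x_{\underline{k}}$ keeps its rank across each swap); sweeping from $x_{\underline{k}}$ down to $x_s$ makes the outcome track $x_t$ (the outcome and voter 2's top coincide, forcing the local swap conclusion); and sweeping from $x_s$ down to $x_1$ keeps the outcome at $x_s$. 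Each segment matches $x_{\mathop{\mathrm{med}}(s,t,\underline{k})}$. The sub-case $s>\overline{k}$ is the exact mirror of this, using the boundary pair $f(x_s,x_v)=x_s$, $f(x_s,x_1)=x_{\overline{k}}$ and producing $x_{\mathop{\mathrm{med}}(s,t,\overline{k})}$. Case (ii) is then obtained by interchanging the roles of the two voters in case (i). Case (iii) with $\underline{k}=\overline{k}=k^\ast$ is the degenerate common boundary of (i) and (ii): the middle dictator zone collapses to the single index $k^\ast$, both outer formulas coincide with $x_{\mathop{\mathrm{med}}(s,t,k^\ast)}$, and the same Step~1--Step~3 derivation applies with the two outer formulas simply glued at $s=k^\ast$.

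\textbf{Main obstacle.} The delicate bookkeeping is in Step~1. Each inductive transition splits into the subcase where the previous outcome is one of the two swapped top alternatives (in which case strategy-proofness only forces the new outcome into $\{x_k,x_{k+1}\}$ and one must invoke Lemma~\ref{lem:transitivity}(i) to eliminate one of the two) and the subcase where the previous outcome is strictly below both (where preservation of the outcome follows from the rank identity between $P$ and $P'$). Tracking which of $\underline{k}$ and $\overline{k}$ governs each direction of sweep, and matching it to the correct $\min$ or $\max$, is where one must be most careful; once Step~1 is in hand, Steps~2--3 are systematic repetitions of the same local argument.
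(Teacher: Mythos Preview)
Your proposal is correct and the underlying mechanism---propagating outcomes along $\pi$ via adjacent swap-pair arguments, each time constrained by the interval restriction of Lemma~\ref{lem:transitivity}(i)---is exactly what the paper uses. The only difference is organizational: the paper structures the argument as two symmetric claims (one anchored at $f(x_1,x_v)=x_{\underline{k}}$, the other at $f(x_v,x_1)=x_{\overline{k}}$) and invokes Lemma~\ref{lem:step1} as a packaged propagation tool, whereas you unpack the same propagation by hand in a boundary-first-then-interior layout; one small note is that in your Step~3, the third segment (``from $x_s$ down to $x_1$'') is cleanest if swept \emph{upward} from the already-established boundary value $f(x_s,x_1)=x_s$, since sweeping downward from $f(x_s,x_s)=x_s$ leaves the $t=s-1$ step ambiguous without that anchor.
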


\begin{proof}
First, according to $f(x_1, x_v) = x_{\underline{k}}$, we establish the following claim.\medskip

\noindent
\textsc{Claim 1}: We have
$f(x_k, x_{k'}) = \left\{
\begin{array}{ll}
x_{k'} & \textrm{if}\; 1 \leq k\leq k' \leq \underline{k},\\
x_k & \textrm{if}\; \underline{k} \leq k\leq k' \leq v,\; \textrm{and}\\
x_{\underline{k}} & \textrm{if}\; 1 \leq k < \underline{k} < k' \leq v.
\end{array}
\right.$\medskip

If $\underline{k} = 1$, the first part here follows immediately from unanimity.
Next, we assume $\underline{k}>1$ and show the first part.
According to statement (iv) of Lemma \ref{lem:step1} on the subpath $(x_1, \dots, x_{\underline{k}})$,
we know that if $f(x_{\underline{k}-1}, x_{\underline{k}}) = x_{\underline{k}}$, then we have $f(x_k, x_{k'}) = x_{k'}$ for all $1 \leq k\leq k' \leq \underline{k}$, as required.
Therefore, we focus on showing $f(x_{\underline{k}-1}, x_{\underline{k}}) = x_{\underline{k}}$.
Since $x_{\underline{k}-1}\sim x_{\underline{k}}$, it is evident that $f(x_{\underline{k}-1}, x_{\underline{k}}) \in \{x_{\underline{k}-1}, x_{\underline{k}}\}$.
Suppose by contradiction that $f(x_{\underline{k}-1}, x_{\underline{k}}) = x_{\underline{k}-1}$.
Then, according to the subpath $(x_1, \dots, x_{\underline{k}-1})$, statement (iii) of Lemma \ref{lem:transitivity} implies
$f(x_1, x_{\underline{k}}) \in \{x_1, \dots, x_{\underline{k}-1}\}$.
However, $f(x_1, x_v) = x_{\underline{k}}$ implies $f(x_1, x_{\underline{k}}) = x_{\underline{k}}$ by strategy-proofness - a contradiction.
In conclusion, we have $f(x_k, x_{k'}) = x_{k'}$ for all $1 \leq k\leq k' \leq \underline{k}$.

Symmetrically,
according to the subpath $(x_{\underline{k}}, \dots, x_v)$ (no matter $\underline{k} = v$ or $\underline{k}< v$ holds),
we can show $f(x_k, x_{k'})=x_k$ for all $\underline{k} \leq k\leq k' \leq v$, as required by the second part.

Last, we show the third part. Given $1 \leq k < \underline{k} < k' \leq v$, according to the subpath $(x_1, \dots, x_k)$,
by statement (ii) of Lemma \ref{lem:transitivity}, $f(x_1, x_v) = x_{\underline{k}}$ implies $f(x_k, x_v) = x_{\underline{k}}$.
Furthermore, according to the subpath $(x_{k'}, \dots, x_v)$,
by statement (ii) of Lemma \ref{lem:transitivity}, $f(x_k, x_v) = x_{\underline{k}}$ implies $f(x_k, x_{k'}) = x_{\underline{k}}$, as required by the third part.
This completes the verification of the claim.\medskip

Symmetrically, according to $f(x_v, x_1) =x_{\overline{k}}$, we can establish the claim below.\medskip

\noindent
\textsc{Claim 2}: We have
$f(x_{k'}, x_k) = \left\{
\begin{array}{ll}
x_{k'} & \textrm{if}\; 1 \leq k\leq k' \leq \overline{k},\\
x_k & \textrm{if}\; \overline{k} \leq k\leq k' \leq v,\;\textrm{and}\\
x_{\overline{k}} & \textrm{if}\; 1 \leq k < \overline{k} < k' \leq v.
\end{array}
\right.$\medskip

Last, we combine the two claims to prove the lemma.
Note that the verifications of the three statements are symmetric.
We focus on the verification of statement (i).
Let $\underline{k}< \overline{k}$ and fix an arbitrary profile $(x_s, x_t)$.

First, let $\underline{k}\leq s \leq \overline{k}$.
If $s \leq t$, we have $\underline{k}\leq s < t \leq v$ and $f(x_s, x_t) = x_s$ by Claim 1.
If $t < s$, we have $1 \leq t \leq s \leq \overline{k}$ and $f(x_s, x_t) = x_s$ by Claim 2.

Second, let $s < \underline{k}$.
If $t \leq s$, we have $1 \leq t < s < \underline{k} < \overline{k}$ and $f(x_s, x_t) = x_s = x_{\textrm{med}(s,\, t, \,\underline{k})}$ by Claim 2.
If $s < t \leq \underline{k}$, we have $1 \leq s < t \leq \underline{k}$ and
$f(x_s, x_t) = x_t = x_{\textrm{med}(s, \,t,\, \underline{k})}$ by Claim 1.
If $\underline{k}< t$, we have $1 \leq s < \underline{k} < t \leq v$ and $f(x_s, x_t) = x_{\underline{k}} = x_{\textrm{med}(s, \,t, \,\underline{k})}$ by Claim 1.

Last, let $s > \overline{k}$.
If $t < \overline{k}$, we have $1 \leq t < \overline{k}< s \leq v$ and $f(x_s, x_t) = x_{\overline{k}} = x_{\textrm{med}(s, \,t, \,\overline{k})}$ by Claim 2.
If $\overline{k}\leq t\leq s$, we have $\overline{k}\leq t < s \leq v$ and $f(x_s, x_t) = x_t = x_{\textrm{med}(s, t, \overline{k})}$ by Claim 2.
If $s< t$, we have $\underline{k}< \overline{k} < s < t \leq v$ and $f(x_s, x_t) = x_s = x_{\textrm{med}(s, t, \overline{k})}$ by Claim 1.
This proves statement (i).
\end{proof}

Now, we are ready to prove the two statements of Theorem \ref{thm:invariance}.\medskip

\noindent
\textbf{Proof of Statement (i)}.
Let $\mathbb{D}$ be a non-dictatorial, unidimensional domain.
To prove \textbf{Statement (i)}, we show the equivalence of the following three sub-statements:
\begin{itemize}
\item[\rm (1)] there exists an invariant, tops-only and strategy-proof rule,

\item[\rm (2)] domain $\mathbb{D}$ is a semi-single-peaked domain, and

\item[\rm (3)] domain $\mathbb{D}$ admits a two-voter, strategy-proof projection rule.
\end{itemize}

We first show the direction: (2) $\Rightarrow$ (3) $\Rightarrow$ (1).
By the proof of the sufficiency part of the Theorem of \citet{CSS2013}, we know that
if $\mathbb{D}$ is a semi-single-peaked domain, it admits a two-voter, strategy-proof projection rule,
which of course is an invariant, tops-only and strategy-proof rule.
In the rest of the proof, we focus on showing (1) $\Rightarrow$ (2).
More specifically, we complete the proof via the following three steps:\medskip

\noindent
\textbf{Step 1.} We show that the existence of invariant, tops-only and strategy-proof rule implies that the adjacency graph $G_{\sim}^A$ is a tree (see Lemmas \ref{lem:uniquepath} and \ref{lem:tree}).

\noindent
\textbf{Step 2.} Given an invariant, tops-only and strategy-proof rule $f: \mathbb{D}^2 \rightarrow A$ admitted by $\mathbb{D}$,
we characterize the SCF $f$ to be a projection rule on the tree $G_{\sim}^A$ w.r.t.~the threshold which equals the same social outcome at the two test profiles $(\underline{P}_1, \overline{P}_2)$ and $(\overline{P}_1, \underline{P}_2)$ (see Lemma \ref{lem:projection}).
This of course implies that every invariant, tops-only and strategy-proof rule defined on $\mathbb{D}$ is a projection rule.

\noindent
\textbf{Step 3.} By adopting strategy-proofness of the projection rule $f$ characterized in \textbf{Step 2}, we show that $\mathbb{D}$ a semi-single-peaked domain (see Lemma \ref{lem:semisinglepeaked}). This proves (1) $\Rightarrow$ (2).

\begin{lemma}\label{lem:uniquepath}
Recall the two completely reversed preferences $\underline{P}_i$ and $\overline{P}_i$, and their peaks
$r_1(\underline{P}_i) = a_1$ and $r_1(\overline{P}_i) = a_m$.
There exists a unique path in $G_{\sim}^A$ connecting $a_1$ and $a_m$.
\end{lemma}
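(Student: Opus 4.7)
My plan is to argue by contradiction. The existence of at least one path from $a_1$ to $a_m$ in $G_{\sim}^A$ is immediate from the path-connectedness of $\mathbb{D}$. For uniqueness, I will suppose towards a contradiction that $G_{\sim}^A$ admits two distinct paths between $a_1$ and $a_m$. Taking the symmetric difference of their edge sets yields a nonempty subgraph in which every vertex has even degree, so $G_{\sim}^A$ must contain a cycle $\mathcal{C}$ with $|\mathcal{C}|\geq 3$. By Observation \ref{obs:cycle}, the invariant, tops-only, strategy-proof rule $f$ behaves like a dictatorship on $\mathcal{C}$; without loss of generality I take voter $1$ to be the dictator.

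The crux is to show that, for every cycle vertex $a_k\in\mathcal{C}$, both $f(a_k,a_1)=a_k$ and $f(a_k,a_m)=a_k$. To obtain $f(a_k,a_1)=a_k$, I plan to build a simple path in $G_{\sim}^A$ beginning at $a_1$ and ending with an edge inside $\mathcal{C}$ that is incident to $a_k$. By path-connectedness, first take a shortest path from $a_1$ into $\mathcal{C}$, entering the cycle at some $\gamma\in\mathcal{C}$; then continue along the shorter cycle-arc from $\gamma$ to $a_k$, and, in case this arc is empty (i.e., $a_k=\gamma$), append any cycle neighbour of $a_k$ as a final vertex. Since the non-cycle prefix is disjoint from $\mathcal{C}$ and the chosen cycle-arc visits each cycle vertex at most once, the concatenation is simple and its last edge connects two cycle vertices. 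Applying statement (iii) of Lemma \ref{lem:step1} with the hypothesis $f(x_v,x_{v-1})=x_v$ supplied by the cycle dictatorship, and specialising to the index pair that places $a_k$ and $a_1$ in the right positions, yields $f(a_k,a_1)=a_k$. The identical template, carried out on a path starting at $a_m$ and entering $\mathcal{C}$, produces $f(a_k,a_m)=a_k$.

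Now set $\omega:=f(\underline{P}_1,\overline{P}_2)$; by tops-only $\omega=f(a_1,a_m)$, and by invariance $\omega=f(a_m,a_1)$ as well. At the profile $(\underline{P}_1,\overline{P}_2)$, strategy-proofness forbids voter $1$ from profitably deviating to any $P_1^{a_k}\in\mathbb{D}^{a_k}$ with $a_k\in\mathcal{C}$; the deviation yields the outcome $f(a_k,a_m)=a_k$, so no-manipulation forces either $\omega=a_k$ or $\omega\mathrel{\underline{P}_1}a_k$. Because $\underline{P}_1$ orders the alternatives as $a_1\mathrel{\underline{P}_1}a_2\mathrel{\underline{P}_1}\cdots\mathrel{\underline{P}_1}a_m$, the index of $\omega$ is at most $k$. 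The parallel argument at $(\overline{P}_1,\underline{P}_2)$, using the deviation outcome $f(a_k,a_1)=a_k$, exploits the reverse ordering of $\overline{P}_1$ to force the index of $\omega$ to be at least $k$. Combining, the index of $\omega$ must equal $k$ for every $a_k\in\mathcal{C}$; since $|\mathcal{C}|\geq 3$ contains vertices with distinct indices, this is impossible, yielding the desired contradiction and establishing uniqueness.

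The main obstacle is to verify that the extended paths fed into Lemma \ref{lem:step1}(iii) are genuinely simple (no repeated vertices), which is precisely why I choose shortest non-cycle subpaths and the shorter cycle-arc; once this construction is in place, the invariance-plus-strategy-proofness comparison in the final paragraph is a short computation.
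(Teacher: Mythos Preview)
Your argument is correct, but it follows a genuinely different route from the paper's proof. Both proofs extract a cycle $\mathcal{C}$ from the two distinct $a_1$--$a_m$ paths and invoke Observation~\ref{obs:cycle} to obtain a dictatorship on $\mathcal{C}$. From there the paper and you diverge. The paper constructs $\mathcal{C}$ so that two of its vertices $x_s,x_t$ lie on the original path $\pi=(x_1,\dots,x_p)$ with $s<t$; it then applies Lemma~\ref{lem:transitivity}(ii),(iii) along the subpaths of $\pi$ to conclude $g(x_1,x_p)\in\{x_1,\dots,x_s\}$ and $g(x_p,x_1)\in\{x_t,\dots,x_p\}$, two disjoint sets, so every two-voter tops-only strategy-proof rule violates invariance. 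You instead build, for each cycle vertex $a_k$, ad~hoc simple paths from $a_1$ (and from $a_m$) whose terminal edge lies in $\mathcal{C}$, and feed them into Lemma~\ref{lem:step1}(iii) to obtain $f(a_k,a_1)=f(a_k,a_m)=a_k$; then a direct strategy-proofness comparison at $(\underline{P}_1,\overline{P}_2)$ and $(\overline{P}_1,\underline{P}_2)$ pins the invariant outcome $\omega$ to the index of every cycle vertex simultaneously. Your approach does not require the cycle to sit on a specific $a_1$--$a_m$ path and uses only the earlier Lemma~\ref{lem:step1}, at the price of the auxiliary path constructions; the paper's approach is more streamlined once Lemma~\ref{lem:transitivity} is available and once the cycle is chosen to share two vertices with $\pi$. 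Both are valid.
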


\begin{proof}
Since $G_{\sim}^A$ is a connected graph, there exists a path in $G_{\sim}^A$ connecting $a_1$ and $a_m$.
Suppose by contradiction that there are two distinct paths
$\pi = (x_1, \dots, x_p)$ and $\pi' = (y_1, \dots, y_q)$ in $G_{\sim}^A$ connecting $a_1$ and $a_m$.
Then, we can identify $1 \leq s < t \leq p$ and $1 \leq s' < t' \leq q$ with either $t-s>1$ or $t'-s' >1$
such that $x_s = y_{s'}$, $x_{t} = y_{t'}$ and $\{x_{s+1}, \dots, x_{t-1}\}\cap \{y_{s'+1}, \dots, y_{t'-1}\} = \emptyset$.
Consequently, we construct a cycle $\mathcal{C} = (x_s, \dots, x_t = y_{t'}, \dots, y_{s'+1}, y_{s'} = x_s)$.
By Observation \ref{obs:cycle} in Appendix \ref{app:AP}, every two-voter, strategy-proof rule behaves like a dictatorship on $\mathcal{C}$.
Fixing an arbitrary two-voter, tops-only and strategy-proof rule $g: \mathbb{D}^2 \rightarrow A$,
we assume w.l.o.g.~that voter 1 dictates on $\mathcal{C}$, i.e.,
$g(P_1, P_2) = r_1(P_1)$ for all $P_1, P_2 \in \mathbb{D}$ with $r_1(P_1), r_1(P_2) \in \mathcal{C}$.
Thus, $g(x_s, x_t) = x_s$ and $g(x_t, x_s) = x_t$.
According to subpaths $(x_t, \dots, x_p)$ and $(x_s, \dots, x_1)$,
by statements (ii) of Lemma \ref{lem:transitivity}, $g(x_s, x_t) = x_s$ implies $g(x_s, x_p) = x_s$, and $g(x_t, x_s) = x_t$ implies $g(x_t, x_1) = x_t$.
Furthermore, according to subpaths $(x_s, \dots, x_1)$ and $(x_t, \dots, x_p)$,
by statements (iii) of Lemma \ref{lem:transitivity}, $g(x_s, x_p) = x_s$ implies $g(x_1, x_p) \in \{x_1, \dots, x_s\}$ and $g(x_t, x_1) = x_t$ implies $g(x_p, x_1) \in \{x_t, \dots, x_p\}$.
Consequently, we have $g(\underline{P}_1, \overline{P}_2) = g(x_1, x_p) \neq g(x_p, x_1)=g(\overline{P}_1, \underline{P}_2)$, which
implies that $g$ violates invariance.
In conclusion, all two-voter, tops-only and strategy-proof rules violate invariance.
This contradicts the hypothesis that there exists an invariant, tops-only and strategy-proof rule, and hence proves the lemma.
\end{proof}

Let $\pi^{\ast} = (x_1, \dots, x_p)$ be the unique path connecting $a_1$ and $a_m$ in $G_{\sim}^A$.
Note that this path may not include all alternatives of $A$ (recall the adjacency graph $G_{\sim}^A$ in Figure \ref{fig:connectedness}).

\begin{lemma}\label{lem:tree}
The graph $G_{\sim}^A$ is a tree.
\end{lemma}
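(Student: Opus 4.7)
Suppose for contradiction that $G_{\sim}^A$ contains a cycle $\mathcal{C}$; the plan is to exhibit a manipulation of $f$ by voter~$1$ at one of the two test profiles used in the invariance axiom. By Observation \ref{obs:cycle}, $f$ behaves like a dictatorship on $\mathcal{C}$, so without loss of generality voter~$1$ is the dictator on $\mathcal{C}$. Since $\mathcal{C}$ has length at least three while $\pi^{\ast}$ is a simple path, $\mathcal{C}\not\subseteq \pi^{\ast}$, and I will propagate voter~$1$'s peak-outcome identity off the cycle along a carefully chosen escape path.

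The key step is to establish $f(z, a_m) = z$ and $f(z, a_1) = z$ (in the tops-only notation) for a judiciously chosen $z \in \mathcal{C}\setminus \pi^{\ast}$. If $\mathcal{C} \cap \pi^{\ast} \neq \emptyset$, I take $z_s \in \mathcal{C} \cap \pi^{\ast}$ and any $z \in \mathcal{C}\setminus\pi^{\ast}$; the subpath of $\pi^{\ast}$ from $z_s$ to $a_m$ (respectively, to $a_1$) is a path in $G_{\sim}^A$ avoiding $z$ since $z\notin\pi^{\ast}$. If $\mathcal{C} \cap \pi^{\ast} = \emptyset$, connectivity of $G_{\sim}^A$ supplies a shortest path $\gamma$ from some $z_s \in \mathcal{C}$ to some $x_k \in \pi^{\ast}$ whose internal vertices lie outside $\mathcal{C} \cup \pi^{\ast}$; I then pick $z \in \mathcal{C}\setminus\{z_s\}$, and the concatenation of $\gamma$ with the subpath of $\pi^{\ast}$ from $x_k$ to $a_m$ (respectively, to $a_1$) avoids $z$. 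In either case, voter~$1$'s dictatorship on $\mathcal{C}$ yields $f(z, z_s) = z$, and statement (ii) of Lemma \ref{lem:transitivity}, applied with voter~$2$'s peak swept along the path just constructed, delivers $f(z, a_m) = z$ and $f(z, a_1) = z$.

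To conclude, invariance combined with statement (i) of Lemma \ref{lem:transitivity} applied to $\pi^{\ast}$ produces $f(a_1, a_m) = f(a_m, a_1) = x_{k^{\ast}}$ for some $k^{\ast} \in \{1, \dots, p\}$; write $x_{k^{\ast}} = a_j$ and $z = a_{j'}$, so $j \neq j'$ because $z \notin \pi^{\ast}$ while $x_{k^{\ast}} \in \pi^{\ast}$, and observe $\mathbb{D}^z \neq \emptyset$ by path-connectedness. At $(\underline{P}_1, \overline{P}_2)$ the outcome equals $x_{k^{\ast}}$, whereas any deviation by voter~$1$ to a preference in $\mathbb{D}^z$ produces $z$; strategy-proofness therefore demands $x_{k^{\ast}}\mathrel{\underline{P}_1} z$, which, given the ranking $a_1\mathrel{\underline{P}_1}a_2\mathrel{\underline{P}_1}\cdots\mathrel{\underline{P}_1}a_m$, forces $j < j'$. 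The symmetric argument at $(\overline{P}_1, \underline{P}_2)$, using the reversed ranking of $\overline{P}_1$, forces $j > j'$, which contradicts the previous inequality. The hard part of the plan is securing the auxiliary path from $z_s$ to $a_m$ (and to $a_1$) avoiding $z$, but the case split on whether $\mathcal{C}$ intersects $\pi^{\ast}$ resolves it cleanly.
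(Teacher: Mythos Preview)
Your proof is correct and shares the same skeleton as the paper's: both exploit the dictatorship on $\mathcal{C}$ from Observation~\ref{obs:cycle}, propagate the dictator's outcome along paths via Lemma~\ref{lem:transitivity}(ii) to obtain $f(z,a_1)=f(z,a_m)=z$ for some $z\in\mathcal{C}\setminus\pi^{\ast}$, and then finish by combining strategy-proofness with the complete-reversal property of $\underline{P}_1$ and $\overline{P}_1$.

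The organizational difference is that the paper argues with an \emph{arbitrary} tops-only strategy-proof rule $g$ and shows $g(x_1,x_p)\neq g(x_p,x_1)$, which forces a three-case analysis on where the connecting point $x_k$ sits on $\pi^{\ast}$ (interior, left endpoint, right endpoint). You instead fix the invariant $f$ at the outset, so that $f(a_1,a_m)=f(a_m,a_1)=x_{k^{\ast}}$ is a single alternative; the contradiction $x_{k^{\ast}}\mathrel{\underline{P}_1}z$ and $x_{k^{\ast}}\mathrel{\overline{P}_1}z$ then falls out uniformly with no case split. Your route is shorter and cleaner for this lemma; the paper's route yields the nominally stronger intermediate conclusion that \emph{every} two-voter tops-only strategy-proof rule violates invariance once a cycle is present, but that statement is not reused later in the paper, so nothing is lost by your simplification.
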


\vspace{-0.5em}
\begin{proof}
Suppose not, i.e., there exists a cycle $\mathcal{C} = (b_1, \dots, b_v, b_1)$, $v \geq 3$.
By Observation \ref{obs:cycle} in Appendix \ref{app:AP}, every two-voter, strategy-proof rule behaves like a dictatorship on $\mathcal{C}$.
Fixing an arbitrary two-voter, tops-only and strategy-proof rule $g: \mathbb{D}^2 \rightarrow A$,
we assume w.l.o.g.~that voter 1 dictates on $\mathcal{C}$, i.e.,
$g(P_1, P_2) = r_1(P_1)$ for all $P_1, P_2 \in \mathbb{D}$ with $r_1(P_1), r_1(P_2) \in \mathcal{C}$.

We know either $\mathcal{C}\cap \pi^{\ast} = \emptyset$ or $\mathcal{C}\cap \pi^{\ast} \neq \emptyset$.
If $\mathcal{C}\cap \pi^{\ast} = \emptyset$, we can identify $b_s \in \mathcal{C}$, $x_k \in \pi^{\ast}$ and a path
$(y_1, \dots, y_u)$ in $G_{\sim}^A$ connecting $b_s$ and $x_k$ such that
$y_p \notin \mathcal{C}\cup \pi^{\ast}$ for all $p \in \{2, \dots, u-1\}$ (see the first diagram of Figure \ref{fig:cycle}).
If $\mathcal{C}\cap \pi^{\ast} \neq \emptyset$,
we must identify \emph{a unique} alternative of $\pi^{\ast}$ that is contained in $\mathcal{C}$, say $b_s = x_k \in\pi^{\ast}$ (see the second diagram of Figure \ref{fig:cycle}); for otherwise, we can identify two distinct paths in $G_{\sim}^A$ connecting $a_1$ and $a_m$, which contradicts Lemma \ref{lem:uniquepath}.
Overall, we have the cycle $\mathcal{C} = (b_1, \dots, b_v, b_1)$, the path $\pi^{\ast } = (x_1, \dots, x_p)$ and the path
$(y_1, \dots, y_u)$ which may be a null path when $b_s = x_k$.
We consider three cases of $x_k$ on $\pi^{\ast}$: $1< k < p$, $k = 1$ and $k = p$.
In each case, we show that $g$ violates invariance.\bigskip

\hspace{0.5em}
\begin{figurehere}
\begin{tikzpicture}

\begin{scope}[thin]
\draw (0,-1) ellipse (1.2 and 0.4);
\end{scope}

\node at (-3,-1.7) {{\footnotesize$\bullet$}};
\node at (0,-1.7) {{\footnotesize$\bullet$}};
\node at (3,-1.7) {{\footnotesize$\bullet$}};
\node at (0,-1.4) {{\footnotesize$\bullet$}};
\node at (1.2,-1) {{\footnotesize$\bullet$}};

\node at (-3,-2) {$x_1$};
\node at (3,-2) {$x_p$};
\node at (0,-1.95) {$x_{k}$};
\node at (0,-1.1) {$b_s$};
\node at (1.5,-1) {$b_t$};

\draw (-3,-1.7) -- (3,-1.7);
\draw (0,-1.7) -- (0,-1.4);


\begin{scope}[thin]
\draw (7.93,-1.18) ellipse (1 and 0.5);
\end{scope}

\node at (5,-1.7) {{\footnotesize$\bullet$}};
\node at (8,-1.7) {{\footnotesize$\bullet$}};
\node at (11,-1.7) {{\footnotesize$\bullet$}};
\node at (8.92,-1.17) {{\footnotesize$\bullet$}};

\node at (5,-2) {$x_1$};
\node at (11,-2) {$x_p$};
\node at (8,-1.95) {$x_k$};
\node at (8,-1.4) {$b_s$};
\node at (9.2,-1.15) {$b_t$};

\draw (5,-1.7) -- (11,-1.7);
%
\end{tikzpicture}
\vspace{-1em}
\caption{The relation between the cycle $\mathcal{C}$ and the path $\pi^{\ast}$}\label{fig:cycle}
\end{figurehere}\medskip

In the first case $1< k < p$,
fixing $b_t \in \mathcal{C}\backslash  \{b_s\}$, we have $g(b_t, b_s) = b_t$ and $g(b_s, b_t) = b_s$ by voter 1's dictatorship on $\mathcal{C}$.
According to paths $(b_s=y_1, \dots, y_u=x_k,\dots, x_1)$ and $(b_s=y_1, \dots, y_u=x_k, \dots, x_p)$, by statement (ii) of Lemma \ref{lem:transitivity},
$g(b_t, b_s) = b_t$ implies $g(b_t, x_1) = b_t$ and $g(b_t, x_p) = b_t$.
Furthermore, according to paths $(b_t, \dots, b_s=y_1, \dots, y_u=x_k, \dots, x_p)$ and $(b_t, \dots, b_s=y_1, \dots, y_u=x_k,\dots, x_1)$,
by statement (iii) of Lemma \ref{lem:transitivity},
$g(b_t, x_1) = b_t$ implies $g(x_p, x_1) \in \{b_t, \dots, b_s=y_1, \dots, y_u=x_k, \dots, x_p\}$,
and $g(b_t, x_p) = b_t$ implies $g(x_1, x_p) \in \{b_t, \dots, b_s=y_1, \dots, y_u=x_k,\dots, x_1\}$.
Furthermore, according to $\pi^{\ast}$, statement (i) of Lemma \ref{lem:transitivity} implies
$g(x_p, x_1) \in \{x_1, \dots, x_p\}$ and $g(x_1, x_p) \in \{x_1, \dots, x_p\}$.
Therefore, we have $g(x_p, x_1) \in \{b_t, \dots, b_s=y_1, \dots, y_u=x_k, \dots, x_p\} \cap \{x_1, \dots, x_p\} = \{x_k, \dots, x_p\}$ and
$g(x_1, x_p) \in \{b_t, \dots, b_s=y_1, \dots, y_u=x_k,\dots, x_1\} \cap \{x_1, \dots, x_p\} = \{x_1, \dots, x_k\}$.
Thus, $g(x_p, x_1) = x_{\overline{k}}$ for some $\overline{k} \in \{k, \dots, p\}$ and
$g(x_1, x_p) = x_{\underline{k}}$ for some $\underline{k} \in \{1, \dots, k\}$.
Now, given $\overline{P}_1 = \overline{P}_i$ and $\underline{P}_1 = \underline{P}_i$,
according to $g(b_t, x_1) = b_t$ and $g(\overline{P}_1, x_1) = g(x_p, x_1) = x_{\overline{k}}$, and
$g(b_t, x_p) = b_t$ and $g(\underline{P}_1, x_p) = g(x_1, x_p) = x_{\underline{k}}$,
strategy-proofness implies $x_{\overline{k}}\mathrel{\overline{P}_1}b_t$ and $x_{\underline{k}}\mathrel{\underline{P}_1}b_t$.
Consequently, by the fact that $\overline{P}_1$ and $\underline{P}_1$ are complete reversals,
it must be the case that $x_{\overline{k}} \neq x_{\underline{k}}$.
Therefore, $g(\underline{P}_1, \overline{P}_2) = g(x_1, x_p) \neq g(x_p, x_1) =g(\overline{P}_1, \underline{P}_2)$,
which indicates that $g$ violates invariance.
This completes the verification of the first case.

The second and third cases are symmetric.
We focus on the verification of the second case $k  = 1$.
Fixing $b_t \in \mathcal{C}\backslash \{b_s\}$, we have $g(b_t, b_s) = b_t$.
According to the path $(b_s=y_1, \dots, y_u=x_1, \dots, x_p)$,
by statement (ii) of Lemma \ref{lem:transitivity},
$g(b_t, b_s) = b_t$ implies $g(b_t, x_1) = b_t$ and $g(b_t, x_p) = b_t$.
Furthermore, according to the path $(b_t, \dots, b_s = y_1, \dots, y_u = x_1)$,
by statement (iii) of Lemma \ref{lem:transitivity}, $g(b_t, x_p) = b_t$ implies $g(x_1, x_p) \in \{b_t, \dots, b_s = y_1, \dots, y_u = x_1\}$.
Meanwhile, according to $\pi^{\ast}$,
statement (i) of Lemma \ref{lem:transitivity} implies $g(x_1, x_p)\in \{x_1, \dots, x_p\}$.
Hence, it must be the case that $g(x_1, x_p) = x_1$.
Similarly, according to $\pi^{\ast}$,
statement (i) of Lemma \ref{lem:transitivity} implies $g(x_p, x_1)\in \{x_1, \dots, x_p\}$.
Thus, given $\overline{P}_1 = \overline{P}_i$, we have $g(\overline{P}_1, x_1)=g(x_p, x_1) =x_q$ for some $q \in \{1, \dots, p\}$.
Consequently, according to $g(b_t, x_1) = b_t$ and $g(\overline{P}_1, x_1) =x_q$,
strategy-proofness implies $x_q\mathrel{\overline{P}_1}b_t$, which further implies that $x_q$ is never bottom-ranked in $\overline{P}_i$.
Since $\overline{P}_i$ and $\underline{P}_i$ are complete reversals and $r_1(\underline{P}_i) = a_1=x_1$,
$x_1$ must be the bottom-ranked alternative in $\overline{P}_i$.
Therefore, $g(\underline{P}_1, \overline{P}_2) = g(x_1, x_p) = x_1 \neq x_q= g(x_p, x_1) =g(\overline{P}_1, \underline{P}_2)$,
which indicates that $g$ violates invariance.
This completes the verification of the second case.

In conclusion, all two-voter, tops-only and strategy-proof rules violate invariance.
This contradicts the hypothesis that there exists an invariant, tops-only and strategy-proof rule.
This proves the lemma, and completes the verification in \textbf{Step 1}.
\end{proof}

Now, we start the proof in \textbf{Step 2}.
We fix an arbitrary invariant, tops-only and strategy-proof rule $f: \mathbb{D}^2 \rightarrow A$.
According to the unique path $\pi^{\ast} = (x_1, \dots, x_p)$ in the tree $G_{\sim}^A$ connecting $a_1$ and $a_m$,
by the tops-only property, invariance and statement (i) of Lemma \ref{lem:transitivity}, we know $f(x_1, x_p) =f(\underline{P}_1, \overline{P}_2) =f(\overline{P}_1, \underline{P}_2) = f(x_p, x_1) = x_{\bar{k}}$ for some $\bar{k} \in \{1, \dots, p\}$.

\begin{lemma}\label{lem:projection}
SCF $f$ is a projection rule on the tree $G_{\sim}^A$ w.r.t.~$x_{\bar{k}}$, i.e., $f(y, z) = \mathop{\emph{Proj}}\big(x_{\bar{k}}, \langle y, z |G_{\sim}^A\rangle\big)$ for all $y, z \in A$.
\end{lemma}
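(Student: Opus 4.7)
The plan is to establish the formula $f(y,z)=\mathop{\textrm{Proj}}(x_{\bar{k}},\langle y,z|G_{\sim}^A\rangle)$ first on the backbone $\pi^{\ast}=(x_1,\dots,x_p)$ and then to propagate it to all of $A$ via the tree structure of $G_{\sim}^A$, using the already-established data $f(x_1,x_p)=f(x_p,x_1)=x_{\bar{k}}$.

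For the backbone I apply Lemma~\ref{lem:RFBR} statement~(iii) to $\pi^{\ast}$ with $\underline{k}=\overline{k}=\bar{k}$, obtaining $f(x_s,x_t)=x_{\textrm{med}(s,t,\bar{k})}$ for every $s,t\in\{1,\dots,p\}$. Because $G_{\sim}^A$ is a tree (Lemma~\ref{lem:tree}) and $\pi^{\ast}$ is the unique tree-path joining $x_1$ and $x_p$ (Lemma~\ref{lem:uniquepath}), the interval $\langle x_s,x_t|G_{\sim}^A\rangle$ is the subpath of $\pi^{\ast}$ between $x_s$ and $x_t$, so $x_{\textrm{med}(s,t,\bar{k})}$ is exactly $\mathop{\textrm{Proj}}(x_{\bar{k}},\langle x_s,x_t|G_{\sim}^A\rangle)$. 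To extend to arbitrary pairs I induct on the tree-distance of $y$ from $\pi^{\ast}$ (and by symmetry of $z$). For $y\in A\setminus\pi^{\ast}$ take the unique neighbor $y'$ of $y$ in $G_{\sim}^A$ lying on the path from $y$ to $\pi^{\ast}$: then $\langle y,z|G_{\sim}^A\rangle$ extends $\langle y',z|G_{\sim}^A\rangle$ by prepending $y$, and in a tree the projection of $x_{\bar{k}}$ onto the longer path coincides with the projection onto the shorter one. Exploiting the adjacency $y\sim y'$, there exist $P_1,P_1'\in\mathbb{D}$ agreeing from position $3$ onward and swapping $y$ with $y'$ at the top; strategy-proofness forces $f(y,z)=f(y',z)$ or $\{f(y,z),f(y',z)\}=\{y,y'\}$, and combined with the inductive value of $f(y',z)$ the first alternative delivers the desired projection value for $f(y,z)$.

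The main obstacle is excluding the second alternative, namely the pathological case $f(y',z)=y'$ and $f(y,z)=y$. To do so I apply statement~(ii) of Lemma~\ref{lem:transitivity}: if $f(y,z)=y$ and $y$ does not lie on a chosen path $\sigma$ in $G_{\sim}^A$ containing $z$ together with a chunk of $\pi^{\ast}$ (such a $\sigma$ exists by path-connectedness of $\mathbb{D}$), then $f(y,w)=y$ for every $w\in\sigma$, and in particular $f(y,x_1)=f(y,x_p)=y$. Combined with the projection values of $f(x_{k(y)},x_1)$ and $f(x_{k(y)},x_p)$ already computed on the backbone and with the strategy-proofness constraints forced by the completely reversed preferences $\underline{P}_i,\overline{P}_i$ (guaranteed by diversity), this exposes a profitable manipulation. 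In the borderline subcase in which the projection itself equals $y'$, an auxiliary preference supplied either by an adjacency along another branch incident to $y$ (from path-connectedness) or, when $y$ is a leaf of $G_{\sim}^A$, by leaf-symmetry, produces the required violation of strategy-proofness and closes the induction.
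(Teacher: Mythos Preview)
Your overall strategy—pin down $f$ on the backbone $\pi^{\ast}$ via Lemma~\ref{lem:RFBR}(iii) and then propagate outward along the tree—is sound, and the diversity-based contradiction you sketch for the case $f(y,z)=y$ is exactly the engine behind the paper's Claim~1. But the write-up has two real problems.

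First, the sentence ``$\langle y,z|G_{\sim}^A\rangle$ extends $\langle y',z|G_{\sim}^A\rangle$ by prepending $y$, and \dots\ the projection onto the longer path coincides with the projection onto the shorter one'' is simply false whenever $z$ lies in the subtree hanging off $y$ away from $y'$ (i.e., when $y\in\langle y',z|G_{\sim}^A\rangle$). In that configuration $\langle y',z\rangle=\{y'\}\cup\langle y,z\rangle$ and the two projections are $y'$ and $y$ respectively, which are different. Your induction, as stated (``induct on the tree-distance of $y$ from $\pi^{\ast}$'' with $z$ arbitrary), breaks exactly here; you would need to always reduce whichever of $y,z$ is farther from $\pi^{\ast}$, and then handle the complementary pathological case $f(y,z)=z'$ (ruled out by unanimity along the path from $z$ to $y$, not by diversity). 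None of this is in your sketch.

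Second, leaf symmetry plays no role in this lemma. Your ``borderline subcase'' paragraph is a red herring: once you have $f(y,x_1)=f(y,x_p)=y$ (propagated via Lemma~\ref{lem:transitivity}(ii), which is legitimate since $y\notin\pi^{\ast}$), strategy-proofness against $\overline{P}_1$ and $\underline{P}_1$ gives $x_{\bar{k}}\mathrel{\overline{P}_1}y$ and $x_{\bar{k}}\mathrel{\underline{P}_1}y$, contradicting complete reversal—no extra branch at $y$ and no leaf-symmetry witness is needed. The appeal to leaf symmetry, together with the undefined $f(x_{k(y)},x_1)$, suggests you are overcomplicating this step.

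The paper sidesteps both issues by first proving the single anchor statement $f(y,x_{\bar{k}})=f(x_{\bar{k}},y)=x_{\bar{k}}$ for \emph{all} $y$ (this is precisely your diversity argument, done once), and then treating an arbitrary pair $(y,z)$ directly by casing on whether $\mathop{\textrm{Proj}}(x_{\bar{k}},\langle y,z\rangle)$ equals $y$, equals $z$, or is interior. The first two cases follow from Lemma~\ref{lem:step1} applied to the concatenation of $\langle y,z\rangle$ with $\langle y,x_{\bar{k}}\rangle$ (or $\langle z,x_{\bar{k}}\rangle$), and the interior case is pinned down by intersecting the constraints from Lemma~\ref{lem:transitivity}. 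This avoids any induction on distance and hence the troublesome configuration above never arises.
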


\begin{proof}
The proof consists of the following three claims.
\medskip

\noindent
\textsc{Claim 1:} Given $y \in A$,
we have $f(y, x_{\bar{k}}) = f(x_{\bar{k}}, y) = x_{\bar{k}}$.

\medskip
Recall the unique path $\pi^{\ast} = (x_1, \dots, x_p)$ connecting $a_1$ and $a_m$ in $G_{\sim}^A$.
There are two cases: $y \in \pi^{\ast}$ and $y \notin \pi^{\ast}$.
If $y \in \pi^{\ast}$,
then $y = x_k$ for some $k \in \{1, \dots, p\}$, and we hence have
$f(y, x_{\bar{k}}) = f(x_k, x_{\bar{k}}) = \mathop{\textrm{Proj}}\big(x_{\bar{k}}, \langle x_k, x_{\bar{k}}|G_{\sim}^A\rangle\big) = x_{\bar{k}}$ and
$f(x_{\bar{k}}, y) = f(x_{\bar{k}}, x_k) = \mathop{\textrm{Proj}}\big(x_{\bar{k}}, \langle x_{\bar{k}}, x_k|G_{\sim}^A\rangle\big) = x_{\bar{k}}$.

Henceforth, we assume $y \notin \pi^{\ast}$.
We focus on showing $f(y, x_{\bar{k}}) = x_{\bar{k}}$.
By a symmetric proof, one would immediately conclude $f(x_{\bar{k}}, y) = x_{\bar{k}}$.
Let $\langle x_{\bar{k}}, y|G_{\sim}^A\rangle = (z_1, \dots, z_v)$ be the path connecting $x_{\bar{k}}$ and $y$ in the tree $G_{\sim}^A$.
We first show $f(z_2, z_1) = z_1$.
If $z_2 \in \pi^{\ast}$, then $z_2 = x_k$ for some $k \in \{1, \dots, p\}$ and hence we have
$f(z_2, z_1) = f(x_k, x_{\bar{k}}) = \mathop{\textrm{Proj}}\big(x_{\bar{k}}, \langle x_k, x_{\bar{k}}|G_{\sim}^A\rangle\big) = x_{\bar{k}} = z_1$.
Next, let $z_2 \notin \pi^{\ast}$.
Since $z_1 \sim z_2$, it is evident that $f(z_2, z_1) \in \{z_1, z_2\}$.
Suppose by contradiction that $f(z_2, z_1) = z_2$.
We have three cases: $\bar{k} = 1$, $\bar{k} = p$ and $1< \bar{k}< p$.
In each case, we show $f(z_2, x_1) = z_2$ and $f(z_2, x_p) = z_2$.
Note that the first two cases are symmetric, and we hence omit the verification in the second case.
In the first case $\bar{k} = 1$,
the contradictory hypothesis immediately implies $f(z_2, x_1) = f(z_2, z_1) =z_2$.
Furthermore, according to the path $(z_1 = x_{\bar{k}}, \dots, x_p)$, by statement (ii) of Lemma \ref{lem:transitivity}, $f(z_2, z_1) = z_2$ implies $f(z_2, x_p) = z_2$.
In the third case $1< \bar{k}< p$,
according to paths $(z_1 = x_{\bar{k}}, \dots, x_1)$ and $(z_1 = x_{\bar{k}}, \dots, x_p)$,
statement (ii) of Lemma \ref{lem:transitivity} implies $f(z_2, x_1) = z_2$ and $f(z_2, x_p) = z_2$.
Overall, we have $f(z_2, x_1) = z_2$ and $f(z_2, x_p) = z_2$.
Given $\overline{P}_1 = \overline{P}_i$ and $\underline{P}_1 = \underline{P}_i$,
we have $f(\overline{P}_1, x_1) = f(x_p, x_1) = x_{\bar{k}} = z_1$ and $f(\underline{P}_1, x_p) = f(x_1, x_p) = x_{\bar{k}} = z_1$.
Then, strategy-proofness implies $z_1\mathrel{\overline{P}_1}z_2$ (according to $(\overline{P}_1, x_1)$ and $(z_2, x_1)$) and
$z_1\mathrel{\underline{P}_1}z_2$ (according to $(\underline{P}_1, x_p)$ and $(z_2, x_p)$).
This contradicts the fact that $\overline{P}_1$ and $\underline{P}_1$ are complete reversals.
Therefore, $f(z_2, z_1) = z_1$.
Then, according to the path $(z_2, \dots, z_v)$, statement (ii) of Lemma \ref{lem:transitivity} implies
$f(y, x_{\bar{k}}) = f(z_v, z_1) = z_1 = x_{\bar{k}}$. This proves the claim.\medskip

Henceforth, we fix arbitrary $y, z \in A$ and
let $\langle y, z|G_{\sim}^A\rangle = (y_1, \dots, y_u)$.
There are three cases:
(1) $\mathop{\textrm{Proj}}\big(x_{\bar{k}}, \langle y, z|G_{\sim}^A\rangle\big) = y$,
(2) $\mathop{\textrm{Proj}}\big(x_{\bar{k}}, \langle y, z|G_{\sim}^A\rangle\big) = z$ and
(3) $\mathop{\textrm{Proj}}\big(x_{\bar{k}}, \langle y, z|G_{\sim}^A\rangle\big) = y_l$ for some $1 < l < u$.
In each case, we show $f(y, z) = \mathop{\textrm{Proj}}\big(x_{\bar{k}}, \langle y, z |G_{\sim}^A\rangle\big)$.
Note that the first two cases are symmetric, and we hence omit the verification in case (2).
\medskip

\noindent
\textsc{Claim 2}: In case (1), $f(y, z) = y = \mathop{\textrm{Proj}}\big(x_{\bar{k}}, \langle y, z |G_{\sim}^A\rangle\big)$.\medskip

If $y = x_{\bar{k}}$, this claim follows from Claim 1.
Next, assume $y \neq x_{\bar{k}}$.
Let $(b_1, \dots, b_v)$ denote the path in $G_{\sim}^A$ connecting $y$ and $x_{\bar{k}}$.
Since $\mathop{\textrm{Proj}}\big(x_{\bar{k}}, \langle y, z|G_{\sim}^A\rangle\big)=y$,
we have a concatenated path $\pi = (z = y_u, \dots, y_1 = y = b_1, \dots, b_v = x_{\bar{k}})$.
Since $f(b_v, b_{v-1}) = f(x_{\bar{k}}, b_{v-1}) = x_{\bar{k}}=  b_v$,
according to the path $\pi$, statement (iii) of Lemma \ref{lem:step1} in Appendix \ref{app:AP} implies $f(y, z) = y = \mathop{\textrm{Proj}}\big(x_{\bar{k}}, \langle y, z |G_{\sim}^A\rangle\big)$.
This completes the verification of the claim.

\medskip
\noindent
\textsc{Claim 3}: In case (3), $f(y, z) = y_l = \mathop{\textrm{Proj}}\big(x_{\bar{k}}, \langle y, z |G_{\sim}^A\rangle\big)$.\medskip

First, let $(c_1, \dots, c_v)$ denote the path connecting $x_{\bar{k}}$ and $y_l$ in $G_{\sim}^A$,
which may be a null path if $x_{\bar{k}} = y_l$.
According to the path $\langle y, z|G_{\sim}^A\rangle = (y_1, \dots, y_u)$, statement (i) of Lemma \ref{lem:transitivity} implies
$f(y, z) = y_k$ for some $ k \in \{1, \dots, u\}$.
Suppose $k \neq l$. Thus, either $1 \leq k < l$ or $l < k \leq u$ holds.
Moreover, since $\mathop{\textrm{Proj}}\big(x_{\bar{k}}, \langle y, z|G_{\sim}^A\rangle\big) = y_l$,
$k \neq l$ also implies $y_k \neq x_{\bar{k}}$.
If $1 \leq k < l$, according to the concatenated path $\pi' = (z = y_u, \dots, y_l = c_v, \dots, c_1 = x_{\bar{k}})$,
by statement (ii) of Lemma \ref{lem:transitivity}, $f(y, z) = y_k \notin \pi'$ implies $f(y, x_{\bar{k}}) = y_k\neq x_{\bar{k}}$,
which contradicts Claim 1.
Symmetrically, if $l < k \leq u$, according to the concatenated path $\pi'' =(y = y_1, \dots, y_l=c_v, \dots, c_1=x_{\bar{k}})$,
by statement (ii) of Lemma \ref{lem:transitivity}, $f(y, z) = y_k \notin \pi''$ implies $f(x_{\bar{k}}, z) = y_k \neq x_{\bar{k}}$,
which contradicts Claim 1.
Therefore, $f(y, z) = y_l = \mathop{\textrm{Proj}}\big(x_{\bar{k}}, \langle y, z |G_{\sim}^A\rangle\big)$.
This completes the verification of the claim, and hence proves the lemma.
This completes the proof in \textbf{Step 2}.
\end{proof}

Last, we move to \textbf{Step 3} and show that $\mathbb{D}$ is a semi-single-peaked domain.

\begin{lemma}\label{lem:semisinglepeaked}
Domain $\mathbb{D}$ is a semi-single-peaked domain.
\end{lemma}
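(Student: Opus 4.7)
The plan is to invoke Lemma \ref{lem:projection} directly. Having established that every invariant, tops-only and strategy-proof rule $f:\mathbb{D}^2\to A$ coincides with the projection rule on the tree $G_{\sim}^A$ (Lemma \ref{lem:tree}) with respect to the threshold $\bar{x}:=x_{\bar{k}}$, I will show that every preference $P_i\in\mathbb{D}$ satisfies the two conditions of Definition \ref{def:ssp} relative to the tree $G_{\sim}^A$ and the threshold $\bar{x}$. Since $G_{\sim}^A$ is a connected graph, this immediately yields $\mathbb{D}\subseteq \mathbb{D}_{\textrm{SSP}}(G_{\sim}^A,\bar{x})$, so that $\mathbb{D}$ is a semi-single-peaked domain.

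Fix $P_i\in\mathbb{D}$ and write $\alpha:=r_1(P_i)$. For condition (i), pick distinct $a,b\in\langle\alpha,\bar{x}|G_{\sim}^A\rangle$ with $a\in\langle\alpha,b|G_{\sim}^A\rangle$; since $a$ lies between $\alpha$ and $b$ on the unique path from $\alpha$ to $\bar{x}$, it also lies between $b$ and $\bar{x}$ only if $a=b$, hence in fact $b\in\langle a,\bar{x}|G_{\sim}^A\rangle$. By minimal richness (implied by path-connectedness), there exists $P_2\in\mathbb{D}^a$; the projection formula from Lemma \ref{lem:projection} and the tops-only property give $f(P_i,P_2)=\mathop{\textrm{Proj}}\big(\bar{x},\langle\alpha,a|G_{\sim}^A\rangle\big)=a$. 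Now pick any $P_1'\in\mathbb{D}^b$; because $\bar{x}$ sits on the far side of $b$ relative to $a$, the projection of $\bar{x}$ on $\langle a,b|G_{\sim}^A\rangle$ is $b$, whence $f(P_1',P_2)=b$. Strategy-proofness rules out voter 1's deviation from $P_i$ to $P_1'$ at $(P_i,P_2)$, forcing $a\mathrel{P_i}b$.

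For condition (ii), pick $a\notin\langle\alpha,\bar{x}|G_{\sim}^A\rangle$ and let $a':=\mathop{\textrm{Proj}}\big(a,\langle\alpha,\bar{x}|G_{\sim}^A\rangle\big)$. Choose any $P_2\in\mathbb{D}^a$; because $a'$ is the unique vertex of $\langle\alpha,\bar{x}|G_{\sim}^A\rangle$ that lies on $\langle\alpha,a|G_{\sim}^A\rangle$, and because $\bar{x}$ attaches to $\langle\alpha,a|G_{\sim}^A\rangle$ precisely through $a'$, Lemma \ref{lem:projection} yields $f(P_i,P_2)=\mathop{\textrm{Proj}}\big(\bar{x},\langle\alpha,a|G_{\sim}^A\rangle\big)=a'$. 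Taking any $P_1'\in\mathbb{D}^a$, unanimity forces $f(P_1',P_2)=a$. Strategy-proofness again bars voter 1's deviation from $P_i$ to $P_1'$, yielding $a'\mathrel{P_i}a$, as required.

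The genuinely hard work is all contained in the earlier steps---Lemmas \ref{lem:uniquepath}, \ref{lem:tree}, and \ref{lem:projection}, which pin down the graph as a tree and the SCF as a specific projection rule. Once those structural facts are in hand, the present lemma is a mechanical inversion: every instance of the preference restriction asserted by semi-single-peakedness is recovered by confronting truth-telling with a single carefully chosen deviation whose social outcome is computed by the explicit projection formula. The only subtlety is to verify the projection identifications above, and these follow from the elementary property that in a tree the projection of $\bar{x}$ onto a subtree containing a point of $\langle\alpha,\bar{x}|G_{\sim}^A\rangle$ is the first such point met when travelling from $\bar{x}$ toward $\alpha$.
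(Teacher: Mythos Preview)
Your proposal is correct and follows essentially the same approach as the paper's proof: both arguments use the explicit projection formula from Lemma \ref{lem:projection} to compute $f(P_i,P_2)$ and $f(P_1',P_2)$ at suitably chosen profiles, then invoke voter~1's strategy-proofness to recover the two conditions of Definition~\ref{def:ssp}. The only cosmetic difference is that the paper uses the tops-only shorthand $f(P_1,a)$, $f(b,a)$, $f(a,a)$, while you spell out explicit preferences $P_2\in\mathbb{D}^a$ and $P_1'\in\mathbb{D}^b$ (or $\mathbb{D}^a$); the deviations and the resulting inequalities are identical.
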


\begin{proof}
Since $G_{\sim}^A$ is a tree, it suffices to show $\mathbb{D} \subseteq \mathbb{D}_{\textrm{SSP}}(G_{\sim}^A, x_{\bar{k}})$.
Fixing arbitrary $P_1 \in \mathbb{D}$, let $r_1(P_1) = x$.
First, given distinct $a, b \in \langle x, x_{\bar{k}}|G_{\sim}^A\rangle$ and $a \in \langle x, b|G_{\sim}^A\rangle$, we show $a\mathrel{P_1}b$.
By Lemma \ref{lem:projection},
we have $f(P_1, a) = \mathop{\textrm{Proj}}\big(x_{\bar{k}}, \langle x, a|G_{\sim}^A\rangle\big) = a$ and
$f(b, a) = \mathop{\textrm{Proj}}\big(x_{\bar{k}}, \langle b, a|G_{\sim}^A\rangle\big) = b$.
Then, strategy-proofness implies $a\mathrel{P_1}b$, as required.
Next, given $a \notin \langle x, x_{\bar{k}}|G_{\sim}^A\rangle$ and $\mathop{\textrm{Proj}}\big(a, \langle x, x_{\bar{k}}|G_{\sim}^A\rangle\big) = a'$,
we show $a'\mathrel{P_1}a$.
By Lemma \ref{lem:projection}, we have $f(P_1, a) = \mathop{\textrm{Proj}}\big(x_{\bar{k}}, \langle x, a|G_{\sim}^A\rangle\big)=
\mathop{\textrm{Proj}}\big(a, \langle x, x_{\bar{k}}|G_{\sim}^A\rangle\big) = a'$ and $f(a, a) = a$.
Then, strategy-proofness implies $a'\mathrel{P_1}a$, as required.
This completes the proof in \textbf{Step 3}, and proves \textbf{Statement (i)} of Theorem \ref{thm:invariance}.
\end{proof}

Now, we will focus on the proof of \textbf{Statement (ii)} of Theorem \ref{thm:invariance}.\medskip

\noindent
\textbf{Proof of Statement (ii)}:
Let $\mathbb{D}$ be a non-dictatorial, unidimensional domain.
We first show the ``if part'' of \textbf{Statement (ii)}:
``There exists no invariant, tops-only and strategy-proof rule.'' $\Leftarrow$ ``Domain $\mathbb{D}$ is a semi-hybrid domain.''
More specifically, let $\mathbb{D}$ be an $(a,b)$-semi-hybrid domain on a tree $\mathcal{T}^A$.
The proof consists of the following two lemmas.
Recall the two completely reversed preferences $\underline{P}_i$ and $\overline{P}_i$ included in $\mathbb{D}$ by diversity, and their peaks $r_1(\underline{P}_i) = a_1$ and $r_1(\overline{P}_i) =a_m$.
The following lemma indicates that we can assume w.l.o.g.~that $a_1 \in A^{a \rightharpoonup b}$ and $a_m \in A^{b \rightharpoonup a}$.

\begin{lemma}\label{lem:wlog}
There exist a tree $\widehat{\mathcal{T}}^A$ and dual-thresholds $\hat{a}, \hat{b} \in A$ such that
$\mathbb{D}$ is an $(\hat{a}, \hat{b})$-semi-hybrid domain on $\widehat{\mathcal{T}}^A$, $a_1 \in \widehat{A}^{\hat{a} \rightharpoonup \hat{b}} = \big\{x \in A: \hat{a} \in \langle x, \hat{b}|\widehat{\mathcal{T}}^A\rangle\big\}$ and
$a_m \in \widehat{A}^{\hat{b} \rightharpoonup \hat{a}} = \big\{x \in A: \hat{b} \in \langle x, \hat{a}|\widehat{\mathcal{T}}^A\rangle\big\}$.
\end{lemma}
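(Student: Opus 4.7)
The plan is to argue by case analysis on the positions of the peaks $a_1 = r_1(\underline{P}_i)$ and $a_m = r_1(\overline{P}_i)$ relative to the three regions $A^{a \rightharpoonup b} \setminus \{a\}$, $\langle a, b|\mathcal{T}^A\rangle$, and $A^{b \rightharpoonup a} \setminus \{b\}$ cut out by the dual-thresholds $(a, b)$ on $\mathcal{T}^A$, exploiting the complete reversal of $\underline{P}_i$ and $\overline{P}_i$ together with the three conditions of $(a, b)$-semi-hybridness in Definition \ref{def:sh}.

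First, I handle the easy cases. If $a_1 \in A^{a \rightharpoonup b}$ and $a_m \in A^{b \rightharpoonup a}$, take $(\widehat{\mathcal{T}}^A, \hat{a}, \hat{b}) = (\mathcal{T}^A, a, b)$, and the conclusion is immediate. If instead $a_1 \in A^{b \rightharpoonup a}$ and $a_m \in A^{a \rightharpoonup b}$, swap the dual-thresholds by setting $(\widehat{\mathcal{T}}^A, \hat{a}, \hat{b}) = (\mathcal{T}^A, b, a)$; since the three conditions of Definition \ref{def:asspsh} are symmetric in the role of the two dual-thresholds, the semi-hybrid domain structure is preserved by the swap, and the two peaks now land in the desired outer regions $\widehat{A}^{\hat{a} \rightharpoonup \hat{b}}$ and $\widehat{A}^{\hat{b} \rightharpoonup \hat{a}}$ respectively.

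Second, I treat the remaining configurations, where at least one of $a_1, a_m$ lies in the free zone $\langle a, b|\mathcal{T}^A\rangle$, or where both peaks lie on the same outer side. The plan is to reduce such configurations to the easy case by constructing new dual-thresholds $(\hat{a}, \hat{b})$ along the path $\langle a_1, a_m|\mathcal{T}^A\rangle$ --- concretely, taking $\hat{a}$ to be $a_1$ itself (when $a_1$ is in the middle zone) or the first vertex of $\langle a, b|\mathcal{T}^A\rangle$ encountered on the path from $a_1$ (when $a_1$ lies on the ``wrong'' outer side), and symmetrically for $\hat{b}$, possibly together with a suitable choice of a new tree $\widehat{\mathcal{T}}^A$. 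The verification that $\mathbb{D} \subseteq \mathbb{D}_{\textrm{SH}}(\widehat{\mathcal{T}}^A, \hat{a}, \hat{b})$ relies on how conditions (i)--(iii) of $(a,b)$-semi-hybridness constrain the rankings of each $P_i \in \mathbb{D}$ along the paths to $a$ and $b$, and how the complete reversal of $\underline{P}_i$ and $\overline{P}_i$ pins down the ranking of the alternatives that are newly added to each outer region.

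The main obstacle is ensuring that the new triple $(\widehat{\mathcal{T}}^A, \hat{a}, \hat{b})$ also satisfies conditions (ii) (minimality of the free zone) and (iii) (richness) of Definition \ref{def:asspsh}. Minimality is the central tension: if the new free zone $\langle \hat{a}, \hat{b}|\widehat{\mathcal{T}}^A\rangle$ were a strict subset of $\langle a, b|\mathcal{T}^A\rangle$, this would directly contradict the hypothesis that $\mathbb{D}$ is already an $(a, b)$-semi-hybrid domain on $\mathcal{T}^A$. Hence the analysis must show that the only configurations consistent with $\underline{P}_i$ and $\overline{P}_i$ being complete reversals in $\mathbb{D}$ are those for which the construction yields a new triple whose free zone is compatible with Definition \ref{def:asspsh}(ii) --- in particular, that the semi-single-peaked restrictions forced on $\underline{P}_i$ along $\langle a_1, a|\mathcal{T}^A\rangle$ (which makes the bottom of $\underline{P}_i$ gravitate toward $A^{b \rightharpoonup a} \setminus \{b\}$) preclude any genuine strict shrinking, so that both the minimality of the new free zone and the conclusion of the lemma can hold simultaneously.
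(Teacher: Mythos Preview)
Your plan correctly isolates the easy cases and correctly identifies condition~(ii) of Definition~\ref{def:asspsh} as the main obstacle, but the resolution you propose in the hard case does not work. You hope that the complete reversal of $\underline{P}_i$ and $\overline{P}_i$ will ``preclude any genuine strict shrinking'' of the free zone, i.e., force $a_1$ and $a_m$ onto opposite outer regions so that your choice $\hat{a}=a_1$ (or the entry point into $\langle a,b|\mathcal{T}^A\rangle$) never produces a strictly smaller free zone. This is false in general. When one outer region is degenerate, say $A^{b \rightharpoonup a}=\{b\}$ while $A^{a \rightharpoonup b}\neq\{a\}$, the complete reversal is perfectly compatible with $a_1\in A^{a\rightharpoonup b}\setminus\{a\}$ and $a_m$ lying strictly in the interior of $\langle a,b|\mathcal{T}^A\rangle$. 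Taking $\hat{b}=a_m$ on the original tree then either yields $\langle a,a_m|\mathcal{T}^A\rangle\subsetneq\langle a,b|\mathcal{T}^A\rangle$, contradicting condition~(ii), or else $\mathbb{D}\not\subseteq\mathbb{D}_{\textrm{SH}}(\mathcal{T}^A,a,a_m)$, so the new triple is not semi-hybrid at all; you cannot have both.

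The missing idea, which the paper supplies, has two parts. First, your ``no-shrinking'' hope is valid \emph{only} in the case where both outer regions are nontrivial: if $A^{a\rightharpoonup b}\neq\{a\}$ and $A^{b\rightharpoonup a}\neq\{b\}$, then the complete reversal forces $a_1,a_m\notin\langle a,b|\mathcal{T}^A\rangle$ and onto opposite outer sides, reducing to your easy cases. Second, and this is the key step you are missing, when one side is degenerate --- say $A^{b\rightharpoonup a}=\{b\}$ --- one does not shrink the free zone but \emph{re-lines} it: build $\widehat{\mathcal{T}}^A$ by attaching $\mathcal{T}^{A^{a\rightharpoonup b}}$ to an arbitrary line over the set $\langle a,b|\mathcal{T}^A\rangle$ whose two endpoints are $a$ and $a_m$, and set $\hat{a}=a$, $\hat{b}=a_m$. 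Then $\langle\hat{a},\hat{b}|\widehat{\mathcal{T}}^A\rangle=\langle a,b|\mathcal{T}^A\rangle$ as \emph{sets} and $\widehat{A}^{\hat{b}\rightharpoonup\hat{a}}=\{a_m\}$ is degenerate. Because the $\hat{b}$-side is trivial, the inclusion $\mathbb{D}\subseteq\mathbb{D}_{\textrm{SH}}(\widehat{\mathcal{T}}^A,\hat{a},\hat{b})$ follows directly from the original $(a,b)$-semi-hybridness, and conditions~(ii)--(iii) of Definition~\ref{def:asspsh} transfer verbatim since the free-zone set and the adjacency graph are unchanged.
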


\begin{proof}
There are four situations:
(1) $A^{a \rightharpoonup b} \neq \{a\}$ and $A^{b \rightharpoonup a} \neq \{b\}$,
(2) $A^{a \rightharpoonup b} = \{a\}$ and $A^{b \rightharpoonup a} = \{b\}$,
(3) $A^{a \rightharpoonup b} = \{a\}$ and $A^{b \rightharpoonup a} \neq \{b\}$, and
(4) $A^{a \rightharpoonup b} \neq \{a\}$ and $A^{b \rightharpoonup a} = \{b\}$.
In each case, we construct a tree $\widehat{\mathcal{T}}^A$ and identify dual-thresholds $\hat{a}, \hat{b} \in A$ such that
$\mathbb{D}$ is an $(\hat{a}, \hat{b})$-semi-hybrid domain on $\widehat{\mathcal{T}}^A$, $a_1 \in \widehat{A}^{\hat{a} \rightharpoonup \hat{b}}$ and $a_m \in \widehat{A}^{\hat{b} \rightharpoonup \hat{a}}$.

In Situation (1), we have $c \in A^{a \rightharpoonup b} \backslash \{a\}$ and $d \in A^{b \rightharpoonup a} \backslash \{b\}$.
We first claim $a_1 \notin \langle a,b|\mathcal{T}^A\rangle$.
Suppose not, i.e., $a_1 \in \langle a,b|\mathcal{T}^A\rangle$.
Then, by $(a,b)$-semi-hybridness on $\mathcal{T}^A$, we have $b\mathrel{\underline{P}_i}d$ and $a\mathrel{\underline{P}_i}c$.
Note that either $a_m \in A^{a \rightharpoonup b}\backslash \{a\}$, or $a_m \in \langle a,b|\mathcal{T}^A\rangle\cup A^{b \rightharpoonup a}$ holds, which respectively by $(a,b)$-semi-hybridness on $\mathcal{T}^A$ implies $b\mathrel{\overline{P}_i}d$ and $a\mathrel{\overline{P}_i}c$.
This contradicts the fact that $\underline{P}_i$ and $\overline{P}_i$ are complete reversals.
Symmetrically, $a_m \notin \langle a,b|\mathcal{T}^A\rangle$.
Thus, there are four cases:
(i) $a_1, a_m \in A^{a \rightharpoonup b}\backslash \{a\}$,
(ii) $a_1, a_m \in A^{b \rightharpoonup a}\backslash \{b\}$,
(iii) $a_1 \in A^{a \rightharpoonup b}\backslash \{a\}$ and $a_m \in A^{b \rightharpoonup a}\backslash \{b\}$, and
(iv) $a_1 \in A^{b \rightharpoonup a}\backslash \{b\}$ and $a_m \in A^{a \rightharpoonup b}\backslash \{a\}$.
We first rule out case (i).
In case (i), by $(a,b)$-semi-hybridness on $\mathcal{T}^A$, we have $b\mathrel{\underline{P}_i}d$ and $b\mathrel{\overline{P}_i}d$,
which contradict the fact that $\underline{P}_i$ and $\overline{P}_i$ are complete reversals.
Symmetrically, we can rule out case (ii).
In case (iii), it is evident that $\mathbb{D}$ is an $(a,b)$-semi-hybrid domain on $\mathcal{T}^A$, $a_1 \in A^{a \rightharpoonup b}$ and $a_m \in A^{b \rightharpoonup a}$, as required.
In case (iv), let $\hat{a} = b$ and $\hat{b} = a$.
Thus, $a_1 \in A^{b \rightharpoonup a} = A^{\hat{a} \rightharpoonup \hat{b}}$ and
$a_m \in A^{a \rightharpoonup b} = A^{\hat{b} \rightharpoonup \hat{a}}$.
Evidently, since $\mathbb{D}$ is an $(a,b)$-semi-hybrid domain on $\mathcal{T}^A$,
it is true that $\mathbb{D}$ is an $(\hat{a},\hat{b})$-semi-hybrid domain on $\mathcal{T}^A$, as required.

In Situation (2), we refer to the line $\mathcal{L}^A = (a_1, \dots, a_m)$ and the dual-thresholds $a_1$ and $a_m$, and
show that $\mathbb{D}$ is an $(a_1, a_m)$-semi-hybrid domain on $\mathcal{L}^A$.
First, it is clear that $\mathbb{D} \subseteq \mathbb{P} = \mathbb{D}_{\textrm{SH}}(\mathcal{L}^A, a_1, a_m)$.
Next, since $\mathbb{D}$ is an $(a,b)$-semi-hybrid domain on $\mathcal{T}^A$ and $\langle a, b|\mathcal{T}^A\rangle = A$,
we know by condition (ii) of Definition \ref{def:asspsh} that there exist no tree $\widetilde{\mathcal{T}}^A$ and dual-thresholds $\tilde{a}, \tilde{b} \in A$ such that $\mathbb{D} \subseteq \mathbb{D}_{\textrm{SH}}(\widetilde{\mathcal{T}}^A, \tilde{a}, \tilde{b})$ and
$\langle \tilde{a}, \tilde{b}|\widetilde{\mathcal{T}}^A\rangle \subset \langle a, b|\mathcal{T}^A\rangle = A =\langle a_1, a_m|\mathcal{L}^A\rangle$.
Last, by condition (iii) of Definition \ref{def:asspsh},
we know that if $G_{\sim}^A$ is a tree,
then for each $x \in \textrm{Leaf}\big(G_{\sim}^{\langle a, b|\mathcal{T}^A\rangle}\big) = \textrm{Leaf}\big(G_{\sim}^{A}\big)
=\textrm{Leaf}\big(G_{\sim}^{\langle a_1, a_m|\mathcal{L}^A\rangle}\big)$, there exists a preference $P_i \in \mathbb{D}$ such that
$P_i$ is not semi-single-peaked on $G_{\sim}^A$ w.r.t.~$x$.
Therefore, $\mathbb{D}$ is an $(a_1, a_m)$-semi-hybrid domain on $\mathcal{L}^A$, as required.

Since Situations (3) and (4) are symmetric,
we focus on verifying Situation (2).
We have two cases: (i) $a_1 \in A^{a \rightharpoonup b}\backslash \{a\}$ and (ii) $a_1 \in \langle a, b|\mathcal{T}^A\rangle$.
In case (i), since $\underline{P}_i$ and $\overline{P}_i$ are complete reversals,
by $(a, b)$-semi-hybridness on $\mathcal{T}^A$,
it is easy to show that $a_m \in \langle a, b|\mathcal{T}^{b}\rangle \backslash \{a\}$.
Now, we construct a line $(x_1, \dots, x_s)$ over all alternatives of $\langle a, b|\mathcal{T}^A\rangle$ such that
$s = |\langle a, b|\mathcal{T}^A\rangle|$, $x_1 = a$, $x_s = a_m$, and all alternatives of $\langle a, b|\mathcal{T}^A\rangle\backslash \{a, a_m\}$ are arbitrarily arranged in the interior of the line.
Then, by combining the subtree $\mathcal{T}^{a \rightharpoonup b}$ and the line $(x_1, \dots, x_s)$,
we generate a tree $\widehat{\mathcal{T}}^A$. Clearly, $a$ and $a_m$ are dual-thresholds in $\widehat{\mathcal{T}}^A$.
Let $\hat{a} = a$ and $\hat{b} = a_m$.
Thus, $\langle a, b|\mathcal{T}^A\rangle = \langle \hat{a}, \hat{b}|\widehat{\mathcal{T}}^A\rangle$,
$a_1 \in \widehat{A}^{\hat{a} \rightharpoonup \hat{b}}$ and $a_m \in \widehat{A}^{\hat{b} \rightharpoonup \hat{a}}$.
We last show that $\mathbb{D}$ is an $(\hat{a},\hat{b})$-semi-hybrid domain on $\widehat{\mathcal{T}}^A$.
By $(a, b)$-semi-hybridness on $\mathcal{T}^A$, one can easily show $\mathbb{D} \subseteq \mathbb{D}_{\textrm{SH}}(\widehat{\mathcal{T}}^A, \hat{a}, \hat{b})$.
Furthermore, by condition (ii) of Definition \ref{def:asspsh},
we know that there exist no tree $\widetilde{\mathcal{T}}^A$ and dual-thresholds $\tilde{a}, \tilde{b} \in A$ such that $\mathbb{D} \subseteq \mathbb{D}_{\textrm{SH}}(\widetilde{\mathcal{T}}^A, \tilde{a}, \tilde{b})$ and
$\langle \tilde{a}, \tilde{b}|\widetilde{\mathcal{T}}^A\rangle \subset \langle a, b|\widehat{\mathcal{T}}^A\rangle=\langle \hat{a},\hat{b}|\mathcal{T}^A\rangle$.
Last, by condition (iii) of Definition \ref{def:asspsh},
we know that if $G_{\sim}^A$ is a tree, then for each $x \in \textrm{Leaf}\big(G_{\sim}^{\langle a, b|\mathcal{T}^A\rangle}\big) =
\textrm{Leaf}\big(G_{\sim}^{\langle \hat{a}, \hat{b}|\widehat{\mathcal{T}}^A\rangle}\big)$, there exists a preference $P_i \in \mathbb{D}$ such that $P_i$ is semi-single-peaked on $G_{\sim}^A$ w.r.t.~$x$.
Therefore, $\mathbb{D}$ is an $(\hat{a}, \hat{b})$-semi-hybrid domain on $\widehat{\mathcal{T}}^A$, as required.
In case (ii), fixing an alternative $c \in A^{a \rightharpoonup b}\backslash \{a\}$,
$(a, b)$-semi-hybridness on $\mathcal{T}^A$ implies $a\mathrel{\underline{P}_i}c$.
Since $\underline{P}_i$ and $\overline{P}_i$ are complete reversals, we have $c\mathrel{\overline{P}_i} a$.
Consequently, to meet $(a, b)$-semi-hybridness on $\mathcal{T}^A$, it must be the case that $a_m \in  A^{a \rightharpoonup b}\backslash \{a\}$.
Then, $(a, b)$-semi-hybridness on $\mathcal{T}^A$ implies $a\mathrel{\overline{P}_i}b$.
We further claim $a_1 \neq a$.
Otherwise, $r_1(\underline{P}_i) = a_1 = a \neq b$ implies $a\mathrel{\underline{P}_i}b$, which contradicts the fact that $\underline{P}_i$ and $\overline{P}_i$ are complete reversals.
Thus, we have $a_m \in A^{a \rightharpoonup b}\backslash \{a\}$ and $a_1 \in \langle a, b|\mathcal{T}^A\rangle\backslash \{a\}$,
which are analogous to $a_1 \in A^{a \rightharpoonup b}\backslash \{a\}$ and $a_m \in \langle a, b|\mathcal{T}^A\rangle\backslash \{a\}$ in the verification of case (i). Then, by a symmetric argument, we can construct a tree $\widehat{\mathcal{T}}^A$ and identify dual-thresholds $\hat{a}, \hat{b} \in A$ such that
$\mathbb{D}$ is an $(\hat{a}, \hat{b})$-semi-hybrid domain on $\widehat{\mathcal{T}}^A$, $a_1 \in \widehat{A}^{\hat{a} \rightharpoonup \hat{b}}$ and $a_m \in \widehat{A}^{\hat{b} \rightharpoonup \hat{a}}$. This proves the lemma.
\end{proof}

Henceforth, let $a_1 \in A^{a \rightharpoonup b}$ and $a_m \in A^{b \rightharpoonup a}$.
By \textbf{Statement (i)},
to complete the verification,
it suffices to show that $\mathbb{D}$ is not a semi-single-peaked domain.

\begin{lemma}\label{lem:nssp}
Domain $\mathbb{D}$ is not a semi-single-peaked domain.
\end{lemma}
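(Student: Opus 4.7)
The plan is to argue by contradiction: suppose $\mathbb{D}$ is a semi-single-peaked domain, so there exist a tree $\widetilde{\mathcal{T}}^A$ and a threshold $\bar{x} \in A$ such that $\mathbb{D} \subseteq \mathbb{D}_{\textrm{SSP}}(\widetilde{\mathcal{T}}^A, \bar{x})$ and $G_{\sim}^A$ is connected. First, I would observe that the adjacency graph of the semi-single-peaked domain on $\widetilde{\mathcal{T}}^A$ coincides with $\widetilde{\mathcal{T}}^A$, so $G_{\sim}^A \subseteq \widetilde{\mathcal{T}}^A$; since $G_{\sim}^A$ is connected on all of $A$ while $\widetilde{\mathcal{T}}^A$ is a spanning tree on $A$, one concludes $G_{\sim}^A = \widetilde{\mathcal{T}}^A$, and in particular $G_{\sim}^A$ is itself a tree.

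Next, I would invoke condition (iii) of Definition \ref{def:asspsh}: since $G_{\sim}^A$ is a tree, for each leaf $x \in \textrm{Leaf}\big(G_{\sim}^{\langle a, b|\mathcal{T}^A\rangle}\big)$ there exists some $P_i \in \mathbb{D}$ that is not semi-single-peaked on $G_{\sim}^A$ w.r.t.~$x$. Combined with the hypothesis $\mathbb{D} \subseteq \mathbb{D}_{\textrm{SSP}}(G_{\sim}^A, \bar{x})$, this forces $\bar{x}$ not to be a leaf of $G_{\sim}^{\langle a, b|\mathcal{T}^A\rangle}$. Moreover, since $\mathbb{D}$ contains the completely reversed preferences $\underline{P}_i, \overline{P}_i$ by diversity and both lie in $\mathbb{D}_{\textrm{SSP}}(G_{\sim}^A, \bar{x})$, Clarification \ref{cla:ssp} delivers $|\mathcal{N}^{G_{\sim}^A}(\bar{x})| \leq 2$, so $\bar{x}$ has degree at most two in $G_{\sim}^A$.

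The final step is to manufacture the contradiction with condition (ii) of Definition \ref{def:asspsh} by exhibiting a semi-hybrid embedding of $\mathbb{D}$ whose free zone is a proper subinterval of $\langle a, b|\mathcal{T}^A\rangle$. I would split according to the location of $\bar{x}$: either $\bar{x} \in \langle a, b|\mathcal{T}^A\rangle$, in which case being a non-leaf of the free-zone subgraph with total degree at most two forces $\bar{x}$ to have exactly two neighbors in $G_{\sim}^A$, both lying in the free zone; or $\bar{x}$ lies outside the free zone, in which case we may assume w.l.o.g.~that $\bar{x} \in A^{a \rightharpoonup b}\setminus\{a\}$ (by Lemma \ref{lem:wlog}). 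In either configuration, the single-peakedness along the path from each peak to $\bar{x}$ in $G_{\sim}^A$ (imposed by SSP w.r.t.~$\bar{x}$), when superimposed on the three clauses of $(a, b)$-semi-hybridness in Definition \ref{def:sh}, rigidly constrains the relative rankings of free-zone alternatives lying on the $a$-to-$\bar{x}$ and $\bar{x}$-to-$b$ sub-paths. This additional structure permits the construction of a tree $\widehat{\mathcal{T}}^A$ and dual-thresholds $\hat{a}, \hat{b} \in A$ for which $\mathbb{D} \subseteq \mathbb{D}_{\textrm{SH}}(\widehat{\mathcal{T}}^A, \hat{a}, \hat{b})$ while $\langle \hat{a}, \hat{b}|\widehat{\mathcal{T}}^A\rangle \subset \langle a, b|\mathcal{T}^A\rangle$, contradicting the minimality of the free zone required by condition (ii).

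The hard part will be the verification of the reduced semi-hybrid embedding in the third step. For each possible peak location relative to the repositioned dual-thresholds $\hat{a}, \hat{b}$, one must patiently check that every preference in $\mathbb{D}$ satisfies the three structural requirements of $(\hat{a}, \hat{b})$-semi-hybridness on $\widehat{\mathcal{T}}^A$, which entails a delicate case analysis combining the given $(a, b)$-semi-hybridness on $\mathcal{T}^A$ with the single-peakedness along paths to $\bar{x}$ inherited from the SSP hypothesis, particularly for peaks sitting inside the surviving portion of the free zone where no SSP constraint is immediately active.
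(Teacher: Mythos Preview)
Your setup (steps 1--4) matches the paper: assume $\mathbb{D}\subseteq\mathbb{D}_{\textrm{SSP}}(\widetilde{\mathcal{T}}^A,\bar{x})$, deduce $G_\sim^A=\widetilde{\mathcal{T}}^A$ is a tree, use condition~(iii) to rule out $\bar{x}\in\textrm{Leaf}\big(G_\sim^{\langle a,b|\mathcal{T}^A\rangle}\big)$, and use Clarification~\ref{cla:ssp} to bound the degree of $\bar{x}$ by~2. For $\bar{x}$ strictly inside the free zone with both neighbors there (the paper's case $s<\bar{k}<t$), your plan to build a smaller semi-hybrid embedding and contradict condition~(ii) is exactly what the paper does: it takes dual-thresholds $a$ and $\bar{x}$ on a suitably reconstructed tree and checks $\mathbb{D}\subseteq\mathbb{D}_{\textrm{SH}}(\widehat{\mathcal{T}}^A,a,\bar{x})$ with $\langle a,\bar{x}|\widehat{\mathcal{T}}^A\rangle\subset\langle a,b|\mathcal{T}^A\rangle$.

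The gap is in your treatment of the case $\bar{x}\notin\langle a,b|\mathcal{T}^A\rangle$, say $\bar{x}\in A^{a\rightharpoonup b}\setminus\{a\}$. You propose to again contradict condition~(ii) by shrinking the free zone, but this does not go through. Semi-single-peakedness w.r.t.\ $\bar{x}$ (and hence w.r.t.\ $a$) tells you that for a peak in $A^{a\rightharpoonup b}\setminus\{a\}$ every alternative on the far side of $a$ is dominated by $a$; it does \emph{not} identify which such alternative is best, so you cannot verify the clause $\max^{P_i}(\widehat{A}^{\hat{b}\rightharpoonup\hat{a}})=\hat{b}$ required by $(\hat{a},\hat{b})$-semi-hybridness for any candidate $\hat{b}$ closer than $b$. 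The paper's route here is different and cleaner: from SSP w.r.t.\ $\bar{x}$ one shows $\mathbb{D}\subseteq\mathbb{D}_{\textrm{SSP}}(G_\sim^A,a)$; then Clarification~\ref{cla:ssp} forces $|\mathcal{N}_\sim^A(a)|\leq 2$, so $a$ has exactly one neighbor in the free-zone subgraph, i.e.\ $a\in\textrm{Leaf}\big(G_\sim^{\langle a,b|\mathcal{T}^A\rangle}\big)$, and condition~(iii) is violated. In short, cases with $\bar{x}$ outside (or at) the dual-thresholds are dispatched via condition~(iii), not~(ii); only the strictly interior case uses the free-zone-shrinking argument you outlined. (Also, the ``w.l.o.g.'' in your outside case is just left/right symmetry, not Lemma~\ref{lem:wlog}.)
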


\begin{proof}
Suppose not, i.e., there exist a tree $\widetilde{\mathcal{T}}^A$ and a threshold $\bar{x} \in A$ such that $\mathbb{D} \subseteq \mathbb{D}_{\textrm{SSP}}(\widetilde{\mathcal{T}}^A, \bar{x})$, and $G_{\sim}^A$ is a connected graph.
Then, by Clarification \ref{cla:ssp} and its proof, we know that $G_{\sim}^A$ is tree that coincides to $\widetilde{\mathcal{T}}^A$, i.e., $G_{\sim}^A = \widetilde{\mathcal{T}}^A$, $\bar{x}$ has at most two neighbors in $G_{\sim}^A$, i.e., $|\mathcal{N}_{\sim}^A(\bar{x})| \leq 2$, and $\bar{x}$ is included in the path $\langle a_1, a_m|\widetilde{\mathcal{T}}^A\rangle = \langle a_1, a_m|G_{\sim}^A\rangle$.
Let $\langle a_1, a_m|G_{\sim}^A\rangle = (x_1, \dots, x_v)$ denote the path in $G_{\sim}^A$ connecting $a_1$ and $a_m$.
Thus, $\bar{x} = x_{\bar{k}}$ for some $\bar{k} \in \{1, \dots, v\}$.
Meanwhile, by $(a,b)$-semi-hybridness on $\mathcal{T}^A$,
we know that $G_{\sim}^A$ is a combination of the subtree $G_{\sim}^{A^{a \rightharpoonup b}}=\mathcal{T}^{A^{a \rightharpoonup b}}$,
the connected subgraph $G_{\sim}^{\langle a, b|\mathcal{T}^A\rangle}$ and
the subtree $G_{\sim}^{A^{b \rightharpoonup a}} = \mathcal{T}^{A^{b \rightharpoonup a}}$,
denoted by $G_{\sim}^A = G_{\sim}^{A^{a \rightharpoonup b}}\cup G_{\sim}^{\langle a, b|\mathcal{T}^A\rangle} \cup  G_{\sim}^{A^{b \rightharpoonup a}}$.
Then, $G_{\sim}^A = \widetilde{\mathcal{T}}^A$ implies that $G_{\sim}^{\langle a, b|\mathcal{T}^A\rangle}$ is a tree as well, and
$a_1 \in A^{a \rightharpoonup b}$ and $a_m \in A^{b \rightharpoonup a}$ imply $a = x_s$ and $b = x_t$ for some $1 \leq s < t \leq v$.
Thus, we have five cases: (1) $1 \leq \bar{k}<s$, (2) $t< \bar{k} \leq v$, (3) $\bar{k} =s$, (4) $\bar{k} =t$ and (5) $s < \bar{k} < t$.
In each case, we induce a contradiction.

The first two cases are symmetric.
We focus on the verification of case (1).
Thus, $\bar{x} \in A^{a \rightharpoonup b}\backslash \{a\}$.
We show $\mathbb{D} \subseteq \mathbb{D}_{\textrm{SSP}}(G_{\sim}^A, a)$.
First, given arbitrary $P_i \in \mathbb{D}$ with $r_1(P_i) \in A^{a \rightharpoonup b}\backslash \{a\}$, we show that $P_i$ is semi-single-peaked on the tree $G_{\sim}^A = G_{\sim}^{A^{a \rightharpoonup b}}\cup \big[G_{\sim}^{\langle a, b|\mathcal{T}^A\rangle}
\cup G_{\sim}^{A^{b \rightharpoonup a}}\big]$ w.r.t.~$a$,
which consists of two parts: (i)
semi-single-peakedness on $G_{\sim}^{A^{a \rightharpoonup b}}=\mathcal{T}^{A^{b \rightharpoonup a}}$ w.r.t.~$a$, and
(ii) $a\mathrel{P_i}x$ for all $x \in A\backslash A^{a \rightharpoonup b}$.
Since $r_1(P_i) \in A^{a \rightharpoonup b}\backslash \{a\}$,
both parts follow from $(a,b)$-semi-hybridness on $\mathcal{T}^A$.
Second, given $P_i \in \mathbb{D}$ with $r_1(P_i) \in \langle a, b|\mathcal{T}^A\rangle \cup A^{b \rightharpoonup a}$, we show that $P_i$ is semi-single-peaked on the tree $G_{\sim}^A = G_{\sim}^{A^{a \rightharpoonup b}}\cup \big[G_{\sim}^{\langle a, b|\mathcal{T}^A\rangle}
\cup G_{\sim}^{A^{b \rightharpoonup a}}\big]$ w.r.t.~$a$,
which consists of two parts:
(i) $a\mathrel{P_i}x$ for all $x \in A^{a \rightharpoonup b}\backslash \{a\}$, and
(ii) semi-single-peakedness on $G_{\sim}^{\langle a, b|\mathcal{T}^A\rangle}
\cup G_{\sim}^{A^{b \rightharpoonup a}}$ w.r.t.~$a$.
Since $r_1(P_i) \in \langle a, b|\mathcal{T}^A\rangle \cup A^{b \rightharpoonup a}$,
part (i) follows from $(a,b)$-semi-hybridness on $\mathcal{T}^A$.
For part (ii), recall the contradictory hypothesis that $P_i$ is semi-single-peaked on $G_{\sim}^A$ w.r.t.~$x_{\bar{k}}$.
Since $a \in \langle x, x_{\bar{k}}|G_{\sim}^A\rangle$ for all $x \in \langle a, b|\mathcal{T}^A\rangle\cup A^{b \rightharpoonup a}$,
it is also true that
$P_i$ is semi-single-peaked on $G_{\sim}^{\langle a, b|\mathcal{T}^A\rangle}
\cup G_{\sim}^{A^{b \rightharpoonup a}}$ w.r.t.~$a$, as required.
Therefore, $\mathbb{D} \subseteq \mathbb{D}_{\textrm{SSP}}(G_{\sim}^A, a)$.
Now, by statement (ii) of Clarification \ref{cla:ssp},
according to semi-single-peakedness on $G_{\sim}^A$ w.r.t.~$a$ and diversity,
we know $|\mathcal{N}_{\sim}^A(a)| \leq 2$, which implies $\mathcal{N}_{\sim}^A(a) = \mathcal{N}_{\sim}^A(x_s) = \{x_{s-1}, x_{s+1}\}$.
It is clear that $x_{s-1} \in  A^{a \rightharpoonup b}\backslash \{a\}$ and $x_{s+1} \in \langle a, b|\mathcal{T}^A\rangle$.
Then, $\mathcal{N}_{\sim}^A(a)\cap \langle a, b|\mathcal{T}^A\rangle = \{x_{s+1}\}$ implies $a \in \textrm{Leaf}\big(G_{\sim}^{\langle a, b|\mathcal{T}^A\rangle}\big)$.
Thus, we know that all preferences of $\mathbb{D}$ are semi-single-peaked on the tree $G_{\sim}^A$ w.r.t.~$a$ which is a leaf of $G_{\sim}^{\langle a, b|\mathcal{T}^A\rangle}$. This contradicts condition (iii) of Definition \ref{def:asspsh}.

Cases (3) and (4) are symmetric.
We focus on the verification of case (3).
Thus, we have $\mathbb{D} \subseteq \mathbb{D}_{\textrm{SSP}}(G_{\sim}^A, a)$.
Then, by the verification in case (1), we induce the same contradiction.

Last, let case (5) occur.
Recall that $x_{\bar{k}}$ has at most two neighbors in $G_{\sim}^A$, which is mentioned in the beginning of the proof.
Thus, we know $\mathcal{N}_{\sim}^A(x_{\bar{k}}) = \{x_{\bar{k-1}}, x_{\bar{k}+1}\}$.
We cut the tree $G_{\sim}^A$ at the edge $(x_{\bar{k}-1}, x_{\bar{k}})$, and obtain the subset $B = \big\{x \in A: x_{\bar{k}} \in \langle x, x_{\bar{k}-1}|G_{\sim}^A\rangle\big\}$ and the subtree $G_{\sim}^{B}$.
It is clear that the subtrees $G_{\sim}^{A^{a \rightharpoonup b}}$ and $G_{\sim}^{B}$ are separated, i.e.,
$A^{a \rightharpoonup b} \cap B = \emptyset$.
Let $\widehat{B} = \big[A\backslash [A^{a \rightharpoonup b} \cup B]\big]\cup \{a, x_{\bar{k}}\}$.
Clearly, $\widehat{B}  \subset \langle a, b|\mathcal{T}^A\rangle$.
Furthermore, we construct a line $\pi = (z_1, \dots, z_q)$ over all alternatives of $\widehat{B} $ such that $q = |\widehat{B} |$, $z_1 = a$, $z_q = x_{\bar{k}}$, and all alternatives of $\widehat{B} \backslash \{a, x_{\bar{k}}\}$ are arbitrarily arranged in the interior of the line $\pi$.
By combining the subtree $G_{\sim}^{A^{a \rightharpoonup b}}$, the line $\pi$ and the subtree $G_{\sim}^{B}$, we generate a tree $\widehat{\mathcal{T}}^A$.
Clearly, $a$ and $x_{\bar{k}}$ are dual-thresholds in $\widehat{\mathcal{T}}^A$.
Let $\widehat{A}^{a \rightharpoonup x_{\bar{k}}} = \big\{x \in A: a \in \langle x, x_{\bar{k}}|\widehat{\mathcal{T}}^A\rangle \big\}$ and
$\widehat{A}^{x_{\bar{k}} \rightharpoonup a} = \big\{x \in A: x_{\bar{k}} \in \langle x, a|\widehat{\mathcal{T}}^A\rangle \big\}$.
Note that $\widehat{A}^{a \rightharpoonup x_{\bar{k}}} = A^{a \rightharpoonup b}$ and $\widehat{A}^{x_{\bar{k}} \rightharpoonup a} = B$.
We next show $\mathbb{D} \subseteq \mathbb{D}_{\textrm{SH}}(\widehat{\mathcal{T}}^A, a, x_{\bar{k}})$.
First, given $P_i \in \mathbb{D}$ with $r_1(P_i) \in \widehat{A}^{a \rightharpoonup x_{\bar{k}}}\backslash \{a\}$,
we show that $P_i$ is semi-single-peaked on $\widehat{\mathcal{T}}^A$ w.r.t.~$a$, and $\max^{P_i}\big(\widehat{A}^{x_{\bar{k}} \rightharpoonup a}\big) = x_{\bar{k}}$.
Since $\widehat{\mathcal{T}}^A = G_{\sim}^{A^{a \rightharpoonup b}}\cup \big[\pi \cup G_{\sim}^{B}\big]$, the semi-single-peakedness requirement on $\widehat{\mathcal{T}}^A$ w.r.t.~$a$ consists of the following two parts:
(i) semi-single-peakedness on $G_{\sim}^{A^{a \rightharpoonup b}} = \mathcal{T}^{A^{b \rightharpoonup a}}$ w.r.t.~$a$, and
(ii) $a\mathrel{P_i}x$ for all $x \in A\backslash A^{a \rightharpoonup b}$.
Since $r_1(P_i) \in \widehat{A}^{a \rightharpoonup x_{\bar{k}}}\backslash \{a\} = A^{a \rightharpoonup b}\backslash \{a\}$,
both parts follow from $(a,b)$-semi-hybridness on $\mathcal{T}^A$.
Furthermore, since $r_1(P_i) \in \widehat{A}^{a \rightharpoonup x_{\bar{k}}} = A^{a \rightharpoonup b}$,
we know that for all $x \in \widehat{A}^{x_{\bar{k}} \rightharpoonup a} = B$,
$x \notin \langle r_1(P_i), x_{\bar{k}}|G_{\sim}^A\rangle$ and
$\textrm{Proj}\big(x, \langle r_1(P_i), x_{\bar{k}}|G_{\sim}^A\rangle\big) = x_{\bar{k}}$.
Then, by the contradictory hypothesis that $P_i$ is semi-single-peaked on $G_{\sim}^A$ w.r.t.~$x_{\bar{k}}$, we have $\max^{P_i}\big(\widehat{A}^{x_{\bar{k}} \rightharpoonup a}\big) = x_{\bar{k}}$, as required.
Second, given $P_i \in \mathbb{D}$ with $r_1(P_i) \in \widehat{A}^{x_{\bar{k}} \rightharpoonup a}\backslash \{x_{\bar{k}}\}$,
we show that $P_i$ is semi-single-peaked on $\widehat{\mathcal{T}}^A$ w.r.t.~$x_{\bar{k}}$, and $\max^{P_i}\big(\widehat{A}^{a \rightharpoonup x_{\bar{k}}}\big) = a$.
Since $\widehat{\mathcal{T}}^A = \big[G_{\sim}^{A^{a \rightharpoonup b}}\cup \pi\big] \cup G_{\sim}^{B}$,
the semi-single-peakedness requirement on $\widehat{\mathcal{T}}^A$ w.r.t.~$x_{\bar{k}}$ consists of the following two parts:
(i) semi-single-peakedness on $G_{\sim}^{B}$ w.r.t.~$x_{\bar{k}}$, and
(ii) $x_{\bar{k}}\mathrel{P_i}x$ for all $x \in A\backslash B$.
Indeed, both parts follow from the contradictory hypothesis that $P_i$ is semi-single-peaked on $G_{\sim}^A$ w.r.t.~$x_{\bar{k}}$.
Furthermore, since $r_1(P_i) \in \widehat{A}^{x_{\bar{k}} \rightharpoonup a}\backslash \{x_{\bar{k}}\} = B\backslash \{x_{\bar{k}}\} \subseteq A\backslash A^{a \rightharpoonup b}$,
$(a, b)$-semi-hybridness on $\mathcal{T}^A$ implies $\max^{P_i}(\widehat{A}^{a \rightharpoonup x_{\bar{k}}}) =
\max^{P_i}(A^{a \rightharpoonup b}) = a$, as required.
Last, given $P_i \in \mathbb{D}$ with $r_1(P_i) \in \langle a, x_{\bar{k}}|\widehat{\mathcal{T}}^A\rangle$,
we show $\max^{P_i}\big(\widehat{A}^{a \rightharpoonup x_{\bar{k}}}\big) = a$ and $\max^{P_i}\big(\widehat{A}^{x_{\bar{k}} \rightharpoonup a}\big) = x_{\bar{k}}$.
Since $r_1(P_i) \in \langle a, x_{\bar{k}}|\widehat{\mathcal{T}}^A\rangle = \widehat{B} \subset \langle a, b|\mathcal{T}^A\rangle$,
$(a, b)$-semi-hybridness on $\mathcal{T}^A$ implies $\max^{P_i}(\widehat{A}^{a \rightharpoonup x_{\bar{k}}}) =
\max^{P_i}(A^{a \rightharpoonup b}) = a$, as required.
Note that for each $x \in \widehat{A}^{x_{\bar{k}} \rightharpoonup a}\backslash \{x_{\bar{k}}\} = B\backslash \{x_{\bar{k}}\}$,
$x \notin \langle r_1(P_i), x_{\bar{k}}|G_{\sim}^A\rangle$ and $\textrm{Proj}\big(x, \langle r_1(P_i), x_{\bar{k}}|G_{\sim}^A\rangle\big) = x_{\bar{k}}$.
Then, by the contradictory hypothesis that $P_i$ is semi-single-peaked on $G_{\sim}^A$ w.r.t.~$x_{\bar{k}}$,
we have $x_{\bar{k}}\mathrel{P_i}x$ for all $x \in \widehat{A}^{x_{\bar{k}} \rightharpoonup a}\backslash \{x_{\bar{k}}\}$, which implies
$\max^{P_i}\big(\widehat{A}^{x_{\bar{k}} \rightharpoonup a}\big) = x_{\bar{k}}$, as required.
In conclusion, we have $\mathbb{D} \subseteq \mathbb{D}_{\textrm{SH}}\big(\widehat{\mathcal{T}}^A, a, x_{\bar{k}}\big)$ and
$\langle a, x_{\bar{k}}|\widehat{\mathcal{T}}^A\rangle =\widehat{B} \subset \langle a, b|\mathcal{T}^A\rangle$
which contradict condition (ii) of Definition \ref{def:asspsh}.
This proves the lemma, and completes the verification of the ``if part'' of \textbf{Statement (ii)}.
\end{proof}

\noindent
Henceforth, we show the ``only if part'' of \textbf{Statement (ii)}:
``There exists no invariant, tops-only and strategy-proof rule.''
$\Rightarrow$ ``Domain $\mathbb{D}$ is a semi-hybrid domain satisfying the unique seconds property.'', and show that
every two-voter, tops-only and strategy-proof rule defined on $\mathbb{D}$ is a hybrid rule and behaves like a dictatorship on the free zone.

First, since $\mathbb{D}$ is a non-dictatorial, unidimensional domain, the Auxiliary Proposition implies that $\mathbb{D}$ satisfies the unique seconds property.\footnote{It is worth mentioning that this is the only place in the proof where leaf symmetry plays a role.}

Next, we show that $\mathbb{D}$ is a semi-hybrid domain.\footnote{We are grateful to an anonymous referee for suggesting the proof.}
Since $\mathbb{D}$ satisfies path-connectedness, it is clear that $G_{\sim}^A$ is a connected graph.
First, since $\mathbb{D} \subseteq \mathbb{P} = \mathbb{D}_{\textrm{SH}}(\mathcal{L}^A, a_1, a_m)$,
there must exist a tree $\mathcal{T}^A$ and dual-thresholds $a,b \in A$ such that $\mathbb{D} \subseteq \mathbb{D}_{\textrm{SH}}(\mathcal{T}^A, a, b)$. Hence, condition (i) of Definition \ref{def:asspsh} is satisfied.
Furthermore, since $A$ is finite, we can push that the searching of the tree $\mathcal{T}^A$ and the dual-thresholds $a,b \in A$
towards the limit that there exist no tree $\widehat{\mathcal{T}}^A$ and dual-thresholds $\hat{a}, \hat{b} \in A$ such that $\mathbb{D} \subseteq \mathbb{D}_{\textrm{SH}}(\widehat{\mathcal{T}}^A, \hat{a}, \hat{b})$ and $\langle \hat{a}, \hat{b}|\widehat{\mathcal{T}}^A\rangle \subset \langle a, b|\mathcal{T}^A\rangle$. Thus, condition (ii) of Definition \ref{def:asspsh} is met.
Last, we show that $\mathbb{D}$ satisfies condition (iii) of Definition \ref{def:asspsh}.
Let $G_{\sim}^A$ be a tree. Since $G_{\sim}^A$ is a combination of the subtree $G_{\sim}^{A^{a \rightharpoonup b}} =\mathcal{T}^{A^{a \rightharpoonup b}}$, the connected subgraph $G_{\sim}^{\langle a, b|\mathcal{T}^A\rangle}$ and the subtree $G_{\sim}^{A^{b \rightharpoonup a}} =\mathcal{T}^{A^{b \rightharpoonup a}}$,
it must be the case that $G_{\sim}^{\langle a, b|\mathcal{T}^A\rangle}$ is a tree as well.
Since by hypothesis there exists no invariant, tops-only and strategy-proof rule,
\textbf{Statement (i)} implies that $\mathbb{D}$ is not a semi-single-peaked domain,
which immediately implies that for each $x \in \textrm{Leaf}\big(G_{\sim}^{\langle a, b|\mathcal{T}^A\rangle}\big)$,
there exists a preference $P_i \in \mathbb{D}$ such that $P_i$ is not semi-single-peaked on $G_{\sim}^A$ w.r.t.~$x$, as required.
In conclusion, $\mathbb{D}$ is an $(a, b)$-semi-hybrid domain on a tree $\mathcal{T}^A$.
Recall the two completely reversed preference $\underline{P}_i$ and $\overline{P}_i$ included in $\mathbb{D}$ and their peaks
$r_1(\underline{P}_i) = a_1$ and $r_1(\overline{P}_i) = a_m$.
By Lemma \ref{lem:wlog}, we henceforth assume w.l.o.g.~that $a_1 \in A^{a \rightharpoonup b}$ and $a_m \in A^{b \rightharpoonup a}$.

Last, we fix an arbitrary two-voter, tops-only and strategy-proof rule $f: \mathbb{D}^2 \rightarrow A$,\footnote{Such a rule always exists, e.g., a dictatorship.} and show that $f$ is a hybrid rule and behaves like a dictatorship on the free zone $\langle a, b|\mathcal{T}^A\rangle$.
Since there exists no invariant, tops-only and strategy-proof rule,
we know $f(a_1, a_m) = f(\underline{P}_1, \overline{P}_2) \neq f(\overline{P}_1, \underline{P}_2) = f(a_m, a_1)$.
The proof consists of the following three steps:\medskip

\noindent
\textbf{Step 1.} We construct a line over all alternatives involved in the path(s) connecting $a_1$ and $a_m$ in $G_{\sim}^A$
(see all proofs before Lemma \ref{lem:RFBR2}), and
partially characterize $f$ according to the constructed line (see Lemma \ref{lem:RFBR2}).

\noindent
\textbf{Step 2.} We construct a tree $\mathcal{T}_f^A$ where the two social outcomes $f(a_1, a_m)$ and $f(a_m, a_1)$ are dual-thresholds, using the adjacency graph $G_{\sim}^A$ and the partial characterization of $f$ (see Lemmas \ref{lem:sides} and \ref{lem:graphpartition}).
Then, we completely characterize $f$ to be a hybrid rule on $\mathcal{T}_f^A$ w.r.t.~the dual-thresholds $f(a_1, a_m)$ and $f(a_m, a_1)$ (see Lemmas \ref{lem:dictatorship5} and \ref{lem:fullcharacterization}),
which indicates that $f$ behaves like a dictatorship on the interval between $f(a_1, a_m)$ and $f(a_m, a_1)$ in $\mathcal{T}_f^A$.
Furthermore, we elicit some preference restriction embedded in $\mathbb{D}$ via strategy-proofness of $f$ (see Lemma \ref{lem:sh}),
which may be different from the aforementioned preference restriction of $(a,b)$-semi-hybridness on $\mathcal{T}^A$.

\noindent
\textbf{Step 3.}
Note that the preference restriction elicited in \textbf{Step 2} and the preference restriction of $(a,b)$-semi-hybridness on $\mathcal{T}^A$ must be compatible with each other as they are both embedded in $\mathbb{D}$.
We use the compatibility of these two preference restrictions to show that the interval between $f(a_1, a_m)$ and $f(a_m, a_1)$ in $\mathcal{T}_f^A$ is a superset of the interval $\langle a, b|\mathcal{T}^A\rangle$,
which of course implies that $f$ behaves like a dictatorship on $\langle a, b|\mathcal{T}^A\rangle$ (see Lemma \ref{lem:superset}).\medskip

Let $\Pi(a_1, a_m)$ denote the set of paths in $G_{\sim}^A$ connecting $a_1$ and $a_m$.
Clearly, $\Pi(a_1, a_m) \neq \emptyset$.
Let $B = \big\{a \in A: a \in \pi\; \textrm{for some}\; \pi \in \Pi(a_1, a_m)\big\}$.
Note that $B$ may not include all alternatives of $A$, and all paths of $\Pi(a_1, a_m)$ are included in $G_{\sim}^{B}$ which hence implies that $G_{\sim}^B$ is connected.
There are two cases: $|\Pi(a_1, a_m)|=1$ and $|\Pi(a_1, a_m)| > 1$.
In the first case, let $\mathcal{L}^{B} = (x_1, \dots, x_v)$ be the path in $G_{\sim}^A$ that connects $a_1$ and $a_m$, where $v = |B|$.

Next, assume $|\Pi(a_1, a_m)| > 1$.
Fix an arbitrary path $\pi = (z_1, \dots, z_{\omega}) \in \Pi(a_1, a_m)$.
Since $|\Pi(a_1, a_m)| > 1$ and all paths of $\Pi(a_1, a_m)$ start from $a_1$ and end at $a_m$,
we can identify $x,y \in \pi$, say $x = z_s$ and $y = z_{\omega-p}$, where $1 \leq s < \omega-p \leq \omega$, satisfying the following two conditions:
(i) $s$ is the maximum index agreed by all paths of $\Pi(a_1, a_m)$ in the direction from $a_1$ to $a_m$,
i.e., for each path $\pi'= (z_1', \dots, z_{\sigma}') \in \Pi(a_1, a_m)\backslash \{\pi\}$,
$z_k = z_k'$ for all $k \in \{1, \dots, s\}$, and
for \emph{some} path $\pi''= (z_1'', \dots, z_q'') \in \Pi(a_1, a_m)\backslash \{\pi\}$, $z_{s+1} \neq z_{s+1}''$, and
(ii) $\omega-p$ is the minimum index agreed by all paths of $\Pi(a_1, a_m)$ in the direction from $a_m$ to $a_1$,
i.e., for each path $\pi'= (z_1', \dots, z_{\sigma}') \in \Pi(a_1, a_m)\backslash \{\pi\}$,
$z_{\omega-k} = z_{\sigma-k}'$ for all $k \in \{0,1, \dots, p\}$, and
for \emph{some} path $\pi''= (z_1'', \dots, z_q'') \in \Pi(a_1, a_m)\backslash \{\pi\}$, $z_{\omega-p-1} \neq z_{q-p-1}''$.
Thus, we can pin down the adjacency graph $G_{\sim}^{B}$:
$\pi^L = (z_1, \dots, z_s)$ is the unique path in $G_{\sim}^B$ connecting $a_1$ and $x$,
$\pi^R = (z_{\omega-p}, \dots, z_{\omega})$ is the unique path in $G_{\sim}^{B}$ connecting $y$ and $a_m$, the set $\mathcal{O} = \{a \in B: a \notin \pi^L\cup \pi^R\}\cup \{x, y\}$ contains at least three alternatives\footnote{It is clear that $|\mathcal{O}| \geq 2$. If $|\mathcal{O}| = 2$, all paths of $\Pi(a_1, a_m)$ degenerate to an identical path.}, and
$G_{\sim}^{\mathcal{O}}$ is a connected graph and has no leaf (see the first diagram of Figure \ref{fig:multiplepaths}).
We next construct a line $(x, \dots, y)$ over all alternatives of $\mathcal{O}$ such that $x$ and $y$ are the two leaves,
and all alternatives of $\mathcal{O}\backslash \{x,y\}$ are arbitrarily arranged in the interior of the line.
Then, by combining $\pi^L$, $(x, \dots, y)$ and $\pi^R$, we construct a line $\mathcal{L}^{B} = (x_1, \dots, x_s, \dots, x_t, \dots, x_v)$,
where $v = |B|$, $1\leq s< t \leq v$, $t-s>1$, $x_1 = a_1$, $x_s = x$, $x_t = y$, $x_v = a_m$,
$(x_1, \dots, x_s) = \pi^L$, $(x_s, \dots, x_t) = (x, \dots, y)$ and $(x_t, \dots, x_v) = \pi^R$ (see the second diagram of Figure \ref{fig:multiplepaths}).
We intentionally let the notation of the constructed line $\mathcal{L}^{B}$ here be identical to the line $\mathcal{L}^B$ in the case $|\Pi(a_1, a_m)| = 1$. This helps us unify the henceforth proof for both cases, and does not create any loss of generality.

\bigskip

\hspace{4.5em}
\begin{figurehere}
\begin{tikzpicture}

\begin{scope}[thin, dashed]
\draw (4.5,0) ellipse (1.5 and 0.4);
\end{scope}
\node[label=-120:{$G_{\sim}^{\mathcal{O}}$}] at (5,0.5) {};
\draw (0,0) -- (3,0);
\draw (6,0) -- (9,0);

\node at (0,0) {{\footnotesize$\bullet$}};
\node at (3,0) {{\footnotesize$\bullet$}};
\node at (6,0) {{\footnotesize$\bullet$}};
\node at (9,0) {{\footnotesize$\bullet$}};

\node at (0,0.3) {$a_1$};
\node at (2.9,0.3) {$x$};
\node at (6.1,0.3) {$y$};
\node at (9,0.3) {$a_m$};

\node at (1.5,-0.2) {$\underbrace{\rule[0mm]{3cm}{0cm}}$};
\node at (1.55,-0.5) {$\pi^{L}$};
\node at (7.53,-0.2) {$\underbrace{\rule[0mm]{3cm}{0cm}}$};
\node at (7.55,-0.5) {$\pi^{R}$};

\node at (-1.5,0) {$G_{\sim}^{B}$:};

\draw (0,-1.4) -- (9,-1.4);

\node at (0,-1.4) {{\footnotesize$\bullet$}};
\node at (3,-1.4) {{\footnotesize$\bullet$}};
\node at (6,-1.4) {{\footnotesize$\bullet$}};
\node at (9,-1.4) {{\footnotesize$\bullet$}};

\node at (0,-1.1) {$a_1 = x_1$};
\node at (3.1,-1.1) {$x=x_s$};
\node at (5.9,-1.1) {$x_t=y$};
\node at (9,-1.1) {$x_v = a_m$};

\node at (1.5,-1.6) {$\underbrace{\rule[0mm]{3cm}{0cm}}$};
\node at (1.55,-2) {$\pi^{L}$};
\node at (4.5,-1.6) {$\underbrace{\rule[0mm]{3cm}{0cm}}$};
\node at (4.5,-2) {$(x, \dots, y)$};
\node at (7.5,-1.6) {$\underbrace{\rule[0mm]{3cm}{0cm}}$};
\node at (7.55,-2) {$\pi^{R}$};
\node at (-1.5,-1.4) {$\mathcal{L}^B$:};

\node at (1.3,-1.4) {{\footnotesize$\bullet$}};
\node at (1.3,-1.1) {$x_{\underline{k}}$};

\node at (7.7,-1.4) {{\footnotesize$\bullet$}};
\node at (7.7,-1.1) {$x_{\overline{k}}$};

\end{tikzpicture}
\vspace{-0.7em}
\caption{Adjacency graph $G_{\sim}^{B}$ and the constructed line $\mathcal{L}^B$}\label{fig:multiplepaths}
\end{figurehere}

\bigskip

In the case $|\Pi(a_1, a_m)| = 1$, statement (i) of Lemma \ref{lem:transitivity} and the violation of invariance imply $f(x_1, x_v) = x_{\underline{k}}$ and $f(x_v, x_1) = x_{\overline{k}}$ for some distinct $\underline{k}, \overline{k}\in \{1, \dots, v\}$.
We assume w.l.o.g. that $\underline{k}< \overline{k}$, which by statement (i) of Lemma \ref{lem:RFBR} implies that voter $1$ dictates on
$\langle x_{\underline{k}}, x_{\overline{k}}|\mathcal{L}^B\rangle$.
In the case $|\Pi(a_1, a_m)| > 1$,
since $G_{\sim}^{\mathcal{O}}$ is a connected graph and has no leaf,
statement (i) of Lemma \ref{lem:union2} implies that $f$ behaves like a dictatorship on $\mathcal{O} = \langle x_s, x_t|\mathcal{L}^{B}\rangle$.
We assume w.l.o.g.~that voter $1$ dictates on $\langle x_s, x_t|\mathcal{L}^B\rangle$, i.e., $f(x_k, x_{k'}) = x_k$ for all $k, k' \in \{s, \dots,  t\}$. This helps us unify the henceforth proof for both cases.
Thus, $f(x_s, x_t) = x_s$ and $f(x_t, x_s) = x_t$.
Then, according to the paths $(x_t, \dots, x_v)$ and $(x_1, \dots, x_s)$,
statement (ii) of Lemma \ref{lem:transitivity} implies $f(x_s, x_v) = x_s$ and $f(x_t, x_1) = x_t$.
Furthermore, according to $(x_1, \dots, x_s)$ and $(x_t, \dots, x_v)$,
by statement (iii) of Lemma \ref{lem:transitivity},
$f(x_s, x_v) = x_s$ implies $f(x_1, x_v) = x_{\underline{k}}$ for some $\underline{k} \in \{1, \dots,s\}$, and
$f(x_t, x_1) = x_t$ implies $f(x_v, x_1) = x_{\overline{k}}$ for some $\overline{k} \in \{t, \dots, v\}$.
\medskip

The next lemma provide a unified characterization of $f$ on the line $\mathcal{L}^B$.

\begin{lemma}\label{lem:RFBR2}
According to $\mathcal{L}^B=(x_1, \dots, x_v)$,
we have
$f(x_k, x_{k'}) = \left\{
\begin{array}{ll}
x_k & \emph{if}\; \underline{k}\leq k \leq \overline{k},\\
x_{\mathop{\emph{med}}(k,\,k',\,\underline{k})} & \emph{if}\; k< \underline{k},\; \textrm{and}\\
x_{\mathop{\emph{med}}(k,\,k',\,\overline{k})} & \emph{if}\; k> \overline{k}.
\end{array}
\right.$
\end{lemma}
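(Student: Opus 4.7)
The plan is to split into the single-path case $|\Pi(a_1,a_m)|=1$ and the multi-path case $|\Pi(a_1,a_m)|>1$. In the single-path case, $\mathcal{L}^B=\pi^{\ast}$ is itself a path in $G_{\sim}^A$, so the hypotheses $f(x_1,x_v)=x_{\underline{k}}$, $f(x_v,x_1)=x_{\overline{k}}$ and $\underline{k}<\overline{k}$ put us exactly in the scope of statement~(i) of Lemma \ref{lem:RFBR}, from which the formula is read off. The interesting case is the multi-path one, where the middle segment $(x_s,\dots,x_t)$ of $\mathcal{L}^B$ is an artificial line threading the connected graph $G_{\sim}^{\mathcal{O}}$ on which voter~1 already dictates, while the outer segments $\pi^L=(x_1,\dots,x_s)$ and $\pi^R=(x_t,\dots,x_v)$ are honest $G_{\sim}^A$-paths.

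For the multi-path case I plan to mimic the Claim~1/Claim~2 architecture of the proof of Lemma \ref{lem:RFBR}, keeping every invocation of Lemma \ref{lem:step1} or Lemma \ref{lem:transitivity} on an honest sub-path of $\pi^L$ or $\pi^R$ and bridging across $\mathcal{O}$ either by voter~1's dictatorship on $\mathcal{O}$ or by Lemma \ref{lem:transitivity}(ii) applied to a path inside $G_{\sim}^{\mathcal{O}}$. First I will extend voter~1's dictatorship outward from $\mathcal{O}$ to $\{x_{\underline{k}},\dots,x_{\overline{k}}\}$: strategy-proofness applied to $f(x_1,x_v)=x_{\underline{k}}$ yields $f(x_1,x_{\underline{k}})=x_{\underline{k}}$; the Claim~1 contradiction on the sub-path $(x_1,\dots,x_{\underline{k}})\subseteq\pi^L$ forces $f(x_{\underline{k}-1},x_{\underline{k}})=x_{\underline{k}}$; statement~(iv) of Lemma \ref{lem:step1} on that sub-path produces the Claim~1 pattern for indices at most $\underline{k}$; and combining this with the dictatorship on $\mathcal{O}$ and Lemma \ref{lem:transitivity}(ii) on a $G_{\sim}^{\mathcal{O}}$-path extends voter~1's dictatorship to $\{x_{\underline{k}},\dots,x_s\}$ against every voter-$2$ peak. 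The symmetric argument built on $\pi^R$ and $f(x_v,x_1)=x_{\overline{k}}$ handles $\{x_t,\dots,x_{\overline{k}}\}$ and completes the middle line of the formula.

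With dictatorship on $\{x_{\underline{k}},\dots,x_{\overline{k}}\}$ in hand, the case $k<\underline{k}$ (and, by symmetry, $k>\overline{k}$) splits into two sub-cases. When $k'\le\underline{k}$ everything stays on the honest sub-path $(x_1,\dots,x_{\underline{k}})\subseteq\pi^L$ and the Claim~1 analysis of Lemma \ref{lem:RFBR} gives $f(x_k,x_{k'})=x_{\max(k,k')}=x_{\mathop{\mathrm{med}}(k,k',\underline{k})}$. When $k'>\underline{k}$ I first obtain $f(x_k,x_v)=x_{\underline{k}}$ by applying Lemma \ref{lem:transitivity}(ii) to $f(x_1,x_v)=x_{\underline{k}}$ on $(x_1,\dots,x_k)\subseteq\pi^L$ (legitimate because $x_{\underline{k}}\notin\{x_1,\dots,x_k\}$), and then propagate voter~2's peak from $x_v$ back to $x_{k'}$ along $\pi^R$ and, when necessary, across $\mathcal{O}$ through a path in $G_{\sim}^{\mathcal{O}}$; each step is justified by Lemma \ref{lem:transitivity}(ii) once one checks that $x_{\underline{k}}$ lies off the relevant path, the only delicate boundary being $\underline{k}=s$, where $x_{\underline{k}}\in\mathcal{O}$ and the propagation across $\mathcal{O}$ must instead be sustained by iterating strategy-proofness along the adjacencies of $G_{\sim}^{\mathcal{O}}$ together with voter~1's dictatorship on $\mathcal{O}$. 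The main obstacle throughout will be exactly this: because the artificial middle piece of $\mathcal{L}^B$ is not a path in $G_{\sim}^A$, statement~(i) of Lemma \ref{lem:RFBR} cannot be invoked in one shot on all of $\mathcal{L}^B$, and every propagation that would otherwise cross the middle must be rerouted through an honest $G_{\sim}^{\mathcal{O}}$-path or absorbed into the dictatorship on $\mathcal{O}$.
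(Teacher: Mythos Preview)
Your approach is correct but substantially more laborious than the paper's, and the ``delicate boundary'' case you flag is precisely where your plan becomes vague without a clean fix.

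The paper's argument rests on a single observation you have missed: because $x_{\underline{k}}\in\pi^L$ and $x_{\overline{k}}\in\pi^R$, \emph{both lie on every path} $\pi\in\Pi(a_1,a_m)$, so for each such honest $G_{\sim}^A$-path $\pi$ one has $f(x_1,x_v)=x_{\underline{k}}\in\pi$ and $f(x_v,x_1)=x_{\overline{k}}\in\pi$ with $x_{\underline{k}}$ preceding $x_{\overline{k}}$; hence statement~(i) of Lemma~\ref{lem:RFBR} applies to $\pi$ in one shot. The formula then follows because any pair $x_k,x_{k'}$ lies on a common $\pi\in\Pi(a_1,a_m)$ --- take any path through whichever of the two is in $\mathcal{O}$; the other, being in $\pi^L\cup\pi^R$, is automatically on that path --- except possibly when both lie in $\mathcal{O}$, which is exactly the sub-case covered by voter~1's dictatorship on $\mathcal{O}$. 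No piecemeal propagation, no boundary cases.

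By contrast, your plan reproves the internals of Lemma~\ref{lem:RFBR} segment by segment and then faces the issue that the artificial line $\mathcal{L}^B$ is not itself a path in $G_{\sim}^A$. Your handling of $\underline{k}=s$ (``iterating strategy-proofness along the adjacencies of $G_{\sim}^{\mathcal{O}}$ together with voter~1's dictatorship'') is not spelled out, and the most natural way to complete it --- choosing a $G_{\sim}^{\mathcal{O}}$-path from $x_t$ to $x_{k'}$ that avoids $x_s$, which exists because $x_{k'}$ sits on some $\pi\in\Pi(a_1,a_m)$ whose $\mathcal{O}$-segment from $x_{k'}$ to $x_t$ avoids $x_s$ --- is already the paper's observation in disguise. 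So your route can be made to work, but the shortcut above replaces all of the propagation machinery and the boundary analysis with two sentences.
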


\begin{proof}
If $|\Pi(a_1, a_m)| = 1$, the lemma follows from statement (i) of Lemma \ref{lem:RFBR}.
Next, we assume $|\Pi(a_1, a_m)| > 1$.
Note that for each path $\pi \in \Pi(a_1, a_m)$,
since $f(x_1, x_v) = x_{\underline{k}} \in \langle a_1, x|\pi\rangle$ and
$f(x_v, x_1) = x_{\overline{k}} \in \langle y, a_m|\pi\rangle$,
statement (i) of Lemma \ref{lem:RFBR} holds on $\pi$.
To prove the lemma, we fix an arbitrary profile $(x_k, x_{k'})$.

First, let $\underline{k}\leq k \leq \overline{k}$.
If $\underline{k} \leq k' \leq \overline{k}$, then by voter 1's dictatorship on $\langle x_{\underline{k}}, x_{\overline{k}}|\mathcal{L}^B\rangle$, we have $f(x_k, x_{k'}) = x_k$.
If $k'< \underline{k}$ or $k'> \overline{k}$, we know that there exists a path $\pi \in \Pi(a_1, a_m)$ which includes both $x_k$ and $x_{k'}$.
Clearly, $x_k \in \langle x_{\underline{k}}, x_{\overline{k}}|\pi\rangle$.
Then, statement (i) of Lemma \ref{lem:RFBR} on $\pi$ implies $f(x_k, x_{k'}) = x_k$.
Second, let $k < \underline{k}$.
Then, there exists a path $\pi \in \Pi(a_1, a_m)$ which includes both $x_k$ and $x_{k'}$.
Clearly, $x_k \in \langle x_1, x_{\underline{k}}|\pi\rangle\backslash \{x_{\underline{k}}\}$.
Then, statement (i) of Lemma \ref{lem:RFBR} on $\pi$ implies $f(x_k, x_{k'}) = x_{\textrm{med}(k, \,k', \,\underline{k})}$.
Symmetrically, if $k > \overline{k}$, statement (i) of Lemma \ref{lem:RFBR} implies $f(x_k, x_{k'}) = x_{\textrm{med}(k,\, k', \,\overline{k})}$.
This proves the lemma, and completes \textbf{Step 1} of the proof.
\end{proof}

\begin{lemma}\label{lem:sides}
Fixing an alternative $z \in A$ and a path $\pi = (z_1, \dots, z_s)$ in $G_{\sim}^A$, where $z_1 = z$ and $s \geq 2$,
the following two statements hold:
\begin{itemize}
\item[\rm (i)] if $z_{s-1} = x_{\underline{k}-1}$ and $z_s = x_{\underline{k}}$,
then $\pi$ is the unique path in $G_{\sim}^A$ connecting $z$ and $x_{\underline{k}}$, and

\item[\rm (ii)] if $z_{s-1} = x_{\overline{k}+1}$ and $z_s = x_{\overline{k}}$,
then $\pi$ is the unique path in $G_{\sim}^A$ connecting $z$ and $x_{\overline{k}}$.
\end{itemize}
\end{lemma}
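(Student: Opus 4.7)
The two statements are symmetric (statement~(ii) follows from (i) by reversing the line $\mathcal{L}^B$ and interchanging the roles of the two voters), so I focus on (i) and proceed by contradiction. Suppose there exists a second path $\pi'=(z_1',\dots,z_{s'}')$ in $G_\sim^A$ from $z$ to $x_{\underline{k}}$ with $\pi'\neq\pi$.

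The starting computation uses Lemma~\ref{lem:RFBR2}. Since $\underline{k}-1<\underline{k}$, evaluating at $k=\underline{k}-1$ and at $k=\underline{k}$ gives
\[
f(x_{\underline{k}-1},x_{\underline{k}})=x_{\mathrm{med}(\underline{k}-1,\,\underline{k},\,\underline{k})}=x_{\underline{k}},\qquad f(x_{\underline{k}},x_{\underline{k}-1})=x_{\underline{k}},
\]
so neither voter can be the dictator on the pair $\{x_{\underline{k}-1},x_{\underline{k}}\}$. Combined with Observation~\ref{obs:cycle}, this means no simple cycle in $G_\sim^A$ can contain the edge $(x_{\underline{k}-1},x_{\underline{k}})$: on such a cycle, some voter would dictate, forcing $f$ at one of the two test profiles above to equal $x_{\underline{k}-1}$, contradicting the displayed equalities. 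The entire plan therefore reduces to extracting, from the existence of $\pi'\neq\pi$, a simple cycle in $G_\sim^A$ that contains the edge $(x_{\underline{k}-1},x_{\underline{k}})$.

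The easy subcase is $z_{s'-1}'\neq x_{\underline{k}-1}$, i.e.\ $\pi'$ enters $x_{\underline{k}}$ via a neighbor distinct from $x_{\underline{k}-1}$. Then the concatenation of $\pi$ with the reverse of $\pi'$ is a closed walk in $G_\sim^A$ through $x_{\underline{k}}$ whose edge multiset uses $(x_{\underline{k}-1},x_{\underline{k}})$ exactly once and $(x_{\underline{k}},z_{s'-1}')$ exactly once; the standard graph-theoretic fact that the symmetric difference of the edge-sets of two distinct paths with common endpoints decomposes into edge-disjoint simple cycles then locates $(x_{\underline{k}-1},x_{\underline{k}})$ in one of these cycles, yielding the required simple cycle (which necessarily also contains $x_{\underline{k}-1}$).

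The main obstacle is the complementary subcase $z_{s'-1}'=x_{\underline{k}-1}$, in which both paths share the final bridge edge. My plan is to locate the first divergence of the two paths when walking backward from $x_{\underline{k}}$: let $q\geq 1$ be the largest integer with $z_{s-j}=z_{s'-j}'$ for all $0\leq j\leq q$; distinctness of $\pi,\pi'$ and the common start at $z$ force $q<\min(s,s')-1$, so at the vertex $u=z_{s-q}$ the two paths first diverge via distinct neighbors $w=z_{s-q-1}\neq z_{s'-q-1}'=w'$. Applying the easy-subcase extraction at $u$ to the truncated subpaths $(z_1,\dots,z_{s-q})$ and $(z_1',\dots,z_{s'-q}')$ yields a simple cycle $\mathcal{C}_0$ in $G_\sim^A$ through $u$ containing both edges $(w,u)$ and $(w',u)$. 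The common tail $\tau=(u,z_{s-q+1},\dots,x_{\underline{k}-1},x_{\underline{k}})$ is internally disjoint from $\mathcal{C}_0$ (else $\pi$ itself would revisit a vertex), so combining $\mathcal{C}_0$ with $\tau$ produces two distinct simple $w$-to-$x_{\underline{k}}$ paths, both ending with the bridge edge, whose common tail length is strictly less than that of $(\pi,\pi')$; iterating this reduction must eventually land in the easy subcase. The crux of the argument is to verify that each iteration strictly decreases the common-tail length and that the paths remain simple, so that termination is guaranteed and the desired bridge-containing cycle emerges, completing the contradiction.
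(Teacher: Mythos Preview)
Your easy subcase is fine, but the hard subcase (both paths ending with the edge $(x_{\underline{k}-1},x_{\underline{k}})$) is broken, and the defect is not a detail but the overall target. You set out to manufacture a simple cycle containing the edge $(x_{\underline{k}-1},x_{\underline{k}})$; in the hard subcase such a cycle need not exist at all. Concretely: take a cycle $\mathcal{C}_0$ somewhere in $G_\sim^A$ and attach a single path from a vertex of $\mathcal{C}_0$ to $x_{\underline{k}}$ whose last edge is $(x_{\underline{k}-1},x_{\underline{k}})$. Any vertex $z$ on $\mathcal{C}_0$ has two distinct simple paths to $x_{\underline{k}}$ (around $\mathcal{C}_0$ the two ways, then down the attachment), but the bridge edge lies on no cycle whatsoever. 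So no iteration can land you in the easy subcase.

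Your claimed descent also fails on its own terms. The two new $w$-to-$x_{\underline{k}}$ paths you build both end with $\tau=(u,z_{s-q+1},\dots,x_{\underline{k}})$ and diverge one step before $u$ (at $w$ versus $w'$); hence their common tail is exactly $\tau$, which has the \emph{same} length $q+1$ as the common tail of $(\pi,\pi')$. Nothing strictly decreases.

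The idea you are missing is that the cycle does not need to contain the bridge edge; it only needs to be linked to a set on which a specific voter already dictates. From $\pi\neq\pi'$ extract a cycle $\mathcal{C}$ in $\pi\cup\pi'$; by Observation~\ref{obs:cycle}, $f$ is a dictatorship on $\mathcal{C}$. Now use that voter~1 already dictates on $\langle x_{\underline{k}},x_{\overline{k}}|\mathcal{L}^B\rangle$ (Lemma~\ref{lem:RFBR2}), and that either $\mathcal{C}$ meets this interval (when $x_{\underline{k}}\in\mathcal{C}$) or a terminal subpath of $\pi$ connects $\mathcal{C}$ to $x_{\underline{k}}$. Apply Lemma~\ref{lem:union} to the pair $\big(\mathcal{C},\ \langle x_{\underline{k}},x_{\overline{k}}|\mathcal{L}^B\rangle\big)$ and the connecting path: voter~1 dictates on the union, in particular on $\{x_{\underline{k}-1},x_{\underline{k}}\}$, forcing $f(x_{\underline{k}-1},x_{\underline{k}})=x_{\underline{k}-1}$, which contradicts Lemma~\ref{lem:RFBR2}. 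This is exactly the route the paper takes.
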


\begin{proof}
The two statements are symmetric, and we hence focus on the verification of the first one.
Suppose that there exists another path $\pi' = (y_1, \dots, y_t)$ in $G_{\sim}^A$ connecting $z$ and $x_{\underline{k}}$.
Then, we can identify a cycle $\mathcal{C}$ in $G_{\sim}^A$ such that (i) $\mathcal{C} \subseteq \pi\cup \pi'$, (ii) $\pi \cap \mathcal{C} \neq \emptyset$ and
(iii) each edge in $\mathcal{C}$ belong to $\pi$ or $\pi'$.
Clearly, by Observation \ref{obs:cycle}, $f$ behaves like a dictatorship on $\mathcal{C}$.
We further identify the alternative
$z_{k^{\ast}} \in \pi \cap\mathcal{C}$ such that $z_{k} \notin \mathcal{C}$ for all $k\in \{k^{\ast}+1, \dots, s\}$.
We consider two cases: $k^{\ast} = s$ and $k^{\ast}< s$.

In the first case, we show that $x_{\underline{k}-1}$ is also included in $\mathcal{C}$.
On the one hand, as included in $ \mathcal{C}$, $x_{\underline{k}}$ has two distinct neighbors in $\mathcal{C}$.
On the other hand, $x_{\underline{k}}$ has a unique neighbor in $\pi$, which is $x_{\underline{k}-1}$, and a unique neighbor in $\pi'$.
Therefore, it must be true that $x_{\underline{k}-1}$ is included in $\mathcal{C}$.
Hence, $x_{\underline{k}-1}, x_{\underline{k}} \in \mathcal{C}$.
In the second case, we have the path $(z_{k^{\ast}}, \dots, z_s)$, which contains both $x_{\underline{k}-1}$ and $x_{\underline{k}}$.
Recall by Lemma \ref{lem:RFBR2} that voter 1 dictates on $\langle x_{\underline{k}}, x_{\overline{k}}|\mathcal{L}^B\rangle$.
In the first case, $\mathcal{C}\cap \langle x_{\underline{k}}, x_{\overline{k}}|\mathcal{L}^B\rangle \neq \emptyset$, while in the second case, the cycle $\mathcal{C}$ and the adjacency graph over $\langle x_{\underline{k}}, x_{\overline{k}}|\mathcal{L}^B\rangle$, which both are connected graphs and contain at least two alternatives, are linked via the path $(z_{k^{\ast}}, \dots, z_s)$.
Therefore, Lemma \ref{lem:union} implies that voter 1 dictates on $\{z_{s-1}, z_s\}$.
Consequently, we have $f(x_{\underline{k}-1}, x_{\underline{k}}) =f(z_{s-1}, z_s) = z_{s-1} = x_{\underline{k}-1} \neq x_{\textrm{med}(\underline{k}-1, \,\underline{k}, \,\underline{k})}$,
which contradicts Lemma \ref{lem:RFBR2}.
Hence, $\pi$ is the unique path in $G_{\sim}^A$ connecting $z$ and $x_{\underline{k}}$.
\end{proof}

We construct the following five sets:
\begin{align*}
& \underline{B} = \left\{z \in A:
\textrm{there exists a path}\; (z_1, \dots, z_s)\;\textrm{in}\; G_{\sim}^A\;
\textrm{connecting}\; z\; \textrm{and}\; x_{\underline{k}}\; \textrm{such that}\; z_{s-1} = x_{\underline{k}-1}
\right\},\\
& \overline{B} =\left\{z \in A:
\textrm{there exists a path}\; (z_1, \dots, z_s)\;\textrm{in}\; G_{\sim}^A\;
\textrm{connecting}\; z\; \textrm{and}\; x_{\overline{k}}\; \textrm{such that}\; z_{s-1} = x_{\overline{k}+1}
\right\},\\
& \underline{A} = \underline{B}\cup \{x_{\underline{k}}\},\;
\overline{A} =  \overline{B}\cup \{x_{\overline{k}}\},\; \textrm{and}\;
M =  \big\{z \in A: z \notin \underline{B}\cup \overline{B}\big\}.
\end{align*}

The next lemma shows that $G_{\sim}^A$ is a combination of three adjacency graphs
$G_{\sim}^{\underline{A}}$, $G_{\sim}^M$ and $G_{\sim}^{\overline{A}}$, denoted by $G_{\sim}^A = G_{\sim}^{\underline{A}} \cup G_{\sim}^{M} \cup G_{\sim}^{\overline{A}}$.

\begin{lemma}\label{lem:graphpartition}
We have $G_{\sim}^A = G_{\sim}^{\underline{A}} \cup G_{\sim}^{M} \cup G_{\sim}^{\overline{A}}$.
\end{lemma}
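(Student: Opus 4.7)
The desired identity $G_{\sim}^A = G_{\sim}^{\underline{A}} \cup G_{\sim}^M \cup G_{\sim}^{\overline{A}}$ amounts to the statement that every edge $(a, b) \in \mathcal{E}_{\sim}^A$ has both of its endpoints lying simultaneously in $\underline{A}$, in $M$, or in $\overline{A}$. My plan is to establish this edge containment through a ``neighbor closure'' property of the sets $\underline{B}$ and $\overline{B}$: (i) if $a \in \underline{B}$ and $(a, b) \in \mathcal{E}_{\sim}^A$, then $b \in \underline{A}$, and (ii) if $a \in \overline{B}$ and $(a, b) \in \mathcal{E}_{\sim}^A$, then $b \in \overline{A}$.

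With these two symmetric claims in hand, the identity follows immediately by a case analysis on any edge $(a, b) \in \mathcal{E}_{\sim}^A$. If either endpoint belongs to $\underline{B}$, claim (i) forces $\{a, b\} \subseteq \underline{A}$ and the edge lies in $G_{\sim}^{\underline{A}}$; otherwise if either endpoint belongs to $\overline{B}$, claim (ii) forces $\{a, b\} \subseteq \overline{A}$ and the edge lies in $G_{\sim}^{\overline{A}}$; and if neither endpoint lies in $\underline{B} \cup \overline{B}$, both endpoints belong to $M = A \setminus (\underline{B} \cup \overline{B})$ by the very definition of $M$, so the edge lies in $G_{\sim}^M$. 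Note that the three cases are exhaustive (though not mutually exclusive, since $x_{\underline{k}}$ and $x_{\overline{k}}$ are shared vertices), and each case dispatches the edge into at least one of the three induced subgraphs, which is all that the identity requires.

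To verify claim (i), I would fix a witnessing path $\alpha = (z_1, \ldots, z_s)$ for $a \in \underline{B}$, with $z_1 = a$, $z_{s-1} = x_{\underline{k}-1}$, and $z_s = x_{\underline{k}}$, and examine the walk $(b, z_1, z_2, \ldots, z_s)$ obtained by prepending the edge $(b, a)$. If $b$ does not coincide with any vertex of $\alpha$, this walk is already a path in $G_{\sim}^A$ from $b$ to $x_{\underline{k}}$ whose penultimate vertex remains $x_{\underline{k}-1}$, so $b \in \underline{B} \subseteq \underline{A}$. If $b = z_k$ for some $k < s$, the suffix $(z_k, z_{k+1}, \ldots, z_s)$ is itself a path from $b$ to $x_{\underline{k}}$ with penultimate vertex $x_{\underline{k}-1}$, again placing $b \in \underline{B}$. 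If $b = z_s = x_{\underline{k}}$, then $b \in \underline{A}$ trivially. Claim (ii) is established by the mirror argument, replacing $x_{\underline{k}-1}, x_{\underline{k}}$ with $x_{\overline{k}+1}, x_{\overline{k}}$.

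The main obstacle is only careful bookkeeping on the walk-to-path truncation; in particular, one should not overlook the boundary case $b = z_{s-1} = x_{\underline{k}-1}$, where the suffix degenerates to the two-vertex path $(x_{\underline{k}-1}, x_{\underline{k}})$, which still has penultimate vertex $x_{\underline{k}-1}$ and hence certifies $b \in \underline{B}$. It is worth emphasising that the uniqueness conclusion of Lemma \ref{lem:sides} is not invoked here; the neighbor closure argument relies only on the \emph{existence} of a witnessing path entering $x_{\underline{k}}$ through $x_{\underline{k}-1}$, and hence is purely a graph-extension argument.
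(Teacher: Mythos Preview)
Your proof is correct and follows essentially the same approach as the paper: both establish the ``neighbor closure'' property that any vertex adjacent to a point of $\underline{B}$ (resp.\ $\overline{B}$) must lie in $\underline{A}$ (resp.\ $\overline{A}$), via the same path-prepending idea. The only cosmetic differences are that the paper argues by contradiction (assuming $y\notin\underline{A}$, which immediately forces $y$ off the witnessing path and collapses your case analysis to the single prepend case) and that the paper mentions the uniqueness from Lemma~\ref{lem:sides} even though, as you correctly observe, only existence of a witnessing path is actually used.
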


\begin{proof}
First, it is clear that $A = \underline{A}\cup M \cup \overline{A}$ and $\mathcal{E}_{\sim}^A \supseteq \mathcal{E}_{\sim}^{\underline{A}}\cup \mathcal{E}_{\sim}^{M}\cup \mathcal{E}_{\sim}^{\overline{A}}$.
To prove $\mathcal{E}_{\sim}^A = \mathcal{E}_{\sim}^{\underline{A}}\cup \mathcal{E}_{\sim}^{M}\cup \mathcal{E}_{\sim}^{\overline{A}}$,
it suffices to show that in $G_{\sim}^A$, no alternative of $\underline{B}$ is adjacent to any alternative not in $\underline{A}$,
and no alternative of $\overline{B}$ is adjacent to any alternative not in $\overline{A}$.

Given $z \in \underline{B}$ and $y \notin \underline{A}$,
suppose by contradiction that $(z, y) \in \mathcal{E}_{\sim}^A$.
Let $(z_1, \dots, z_s)$ be the unique path in $G_{\sim}^{A}$ that connects $z$ and $x_{\underline{k}}$, where $z_{s-1} = x_{\underline{k}-1}$.
It is evident that $y \notin \underline{A}$ implies $y \neq x_{\underline{k}} = z_s$.
Moreover, since $z_1, \dots, z_{s-1} \in \underline{B}$ by definition,
it is true that $y \notin \{z_1, \dots, z_{s-1}\}$.
Consequently, we have a path $(y, z_1, \dots, z_s)$ in $G_{\sim}^A$ that connects $y$ and $x_{\underline{k}}$,
which by definition implies $y \in \underline{B} \subset \underline{A}$ - a contradiction.
Therefore, $(z, y) \notin \mathcal{E}_{\sim}^A$.
Symmetrically, given $z' \in \overline{B}$ and $y' \notin \overline{A}$,
we have $(z', y') \notin \mathcal{E}_{\sim}^A$.
\end{proof}

By Lemma \ref{lem:sides} and the construction of $\underline{A}$ and $\overline{A}$,
one can easily infer that $G_{\sim}^{\underline{A}}$ and $G_{\sim}^{\overline{A}}$ are two trees.
Immediately, since $G_{\sim}^A$ is a connected graph, Lemma \ref{lem:graphpartition} implies that $G_{\sim}^{M}$ is a connected graph as well.
In particular, when $\underline{A} \neq \{x_{\underline{k}}\}$,
we know that $x_{\underline{k}-1}$ must be contained in $\underline{A}$, and more importantly,
for every $z \in \underline{B}$,
$x_{\underline{k}-1}$ is included in the path in $G_{\sim}^{\underline{A}}$ that connects $z$ and $x_{\underline{k}}$.
Therefore, it must be the case that $x_{\underline{k}-1}$ is the unique neighbor of $x_{\underline{k}}$ in $G_{\sim}^{\underline{A}}$, and hence $x_{\underline{k}} \in \textrm{Leaf}(G_{\sim}^{\underline{A}})$.
Symmetrically, if $\overline{A} \neq \{x_{\overline{k}}\}$,
$x_{\overline{k}+1}$ is the unique neighbor of $x_{\overline{k}}$ in $G_{\sim}^{\overline{A}}$ and $x_{\overline{k}} \in \textrm{Leaf}(G_{\sim}^{\overline{A}})$.

Now, we construct a line $\mathcal{L}^M$ over all alternatives of $M$ such that $x_{\underline{k}}$ and $x_{\overline{k}}$ are the two leaves of the line, and all alternatives of $M\backslash \{x_{\underline{k}},x_{\overline{k}}\}$ are arbitrarily arranged in the interior of the line.
By combining $G_{\sim}^{\underline{A}}$, $\mathcal{L}^M$ and $G_{\sim}^{\overline{A}}$,
we generate a tree $\mathcal{T}_f^{A}$.
Clearly, by construction, $\mathcal{T}_f^{\underline{A}} = G_{\sim}^{\underline{A}}$ and
$\mathcal{T}_f^{\overline{A}} = G_{\sim}^{\overline{A}}$.
By construction, $x_{\underline{k}}$ and $x_{\overline{k}}$ are dual-thresholds in $\mathcal{T}_f^A$.
Thus, according to $\mathcal{T}_f^A$ and $G_{\sim}^A$, we have $A^{x_{\underline{k}}\rightharpoonup x_{\overline{k}}} =
\big\{x \in A: x_{\underline{k}} \in \langle x, x_{\overline{k}}|\mathcal{T}_f^A\rangle\big\} = \underline{A}$,
$\langle x_{\underline{k}}, x_{\overline{k}}|\mathcal{T}_f^A\rangle = M$, and
$A^{x_{\overline{k}}\,\rightharpoonup x_{\underline{k}}} =
\big\{x \in A: x_{\overline{k}} \in \langle x, x_{\underline{k}}|\mathcal{T}_f^A\rangle\big\} = \overline{A}$.
In the rest of proof, for notational convenience, we use the notation $\underline{A}$, $M$ and $\overline{A}$, instead of $A^{x_{\underline{k}}\rightharpoonup x_{\overline{k}}}$,
$\langle x_{\underline{k}}, x_{\overline{k}}|\mathcal{T}^A\rangle$ and $A^{x_{\overline{k}}\,\rightharpoonup x_{\underline{k}}}$.\medskip

The next lemma shows that voter 1 dictates on $M$.

\begin{lemma}\label{lem:dictatorship5}
We have $f(z, z') = z$ for all $z, z' \in M$.
\end{lemma}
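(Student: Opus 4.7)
My plan is to leverage Lemma~\ref{lem:RFBR2}, which establishes voter~1's dictatorship on the segment $\bar{A} := \langle x_{\underline{k}}, x_{\overline{k}}|\mathcal{L}^B\rangle$, and propagate this dictatorship throughout $M$ by exploiting the connectedness of $G_{\sim}^M$ together with the structural results of Appendix~\ref{app:AP}.

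I would first verify $\bar{A}\subseteq M$. For each $k$ with $\underline{k}\leq k\leq \overline{k}$, the direct subpath $(x_k, x_{k-1},\dots, x_{\underline{k}})$ of $\mathcal{L}^B$ connects $x_k$ to $x_{\underline{k}}$ with penultimate vertex $x_{\underline{k}+1}\neq x_{\underline{k}-1}$. The uniqueness asserted by statement~(i) of Lemma~\ref{lem:sides} then rules out any alternative path in $G_{\sim}^A$ from $x_k$ to $x_{\underline{k}}$ whose penultimate vertex is $x_{\underline{k}-1}$, so $x_k\notin \underline{B}$; a symmetric argument using statement~(ii) of Lemma~\ref{lem:sides} yields $x_k\notin \overline{B}$, and hence $x_k\in M$. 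Since $\underline{k}<\overline{k}$, this also gives $|\bar{A}|\geq 2$ and $G_{\sim}^{\bar{A}}$ connected.

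The main step is to extend voter~1's dictatorship from $\bar{A}$ to all of $M$ by iterated application of Lemma~\ref{lem:union}. For each $z\in M\setminus \bar{A}$, my plan is to exhibit a path $\pi$ in $G_{\sim}^M\subseteq G_{\sim}^A$ whose two endpoints lie in $\bar{A}$ (or in an already-enlarged dictatorial superset of $\bar{A}$) and whose interior contains $z$. Lemma~\ref{lem:union}, with $\bar{A}$ serving simultaneously as both dictatorial sets, then absorbs $\pi$---and in particular $z$---into the dictatorship. Such a through-path is typically obtained by concatenating two internally disjoint paths in $G_{\sim}^M$ from $z$ to distinct alternatives of $\bar{A}$, which the connectedness of $G_{\sim}^M$ combined with $|\bar{A}|\geq 2$ supplies generically; iterating with the growing dictatorial region in place of $\bar{A}$ accommodates alternatives $z$ lying on more complex branches.

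The hardest case is when $z\in M\setminus \bar{A}$ sits on a dead-end branch of $G_{\sim}^M$---in particular, when $z$ is a leaf of $G_{\sim}^M$---so that no through-path of the above form exists. Here the structural fact noted after Lemma~\ref{lem:graphpartition} becomes indispensable: every $G_{\sim}^A$-neighbor of any $z\in M\setminus \{x_{\underline{k}}, x_{\overline{k}}\}$ lies inside $M$, so dictatorship-propagation cannot leak into the already-understood regions $\underline{A}$ or $\overline{A}$. Combining this with the one-directional propagation afforded by statements~(i) and (iii) of Lemma~\ref{lem:step1} along the unique $G_{\sim}^M$-path from the dictatorial region to such a dead-end, together with strategy-proofness applied to the completely-reversed preferences supplied by diversity, should extend voter~1's dictatorship edge-by-edge; finiteness of $M$ guarantees termination with $f(z,z')=z$ for all $z,z'\in M$.
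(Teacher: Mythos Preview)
Your overall strategy matches the paper's: propagate voter~1's dictatorship from $\bar A := \langle x_{\underline k}, x_{\overline k}|\mathcal{L}^B\rangle$ throughout $M$ via the connectedness of $G_{\sim}^M$, invoking the completely-reversed preferences for dead-end branches. A minor technical slip: when $|\Pi(a_1,a_m)|>1$, the segment $(x_s,\dots,x_t)$ of $\mathcal{L}^B$ is an arbitrary enumeration of $\mathcal{O}$, not a $G_{\sim}^A$-path, so your claimed subpath $(x_k,\dots,x_{\underline k})$ need not lie in $G_{\sim}^A$; the inclusion $\bar A\subseteq M$ still holds via an actual path through $\pi^L$ and $G_{\sim}^{\mathcal{O}}$, so this is easily repaired.

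The substantive gap is the leaf case. Along a $G_{\sim}^M$-path $(z_1,\dots,z_p)$ from $z_1\in\bar A$ to a leaf $z_p$, the hypothesis of Lemma~\ref{lem:step1}(i) is $f(z_1,z_2)=z_1$, which is \emph{not} granted by voter~1's dictatorship on $\bar A$ alone (since $z_2\notin\bar A$); and even if it were, (i) would yield only $f(z_k,z_{k'})=z_k$ for $k\le k'$. Lemma~\ref{lem:step1}(iii) would supply the reverse direction, but its hypothesis is $f(z_p,z_{p-1})=z_p$---precisely the unproved direction at the leaf---so the two statements you cite do not combine. The paper closes this by an explicit induction on~$l$, establishing \emph{both} $f(z_{l-1},z_l)=z_{l-1}$ and $f(z_l,z_{l-1})=z_l$ at each step, and the operative tool is Lemma~\ref{lem:transitivity} rather than Lemma~\ref{lem:step1}: for each equality one supposes it fails, traces concatenated $G_{\sim}^A$-paths from the current vertex back through the dictatorial region to \emph{both} $x_1$ and $x_v$, and then strategy-proofness against $\underline{P}_2$ and $\overline{P}_2$ forces both $z_{l-1}\mathrel{\underline{P}_2}z_l$ and $z_{l-1}\mathrel{\overline{P}_2}z_l$, contradicting complete reversal. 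Your phrase naming ``diversity'' has the right ingredient, but the lemma you invoke is not adequate, and the two-sided reach to $x_1$ and $x_v$ is the missing idea.
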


\begin{proof}
By the definition of $\underline{B}$ and $\overline{B}$, one can easily notice
$\langle x_{\underline{k}}, x_{\overline{k}}|\mathcal{L}^B\rangle \subseteq M$.
According to the connected graph $G_{\sim}^{M}$,
we know that either $\textrm{Leaf}(G_{\sim}^{M}) = \emptyset$ or $\textrm{Leaf}(G_{\sim}^{M}) \neq \emptyset$ holds.
If $\textrm{Leaf}(G_{\sim}^{M}) = \emptyset$, the lemma follows from statement (i) of Lemma \ref{lem:union2} and the hypothesis that voter $1$ dictates on $\langle x_{\underline{k}}, x_{\overline{k}}|\mathcal{L}^B\rangle \subseteq M$.
Henceforth, let $\textrm{Leaf}(G_{\sim}^{M}) \neq \emptyset$.
To complete the verification,
by statement (ii) of Lemma \ref{lem:union2},
we show that given an arbitrary $x \in \textrm{Leaf}(G_{\sim}^{M})$ and $(x,y) \in \mathcal{E}_{\sim}^M$,
$f$ behaves likes dictatorship on $\{x, y\}$.
We have two cases: $x \in \{x_{\underline{k}}, x_{\overline{k}}\}$ and $x \in M\backslash \{x_{\underline{k}}, x_{\overline{k}}\}$.
In the first case, since
$G_{\sim}^{\langle x_{\underline{k}}, x_{\overline{k}}|\mathcal{L}^B\rangle}$ is a connected graph nested in $G_{\sim}^M$ (implied by $\langle x_{\underline{k}}, x_{\overline{k}}|\mathcal{L}^B\rangle \subseteq M$),
we know that $x_{\underline{k}}$ has a neighbor that is in $\langle x_{\underline{k}}, x_{\overline{k}}|\mathcal{L}^B\rangle \subseteq M$.
Then, $x \in \textrm{Leaf}(G_{\sim}^{M})$ and $(x,y) \in \mathcal{E}_{\sim}^M$ imply $y \in \langle x_{\underline{k}}, x_{\overline{k}}|\mathcal{L}^B\rangle$.
Consequently, by Lemma \ref{lem:RFBR2}, $f$ behaves likes dictatorship on $\{x, y\}$, as required.
Henceforth, let the second case hold: $x \in M\backslash \{x_{\underline{k}}, x_{\overline{k}}\}$.

Since the adjacency graph over $\langle x_{\underline{k}}, x_{\overline{k}}|\mathcal{L}^B\rangle$
is a connected graph,
we fix a path $\pi = (y_1, \dots, y_t)$ such that $y_1, \dots, y_t \in \langle x_{\underline{k}}, x_{\overline{k}}|\mathcal{L}^B\rangle$, $y_1 = x_{\underline{k}}$ and $y_t = x_{\overline{k}}$.
Since $x \notin  \{x_{\underline{k}}, x_{\overline{x}}\}$ and $x \in \textrm{Leaf}(G_{\sim}^{M})$,
it is evident that $x \notin \pi$.
Moreover, since $G_{\sim}^M$ is a connected graph,
we can identify an alternative $y_s \in \pi$ and a path $\pi'=(z_1, \dots, z_p)$ in $G_{\sim}^M$ connecting $y_s$ and $x$, such that $z_2, \dots, z_p \notin \pi$.
Clearly, since $x \in \textrm{Leaf}(G_{\sim}^{M})$ and $(x,y) \in \mathcal{E}_{\sim}^M$,
we have $z_{p-1} = y$ and hence $p \geq 2$.
By Lemma \ref{lem:RFBR2}, we know that $f$ behaves likes dictatorship on $\pi$.
Since $z_1 = y_s \in \pi$, $f$ behaves likes dictatorship on the set $\pi\cup \{z_1\}$.
We pick an arbitrary $l \in \{2, \dots, p\}$, and provide an induction hypothesis:
SCF $f$ behaves like a dictatorship on $\pi\cup \{z_1, \dots, z_{l-1}\}$.
We show that $f$ behaves like a dictatorship on $\pi\cup \{z_1, \dots, z_l\}$.
Furthermore, by Lemma \ref{lem:union}, it suffices to show that voter 1 dictates on $\{ z_{l-1}, z_l\}$, i.e.,
$f(z_{l-1}, z_l) =z_{l-1}$ and $f(z_{l}, z_{l-1})=z_l$.

Since $z_{l-1} \sim z_l$, it is evident that $f(z_{l-1}, z_l) \in \{z_{l-1}, z_l\}$ and
$f(z_{l}, z_{l-1}) \in \{z_{l-1}, z_l\}$.
Suppose $f(z_{l-1}, z_l) = z_{l}$.
Then, according to the path $(z_{l-1}, \dots, z_1)$, statement (ii) of Lemma \ref{lem:transitivity} implies $f(z_{1}, z_l) = z_l$.
Given $\underline{P}_2 = \underline{P}_i$ and $\overline{P}_2 = \overline{P}_i$,
since $z_1 = y_s \in \langle x_{\underline{k}}, x_{\overline{k}}|\mathcal{L}^B\rangle$,
Lemma \ref{lem:RFBR2} implies $f(z_1, \underline{P}_2) = f(z_1, x_1)=z_1$ and
$f(z_1, \overline{P}_2) = f(z_1, x_v) = z_1$.
Then, strategy-proofness implies $z_1\mathrel{\underline{P}_2}z_l$ and $z_1\mathrel{\overline{P}_2}z_l$ which contradicts the fact that $\underline{P}_2$ and $\overline{P}_2$ are complete reversals.
Therefore, $f(z_{l-1}, z_l) = z_{l-1}$, as required.

Last, we show $f(z_{l}, z_{l-1}) = z_{l}$.
Suppose not, i.e., $f(z_{l}, z_{l-1}) = z_{l-1}$.
First, since $z_{l-1}, \dots, z_2 \in M\backslash \pi$,
we know $z_{l-1}, \dots, z_2 \notin \underline{A}\cup \overline{A}$.
Then, we can construct two concatenated paths
$\bar{\pi} = (z_{l-1}, \dots, z_1=y_s, \dots, y_1=x_{\underline{k}}, \dots, x_1)$ and
$\hat{\pi} = (z_{l-1}, \dots, z_1=y_s, \dots, y_t=x_{\overline{k}}, \dots, x_v)$.
Then, according to $\bar{\pi}$ and $\hat{\pi}$,
statement (iii) of Lemma \ref{lem:transitivity} implies
$f(z_l, x_1) \in \bar{\pi} $ and $f(z_l, x_v) \in \hat{\pi} $.
Next, we show the following claim.\medskip

\noindent
\textsc{Claim 1}: We have
$f(z_{l-1}, x_1) =z_{l-1}$ and
$f(z_{l-1}, x_v) = z_{l-1}$.\medskip

There are two cases:
$l=2$ and $l>2$.
In the first case, $z_{l-1} = y_s \in \langle x_{\underline{k}}, x_{\overline{k}}|\mathcal{L}^B\rangle$.
Then, Lemma \ref{lem:RFBR2} implies $f(z_{l-1}, x_1) =z_{l-1}$ and $f(z_{l-1}, x_v) = z_{l-1}$.
In the second case, the induction hypothesis implies $f(z_{l-1}, y_1) = z_{l-1}$ and
$f(z_{l-1}, y_t) = z_{l-1}$.
Note that $z_{l-1} \notin \pi$ implies $z_{l-1} \neq y_1$ and $z_{l-1} \neq y_t$.
Then, according to the paths
$(y_1 = x_{\underline{k}}, \dots, x_1)$ and
$(y_t = x_{\overline{k}}, \dots, x_v)$ which both clearly exclude $z_{l-1}$,
by statement (ii) of Lemma \ref{lem:transitivity},
$f(z_{l-1}, y_1) = z_{l-1}$ implies $f(z_{l-1}, x_1) =z_{l-1}$, and
$f(z_{l-1}, y_t) = z_{l-1}$ implies $f(z_{l-1}, x_v) = z_{l-1}$.
This completes the verification of the claim.\medskip

Furthermore,
since $z_l \sim z_{l-1}$, by statement (iii) of Lemma \ref{lem:transitivity},
Claim 1 implies $f(z_{l}, x_1) \in \{z_{l-1}, z_l\}$ and
$f(z_{l}, x_v) \in \{z_{l-1}, z_l\}$.
Therefore, we have $f(z_{l}, x_1) \in \bar{\pi} \cap \{z_{l-1}, z_l\} = \{z_{l-1}\}$ and
$f(z_{l}, x_v) \in \hat{\pi} \cap \{z_{l-1}, z_l\} = \{z_{l-1}\}$,
which respectively imply $f(z_{l}, x_1) = z_{l-1}$ and $f(z_{l}, x_v) = z_{l-1}$.
Thus, given $\underline{P}_2 = \underline{P}_i$ and $\overline{P}_2 = \overline{P}_i$,
we have $f(z_{l}, \underline{P}_2) = z_{l-1}$ and $f(z_l, \overline{P}_2) = z_{l-1}$.
Consequently, given $f(z_l, z_l) = z_l$ by unanimity,
strategy-proofness implies $z_{l-1}\mathrel{\underline{P}_2}z_l$ and $z_{l-1}\mathrel{\overline{P}_2}z_l$,
which contradicts the fact that $\underline{P}_2$ and $\overline{P}_2$ are complete reversals.
Hence, $f(z_{l}, z_{l-1})=z_l$, as required.
This completes the verification of the induction hypothesis.
Therefore, $f$ behaves like a dictatorship on $\{x, y\}$, as required.
This proves the lemma.
\end{proof}

\begin{lemma}\label{lem:fullcharacterization}
SCF $f$ is a hybrid rule on $\mathcal{T}_f^A$ w.r.t.~$x_{\underline{k}}$ and $x_{\overline{k}}$, i.e., for all $x, y\in A$,
\begin{align*}
f(x,y) = \left\{
\begin{array}{ll}
x & \emph{if}\; x \in M,\\
\mathop{\emph{Proj}}\big(x_{\underline{k}}, \langle x, y|\mathcal{T}_f^A\rangle\big) & \emph{if}\; x \in \underline{A}\backslash \{x_{\underline{k}}\},\; \textrm{and}\\
\mathop{\emph{Proj}}\big(x_{\overline{k}}, \langle x, y|\mathcal{T}_f^A\rangle\big) & \emph{if}\; x \in \overline{A}\backslash \{x_{\overline{k}}\}.
\end{array}
\right.
\end{align*}
\end{lemma}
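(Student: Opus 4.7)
My plan is to prove the characterization by treating its three cases separately, relying on Lemma \ref{lem:dictatorship5} (voter 1 dictates on $M$), Lemma \ref{lem:RFBR2} (the median-voter formula on $\mathcal{L}^B$), the uniqueness of certain paths in $G_{\sim}^A$ from Lemma \ref{lem:sides}, the decomposition $G_{\sim}^A = G_{\sim}^{\underline{A}}\cup G_{\sim}^M\cup G_{\sim}^{\overline{A}}$ of Lemma \ref{lem:graphpartition}, and the general propagation results of Lemma \ref{lem:step1} and Lemma \ref{lem:transitivity}.

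As a preparatory step I will establish the boundary identities $f(x_{\underline{k}},y)=x_{\underline{k}}$ for every $y\in\underline{A}$ and $f(x,x_{\underline{k}})=x_{\underline{k}}$ for every $x\in\underline{A}\backslash\{x_{\underline{k}}\}$, together with their counterparts at $x_{\overline{k}}$. For the first identity, I will apply statement (i) of Lemma \ref{lem:step1} to the reverse of the unique $G_{\sim}^{\underline{A}}$-path from $y$ to $x_{\underline{k}}$ (which terminates at the edge $(x_{\underline{k}},x_{\underline{k}-1})$ by Lemma \ref{lem:sides}), starting from the equality $f(x_{\underline{k}},x_{\underline{k}-1})=x_{\underline{k}}$ already supplied by Lemma \ref{lem:RFBR2}. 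The second identity follows by the dual application of statement (iv) of Lemma \ref{lem:step1} to the path from $x$ to $x_{\underline{k}}$. Combining these identities with Lemma \ref{lem:dictatorship5} (for $y\in M$) and with statement (ii) of Lemma \ref{lem:transitivity} propagating the value $f(x_{\underline{k}},x_v)=x_{\underline{k}}$ through $G_{\sim}^{\overline{A}}$ (valid because $x_{\underline{k}}\notin\overline{A}$), I obtain $f(x_{\underline{k}},y)=x_{\underline{k}}$ for every $y\in A$. Case 1 of the lemma ($x\in M$) is then short: for $y\in\underline{A}\backslash\{x_{\underline{k}}\}$ with $x\neq x_{\underline{k}}$, Lemma \ref{lem:dictatorship5} gives $f(x,x_{\underline{k}})=x$, and since $x\notin\underline{A}$ is off the $G_{\sim}^{\underline{A}}$-path from $x_{\underline{k}}$ to $y$, statement (ii) of Lemma \ref{lem:transitivity} propagates this to $f(x,y)=x$; the case $x=x_{\underline{k}}$ is the first boundary identity, and $y\in\overline{A}$ is symmetric.

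The core of the proof is Case 2 ($x\in\underline{A}\backslash\{x_{\underline{k}}\}$). When $y\in M\cup\overline{A}$ the target projection equals $x_{\underline{k}}$, and I will pin $f(x,y)$ by a two-path sandwich. Statement (iii) of Lemma \ref{lem:transitivity} applied to the unique $G_{\sim}^{\underline{A}}$-path from $x$ to $x_{\underline{k}}$ (with voter $2$ fixed at peak $y$, using $f(x_{\underline{k}},y)=x_{\underline{k}}$) confines $f(x,y)$ to $\underline{A}$; the same statement applied to any path from $y$ to $x_{\underline{k}}$ that stays inside $G_{\sim}^{M\cup\overline{A}}$ (with voter $1$ fixed at peak $x$, using $f(x,x_{\underline{k}})=x_{\underline{k}}$) confines $f(x,y)$ to $M\cup\overline{A}$; and since $\underline{A}\cap(M\cup\overline{A})=\{x_{\underline{k}}\}$, these two containments force $f(x,y)=x_{\underline{k}}$. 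When $y\in\underline{A}\backslash\{x_{\underline{k}}\}$, writing $z^\ast=\mathop{\mathrm{Proj}}(x_{\underline{k}},\langle x,y|G_{\sim}^{\underline{A}}\rangle)$, I will invoke three tree-paths in $G_{\sim}^{\underline{A}}=\mathcal{T}_f^{\underline{A}}$: from $x$ to $x_{\underline{k}}$, from $y$ to $x_{\underline{k}}$, and from $x$ to $y$. Statement (iii) of Lemma \ref{lem:transitivity} applied to the first two, and statement (i) applied to the third, force $f(x,y)$ into the triple intersection, which by the tree structure of $G_{\sim}^{\underline{A}}$ reduces to $\{z^\ast\}$, yielding $f(x,y)=z^\ast$. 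Case 3 ($x\in\overline{A}\backslash\{x_{\overline{k}}\}$) is entirely symmetric to Case 2.

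The main obstacle I anticipate is the first regime of Case 2, where I must produce a path from $y\in M\cup\overline{A}$ to $x_{\underline{k}}$ that stays within $M\cup\overline{A}$ so that its intersection with the $G_{\sim}^{\underline{A}}$-path from $x$ to $x_{\underline{k}}$ collapses to $\{x_{\underline{k}}\}$. This is enabled by Lemma \ref{lem:graphpartition}, which confines the edges of $G_{\sim}^A$ to the three pieces $G_{\sim}^{\underline{A}}$, $G_{\sim}^M$, $G_{\sim}^{\overline{A}}$, together with the disjointness $\underline{A}\cap\overline{A}=\emptyset$ built into the construction of $\mathcal{T}_f^A$: together these guarantee that $G_{\sim}^{M\cup\overline{A}}$ is the connected union of $G_{\sim}^M$ and $G_{\sim}^{\overline{A}}$ glued at $x_{\overline{k}}$ and meets $G_{\sim}^{\underline{A}}$ only at $x_{\underline{k}}$. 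The Steiner-point identification used in the second regime of Case 2 is by contrast an elementary tree-theoretic fact, immediate once the three path containments are in hand.
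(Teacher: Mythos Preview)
Your proposal is correct and follows essentially the same approach as the paper's proof: both rely on Lemma~\ref{lem:dictatorship5} for the middle zone, boundary identities at $x_{\underline{k}}$ and $x_{\overline{k}}$ derived from Lemma~\ref{lem:RFBR2}, path-sandwich arguments via Lemma~\ref{lem:transitivity}, and the Steiner-point (triple-intersection) identification in the tree $G_{\sim}^{\underline{A}}$. Your organization is slightly more streamlined in two places---for Case~1 you propagate directly from $f(x,x_{\underline{k}})=x$ (given by Lemma~\ref{lem:dictatorship5}) rather than detouring through $x_{\underline{k}-1}$ as the paper does, and for the first regime of Case~2 you sandwich $f(x,y)$ directly between $\underline{A}$ and $M\cup\overline{A}$ whereas the paper first pins $f(x_{\underline{k}-1},y)=x_{\underline{k}}$ and then propagates along $G_{\sim}^{\underline{A}}$---but these are reorderings of the same ingredients rather than a different argument.
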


\begin{proof}
We first know that voter $1$ dictates on $M$ by Lemma \ref{lem:dictatorship5}.
Next, given $x \in M$ and $y \in \underline{A}\backslash \{x_{\underline{k}}\}$, we show $f(x, y) = x$.
By a symmetric argument, we can show $f(x, y) = x$ for all $x \in M$ and $y \in \overline{A}\backslash \{x_{\overline{k}}\}$.
Since $y \in \underline{A}\backslash \{x_{\underline{k}}\}$,
we know $\underline{A}\neq \{x_{\underline{k}}\} $ and
hence $x_{\underline{k}-1} \in \underline{A}\backslash \{x_{\underline{k}}\}$.
Then, we have $f(x_{\underline{k}}, x_{\underline{k}-1}) = x_{\underline{k}}$ by Lemma \ref{lem:RFBR2}.
In $G_{\sim}^{M}$,
there exists a path $(z_1, \dots, z_s)$ connecting $x_{\underline{k}}$ and $x$.
Then, by statement (iii) of Lemma \ref{lem:transitivity}, $f(x_{\underline{k}}, x_{\underline{k}-1}) = x_{\underline{k}}$ implies
$f(x, x_{\underline{k}-1}) \in \{z_1, \dots, z_s\}$.
Suppose $f(x, x_{\underline{k}-1}) = z_k$ for some $k \in \{1, \dots, s-1\}$.
Then, strategy-proofness implies $f(x, z_k) = z_k$ which contradicts Lemma \ref{lem:RFBR2}.
Therefore, $f(x, x_{\underline{k}-1}) = z_s=x$.
Furthermore, in $G_{\sim}^{\underline{A}}$,
we have a path $(y_1, \dots, y_t)$ connecting $x_{\underline{k}-1}$ and $y$.
Clearly, $x$ is not included in $(y_1, \dots, y_t)$.
Then, by statement (ii) of Lemma \ref{lem:transitivity}, $f(x, x_{\underline{k}-1}) =x$ implies $f(x, y) = x$, as required.

Second, given $x \in \underline{A}\backslash \{x_{\underline{k}}\}$ and $y \in A$,
we show $f(x, y) = \mathop{\textrm{Proj}}\big(x_{\underline{k}}, \langle x, y|\mathcal{T}_f^A\rangle\big)$.
Since $x \in \underline{A}\backslash \{x_{\underline{k}}\}$,
we know $\underline{A}\neq \{x_{\underline{k}}\}$ and
$x_{\underline{k}-1} \in \underline{A}\backslash \{x_{\underline{k}}\}$.
Then, Lemma \ref{lem:RFBR2} implies $f(x_{\underline{k}-1}, x_{\underline{k}}) = x_{\underline{k}}$.
We consider two cases: $y \in \langle x_{\underline{k}}, x_{\overline{k}}|\mathcal{T}_f^A\rangle\cup \overline{A}$ and
$y \in \underline{A}\backslash \{x_{\underline{k}}\}$.
In the first case,
we have a path $(y_1, \dots, y_s)$ in the connected graph $G_{\sim}^M\cup G_{\sim}^{\overline{A}}$ connecting $x_{\underline{k}}$ and $y$.
Then, by statement (iii) of Lemma \ref{lem:transitivity}, $f(x_{\underline{k}-1}, x_{\underline{k}}) = x_{\underline{k}}$ implies $f(x_{\underline{k}-1}, y) \in \{y_1, \dots, y_s\}$.
Meanwhile, since $f(x_{\underline{k}}, y) = x_{\underline{k}}$ and $x_{\underline{k}-1} \sim x_{\underline{k}}$,
statement (iii) of Lemma \ref{lem:transitivity} implies $f(x_{\underline{k}-1}, y) \in \{x_{\underline{k}}, x_{\underline{k}-1}\}$.
Therefore, $f(x_{\underline{k}-1}, y) \in \{y_1, \dots, y_s\}\cap \{x_{\underline{k}}, x_{\underline{k}-1}\} = \{x_{\underline{k}}\}$, and hence
$f(x_{\underline{k}-1}, y) = x_{\underline{k}}$.
Furthermore,
in $G_{\sim}^{\underline{A}}$, we have the path $(z_1, \dots, z_t)$ connecting $x_{\underline{k}-1}$ and $x$. Clearly, $x_{\underline{k}}$ is not included in $(z_1, \dots, z_t)$.
Then, by statement (ii) of Lemma \ref{lem:transitivity},
$f(x_{\underline{k}-1}, y) = x_{\underline{k}}$ implies
$f(x, y) = x_{\underline{k}} = \mathop{\textrm{Proj}}\big(x_{\underline{k}}, \langle x, y|\mathcal{T}_f^A\rangle\big)$, as required.
In the second case,
we have a path $\pi$ in $G_{\sim }^{\underline{A}}$ connecting $x$ and $y$.
Then, statement (i) of Lemma \ref{lem:transitivity} implies $f(x, y) \in \pi$.
Meanwhile, we have $f(x_{\underline{k}}, y) = x_{\underline{k}}$ by the verification in the first paragraph and
$f(x, x_{\underline{k}}) = x_{\underline{k}}$ by the verification of the first case.
Note that in $G_{\sim}^{\underline{A}}$,
there exist a path $\pi'$ connecting $x_{\underline{k}}$ and $x$, and a path $\pi''$ connecting $x_{\underline{k}}$ and $y$.
Then, by statement (iii) of Lemma \ref{lem:transitivity},
$f(x_{\underline{k}}, y) = x_{\underline{k}}$ implies $f(x, y) \in \pi'$, and $f(x, x_{\underline{k}}) = x_{\underline{k}}$ implies $f(x, y) \in \pi''$.
Therefore, $f(x, y) \in \pi\cap \pi' \cap \pi''$.
Last, since $G_{\sim}^{\underline{A}}$ is a tree,
it is true that $\pi\cap \pi' \cap \pi'' = \big\{\mathop{\textrm{Proj}}(x_{\underline{k}}, \langle x, y|\mathcal{T}_f^A\rangle)\big\}$.
Hence, we have $f(x, y) = \mathop{\textrm{Proj}}\big(x_{\underline{k}}, \langle x, y|\mathcal{T}_f^A\rangle\big)$, as required.
Symmetrically,
when $x \in \overline{A}\backslash \{x_{\overline{k}}\}$, we can show $f(x, y) = \mathop{\textrm{Proj}}\big(x_{\overline{k}}, \langle x, y|\mathcal{T}_f^A\rangle\big)$ for all $y \in A$.
This proves the lemma.
\end{proof}

\begin{lemma}\label{lem:sh}
We have $\mathbb{D} \subseteq \mathbb{D}_{\emph{SH}}\big(\mathcal{T}_f^A, x_{\underline{k}}, x_{\overline{k}}\big)$.
\end{lemma}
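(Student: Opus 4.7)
The plan is to verify that every $P \in \mathbb{D}$ satisfies the defining conditions of $(x_{\underline{k}}, x_{\overline{k}})$-semi-hybridness on $\mathcal{T}_f^A$, splitting into three cases according to the region ($\underline{A}\setminus\{x_{\underline{k}}\}$, $M$, or $\overline{A}\setminus\{x_{\overline{k}}\}$) containing $r_1(P)$. The main tool will be strategy-proofness of $f$, applied to carefully chosen two-voter profiles, together with the closed-form description of $f$ in Lemma \ref{lem:fullcharacterization}; the deviation preferences needed at each step exist by minimal richness, which is implied by path-connectedness.

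First I will handle the case $r_1(P) = x \in M$. I will place $P$ on voter 2 and let voter 1 report an arbitrary peak $v \in \underline{A}\setminus\{x_{\underline{k}}\}$. By Lemma \ref{lem:fullcharacterization}, the outcome equals $\mathop{\textrm{Proj}}\big(x_{\underline{k}}, \langle v, x|\mathcal{T}_f^A\rangle\big) = x_{\underline{k}}$ because the path from $v\in\underline{A}$ to $x\in M$ passes through $x_{\underline{k}}$, while voter 2 deviating to any $P_2' \in \mathbb{D}^v$ yields $f(v,v)=v$ by unanimity; strategy-proofness of voter 2 then gives $x_{\underline{k}} \mathrel{P} v$, so $\max^P(\underline{A}) = x_{\underline{k}}$, and the symmetric argument with $v \in \overline{A}\setminus\{x_{\overline{k}}\}$ delivers $\max^P(\overline{A}) = x_{\overline{k}}$. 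The cases $r_1(P) \in \underline{A}\setminus\{x_{\underline{k}}\}$ and $r_1(P) \in \overline{A}\setminus\{x_{\overline{k}}\}$ are symmetric, so I focus on the former with peak $x$. For the semi-single-peakedness part on $\mathcal{T}_f^A$ w.r.t.~$x_{\underline{k}}$, I will place $P$ on voter 1: given $a,b$ on $\langle x, x_{\underline{k}}|\mathcal{T}_f^A\rangle$ with $a$ strictly between $x$ and $b$, voter 2 reporting peak $a$ gives outcome $a$ (the projection of $x_{\underline{k}}$ onto $\langle x, a\rangle$), while voter 1 deviating to peak $b$ gives outcome $b$, so strategy-proofness yields $a \mathrel{P} b$; for $c$ off the path $\langle x, x_{\underline{k}}|\mathcal{T}_f^A\rangle$ with projection $c'$, voter 2's peak $c$ produces outcome $c'$ (the branching point when $c \in \underline{A}$, or $x_{\underline{k}}$ when $c \in M\cup\overline{A}$), while voter 1 deviating to peak $c$ gives $f(c,c)=c$, so strategy-proofness yields $c' \mathrel{P} c$.

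The subtle step is showing $\max^P(\overline{A}) = x_{\overline{k}}$ in the case $r_1(P) \in \underline{A}\setminus\{x_{\underline{k}}\}$. Placing $P$ on voter 1 and letting voter 2 vary only yields $x_{\underline{k}} \mathrel{P} z$ for $z \in \overline{A}\setminus\{x_{\overline{k}}\}$, because by Lemma \ref{lem:fullcharacterization} every outcome at $(P,\cdot)$ lies in $\underline{A}$, and this comparison is weaker than what is needed. To circumvent this obstacle I swap the roles: place $P$ on voter 2, and for each $z \in \overline{A}\setminus\{x_{\overline{k}}\}$ let voter 1 report some $P_1 \in \mathbb{D}^z$. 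By Lemma \ref{lem:fullcharacterization}, the outcome becomes $\mathop{\textrm{Proj}}\big(x_{\overline{k}}, \langle z, x|\mathcal{T}_f^A\rangle\big) = x_{\overline{k}}$ (the path from $z\in\overline{A}$ to $x\in\underline{A}$ passes through $x_{\overline{k}}$), while voter 2 deviating to any $P_2' \in \mathbb{D}^z$ gives $f(z,z)=z$; strategy-proofness of voter 2 then yields $x_{\overline{k}} \mathrel{P} z$, as required. Collecting all cases completes the verification that $\mathbb{D} \subseteq \mathbb{D}_{\textrm{SH}}(\mathcal{T}_f^A, x_{\underline{k}}, x_{\overline{k}})$.
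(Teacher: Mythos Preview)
Your proposal is correct and follows essentially the same approach as the paper: both verify the three semi-hybridness conditions case by case, using Lemma~\ref{lem:fullcharacterization} together with strategy-proofness at profiles where the preference under scrutiny is placed on voter~1 for the semi-single-peakedness clauses and on voter~2 for the $\max^{P}(\overline{A})=x_{\overline{k}}$ (resp.\ $\max^{P}(\underline{A})=x_{\underline{k}}$) clauses. Your explicit remark about why the role swap is necessary (outcomes when voter~1's peak lies in $\underline{A}$ never leave $\underline{A}$) is a helpful gloss, but the underlying argument is identical to the paper's.
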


\begin{proof}
First, given $P_i \in \mathbb{D}$ with $r_1(P_i) \in \underline{A}\backslash \{x_{\underline{k}}\}$,
we show that $P_i$ is semi-single-peaked on $\mathcal{T}_f^A$ w.r.t.~$x_{\underline{k}}$ and $\max^{P_i}(\overline{A}) = x_{\overline{k}}$.
Let $P_1 = P_i$.
Given distinct $x, y \in \langle r_1(P_1), x_{\underline{k}}|\mathcal{T}_f^A\rangle$ such that $x \in \langle r_1(P_1), y|\mathcal{T}_f^A\rangle$, we by Lemma \ref{lem:fullcharacterization} have $f(P_1, x) = \textrm{Proj}\big(x_{\underline{k}}, \langle r_1(P_1), x|\mathcal{T}_f^A\rangle\big) = x$ and $f(y, x) = \textrm{Proj}\big(x_{\underline{k}}, \langle y, x|\mathcal{T}_f^A\rangle\big) = y$.
Then, strategy-proofness implies $x\mathrel{P_1}y$, which meets condition (i) of Definition \ref{def:ssp}.
Given $x \notin \langle r_1(P_1), x_{\underline{k}}|\mathcal{T}_f^A\rangle$ and $\hat{x} = \textrm{Proj}\big(x, \langle r_1(P_1), x_{\underline{k}}|\mathcal{T}_f^A\rangle\big)$, we by Lemma \ref{lem:fullcharacterization} have
$f(P_1, x) = \textrm{Proj}\big(x_{\underline{k}}, \langle r_1(P_1), x|\mathcal{T}_f^A\rangle\big)
= \textrm{Proj}\big(x, \langle r_1(P_1), x_{\underline{k}}|\mathcal{T}_f^A\rangle\big) = \hat{x}$ and
$f(x, x) = x$.
Immediately, strategy-proofness implies $\hat{x}\mathrel{P_1}x$, which meets condition (ii) of Definition \ref{def:ssp}.
Therefore, $P_i$ is semi-single-peaked on $\mathcal{T}_f^A$ w.r.t.~$x_{\underline{k}}$, as required.
Next, to show $\max^{P_i}(\overline{A}) = x_{\overline{k}}$, we fix $P_2 = P_i$ and an arbitrary $x \in \overline{A}\backslash \{x_{\overline{k}}\}$, and show $x_{\overline{k}}\mathrel{P_2}x$.
Since
$f(x, P_2) = \textrm{Proj}\big(x_{\overline{k}}, \langle x, r_1(P_2)|\mathcal{T}_f^A\rangle\big) = x_{\overline{k}}$ by Lemma \ref{lem:fullcharacterization} and $f(x, x) = x$ by unanimity, strategy-proofness implies $x_{\overline{k}}\mathrel{P_2}x$, as required.

Symmetrically, given $P_i \in \mathbb{D}$ with $r_1(P_i) \in \overline{A}\backslash \{x_{\overline{k}}\}$,
we can show that $P_i$ is semi-single-peaked on $\mathcal{T}_f^A$ w.r.t.~$x_{\overline{k}}$ and $\max^{P_i}(\underline{A}) = x_{\underline{k}}$.

Last, given $P_i \in \mathbb{D}$ with $r_1(P_i) \in \langle x_{\underline{k}}, x_{\overline{k}}|\mathcal{T}_f^A\rangle = M$,
we show $\max^{P_i}(\underline{A}) = x_{\underline{k}}$ and $\max^{P_i}(\overline{A}) = x_{\overline{k}}$.
Fixing $P_2 = P_i$, an arbitrary $x \in \underline{A}\backslash \{x_{\underline{k}}\}$ and an arbitrary $y \in \overline{A}\backslash \{x_{\overline{k}}\}$, it suffices to show $x_{\underline{k}}\mathrel{P_2}x$ and $x_{\overline{k}}\mathrel{P_2}y$.
By Lemma \ref{lem:fullcharacterization}, we have $f(x, P_2) = \textrm{Proj}\big(x_{\underline{k}}, \langle x, r_1(P_2)|\mathcal{T}_f^A\rangle\big) = x_{\underline{k}}$ and $f(x, x) = x$. Then, strategy-proofness implies $x_{\underline{k}}\mathrel{P_2}x$, as required. Symmetrically,
$f(y, P_2) = \textrm{Proj}\big(x_{\overline{k}}, \langle y, r_1(P_2)|\mathcal{T}_f^A\rangle\big) = x_{\overline{k}}$ and $f(y, y) = y$,
which by strategy-proofness imply $x_{\overline{k}}\mathrel{P_2}y$, as required.
This proves the lemma, and completes \textbf{Step 2} of the proof.
\end{proof}

Now, we start the verification of \textbf{Step 3}.
We first make one observation that henceforth will be repeatedly applied according to the line $\mathcal{L}^B=(x_1, \dots, x_v)$ constructed in \textbf{Step 1} (note that $\mathcal{L}^B$ may not be a path in $G_{\sim}^A$) and $(x_{\underline{k}}, x_{\overline{k}})$-semi-hybridness on $\mathcal{T}_f^A$ established in Lemma \ref{lem:sh}.

\begin{observation}\label{obs:2}\rm
If $\underline{k}>1$, then $(x_1, \dots, x_{\underline{k}})$ is the unique path in $G_{\sim}^A$ (also in $G_{\sim}^{\underline{A}}$) connecting $a_1$ and $a$, and $\mathcal{N}_{\sim}^A(x_{\underline{k}})\cap \underline{A} = \{x_{\underline{k}-1}\}$;
if $\overline{k}<v$, then $(x_v, \dots, x_{\overline{k}})$ is the unique path in $G_{\sim}^A$ (also in $G_{\sim}^{\overline{A}}$) connecting $a_m$ and $b$, and $\mathcal{N}_{\sim}^A(x_{\overline{k}})\cap \overline{A} = \{x_{\overline{k}+1}\}$. Moreover, $x_{\underline{k}}, \dots, x_{\overline{k}} \in M$.
\end{observation}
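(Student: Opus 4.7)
\textbf{Plan for proving Observation \ref{obs:2}.} The observation collects three structural facts meant to feed the compatibility argument of Step 3, and I would establish them by direct appeal to Lemmas \ref{lem:sides} and \ref{lem:graphpartition} together with the explicit construction of $\mathcal{L}^B$ in Step 1. My plan is to handle the three clauses in the stated order, with only the final clause requiring any real work.

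For the path-uniqueness assertion when $\underline{k}>1$, I would first recall that the Step-1 construction makes the initial segment $\pi^L=(x_1,\ldots,x_s)$ of $\mathcal{L}^B$ a genuine path in $G_{\sim}^A$, even though $\mathcal{L}^B$ as a whole need not be. Since $\underline{k}\le s$, the prefix $(x_1,\ldots,x_{\underline{k}})$ is then a path in $G_{\sim}^A$ from $a_1=x_1$ to $x_{\underline{k}}$ whose penultimate vertex is $x_{\underline{k}-1}$. This exhibits $a_1\in\underline{B}$, so Lemma \ref{lem:sides}(i) delivers uniqueness in $G_{\sim}^A$; uniqueness in $G_{\sim}^{\underline{A}}$ is automatic since the latter is a subgraph. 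The $\overline{k}<v$ half is symmetric, via $\pi^R$ and Lemma \ref{lem:sides}(ii).

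For the neighborhood equalities $\mathcal{N}_{\sim}^A(x_{\underline{k}})\cap\underline{A}=\{x_{\underline{k}-1}\}$ and $\mathcal{N}_{\sim}^A(x_{\overline{k}})\cap\overline{A}=\{x_{\overline{k}+1}\}$, I would invoke the edge decomposition $\mathcal{E}_{\sim}^A=\mathcal{E}_{\sim}^{\underline{A}}\cup\mathcal{E}_{\sim}^M\cup\mathcal{E}_{\sim}^{\overline{A}}$ from Lemma \ref{lem:graphpartition}, which forces every such neighbor to live in $G_{\sim}^{\underline{A}}$ (respectively $G_{\sim}^{\overline{A}}$). Combined with the observation following Lemma \ref{lem:graphpartition}---that in the tree $G_{\sim}^{\underline{A}}$ every path from any $z\in\underline{B}$ to $x_{\underline{k}}$ must end in the edge $(x_{\underline{k}-1},x_{\underline{k}})$, forcing $x_{\underline{k}-1}$ to be the unique neighbor of $x_{\underline{k}}$ there---this yields both equalities.

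Finally, for $\{x_{\underline{k}},\ldots,x_{\overline{k}}\}\subseteq M$, a one-vertex path rules out $x_{\underline{k}}\in\underline{B}$ and $x_{\overline{k}}\in\overline{B}$ by definition. For any $\underline{k}\le\ell\le\overline{k}$, I would suppose toward contradiction that $x_\ell\in\underline{B}$; Lemma \ref{lem:sides}(i) then forces the unique $G_{\sim}^A$-path from $x_\ell$ to $x_{\underline{k}}$ to end in $x_{\underline{k}-1}$. But voter 1's dictatorship on $\langle x_{\underline{k}},x_{\overline{k}}|\mathcal{L}^B\rangle$ (from Lemma \ref{lem:RFBR2}) rests on the adjacency over this set being connected---either trivially when $|\Pi(a_1,a_m)|=1$, or via the no-leaf graph $G_{\sim}^{\mathcal{O}}$ when $|\Pi(a_1,a_m)|>1$---which supplies an alternative $x_\ell$-to-$x_{\underline{k}}$ path that stays inside $\{x_{\underline{k}},\ldots,x_{\overline{k}}\}$ and thereby avoids $x_{\underline{k}-1}$, a contradiction. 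Ruling out $x_\ell\in\overline{B}$ is symmetric. The only step with any bite is this third one---identifying the alternative adjacency route through the middle region---while the first two are essentially direct citations.
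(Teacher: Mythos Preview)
Your proposal is correct and aligns with what the paper intends. Note that the paper does not give an explicit proof of Observation~\ref{obs:2}: it is stated as an observation following from the Step-1 construction of $\mathcal{L}^B$ and from $(x_{\underline{k}},x_{\overline{k}})$-semi-hybridness on $\mathcal{T}_f^A$ (Lemma~\ref{lem:sh}); indeed, the clause $\langle x_{\underline{k}},x_{\overline{k}}|\mathcal{L}^B\rangle\subseteq M$ is already asserted without proof inside the proof of Lemma~\ref{lem:dictatorship5} as following ``easily'' from the definitions of $\underline{B}$ and $\overline{B}$. Your write-up supplies exactly the details the paper omits---Lemma~\ref{lem:sides} for uniqueness, Lemma~\ref{lem:graphpartition} plus the leaf remark after it for the neighborhood equalities, and the connected adjacency structure over $\{x_{\underline{k}},\dots,x_{\overline{k}}\}$ (trivial when $|\Pi(a_1,a_m)|=1$, via $\pi^L\cup G_{\sim}^{\mathcal{O}}\cup\pi^R$ otherwise) for the third clause. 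Two cosmetic points: your Part~1 invokes $\pi^L$, which is only defined when $|\Pi(a_1,a_m)|>1$; in the single-path case $\mathcal{L}^B$ is itself a path in $G_{\sim}^A$ and the claim is immediate. Also, the connectedness you need in Part~3 comes directly from the construction of $\mathcal{L}^B$, not from Lemma~\ref{lem:RFBR2}---the dictatorship there is a consequence, not a cause.
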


Recall that $\mathbb{D}$ is an $(a,b)$-semi-hybrid domain on $\mathcal{T}^A$.
Hence, the adjacency graph $G_{\sim}^A$ is a combination of the subtree $G_{\sim}^{A^{a \rightharpoonup b}} = \mathcal{T}^{A^{a \rightharpoonup b}}$, the connected subgraph $G_{\sim}^{\langle a,b|\mathcal{T}^A\rangle}$ and the subtree $G_{\sim}^{A^{b \rightharpoonup a}} = \mathcal{T}^{A^{b \rightharpoonup a}}$, i.e., $G_{\sim}^A = G_{\sim}^{A^{a \rightharpoonup b}}\cup G_{\sim}^{\langle a,b|\mathcal{T}^A\rangle} \cup G_{\sim}^{A^{b \rightharpoonup a}}$.
Moreover, since $a_1 \in A^{a \rightharpoonup b}$ and $a_m \in A^{b \rightharpoonup a}$,
by $(a,b)$-semi-hybridness on $\mathcal{T}^A$,
we know that in every path of $\Pi(a_1, a_m)$, $a$ and $b$ must be included, and $a$ is located closer to $a_1$ than $b$.
Then, according to the line $\mathcal{L}^B = (x_1, \dots, x_v)$ constructed in \textbf{Step 1},
it must be the case that $a = x_p$ and $b = x_q$ for some $1 \leq p < q \leq v$.

\begin{lemma}\label{lem:observation}
According to the line $\mathcal{L}^B=(x_1, \dots, x_v)$ and $(a, b)$-semi-hybridness on $\mathcal{T}^A$,
if $p>1$, then $(x_1, \dots, x_p)$ is the unique path in $G_{\sim}^A$ (also in $G_{\sim}^{A^{a \rightharpoonup b}}$) connecting $a_1$ and $a$, and $\mathcal{N}_{\sim}^A(x_{p})\cap A^{a \rightharpoonup b} = \{x_{p-1}\}$;
if $q< v$, then $(x_v, \dots, x_q)$ is the unique path in $G_{\sim}^A$ (also in $G_{\sim}^{A^{b \rightharpoonup a}}$) connecting $a_m$ and $b$, and
$\mathcal{N}_{\sim}^A(x_{q})\cap A^{b \rightharpoonup a} = \{x_{q+1}\}$.
Moreover, $x_p, \dots, x_q \in \langle a, b|\mathcal{T}^A\rangle$.
\end{lemma}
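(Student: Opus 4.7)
The plan is to exploit the structural decomposition of the adjacency graph recorded in the preamble to Step 3, namely $G_{\sim}^A = G_{\sim}^{A^{a\rightharpoonup b}} \cup G_{\sim}^{\langle a,b|\mathcal{T}^A\rangle} \cup G_{\sim}^{A^{b\rightharpoonup a}}$, together with the identifications $G_{\sim}^{A^{a\rightharpoonup b}} = \mathcal{T}^{A^{a\rightharpoonup b}}$ and $G_{\sim}^{A^{b\rightharpoonup a}} = \mathcal{T}^{A^{b\rightharpoonup a}}$ (so the two outer pieces are trees), and the diversity-based leaf consequences: since $\mathbb{D}$ satisfies diversity so does $\mathbb{D}_{\textrm{SH}}(\mathcal{T}^A,a,b)$, which under $A^{a\rightharpoonup b}\neq\{a\}$ forces $a\in\textrm{Leaf}(\mathcal{T}^{A^{a\rightharpoonup b}})$ and analogously for $b$. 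The standing hypotheses $a_1\in A^{a\rightharpoonup b}$ and $a_m\in A^{b\rightharpoonup a}$ (secured via Lemma~\ref{lem:wlog}) make each of these ingredients available.

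Assuming $p>1$, I first argue that any $G_{\sim}^A$-path from $a_1$ to $a$ is trapped inside $A^{a\rightharpoonup b}$---because the sole edges of $G_{\sim}^A$ connecting $A^{a\rightharpoonup b}$ to its complement are incident to $a$, and $a$ is already the endpoint---so every such path is the unique $\mathcal{T}^{A^{a\rightharpoonup b}}$-path between $a_1$ and $a$. Next I verify that this unique path coincides with $(x_1,\dots,x_p)$. In the case $|\Pi(a_1,a_m)|=1$ this is immediate since $\mathcal{L}^B$ itself is a $G_{\sim}^A$-path. In the case $|\Pi(a_1,a_m)|>1$, every $\pi\in\Pi(a_1,a_m)$ must traverse $a$, and the initial segment of $\pi$ from $a_1$ to $a$ is the same unique $\mathcal{T}^{A^{a\rightharpoonup b}}$-path, so $a$ occupies a common index across all such $\pi$, forcing $a\in\pi^L$ and hence $p\leq s$; the subsegment $(x_1,\dots,x_p)$ of the genuine $G_{\sim}^A$-path $\pi^L$ is thus the desired path. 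The neighbor identity $\mathcal{N}_{\sim}^A(x_p)\cap A^{a\rightharpoonup b}=\{x_{p-1}\}$ then drops out: $p>1$ entails $A^{a\rightharpoonup b}\neq\{a\}$ and diversity yields $a\in\textrm{Leaf}(\mathcal{T}^{A^{a\rightharpoonup b}})$, whose unique tree-neighbor must be $x_{p-1}$; since the $G_{\sim}^A$-neighbors of $a$ lying in $A^{a\rightharpoonup b}$ coincide with the tree-neighbors of $a$ in $\mathcal{T}^{A^{a\rightharpoonup b}}$, we are done. The companion claim for $b$, $a_m$, $q<v$, and $(x_v,\dots,x_q)$ is obtained verbatim on the opposite side.

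For the remaining inclusion $\{x_p,\dots,x_q\}\subseteq\langle a,b|\mathcal{T}^A\rangle$, note that $x_p=a$ and $x_q=b$ qualify trivially, while any $x_k$ with $p<k<q$ belongs to some path in $\Pi(a_1,a_m)$; after that path has crossed $a$ and before it reaches $b$ it cannot re-enter $A^{a\rightharpoonup b}$ nor arrive in $A^{b\rightharpoonup a}$ (again by the ``sole portal'' argument), so $x_k$ must sit inside $\langle a,b|\mathcal{T}^A\rangle$; this covers simultaneously the portion of $\pi^L$ strictly beyond $a$, the artificial line over $\mathcal{O}$, and the portion of $\pi^R$ strictly before $b$. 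The main obstacle I anticipate is the careful bookkeeping in the case $|\Pi(a_1,a_m)|>1$: since $\mathcal{L}^B$ is stitched from $\pi^L$, an artificial line over $\mathcal{O}$, and $\pi^R$ and is in general not a $G_{\sim}^A$-path, I must pin $a$ to $\pi^L$ and $b$ to $\pi^R$ so that the ``path in $G_{\sim}^A$'' claims really hold and do not accidentally cross the artificial middle segment.
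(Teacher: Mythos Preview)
Your argument is correct and in fact cleaner than the paper's. Both proofs rely on the decomposition $G_{\sim}^A = G_{\sim}^{A^{a\rightharpoonup b}} \cup G_{\sim}^{\langle a,b|\mathcal{T}^A\rangle} \cup G_{\sim}^{A^{b\rightharpoonup a}}$ and on Clarification~\ref{cla:sh} to get $a\in\textrm{Leaf}(\mathcal{T}^{A^{a\rightharpoonup b}})$, but they diverge on the key step---showing $p\le s$ when $|\Pi(a_1,a_m)|>1$. The paper introduces the $(a,b)$-hybrid rule $g$ on $\mathcal{T}^A$, observes that $g$ behaves like a dictatorship on $\langle a,b|\mathcal{T}^A\rangle$ and hence on $\mathcal{O}$, and then pushes this through Lemma~\ref{lem:transitivity} to pin $g(x_1,x_v)=a$ inside $\{x_1,\dots,x_s\}$. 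You bypass strategy-proofness entirely: since every $\pi\in\Pi(a_1,a_m)$ must exit $A^{a\rightharpoonup b}$ through $a$, and the tree $\mathcal{T}^{A^{a\rightharpoonup b}}$ forces the initial segment from $a_1$ to $a$ to be identical across all $\pi$, the common-prefix index $s$ is at least the position of $a$, so $p\le s$ automatically. This is a genuinely more elementary argument and shows the lemma is a purely combinatorial consequence of the graph decomposition plus diversity, independent of the rule $f$.

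One point you should make explicit when writing this up: in the final inclusion, after fixing $x_k$ with $p<k<q$ on some $\pi\in\Pi(a_1,a_m)$, you need to say why $x_k$ lies strictly between $a$ and $b$ \emph{on that particular $\pi$}. This follows because the initial segment of $\pi$ up to $a$ is exactly $\{x_1,\dots,x_p\}$ and (by the symmetric argument) the final segment from $b$ onward is exactly $\{x_q,\dots,x_v\}$; since $p<k<q$ forces $x_k$ out of both sets, it sits in the middle portion of $\pi$. You clearly have this in mind (your closing remark about pinning $a$ to $\pi^L$ and $b$ to $\pi^R$ is precisely the relevant observation), but the sentence ``after that path has crossed $a$ and before it reaches $b$'' presupposes rather than establishes the positioning.
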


\begin{proof}
First, let $p>1$.
Note that if $(x_1, \dots, x_p)$ is a path in $G_{\sim}^A$ connecting $a_1$ and $a$,
then by $(a,b)$-semi-hybridness on $\mathcal{T}^A$,
it is true that $(x_1, \dots, x_p)$ is the unique path in $G_{\sim}^A$ (also in $G_{\sim}^{A^{a \rightharpoonup b}}$) connecting $a_1$ and $a$, and $a \in \textrm{Leaf}(\mathcal{T}^{A^{a \rightharpoonup b}})$ which by statement (ii) of Clarification \ref{cla:sh} implies
$\mathcal{N}_{\sim}^A(x_{p})\cap A^{a \rightharpoonup b} = \{x_{p-1}\}$.
Henceforth, we focus on showing that $(x_1, \dots, x_p)$ is a path in $G_{\sim}^A$ connecting $a_1$ and $a$.
Recall the set $\Pi(a_1,a_m)$ specified in \textbf{Step 1} that contains all paths connecting $a_1$ and $a_m$ in $G_{\sim}^A$.
If $|\Pi(a_1,a_m)| = 1$, then the line $\mathcal{L}^B$ is a path in $G_{\sim}^A$ connecting $a_1$ and $a_m$,
which immediately implies that $(x_1, \dots, x_p)$ is a path in $G_{\sim}^A$ connecting $a_1$ and $a$, as required.
Next, let $|\Pi(a_1,a_m)| > 1$.
By $(a,b)$-semi-hybridness on $\mathcal{T}^A$ and Clarification \ref{cla:hybridrule}, we first know that
the following $(a,b)$-hybrid rule on $\mathcal{T}^A$ is tops-only and strategy-proof on $\mathbb{D}$: for all $P_1, P_2 \in \mathbb{D}$,
\begin{align*}
g(P_1, P_2) = \left\{
\begin{array}{ll}
r_1(P_1) & \textrm{if}\; r_1(P_1) \in \langle a, b|\mathcal{T}^A\rangle,\\
\mathop{\textrm{Proj}}\big(a, \langle r_1(P_1), r_1(P_2)|\mathcal{T}^A\rangle\big) & \textrm{if}\; r_1(P_1) \in A^{a \rightharpoonup b}\backslash \{a\},\; \textrm{and}\\
\mathop{\textrm{Proj}}\big(b, \langle r_1(P_1), r_1(P_2)|\mathcal{T}^A\rangle\big) & \textrm{if}\; r_1(P_1) \in A^{b \rightharpoonup a}\backslash \{b\}.
\end{array}
\right.
\end{align*}
Clearly, $g$ behaves like a dictatorship on $\langle a, b|\mathcal{T}^A\rangle$.
Meanwhile, recall the adjacency graph $G_{\sim}^B$ in Figure \ref{fig:multiplepaths} and the fact that
every two-voter, tops-only and strategy-proof rule behaves like a dictatorship on the set $\mathcal{O}$.
Therefore, it must be the case that $\mathcal{O} \subseteq \langle a, b|\mathcal{T}^A\rangle$.
Thus, in Figure \ref{fig:multiplepaths}, we have $g(x_s,x_t) = x_s$.
Note that $(x_1, \dots, x_s)$ and $(x_t, \dots, x_v)$ are two paths in $G_{\sim}^A$ according to $G_{\sim}^{\mathcal{O}}$.
Then, according to the paths $(x_t, \dots, x_v)$,
by statement (ii) of Lemma \ref{lem:transitivity}, $g(x_s,x_t) = x_s$ implies $g(x_s, x_v) = x_s$.
Furthermore, according to $(x_1, \dots, x_s)$,
by statement (iii) of Lemma \ref{lem:transitivity},
$g(x_s, x_v) = x_s$ implies $x_p = g(x_1, x_v) \in \{x_1, \dots, x_s\}$.
Therefore, we have $1 \leq p \leq s$ which implies that $(x_1, \dots, x_p)$ is a path in $G_{\sim}^A$ connecting $a_1$ and $a$, as required.
Symmetrically, if $q< v$, then $(x_v, \dots, x_q)$ is the unique path in $G_{\sim}^A$ (also in $G_{\sim}^{A^{b \rightharpoonup a}}$) connecting $a_m$ and $b$, and
$\mathcal{N}_{\sim}^A(x_{q})\cap A^{b \rightharpoonup a} = \{x_{q+1}\}$.

Last, we show $x_p, \dots, x_q \in \langle a, b|\mathcal{T}^A\rangle$.
Suppose not, i.e., we have $x_k \notin \langle a, b|\mathcal{T}^A\rangle$ for some $k \in \{p, \dots, q\}$.
Thus, either $x_k \in A^{a \rightharpoonup b}\backslash \{a\}$ or $x_k \in A^{b \rightharpoonup a}\backslash \{b\}$ holds.
We assume w.l.o.g.~that $x_k \in A^{a \rightharpoonup b}\backslash \{a\}$. Thus, we have $1 \leq k < p$, which implies $\mathcal{N}_{\sim}^A(x_{p})\cap A^{a \rightharpoonup b} = \{x_{p-1}\}$.
Note that by $(a,b)$-semi-hybridness on $\mathcal{T}^A$, according to the tree $G_{\sim}^{A^{a \rightharpoonup b}} = \mathcal{T}^{A^{a \rightharpoonup b}}$,
since $x_k \in A^{a \rightharpoonup b}\backslash \{a\}$,
$\langle x_k, a|G_{\sim}^{A^{a \rightharpoonup b}}\rangle$ is the unique path in $G_{\sim}^A$ connecting $x_k$ and $a$.
Furthermore, since $\mathcal{N}_{\sim}^A(x_{p})\cap A^{a \rightharpoonup b} = \{x_{p-1}\}$,
$x_{p-1}$ must be included in $\langle x_k, a|G_{\sim}^{A^{a \rightharpoonup b}}\rangle$.
However, according to the adjacency graph $G_{\sim}^{B}$ in Figure \ref{fig:multiplepaths},
we have a path in $G_{\sim}^A$ that connects $x_k$ and $a$, and excludes $x_{p-1}$ - a contradiction.
This proves the lemma.
\end{proof}

\begin{lemma}\label{lem:superset}
We have $\langle x_{\underline{k}}, x_{\overline{k}}|\mathcal{T}_f^A\rangle \supseteq \langle a, b|\mathcal{T}^A\rangle$.
\end{lemma}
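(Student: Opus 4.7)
The plan is a proof by contradiction leveraging condition (ii) of Definition \ref{def:asspsh} for $\mathbb{D}$ being an $(a,b)$-semi-hybrid domain on $\mathcal{T}^A$. Since by Lemma \ref{lem:sh} we have $\mathbb{D} \subseteq \mathbb{D}_{\textrm{SH}}(\mathcal{T}_f^A, x_{\underline{k}}, x_{\overline{k}})$, instantiating condition (ii) with $(\mathcal{T}_f^A, x_{\underline{k}}, x_{\overline{k}})$ immediately yields that $M = \langle x_{\underline{k}}, x_{\overline{k}}|\mathcal{T}_f^A\rangle$ cannot be a strict subset of $\langle a, b|\mathcal{T}^A\rangle$. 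I will suppose for contradiction that $\langle a, b|\mathcal{T}^A\rangle \not\subseteq M$; combined with the above, this forces $M$ and $\langle a, b|\mathcal{T}^A\rangle$ to be set-theoretically incomparable, i.e., each has elements absent from the other.

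To derive a contradiction I will exhibit a third pair $(\widehat{\mathcal{T}}^A, \hat{a}, \hat{b})$ satisfying $\mathbb{D} \subseteq \mathbb{D}_{\textrm{SH}}(\widehat{\mathcal{T}}^A, \hat{a}, \hat{b})$ and $\langle \hat{a}, \hat{b}|\widehat{\mathcal{T}}^A\rangle \subsetneq \langle a, b|\mathcal{T}^A\rangle$, which violates condition (ii). Using Observation \ref{obs:2} and Lemma \ref{lem:observation} to index the alternatives of $\mathcal{L}^B = (x_1, \ldots, x_v)$, we have $\{x_1, \ldots, x_{\underline{k}}\} \subseteq \underline{A}$, $\{x_{\underline{k}}, \ldots, x_{\overline{k}}\} \subseteq M$, $\{x_{\overline{k}}, \ldots, x_v\} \subseteq \overline{A}$, together with $a = x_p$, $b = x_q$ and $\{x_p, \ldots, x_q\} \subseteq \langle a, b|\mathcal{T}^A\rangle$. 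Incomparability forces at least one of the endpoints $a, b$ to lie outside $M$; invoking the symmetry that swaps $(\underline{A}, x_{\underline{k}}, a) \leftrightarrow (\overline{A}, x_{\overline{k}}, b)$, I reduce WLOG to the case $a \notin M$, i.e., $p < \underline{k}$ and $a \in \underline{A}\setminus\{x_{\underline{k}}\}$. The natural construction is then $\widehat{\mathcal{T}}^A = \mathcal{T}^A$ with dual-thresholds $\hat{a} = x_{\underline{k}}$ when $x_{\underline{k}} \in \langle a, b|\mathcal{T}^A\rangle\setminus\{a,b\}$ (taking $\hat{a} = \mathop{\textrm{Proj}}(x_{\underline{k}}, \langle a, b|\mathcal{T}^A\rangle)$ in the remaining edge subcases and reorganising the tree accordingly) and $\hat{b} = b$; then $\langle \hat{a}, \hat{b}|\widehat{\mathcal{T}}^A\rangle$ is a sub-path of $\langle a, b|\mathcal{T}^A\rangle$ strictly smaller because it excludes $a$.

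The main obstacle is the case-by-case verification that every $P_i \in \mathbb{D}$ lies in $\mathbb{D}_{\textrm{SH}}(\widehat{\mathcal{T}}^A, \hat{a}, \hat{b})$, which requires combining $(a,b)$-semi-hybridness on $\mathcal{T}^A$ (hypothesis) with $(x_{\underline{k}}, x_{\overline{k}})$-semi-hybridness on $\mathcal{T}_f^A$ (Lemma \ref{lem:sh}). The key mechanism: for a peak $r_1(P_i) \in A^{a \rightharpoonup b}$, the required semi-single-peakedness on $\widehat{\mathcal{T}}^A$ w.r.t.~$\hat{a}$ splits into the portion from $r_1(P_i)$ to $a$, supplied by semi-single-peakedness on $\mathcal{T}^A$ w.r.t.~$a$, and the portion along the free-zone path from $a$ through $x_{p+1}, \ldots, x_{\underline{k}}$, supplied by semi-single-peakedness on $\mathcal{T}_f^A$ w.r.t.~$x_{\underline{k}}$ (invoking that $A^{a \rightharpoonup b}\setminus\{a\} \subseteq \underline{A}\setminus\{x_{\underline{k}}\}$ so the $\mathcal{T}_f^A$-path from $r_1(P_i)$ to $x_{\underline{k}}$ traces precisely this portion); the boundary condition $\max^{P_i}(\widehat{A}^{\hat{b} \rightharpoonup \hat{a}}) = \hat{b}$ reduces to $\max^{P_i}(A^{b \rightharpoonup a}) = b$. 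Peaks in $\langle \hat{a}, \hat{b}|\widehat{\mathcal{T}}^A\rangle$ and in $A^{b \rightharpoonup a}$ are handled by symmetric combinations of the two representations, while the edge subcase $x_{\underline{k}} \notin \langle a, b|\mathcal{T}^A\rangle$ requires taking $\hat{a}$ as the projection and re-hanging alternatives in $A \setminus B$ accordingly. Once the verification is complete, the strict containment $\langle \hat{a}, \hat{b}|\widehat{\mathcal{T}}^A\rangle \subsetneq \langle a, b|\mathcal{T}^A\rangle$ contradicts condition (ii) of Definition \ref{def:asspsh}, completing the proof.
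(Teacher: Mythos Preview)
Your overall strategy---exploiting condition~(ii) of Definition~\ref{def:asspsh} together with the representation $\mathbb{D}\subseteq\mathbb{D}_{\textrm{SH}}(\mathcal{T}_f^A,x_{\underline{k}},x_{\overline{k}})$ from Lemma~\ref{lem:sh} to exhibit a strictly smaller free zone---is correct in spirit, and is exactly what the paper does in the subcase $p<\underline{k}<q$. However, there are two substantive gaps.

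First, and most seriously, your construction collapses in the regime $\underline{k}\geq q$. You reduce WLOG to $p<\underline{k}$, but within this you must still distinguish $p<\underline{k}<q$ from $\underline{k}\geq q$. When $\underline{k}\geq q$, Lemma~\ref{lem:observation} gives $x_{\underline{k}}\in A^{b\rightharpoonup a}$, so $\mathop{\textrm{Proj}}(x_{\underline{k}},\langle a,b|\mathcal{T}^A\rangle)=b=\hat{b}$ and you have no pair of dual-thresholds; the ``edge subcase'' is not an edge case at all but an entire regime where your argument yields nothing. The paper handles $\underline{k}\geq q$ by a completely different route (Claims~4--6): it shows that $G_{\sim}^A$ is then a tree, that $b\in\textrm{Leaf}(G_{\sim}^{\langle a,b|\mathcal{T}^A\rangle})$, and that $\mathbb{D}\subseteq\mathbb{D}_{\textrm{SSP}}(G_{\sim}^A,b)$, contradicting condition~(iii) of Definition~\ref{def:asspsh}. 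Condition~(ii) alone does not close this case.

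Second, even in the main subcase $p<\underline{k}<q$, taking $\widehat{\mathcal{T}}^A=\mathcal{T}^A$ does not work as written. For $r_1(P_i)\in A^{a\rightharpoonup b}\setminus\{a\}$ you need semi-single-peakedness on $\mathcal{T}^A$ w.r.t.~$x_{\underline{k}}$, hence decline along the $\mathcal{T}^A$-interval $\langle a,x_{\underline{k}}|\mathcal{T}^A\rangle$. But Lemma~\ref{lem:sh} gives semi-single-peakedness on $\mathcal{T}_f^A$, i.e., decline along the $G_{\sim}^{\underline{A}}$-path $(x_p,\dots,x_{\underline{k}})$, and the free-zone interval in $\mathcal{T}^A$ need not coincide with this sequence (the adjacency structure on the free zone can differ from $\mathcal{T}^A$; cf.\ Example~\ref{exm:non-trivialness}). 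The paper sidesteps this mismatch by building $\widehat{\mathcal{T}}^A$ as a glueing of $\mathcal{T}_f^{\underline{A}}=G_{\sim}^{\underline{A}}$ on the left, an arbitrary line over the residual middle, and $\mathcal{T}^{A^{b\rightharpoonup a}}$ on the right, so that each required piece of semi-single-peakedness is inherited directly from the representation in which it is known to hold.
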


\begin{proof}
\textsc{Claim 1}:
We have $\big[\underline{k} \leq p\big] \Rightarrow \big[\underline{A} \subseteq A^{a \rightharpoonup b}\big]$ and
$\big[\,\overline{k} \geq q\big] \Rightarrow \big[\,\overline{A} \subseteq A^{b \rightharpoonup a}\big]$.\medskip

Given $\underline{k} \leq p$, we show $\underline{A} \subseteq A^{a \rightharpoonup b}$.
There are two cases: $p = 1$ and $p>1$.

First, let $p=1$. Then, $1 \leq \underline{k} \leq p$ implies $\underline{k} =1$.
Thus, $a=x_{\underline{k}} = a_1$.
We first show $A^{a \rightharpoonup b} = \{a\}$. Suppose not, i.e., we have some alternative $c \in A^{a \backslash b}\backslash \{a\}$.
Then, by $(a,b)$-semi-hybridness on $\mathcal{T}^A$, $r_1(\overline{P}_i) = a_m \notin A^{a \rightharpoonup b}$ implies $a_1\mathrel{\overline{P}_i}c$, which contradicts the fact that $a_1$ is bottom-ranked in $\overline{P}_i$.
Hence, $A^{a \rightharpoonup b} = \{a\}$.
Similarly, by $(x_{\underline{k}},x_{\overline{k}})$-semi-hybridness on $\mathcal{T}_f^A$,
since $r_1(\overline{P}_i) = a_m \notin \underline{A}$ and $a_1$ is bottom-ranked in $\overline{P}_i$,
it is true that $\underline{A} = \{x_1\}$.
Therefore, we have $\underline{A} = A^{a \rightharpoonup b}$.

Next, let $p>1$.
Suppose by contradiction that we have an alternative $z \in \underline{A}\backslash A^{a \rightharpoonup b}$.
Since $\underline{k} \leq p$, Lemma \ref{lem:observation} implies $x_{\underline{k}}\in A^{a \rightharpoonup b}$.
Thus, $z \notin A^{a \rightharpoonup b}$ implies $z \neq x_{\underline{k}}$ and hence $z \in \underline{A}\backslash \{x_{\underline{k}}\}$.
Note that $\pi = \langle z, x_{\underline{k}}|G_{\sim}^{\underline{A}}\rangle$ is the unique path in $G_{\sim}^A$ connecting $z$ and $x_{\underline{k}}$.
Moreover, by $(x_{\underline{k}},x_{\overline{k}})$-semi-hybridness on $\mathcal{T}_f^A$, $r_1(\overline{P}_i) = a_m \notin \underline{A}$ implies $x_{\underline{k}}\mathrel{\overline{P}_i}z$.
Therefore, $x_{\underline{k}}$ is distinct to the bottom-ranked alternative in $\overline{P}_i$ which is $a_1 = x_1$, and hence $\underline{k}>1$.
Then, Observation \ref{obs:2} implies that $x_{\underline{k}-1}$ is the unique neighbor of $x_{\underline{k}}$ in $G_{\sim}^{\underline{A}}$, and therefore we have $x_{\underline{k}-1} \in \pi$.
Meanwhile, since $z \notin A^{a \rightharpoonup b}$,
we have a path $(z_1, \dots, z_s)$ in $G_{\sim}^A$ connecting $z$ and $a = x_p$ such that $z_k \notin A^{a \rightharpoonup b}$ for all $k \in \{1, \dots, s-1\}$.
Since $x_{\underline{k}}, \dots, x_p \in A^{a \rightharpoonup b}$ by Lemma \ref{lem:observation}, we have a concatenated path $\pi' = (z_1, \dots, z_s = x_p, \dots, x_{\underline{k}})$ in $G_{\sim}^A$ connecting $z$ and $x_{\underline{k}}$.
Clearly, $\pi'$ does not include $x_{\underline{k}-1}$, and hence is distinct to $\pi$ - a contradiction.
Therefore, we have $\underline{A} \subseteq A^{a \rightharpoonup b}$.

Overall, we have $\big[\underline{k} \leq p\big] \Rightarrow \big[\underline{A} \subseteq A^{a \rightharpoonup b}\big]$.
By a symmetric argument, we can show $\big[\,\overline{k} \geq q\big] \Rightarrow \big[\,\overline{A} \subseteq A^{b \rightharpoonup a}\big]$.
This completes the verification of the claim.

\medskip
Note that if $\underline{k} \leq p$ and $\overline{k} \geq q$ hold,
Claim 1 implies $\langle x_{\underline{k}}, x_{\overline{k}}|\mathcal{T}_f^A\rangle \supseteq \langle a, b|\mathcal{T}^A\rangle$.
Henceforth, to complete the proof, we show $\underline{k} \leq p$ and $\overline{k} \geq q$.
By symmetry, we focus on showing $\underline{k} \leq p$.
Suppose not, i.e., $\underline{k} > p$.
We have two cases: $p< \underline{k} < q$ and $\underline{k} \geq q$.
In each case, we induce a contradiction.

Let $p< \underline{k} < q$.
Symmetric to Claim 1, $p< \underline{k}$ implies $A^{a \rightharpoonup b} \subseteq \underline{A}$.
\medskip

\noindent
\textsc{Claim 2}: We have $\underline{A}\cap A^{b \rightharpoonup a} =\emptyset$.\medskip

Suppose not, i.e., we have some $z \in \underline{A}\cap A^{b \rightharpoonup a}$.
Since $\underline{k}< q$, we know $b=x_q \notin \underline{A}$ by Observation \ref{obs:2}.
Therefore, $z \neq  b$ and hence $z \in A^{b \rightharpoonup a}\backslash \{b\}$.
On the one hand, according to $(a,b)$-semi-hybridness on $\mathcal{T}^A$ and $z \in A^{b \rightharpoonup a}\backslash \{b\}$, we know that every path in $G_{\sim}^A$ connecting $z$ and $a$ must include $b$.
On the other hand, according to $(x_{\underline{k}}, x_{\overline{k}})$-semi-hybridness on $\mathcal{T}_f^A$,
since $z \in \underline{A}$ and $a \in \underline{A}$,
we know that $\langle z, a|G_{\sim}^{\underline{A}}\rangle$ is the unique path in $G_{\sim}^A$ that connects $z$ and $a$.
However, since $b \notin \underline{A}$, the path $\langle z, a|G_{\sim}^{\underline{A}}\rangle$ does not include $b$ - a contradiction.
This completes the verification of the claim.
\medskip

Now, let $\widehat{M} = \big[A\backslash [\underline{A}\cup A^{b \rightharpoonup a}]\big]\cup \{x_{\underline{k}}, b\}$.
We construct a line $(z_1, \dots, z_t)$ over all alternatives of $\widehat{M}$ such that $t = |\widehat{M}|$, $z_1 = x_{\underline{k}}$,
$z_t = b$, and all alternatives of $\widehat{M}\backslash \{x_{\underline{k}},b\}$ are arbitrarily arranged in the interior of the line.
By combining the subtree $\mathcal{T}_f^{\underline{A}}$, the line $(z_1, \dots, z_t)$ and the subtree $\mathcal{T}^{A^{b \rightharpoonup a}}$,
we generate a tree $\widehat{\mathcal{T}}^A$. Clearly, $x_{\underline{k}}$ and $b$ are dual-thresholds in $\widehat{\mathcal{T}}^A$.
Thus, we have $\widehat{A}^{x_{\underline{k}} \rightharpoonup b} = \big\{x \in A: x_{\underline{k}} \in \langle x, b|\widehat{\mathcal{T}}^A\rangle\big\} = \underline{A}$ and
$\widehat{A}^{ b\rightharpoonup x_{\underline{k}}} = \big\{x \in A: b\in \langle x, x_{\underline{k}}|\widehat{\mathcal{T}}^A\rangle\big\} = A^{b \rightharpoonup a}$.\medskip

\noindent
\textsc{Claim 3}: We have $\mathbb{D} \subseteq \mathbb{D}_{\textrm{SH}}\big(\widehat{\mathcal{T}}^A, x_{\underline{k}}, b\big)$.\medskip

First, given $P_i \in \mathbb{D}$ with $r_1(P_i) \in \widehat{A}^{x_{\underline{k}} \rightharpoonup b}\backslash \{x_{\underline{k}}\}$,
we show that $P_i$ is semi-single-peaked on $\widehat{\mathcal{T}}^A$ w.r.t.~$x_{\underline{k}}$, and $\max^{P_i}(\widehat{A}^{ b\rightharpoonup x_{\underline{k}}} ) = b$.
More specifically, by the construction of $\widehat{\mathcal{T}}^A$,
the semi-single-peakedness requirement on $\widehat{\mathcal{T}}^A$ w.r.t.~$x_{\underline{k}}$ consists of the following two parts: (i) semi-single-peakedness on $\widehat{\mathcal{T}}^{\widehat{A}^{x_{\underline{k}} \rightharpoonup b}} = \mathcal{T}_f^{\underline{A}}$ w.r.t.~$x_{\underline{k}}$, and (ii) $x_{\underline{k}}\mathrel{P_i}x$ for all $x \in A\backslash \widehat{A}^{x_{\underline{k}} \rightharpoonup b} = A\backslash \underline{A}$.
Indeed, since $r_1(P_i) \in \widehat{A}^{x_{\underline{k}} \rightharpoonup b}\backslash \{x_{\underline{k}}\} = \underline{A}\backslash \{x_{\underline{k}}\}$,
both parts follow from $(x_{\underline{k}}, x_{\overline{k}})$-semi-hybridness on $\mathcal{T}_f^A$.
Next, since $P_i$ is $(a,b)$-semi-hybrid on $\mathcal{T}^A$,
$r_1(P_i) \in \underline{A} = \widehat{A}^{x_{\underline{k}}\rightharpoonup b} \subseteq A\backslash \widehat{A}^{b\rightharpoonup x_{\underline{k}}} = A\backslash A^{b\rightharpoonup a}$ implies $\max^{P_i}(\widehat{A}^{ b\rightharpoonup x_{\underline{k}}} ) = \max^{P_i}(A^{b \rightharpoonup a}) =  b$, as required.

Second, given $P_i \in \mathbb{D}$ with $r_1(P_i) \in \widehat{A}^{b \rightharpoonup x_{\underline{k}}}\backslash \{b\}$,
we show that $P_i$ is semi-single-peaked on $\widehat{\mathcal{T}}^A$ w.r.t.~$b$, and
$\max^{P_i}(\widehat{A}^{ x_{\underline{k}}\rightharpoonup b} ) = x_{\underline{k}}$.
More specifically, by the construction of $\widehat{\mathcal{T}}^A$,
the semi-single-peakedness requirement on $\widehat{\mathcal{T}}^A$ w.r.t.~$b$ consists of the following two parts:
(i) semi-single-peakedness on $\widehat{\mathcal{T}}^{\widehat{A}^{ b\rightharpoonup x_{\underline{k}}}}=\mathcal{T}^{A^{b \rightharpoonup a}}$ w.r.t.~$b$, and (ii) $b\mathrel{P_i}x$ for all $x \in A\backslash \widehat{A}^{ b\rightharpoonup x_{\underline{k}}} = A\backslash A^{b\rightharpoonup a}$.
Indeed, since $r_1(P_i) \in \widehat{A}^{b \rightharpoonup x_{\underline{k}}}\backslash \{b\} = A^{b \rightharpoonup a}\backslash \{b\}$,
both parts follow from $(a, b)$-semi-hybridness on $\mathcal{T}^A$.
Next, since $r_1(P_i) \in \widehat{A}^{b \rightharpoonup x_{\underline{k}}} = A^{b \rightharpoonup a}$, Claim 2 implies $r_1(P_i) \in A\backslash \underline{A}$.
Then, by $(x_{\underline{k}},x_{\overline{k}})$-semi-hybridness on $\mathcal{T}_f^A$,
we have $\max^{P_i}(\widehat{A}^{ x_{\underline{k}}\rightharpoonup b} ) = \max^{P_i}(\underline{A}) =  x_{\underline{k}}$, as required.

Last, given $P_i \in \mathbb{D}$ with $r_1(P_i) \in \langle x_{\underline{k}},b|\widehat{\mathcal{T}}^A\rangle$,
we show $\max^{P_i}(\widehat{A}^{ x_{\underline{k}}\rightharpoonup b} ) = x_{\underline{k}}$ and
$\max^{P_i}(\widehat{A}^{ b\rightharpoonup x_{\underline{k}}} ) = b$.
By the construction of $\widehat{\mathcal{T}}^A$ and $r_1(P_i) \in \langle x_{\underline{k}},b|\widehat{\mathcal{T}}^A\rangle$,
we know $r_1(P_i) \in [A\backslash \underline{A}]\cup \{x_{\underline{k}}\}$ and $r_1(P_i) \in [A\backslash A^{b \rightharpoonup a}]\cup \{b\}$,
which respectively imply $\max^{P_i}(\widehat{A}^{ x_{\underline{k}}\rightharpoonup b}) = \max^{P_i}(\underline{A}) = x_{\underline{k}}$
by $(x_{\underline{k}},x_{\overline{k}})$-semi-hybridness on $\mathcal{T}_f^A$, and
$\max^{P_i}(\widehat{A}^{ b\rightharpoonup x_{\underline{k}}}) = \max^{P_i}(A^{b \rightharpoonup a}) = b$
by $(a,b)$-semi-hybridness on $\mathcal{T}^A$, as required.
This completes the verification of the claim.\medskip

Thus, we have $\mathbb{D} \subseteq \mathbb{D}_{\textrm{SH}}\big(\widehat{\mathcal{T}}^A, x_{\underline{k}}, b\big)$ and $\langle x_{\underline{k}},b|\widehat{\mathcal{T}}^A\rangle = \widehat{M} \subset \langle a, b|\mathcal{T}^A\rangle$,
which contradict condition (ii) of Definition \ref{def:asspsh}.\medskip

Last, let $\underline{k} \geq q$. Thus, $\underline{k}>1$ and hence Observation \ref{obs:2} implies that $(x_1, \dots, x_{\underline{k}})$ is the unique path in $G_{\sim}^A$ (also in $G_{\sim}^{\underline{A}}$) connecting $a_1$ and $x_{\underline{k}}$, and $\mathcal{N}_{\sim}^A(x_{\underline{k}})\cap \underline{A} = \{x_{\underline{k}-1}\}$.\medskip

\noindent
\textsc{Claim 4}: We have $\mathcal{N}_{\sim}^A(b)\cap \langle a, b|\mathcal{T}^A\rangle = \{x_{q-1}\}$.\medskip

Since $(x_1, \dots, x_{\underline{k}})$ is a path in $G_{\sim}^{A}$,
we have $x_{q-1} \sim x_q$ and hence $x_{q-1} \in \mathcal{N}_{\sim}^A(x_q) = \mathcal{N}_{\sim}^A(b)$.
By Lemma \ref{lem:observation}, we know $x_{q-1} \in \langle a, b|\mathcal{T}^A\rangle$.
Thus, $x_{q-1} \in \mathcal{N}_{\sim}^A(b)\cap \langle a, b|\mathcal{T}^A\rangle$.
Suppose that there exists $z \in \mathcal{N}_{\sim}^A(b)\cap \langle a, b|\mathcal{T}^A\rangle$ such that $z \neq x_{q-1}$.
On the one hand,
by $(a, b)$-semi-hybridness on $\mathcal{T}^A$,
$r_1(\overline{P}_i) = a_m \in A^{b \rightharpoonup a}$ and $z \in \langle a, b|\mathcal{T}^A\rangle$ imply $b\mathrel{\overline{P}_i}z$.
On the other hand, we show $b\mathrel{\underline{P}_i}z$, which contradicts the fact that $\underline{P}_i$ and $\overline{P}_i$ are complete reversals.
There are two cases: $q = \underline{k}$ and $q < \underline{k}$.
We first assume $q = \underline{k}$.
Since $\mathcal{N}_{\sim}^A(x_{\underline{k}})\cap \underline{A} = \{x_{\underline{k}-1}\}$, $z \in \mathcal{N}_{\sim}^A(b) = \mathcal{N}_{\sim}^A(x_q) = \mathcal{N}_{\sim}^A(x_{\underline{k}})$ and $z \neq x_{q-1} = x_{\underline{k}-1}$, it is true that $z \notin \underline{A}$.
Then, by $(x_{\underline{k}}, x_{\overline{k}})$-semi-hybridness on $\mathcal{T}_f^A$,
$r_1(\underline{P}_i) = a_1 \in \underline{A}$ implies $x_{\underline{k}}\mathrel{\underline{P}_i}z$ (equivalently, $b\mathrel{\underline{P}_i}z$), as required.
Last, let $q < \underline{k}$.
Thus, $q < v$, and hence Lemma \ref{lem:observation} implies $x_{q+1}, \dots, x_{\underline{k}} \in A^{b \rightharpoonup a}\backslash \{b\}$.
Moreover, since $z \neq x_q$ and $z \in \langle a, b|\mathcal{T}^A\rangle$,
we have a concatenated path $(z, x_q, \dots, x_{\underline{k}})$ in $G_{\sim}^A$ connecting $z$ and $x_{\underline{k}}$,
which of course includes $x_{\underline{k}-1}$.
Therefore, by definition, we know $z \in \underline{A}$.
Furthermore, since $z \sim x_q$ and $z \notin \{x_{q-1}, x_{q+1}\}$, according to the tree $G_{\sim}^{\underline{A}}$, it is true that $\textrm{Proj}\big(z, (x_1, \dots, x_{\underline{k}})\big) = x_q$.
Then, by $(x_{\underline{k}}, x_{\overline{k}})$-semi-hybridness on $\mathcal{T}_f^A$, $r_1(\underline{P}_i) = a_1 = x_1 \in \underline{A}$ implies  $x_q\mathrel{\underline{P}_i}z$ (equivalently, $b\mathrel{\underline{P}_i}z$), as required.
This completes the verification of the claim.
\medskip

\noindent
\textsc{Claim 5}: We have $A^{a \rightharpoonup b}\cup \langle a, b|\mathcal{T}^A\rangle \subseteq \underline{A}$.\medskip

We have already known $A^{a \rightharpoonup b} \subseteq \underline{A}$ and $b = x_q \in \underline{A}$.
To complete the verification, we show $z \in \underline{A}$ for all $z \in \langle a, b|\mathcal{T}^A\rangle \backslash \{a,b\}$.
Suppose that there exists $z \in \langle a, b|\mathcal{T}^A\rangle \backslash \{a,b\}$ such that $z \notin \underline{A}$.
On the one hand, since $G_{\sim}^{\langle a, b|\mathcal{T}^A\rangle}$ is a connected graph and $b \in \textrm{Leaf}\big(G_{\sim}^{\langle a, b|\mathcal{T}^A\rangle}\big)$ by Claim 4, we have a path in $G_{\sim}^{A}$ that connects $z$ and $a$, and excludes $b$.
On the other hand, since $z \notin \underline{A}$ and $a \in \underline{A}\backslash \{x_{\underline{k}}\}$, we know by Lemma \ref{lem:graphpartition} that $x_{\underline{k}}$ must be included in every path in $G_{\sim}^A$ that connects $z$ and $a$.
Moreover, since Observation \ref{obs:2} implies that $(x_p, \dots, x_{\underline{k}})$, which of course implies $b = x_q$,
is the unique path in $G_{\sim}^A$ connecting $a$ and $x_{\underline{k}}$,
it is true that $b$ must be included in every path in $G_{\sim}^A$ that connects $z$ and $a$ - a contradiction.
This completes the verification of the claim.
\medskip

Claim 5 immediately implies that $G_{\sim}^{A^{a \rightharpoonup b}}\cup G_{\sim}^{\langle a, b|\mathcal{T}^A\rangle}$ is included in $G_{\sim}^{\underline{A}}$.
More importantly, since $G_{\sim}^{\underline{A}}$ is a tree, $G_{\sim}^{A^{a \rightharpoonup b}}$ is a tree, and $G_{\sim}^{\langle a, b|\mathcal{T}^A\rangle}$ is a connected graph,
it must be the case that $G_{\sim}^{\langle a, b|\mathcal{T}^A\rangle}$ is a tree as well.
Consequently, $G_{\sim}^A = G_{\sim}^{A^{a \rightharpoonup b}}\cup G_{\sim}^{\langle a, b|\mathcal{T}^A\rangle}\cup G_{\sim}^{A^{b \rightharpoonup a}}$ is a tree.

\medskip

\noindent
\textsc{Claim 6}: We have $\mathbb{D} \subseteq \mathbb{D}_{\textrm{SSP}}(G_{\sim}^A, b)$.\medskip

Note that the tree $G_{\sim}^{A^{a \rightharpoonup b}}\cup G_{\sim}^{\langle a, b|\mathcal{T}^A\rangle}$ is nested in $G_{\sim}^{\underline{A}}$,
$x_q = b \in \textrm{Leaf}\big(G_{\sim}^{\langle a, b|\mathcal{T}^A\rangle}\big)$ by Claim 4 and
$x_{\underline{k}} \in \textrm{Leaf}\big(G_{\sim}^{\underline{A}}\big)$ by Observation \ref{obs:2}.
Then, it must be the case that
$b \in \langle x, x_{\underline{k}}|G_{\sim}^{\underline{A}}\rangle$ for all $x \in A^{a \rightharpoonup b}\cup \langle a, b|\mathcal{T}^A\rangle$.
Furthermore, since $G_{\sim}^{\underline{A}}$ is nested in the tree $G_{\sim}^A$,
we know $\langle x, x_{\underline{k}}|G_{\sim}^{\underline{A}}\rangle = \langle x, x_{\underline{k}}|G_{\sim}^{A}\rangle$ for $x \in A^{a \rightharpoonup b}\cup \langle a, b|\mathcal{T}^A\rangle$.
Therefore, $b \in \langle x, x_{\underline{k}}|G_{\sim}^{A}\rangle$ for all $x \in A^{a \rightharpoonup b}\cup \langle a, b|\mathcal{T}^A\rangle$.

First, given $P_i \in \mathbb{D}$ with $r_1(P_i) \in A^{a \rightharpoonup b}\cup \langle a, b|\mathcal{T}^A\rangle$,
we show that $P_i$ is semi-single-peaked on $G_{\sim}^A$ w.r.t.~$b$,
which consists of the following two parts:
(i) semi-single-peakedness on $G_{\sim}^{A^{a \rightharpoonup b}}\cup G_{\sim}^{\langle a, b|\mathcal{T}^A\rangle}$ w.r.t.~$b$, and
(ii) $\max^{P_i}(A^{b \rightharpoonup a}) = b$.
Note that by $(x_{\underline{k}}, x_{\overline{k}})$-semi-hybridness on $\mathcal{T}_f^A$, $P_i$ is semi-single-peaked on $\mathcal{T}_f^A$ w.r.t.~$x_{\underline{k}}$, which implies that $P_i$ is semi-single-peaked on $\mathcal{T}_f^{\underline{A}} = G_{\sim}^{\underline{A}}$ w.r.t.~$x_{\underline{k}}$.
Consequently, $G_{\sim}^{A^{a \rightharpoonup b}}\cup G_{\sim}^{\langle a, b|\mathcal{T}^A\rangle} \subseteq G_{\sim}^{\underline{A}}$ and
the fact $b \in \langle x, x_{\underline{k}}|G_{\sim}^{\underline{A}}\rangle$ for all $x \in A^{a \rightharpoonup b}\cup \langle a, b|\mathcal{T}^A\rangle$ together imply $P_i$ is semi-single-peaked on $G_{\sim}^{A^{a \rightharpoonup b}}\cup G_{\sim}^{\langle a, b|\mathcal{T}^A\rangle}$ w.r.t.~$b$. This confirms part (i).
Next, since $r_1(P_i) \in A^{a \rightharpoonup b}\cup \langle a, b|\mathcal{T}^A\rangle$, part (ii) immediately follows from $(a, b)$-semi-hybridness on $\mathcal{T}^A$.

Second, given $P_i \in \mathbb{D}$ with $r_1(P_i) \in A^{b \rightharpoonup a}$,
we show that $P_i$ is semi-single-peaked on $G_{\sim}^A$ w.r.t.~$b$,
which consists of the following two parts:
(i) semi-single-peakedness on $G_{\sim}^{A^{b \rightharpoonup a}} = \mathcal{T}^{A^{b \rightharpoonup a}}$ w.r.t.~$b$, and
(ii) $\max^{P_i}(A^{a \rightharpoonup b}\cup \langle a, b|\mathcal{T}^A\rangle) = b$.
Since $r_1(P_i) \in A^{b \rightharpoonup a}$, both parts immediately follow from $(a, b)$-semi-hybridness on $\mathcal{T}^A$.
This completes the verification of the claim.
\medskip

Now, we know that all preferences of $\mathbb{D}$ are semi-single-peaked on the tree $G_{\sim}^A$ w.r.t.~the threshold $b$ which by Claim 4 is a leaf of $G_{\sim}^{\langle a, b|\mathcal{T}^A\rangle}$.
This contradicts condition (iii) of Definition \ref{def:asspsh}.

In conclusion, we must have $\underline{k} \leq p$, as required.
By a symmetric argument, we can also show $\overline{k} \geq q$.
This proves the lemma, and hence completes the verification of the ``only if part'' of \textbf{Statement (ii)} of Theorem \ref{thm:invariance}.
\end{proof}

\section{Proof of Corollary \ref{cor:anonymity}}\label{app:anonymity}

We first introduce a Ramification Theorem that will be repeatedly applied in the verification.
\begin{RT}
Let $\mathbb{D}$ be an $(a,b)$-semi-hybrid domain on a tree $\mathcal{T}^A$, and
satisfy diversity.
The following two statements are equivalent:
\begin{itemize}
\item[\rm (a)] every two-voter, tops-only and strategy-proof rule $f: \mathbb{D}^2 \rightarrow A$ behaves likes a dictatorship on $\langle a,b|\mathcal{T}^A\rangle$, and

\item[\rm (b)] every tops-only and strategy-proof rule $f: \mathbb{D}^n \rightarrow A$, $n\geq 2$, behaves likes a dictatorship on $\langle a,b|\mathcal{T}^A\rangle$.
\end{itemize}
\end{RT}

The proof of the Ramification Theorem is lengthy and is relegated to the \href{https://drive.google.com/file/d/1rxXDpH_pPUDtSHEGX-YgD6bbo_p9v0gj/view}{\textcolor[rgb]{0.00,0.00,1.00}{Supplementary Material}} of \citet{CZ2022SM}.

Now, we start to prove the Corollary.
Let $\mathbb{D}$ be a non-dictatorial, unidimensional domain.

To prove statement (i), we show the equivalence of the following three sub-statements:
\begin{itemize}
\item[\rm (1)] there exists an anonymous, tops-only and strategy-proof rule,

\item[\rm (2)] domain $\mathbb{D}$ is a semi-single-peaked domain, and

\item[\rm (3)] domain $\mathbb{D}$ admits a strategy-proof projection rule.
\end{itemize}

It is clear that the direction (2) $\Rightarrow$ (3) $\Rightarrow$ (1) follows from the proof of the sufficiency part of the Theorem of \citet{CSS2013}.
We focus on showing (1) $\Rightarrow$ (2).
Now, assume that there exists an anonymous, tops-only and strategy-proof rule.
We show that $\mathbb{D}$ is a semi-single-peaked domain.
Suppose by contradiction that $\mathbb{D}$ is not a semi-single-peaked domain.
Since Theorem \ref{thm:invariance} exclusively and exhaustively classifies all non-dictatorial, unidimensional domains into the class of semi-single-peaked domains and the class of semi-hybrid domains,
$\mathbb{D}$ must be a semi-hybrid domain, more specifically, say an $(a,b)$-semi-hybrid domain on a tree $\mathcal{T}^A$.
Then, by the proof of Statement (ii) of Theorem \ref{thm:invariance}, we know that
every two-voter, tops-only and strategy-proof rule behaves like a dictatorship on $\langle a, b|\mathcal{T}^A\rangle$.
Furthermore, by the Ramification Theorem, we know that every tops-only and strategy-proof rule with an arbitrary number of voters behaves like a dictatorship on $\langle a, b|\mathcal{T}^A\rangle$.
This contradicts the hypothesis that there exists an anonymous, tops-only and strategy-proof rule.
This completes the verification of the direct (1) $\Rightarrow$ (2), and hence proves statement (i).

Next, we show statement (ii).
Note that the ``if part'' of statement (ii) follows exactly from Lemmas \ref{lem:wlog} and \ref{lem:nssp} in Appendix \ref{app:Theorem} and statement (i).
We focus on showing the ``only if part'', and furthermore showing that
every tops-only and strategy-proof rule behaves like a dictatorship on a weak superset of the free zone.
Since there exists no anonymous, tops-only and strategy-proof rule,
statement (i) immediately implies that $\mathbb{D}$ is not a semi-single-peaked domain.
Then, by the proof of \textbf{Statement (ii)} of Theorem \ref{thm:invariance},
we know that $\mathbb{D}$ is an $(a,b)$-semi-hybrid domain on a tree $\mathcal{T}^A$ satisfying the unique seconds property, and furthermore
every two-voter, tops-only and strategy-proof rule behaves like a dictatorship on a weak superset of $\langle a, b|\mathcal{T}^A\rangle$.
Last, by the Ramification Theorem, we know that every tops-only and strategy-proof rule with an arbitrary number of voters also behaves likes a dictatorship on $\langle a,b|\mathcal{T}^A\rangle$.
This proves statement (ii).

\section{Proof of Fact \ref{fact:PNTrule}}\label{app:Fact2}

Fix arbitrary $n \geq 2$ and the PNT rule $f: \mathbb{D}^n \rightarrow A$ on  $\mathcal{T}^A$ w.r.t.~$(x, y)$.
For notational convenience, let $i = 1$ and $j = 2$ in Definition \ref{def:PNT}.

We first show that given $n > 2$, if the PNT rule $f$ is strategy-proof,
then conditions (i) and (ii) of Fact \ref{fact:PNTrule} hold, and furthermore if $f$ violates the tops-only property, then condition (iii) of Fact \ref{fact:PNTrule} holds.
We fix $P_1 \in \mathbb{D}$ with $r_1(P_1) =z \in A^{x\rightharpoonup y}$ and
show condition (i) of Fact \ref{fact:PNTrule} in two steps.
In the first step, we show that
for all distinct $a, b\in \langle z, y|\mathcal{T}^A\rangle$,
$\big[a \in \langle z, b|\mathcal{T}^A\rangle\big] \Rightarrow [a\mathrel{P_i}b]$.
Since $\langle z, y|\mathcal{T}^A\rangle$ is a combination of $\langle z, x|\mathcal{T}^A\rangle$ and the edge $(x,y)$, by transitivity of $P_1$, it suffices to show
(i) given distinct $a, b\in \langle z, x|\mathcal{T}^A\rangle$,
$[a \in \langle z, b|\mathcal{T}^A\rangle] \Rightarrow [a\mathrel{P_1}b]$ and (ii) $x\mathrel{P_1}y$.
We fix two profiles $P=(P_1, P_{-1})$ and $P'=(P_1', P_{-1})$, where $P_1' \in \mathbb{D}^b$ and $P_{\nu} \in \mathbb{D}^{a}$ for all $\nu \in \{2, \dots, n\}$.
By construction, $f(P) =\mathop{\textrm{Proj}}\big(x, \langle z, a|\mathcal{T}^A\rangle\big) = a$ and
$f(P') = \mathop{\textrm{Proj}}(x, \langle b, a|\mathcal{T}^A\rangle) = b$.
Then, strategy-proofness implies $a\mathrel{P_1}b$, as required.
Next, we construct two profiles $P=(P_1, P_2, P_{-\{1,2\}})$ and $P'=(P_1', P_2, P_{-\{1,2\}})$,
where $P_1' \in \mathbb{D}^y$, $P_2 \in \mathbb{D}^x$ and $r_1(P_{\nu}) \in A^{x\rightharpoonup y}$ for all $\nu \in \{3, \dots, n\}$.
By construction, $f(P) = \mathop{\textrm{Proj}}\big(x, \mathcal{T}^{\Gamma(P)}\big) = x$ and
$f(P') = y$.
Then, strategy-proofness implies $x\mathrel{P_1}y$, as required.
In the second step, given $a \notin \langle z, y|\mathcal{T}^A\rangle$ and $a' = \mathop{\textrm{Proj}}(a, \langle z, y|\mathcal{T}^A\rangle)$,
we show $a'\mathrel{P_1}a$.
Clearly, $a' \in \langle z, y|\mathcal{T}^A\rangle$.
If $a' = z$, $a'\mathrel{P_1}a$ holds immediately.
If $a'= y$, we know $a \in A^{y \rightharpoonup x}\backslash \{y\}$.
Then, we construct two profiles $P=(P_1, P_2, P_{-\{1,2\}})$ and $P'=(P_1', P_2, P_{-\{1,2\}})$,
where $P_1' \in \mathbb{D}^a$, $P_2 \in \mathbb{D}^y$ and $P_{\nu}$ is arbitrary for all $\nu \in \{3, \dots, n\}$.
By construction, $f(P) = \max^{P_2}(\{x, y\}) = y = a'$ and $f(P') = a$.
Then, strategy-proofness implies $a'\mathrel{P_1}a$, as required.
Last, let $a' \in \langle z, y|\mathcal{T}^A\rangle\backslash \{z, y\}$.
Thus, $a \in A^{x \rightharpoonup y}$ and $a' = \mathop{\textrm{Proj}}(a, \langle z, y|\mathcal{T}^A\rangle)= \mathop{\textrm{Proj}}(a, \langle z, x|\mathcal{T}^A\rangle)
= \mathop{\textrm{Proj}}(x, \langle z, a|\mathcal{T}^A\rangle)$.
Accordingly, we construct two profiles $P=(P_1, P_{-1})$ and $P'=(P_1', P_{-1})$, where $P_1' \in \mathbb{D}^{a}$ and $P_{\nu} \in \mathbb{D}^{a}$ for all $\nu \in \{2, \dots, n\}$.
We then have $f(P) = \mathop{\textrm{Proj}}\big(x, \langle z, a|\mathcal{T}^A\rangle\big)
= a'$ and
$f(P') = a$ by construction, and $a'\mathrel{P_1}a$ by strategy-proofness, as required.
This proves condition (i) of Fact \ref{fact:PNTrule}.

To verify condition (ii) of Fact \ref{fact:PNTrule},
we fix arbitrary $v \in N\backslash \{1,2\}$ and  $P_v \in \mathbb{D}$ with $r_1(P_v) = z\in A^{y\rightharpoonup x}$, and show $\max^{P_v}(A^{x \rightharpoonup y}) = x$.
Given arbitrary $a \in A^{x\rightharpoonup y}\backslash \{x\}$,
we construct two profiles $P = (P_1, P_2, P_v, P_{-\{1,2,v\}})$ and $P' = (P_1, P_2, P_v', P_{-\{1,2,v\}})$,
where $P_v' \in \mathbb{D}^a$ and $P_{\ell} \in \mathbb{D}^a$ for all $\ell \in N\backslash \{v\}$.
By construction, we have $f(P) = \mathop{\textrm{Proj}}\big(x, \langle z, a |\mathcal{T}^A\rangle\big) = x$ and $f(P') = a$.
Then, strategy-proofness implies $x\mathrel{P_v}a$. Hence, we have $\max^{P_v}(A^{x \rightharpoonup y}) = x$, as required.

To verify condition (iii) of Fact \ref{fact:PNTrule},, let $f$ violate the tops-only property.
By construction, there must exist
$P =(P_1, P_2, P_{-\{1,2\}})$ and $P'=(P_1, P_2', P_{-\{1,2\}})$,
where $r_1(P_1) \in A^{x\rightharpoonup y}$ and $r_1(P_2) = r_1(P_2') \in A^{y\rightharpoonup x}$
such that $f(P) = \max^{P_2}(\{x,y\}) = x \neq y= \max^{P_2'}(\{x, y\}) = f(P')$.
Therefore, we have $x\mathrel{P_2}y$ and $y\mathrel{P_2'}x$, as required.
\medskip

Conversely, we show that if domain $\mathbb{D}$ satisfy conditions (i), (ii) and (iii) of Fact \ref{fact:PNTrule},
then the PNT rule $f$ satisfies strategy-proofness and violates the tops-only property.

Clearly, condition (iii) of Fact \ref{fact:PNTrule} implies that $f$ violates the tops-only property.
In the rest of proof, we show strategy-proofness of $f$.
We first consider an arbitrary voter $i \in N\backslash \{1, 2\}$ in the case $n > 2$.
Given profiles $P =(P_1, P_2, P_i, P_{-\{1,2, i\}})$ and $P' =(P_1, P_2, P_i', P_{-\{1,2, i\}})$,
let $f(P) \neq f(P')$.
Then, the construction implies $r_1(P_1) \in A^{x \rightharpoonup y}$, $r_1(P_2) \in A^{x \rightharpoonup y}$,
$f(P)=\mathop{\textrm{Proj}}(x, \mathcal{T}^{\Gamma(P)})$ and $f(P')=\mathop{\textrm{Proj}}(x, \mathcal{T}^{\Gamma(P')})$.
Furthermore, we can infer $r_1(P_j) \in A^{x \rightharpoonup y}$ for all $j \in N\backslash \{1, 2, i\}$.
Otherwise, both $\Gamma(P)$ and $\Gamma(P')$ contain $x$, and $f(P) = x = f(P')$.
Similarly, we can infer either $r_1(P_i) \notin A^{y \rightharpoonup x}$ or $r_1(P_i') \notin A^{y \rightharpoonup x}$.
Thus, there are three cases:
(1) $r_1(P_i) \in A^{x \rightharpoonup y}$ and $r_1(P_i') \in A^{x \rightharpoonup y}$,
(2) $r_1(P_i) \in A^{x \rightharpoonup y}$ and $r_1(P_i') \in A^{y \rightharpoonup x}$, and
(3) $r_1(P_i) \in A^{y \rightharpoonup x}$ and $r_1(P_i') \in A^{x \rightharpoonup y}$.
In both cases (1) and (2), since $r_1(P_i) \in A^{x \rightharpoonup y}$,
condition (i) of Fact \ref{fact:PNTrule} implies that $P_i$ is also semi-single-peaked on $\mathcal{T}^A$ w.r.t.~$x$.
Consequently, given $f(P)=\mathop{\textrm{Proj}}(x, \mathcal{T}^{\Gamma(P)}) \neq \mathop{\textrm{Proj}}(x, \mathcal{T}^{\Gamma(P')}) =f(P')$,
the proof of the sufficiency part of the Theorem of \citet{CSS2013} implies $f(P)\mathrel{P_i}f(P')$.
In case (3),
$f(P)= \mathop{\textrm{Proj}}(x, \mathcal{T}^{\Gamma(P)}) = x$ and $f(P') = \mathop{\textrm{Proj}}(x, \mathcal{T}^{\Gamma(P')}) \in A^{x \rightharpoonup y}$.
Then, condition (ii) of Fact \ref{fact:PNTrule} implies $f(P)\mathrel{P_i}f(P')$.
Overall, voter $i$ has no incentive to manipulate.
Henceforth, we focus on the possible manipulations of voters 1 and 2.

First, given two profiles $P = (P_1, P_2, P_{-\{1,2\}})$ and $P' = (P_1', P_2, P_{-\{1,2\}})$
with $f(P) \neq f(P')$,
there are three possible manipulations of voter 1:\\
(1) $f(P) = \mathop{\textrm{Proj}}\big(x, \mathcal{T}^{\Gamma(P)}\big)$ and
$f(P') = \mathop{\textrm{Proj}}\big(x, \mathcal{T}^{\Gamma(P')}\big)$,\\
(2) $f(P) = \mathop{\textrm{Proj}}\big(x, \mathcal{T}^{\Gamma(P)}\big)$ and $f(P') = r_1(P_1') \in A^{y \rightharpoonup x}$, and \\
(3) $f(P) =\max^{P_2}(\{x,y\})$ and $f(P') = r_1(P_1') \in A^{y \rightharpoonup x}$.\\
In each case, we show $f(P) \mathrel{P_1} f(P')$.
In case (1), we know $r_1(P_1) \in A^{x \rightharpoonup y}$.
Then, condition (i) of Fact \ref{fact:PNTrule} implies that $P_1$ is also semi-single-peaked on $\mathcal{T}^A$ w.r.t.~$x$.
Consequently, given $f(P)=\mathop{\textrm{Proj}}(x, \mathcal{T}^{\Gamma(P)}) \neq \mathop{\textrm{Proj}}(x, \mathcal{T}^{\Gamma(P')}) = f(P')$,
the proof of the sufficiency part of the Theorem of \citet{CSS2013} implies $f(P)\mathrel{P_1}f(P')$, as required.
In case (2), we know $r_1(P_1) \in A^{x \rightharpoonup y}$,
which
implies $\min^{P_1}\big(\langle r_1(P_1), x|\mathcal{T}^A\rangle\big) =x$, $x\mathrel{P_1}y$ and
$y=\max^{P_1}(A^{y \rightharpoonup x})$ by condition (i) of Fact \ref{fact:PNTrule}.
Then,
$f(P) = \mathop{\textrm{Proj}}\big(x, \mathcal{T}^{\Gamma(P)}\big) \in \langle r_1(P_1), x|\mathcal{T}^A\rangle$ and $f(P') = r_1(P_1') \in A^{y \rightharpoonup x}$ imply
$f(P)\mathrel{P_1}f(P')$, as required.
In case (3), we know $r_1(P_1) \in A^{x \rightharpoonup y}$ which implies $x\mathrel{P_1}y$ and $y=\max^{P_1}(A^{y \rightharpoonup x})$ by condition (i) of Fact \ref{fact:PNTrule}.
Then, $f(P) =\max^{P_2}(\{x,y\}) \in \{x, y\}$, $f(P') \in A^{y \rightharpoonup x}$ and $f(P) \neq f(P')$ imply $f(P)\mathrel{P_1}f(P')$, as required.

Last, given two profiles $P = (P_1, P_2, P_{-\{1,2\}})$ and $P' =(P_1, P_2', P_{-\{1,2\}})$
with $f(P) \neq f(P')$,
there are three possible manipulations of voter 2:\\
(1) $f(P) = \mathop{\textrm{Proj}}\big(x, \mathcal{T}^{\Gamma(P)}\big)$ and
$f(P') =\mathop{\textrm{Proj}}\big(x, \mathcal{T}^{\Gamma(P')}\big)$, \\
(2) $f(P) = \mathop{\textrm{Proj}}\big(x, \mathcal{T}^{\Gamma(P)}\big)$ and $f(P') = \max^{P_2'}(\{x, y\})$, and\\
(3) $f(P) =\max^{P_2}(\{x,y\})$ and $f(P') = \mathop{\textrm{Proj}}\big(x, \mathcal{T}^{\Gamma(P')}\big)$.\\
In each case, we show $f(P) \mathrel{P_2} f(P')$ holds.
Clearly, the verification of case (1) is similar to that of case (1) for voter 1.
In case (2), we know $r_1(P_2) \in A^{x \rightharpoonup y}$, which implies $\min^{P_2}\big(\langle r_1(P_2), x|\mathcal{T}^A\rangle\big) = x$ and $x\mathrel{P_2}y$ by condition (i) of Fact \ref{fact:PNTrule}.
Consequently, $f(P) = \mathop{\textrm{Proj}}\big(x, \mathcal{T}^{\Gamma(P)}\big) \in \langle r_1(P_2), x|\mathcal{T}^A\rangle$,
$f(P') = \max^{P_2'}(\{x, y\}) \in \{x, y\}$ and $f(P) \neq f(P')$ imply $f(P)\mathrel{P_2}f(P')$, as required.
In case (3), we know $r_1(P_2) \in A^{y \rightharpoonup x}$
which implies $\max^{P_2}(A^{x \rightharpoonup y}) = x$ by condition (ii) of Fact \ref{fact:PNTrule}, and
$r_1(P_2') \in A^{x \rightharpoonup y}$ which implies $f(P')
= \mathop{\textrm{Proj}}\big(x, \mathcal{T}^{\Gamma(P')}\big) \in \langle r_1(P_2'), x|\mathcal{T}^A\rangle \subseteq A^{x \rightharpoonup y}$.
Then, $f(P) =\max^{P_2}(\{x,y\})$, $f(P') \in A^{x \rightharpoonup y}$ and $f(P) \neq f(P')$ imply $f(P)\mathrel{P_2}f(P')$, as required.
In conclusion, the PNT rule $f$ is strategy-proof.

\section{Proof of Proposition \ref{prop:nontopsonly}}\label{app:Proposition1}
We first show statement (i) of Proposition \ref{prop:nontopsonly}.
Let $\mathbb{D}$ be a semi-single-peaked domain on a tree $\mathcal{T}^A$.
Clearly, the existence of a critical spot ensures $\mathbb{D} \nsubseteq \mathbb{D}_{\textrm{SP}}(\mathcal{T}^A)$.
Henceforth, let $\mathbb{D} \nsubseteq \mathbb{D}_{\textrm{SP}}(\mathcal{T}^A)$, and we show the existence of a critical spot.

Since $\mathbb{D}$ is a semi-single-peaked domain on $\mathcal{T}^A$,
we identify the set $\mathcal{Z} \subseteq A$ such that $[z \in \mathcal{Z}] \Rightarrow [\mathbb{D} \subseteq \mathbb{D}_{\textrm{SSP}}(\mathcal{T}^A, z)]$.
Clearly, either $\mathcal{Z}$ is a singleton set, or $\mathcal{Z}$ is not a singleton set and $\mathcal{T}^{\mathcal{Z}}$ is a subtree nested in $\mathcal{T}^A$.\footnote{Given a tree $\mathcal{T}^A$ and two distinct alternatives $a, b \in A$, if $\mathbb{D}\subseteq \mathbb{D}_{\textrm{SSP}}(\mathcal{T}^A, a)$ and $\mathbb{D}\subseteq \mathbb{D}_{\textrm{SSP}}(\mathcal{T}^A, b)$, then we have $\mathbb{D} \subseteq \mathbb{D}_{\textrm{SSP}}(\mathcal{T}^A, x)$ for all $x \in \langle a, b|\mathcal{T}^A\rangle$.}
Since $\mathbb{D} \nsubseteq \mathbb{D}_{\textrm{SP}}(\mathcal{T}^A)$,
it must be the case that some leaf of $\mathcal{T}^A$ and its unique neighbor are not contained in $\mathcal{Z}$.
Then, we can identify a threshold $x' \in \mathcal{Z}$ with
$[\mathcal{Z}\; \textrm{is not a singleton set}] \Rightarrow [x' \in \textrm{Leaf}(\mathcal{T}^{\mathcal{Z}})]$
and an edge $(x,y) \in \mathcal{E}^A$ with $x,y \notin \mathcal{Z}$ and $y \in \langle x, x'|\mathcal{T}^A\rangle$, such that
the following two conditions are satisfied:
\begin{itemize}
\item[\rm (1)] \emph{every} preference $P_i \in \mathbb{D}$ with $r_1(P_i) \in A^{y \rightharpoonup x}$ is single-peaked on
$\mathcal{T}^{A^{x \rightharpoonup y}}$, i.e.,
for all distinct $a, b \in A^{x \rightharpoonup y}$, $\big[a\in \langle r_1(P_i), b|\mathcal{T}^A\rangle\big] \Rightarrow [a\mathrel{P_i}b]$, and

\item[\rm (2)] \emph{some} preference $P_i^{\ast} \in \mathbb{D}$ with $r_1(P_i^{\ast} ) \in A^{y \rightharpoonup x}$ is not single-peaked on the subtree
$\mathcal{T}^{A^{x \rightharpoonup y} \cup \{y\}}$.\footnote{The subtree $\mathcal{T}^{A^{x \rightharpoonup y} \cup \{y\}}$ is a combination of the subtree $\mathcal{T}^{A^{x \rightharpoonup y}}$ and the edge $(x, y)$. We adopt the instance of the line $\mathcal{L}^A$ to exemplify how conditions (1) and (2) are specified. Fixing a semi-single-peaked domain $\mathbb{D}$ on the line $\mathcal{L}^A$, let $\mathbb{D}\nsubseteq \mathbb{D}_{\textrm{SP}}(\mathcal{L}^A)$.
Then, we have $\mathcal{Z} = \langle a_p, a_q|\mathcal{L}^A\rangle$ for some $2< p \leq q \leq m$ or $1 \leq p \leq q < m-1$.
We assume w.l.o.g.~that $p>2$.
First, according to $a_1$, it is natural that every preference with the peak located in $\langle a_2, a_m |\mathcal{L}^A\rangle$ is single-peaked on $\langle a_1, a_1|\mathcal{L}^A\rangle$.
Second, since $\mathbb{D} \subseteq \mathbb{D}_{\textrm{SSP}}(\mathcal{L}^A, a_p)$ and
$\mathbb{D} \nsubseteq \mathbb{D}_{\textrm{SSP}}(\mathcal{L}^A, a_{p-1})$,
some preference with the peak located in $\langle a_{p-1}, a_m|\mathcal{L}^A\rangle$ must not be single-peaked on $\langle a_1, a_{p-1}|\mathcal{L}^A\rangle$.
Searching from $a_1$ to $a_{p-1}$,
we can identify $1 \leq s< p-1$ such that
(i) every preference with the peak located in $\langle a_{s+1}, a_m |\mathcal{L}^A\rangle$ is single-peaked on $\langle a_1, a_s|\mathcal{L}^A\rangle$, and
(ii) some preference with the peak located in $\langle a_{s+1}, a_m |\mathcal{L}^A\rangle$ is not single-peaked on $\langle a_1, a_{s+1}|\mathcal{L}^A\rangle$.
These two conditions are analogous to conditions (1) and (2) above respectively.}
\end{itemize}
Since condition (1) implies $\max^{P_i}(A^{x \rightharpoonup y}) = x$ for all $P_i \in \mathbb{D}$ with $r_1(P_i) \in A^{y \rightharpoonup x}$,
condition (ii) of Fact \ref{fact:PNTrule} is satisfied.
We next show condition (i) of Fact \ref{fact:PNTrule}: given $P_i \in \mathbb{D}$ with $r_1(P_i) \in A^{x \rightharpoonup y}$, $P_i$ is semi-single-peaked on $\mathcal{T}^{A}$ w.r.t.~$y$.
Since $y \in \langle r_1(P_i) , x'|\mathcal{T}^A\rangle$,
semi-single-peakedness on $\mathcal{T}^A$ w.r.t.~$x'$ implies that $P_i$ is also semi-single-peaked on $\mathcal{T}^A$ w.r.t.~$y$, as required.
Last, we show condition (iii) of Fact \ref{fact:PNTrule}:
there exist $P_i,P_i' \in \mathbb{D}$ such that $r_1(P_i) = r_1(P_i') \in A^{y \rightharpoonup x}$, $y\mathrel{P_i}x$ and $x\mathrel{P_i'}y$.
First, since $G_{\sim}^A = \mathcal{T}^A$, it is true that $G_{\sim}^{A^{y\rightharpoonup x}} = \mathcal{T}^{A^{y \rightharpoonup x}}$ is a subtree.
Second, recall preference $P_i^{\ast}$ in condition (2) above.
Since $P_i^{\ast}$ is single-peaked on $\mathcal{T}^{A^{x \rightharpoonup y} }$ by condition (1) and violates single-peakedness on $\mathcal{T}^{A^{x \rightharpoonup y} \cup \{y\}}$ by condition (2), it must be the case that $\max^{P_i^{\ast}}(A^{x \rightharpoonup y}) = x$ and $x\mathrel{P_i^{\ast}}y$.
Given subdomains
$\mathbb{D}_1 = \{P_i \in \mathbb{D}: r_1(P_i) \in A^{y\rightharpoonup x}\; \textrm{and}\; y\mathrel{P_i}x\}$ and
$\mathbb{D}_2 = \{P_i \in \mathbb{D}: r_1(P_i) \in A^{y\rightharpoonup x}\; \textrm{and}\; x\mathrel{P_i}y\}$,
we know $\mathbb{D}_1 \neq \emptyset$ by minimal richness and $P_i^{\ast} \in \mathbb{D}_2$.
Now, suppose that condition (iii) of Fact \ref{fact:PNTrule} is not true.
Thus, all $P_i,P_i' \in \mathbb{D}$ with $r_1(P_i) = r_1(P_i') \in A^{y \rightharpoonup x}$
agree on the relative ranking of $x$ and $y$.
Therefore, for all $z \in A^{y \rightharpoonup x}$, either $\mathbb{D}^z \subseteq \mathbb{D}_1$
or $\mathbb{D}^z \subseteq \mathbb{D}_2$ holds.
Consequently,
given arbitrary distinct $z, z' \in A^{y \rightharpoonup x}$ with $\mathbb{D}^z \subseteq \mathbb{D}_1$ and $\mathbb{D}^{z'} \subseteq \mathbb{D}_2$,
we have $z\mathrel{P_i}z'$ and $y\mathrel{P_i}x$ for all $P_i \in \mathbb{D}^z$, and $z'\mathrel{P_i'}z$ and $x\mathrel{P_i'}y$ for all $P_i' \in \mathbb{D}^{z'}$,
which imply that $z$ and $z'$ are never adjacent.
This contradicts the fact that $G_{\sim}^{A^{y \rightharpoonup x}}$ is a subtree.
Therefore, we identify a critical spot $(x,y)$. This proves statement (i).\medskip

Next, we show statement (ii) of Proposition \ref{prop:nontopsonly}.
Let $\mathbb{D}$ be an $(a, b)$-semi-hybrid domain on a tree $\mathcal{T}^A$.
Clearly, the existence of a critical spot $(x, y)$ in $\mathcal{T}^A$ such that either $x, y \in A^{a \rightharpoonup b}\backslash \{a\}$ or $x, y \in A^{b \rightharpoonup a}\backslash \{b\}$ implies $\mathbb{D} \nsubseteq \mathbb{D}_{\textrm{H}}(\mathcal{T}^A, a, b)$.
Henceforth, let $\mathbb{D} \nsubseteq \mathbb{D}_{\textrm{H}}(\mathcal{T}^A, a, b)$, and we show that there exists a critical spot $(x, y)$ in $\mathcal{T}^A$ such that either $x, y \in A^{a \rightharpoonup b}\backslash \{a\}$ or $x, y \in A^{b \rightharpoonup a}\backslash \{b\}$.
Since $\mathbb{D} \nsubseteq \mathbb{D}_{\textrm{H}}(\mathcal{T}^A, a, b)$,
there must exists $\hat{P}_i \in \mathbb{D}$ that is either not single-peaked on $\mathcal{T}^{A^{a \rightharpoonup b}}$ or not single-peaked on $\mathcal{T}^{A^{b \rightharpoonup a}}$.
We assume w.l.o.g. that $\hat{P}_i$ is not single-peaked on $\mathcal{T}^{A^{a \rightharpoonup b}}$.
This implies $|A^{a \rightharpoonup b}| \geq 3$.
Given arbitrary $P_i \in \mathbb{D}$, let $P_i|_{A^{a \rightharpoonup b}}$ be the induced preference over $A^{a \rightharpoonup b}$ by removing all alternatives not in $A^{a \rightharpoonup b}$.
Accordingly, let $\mathbb{D}|_{A^{a \rightharpoonup b}} = \{P_i|_{A^{a \rightharpoonup b}}: P_i \in \mathbb{D}\}$.
We claim that $\mathbb{D}|_{A^{a \rightharpoonup b}}$ is semi-single-peaked on $\mathcal{T}^{A^{a \rightharpoonup b}}$ w.r.t.~$a$.
Given arbitrary $P_i \in \mathbb{D}$, either $r_1(P_i) \in A^{a \rightharpoonup b}$ or $r_1(P_i) \in A\backslash A^{a \rightharpoonup b}$ holds.
If $r_1(P_i) \in A^{a \rightharpoonup b}$, we know that $P_i$ is semi-single-peaked on $\mathcal{T}^A$ w.r.t.~$a$ by $(a,b)$-semi-hybridness on $\mathcal{T}^A$.
Then, it is evident that $P_i|_{A^{a \rightharpoonup b}}$ is semi-single-peaked on $\mathcal{T}^{A^{a \rightharpoonup b}}$ w.r.t.~$a$, as required.
If $r_1(P_i) \in A\backslash A^{a \rightharpoonup b}$, $(a,b)$-semi-hybridness on $\mathcal{T}^A$ implies $\max^{P_i}(A^{a \rightharpoonup b}) = a$.
Then, we have $r_1(P_i|_{A^{a \rightharpoonup b}}) = a$ which implies that $P_i|_{A^{a \rightharpoonup b}}$ is semi-single-peaked on $\mathcal{T}^{A^{a \rightharpoonup b}}$ w.r.t.~$a$, as required.
Moreover, since $\hat{P}_i$ is not single-peaked on $\mathcal{T}^{A^{a \rightharpoonup b}}$,
it is clear that $\hat{P}_i|_{A^{a \rightharpoonup b}}$ is not single-peaked on $\mathcal{T}^{A^{a \rightharpoonup b}}$.
Consequently, $\mathbb{D}|_{A^{a \rightharpoonup b}}$ is not single-peaked on $\mathcal{T}^{A^{a \rightharpoonup b}}$.
Then, according to $\mathbb{D}|_{A^{a \rightharpoonup b}}$,
we can adopt the verification of statement (i) to identify a critical spot $(x, y) $ in $\mathcal{T}^{A^{a \rightharpoonup b}}$
such that $x, y \in A^{a \rightharpoonup b}\backslash \{a\}$.
Last, back to $\mathbb{D}$, by $(a,b)$-semi-hybridness on $\mathcal{T}^A$,
it is easy to show that $(x,y)$ is also a critical spot in $\mathcal{T}^A$.
This completes the verification of statement (ii).

\section{Proof of Corollary \ref{cor:topsonly}}\label{app:topsonly}

We first show statement (i).
The ``if part'' holds evidently. We focus on showing the ``only if part''.
Fix a non-dictatorial, tops-only, unidimensional domain $\mathbb{D}$,
and let $\mathbb{D}$ admit an anonymous and strategy-proof rule.
Clearly, the admissible rule also satisfies the tops-only property.
Then, by statement (i) of Corollary \ref{cor:anonymity}, $\mathbb{D}$ must be a semi-single-peaked domain on a tree $\mathcal{T}^A$ w.r.t.~some threshold $\bar{x} \in A$.
Furthermore, since $\mathbb{D}$ is a tops-only domain, Fact \ref{fact:PNTrule} and statement (i) of Proposition \ref{prop:nontopsonly} together imply that $\mathbb{D}$ must be a single-peaked domain on $\mathcal{T}^A$.
Last, since $\mathbb{D}$ includes the completely reversed preferences $\underline{P}_i$ and $\overline{P}_i$,
according to the labeling of alternatives in them,
it must be the case that $\mathcal{T}^A= \mathcal{L}^A$.
Therefore, $\mathbb{D}$ is a single-peaked domain on $\mathcal{L}^A$.
This proves statement (i) of Corollary \ref{cor:topsonly}.

We next show statement (ii).
Fix a non-dictatorial, tops-only, unidimensional domain $\mathbb{D}$.
To show the ``if part'' of statement (ii), let $\mathbb{D}$ further be a non-degenerate hybrid domain on $\mathcal{L}^A$.
Thus, $\mathbb{D} \subseteq \mathbb{D}_{\textrm{H}}(\mathcal{L}^A, a_p, a_q)$ for some $1 < q-p < m-1$.
According to statement (i), to show non-existence of an anonymous and strategy-proof rule, it suffices to show $\mathbb{D}\nsubseteq \mathbb{D}_{\textrm{SP}}(\mathcal{L}^A)$.
Suppose by contradiction that $\mathbb{D} \subseteq \mathbb{D}_{\textrm{SP}}(\mathcal{L}^A)$.
Consequently, we have $\mathbb{D} \subseteq \mathbb{D}_{\textrm{SP}}(\mathcal{L}^A) \subseteq \mathbb{D}_{\textrm{H}}(\mathcal{L}^A, a_p, a_{q-1})$,
which contradiction condition (ii) of Definition \ref{def:asph}
This proves the ``if part'' of statement (ii).

To prove the ``only if part'' of statement (ii), let $\mathbb{D}$ be a non-dictatorial, tops-only, unidimensional domain, and admit no anonymous and strategy-proof rule.
Thus, $\mathbb{D}$ admits no anonymous, tops-only and strategy-proof rule.
Then, by statement (ii) of Corollary \ref{cor:anonymity}, we know that $\mathbb{D}$ is an $(a, b)$-semi-hybrid domain on a tree $\mathcal{T}^A$.
By Lemma \ref{lem:wlog}, we can assume w.l.o.g.~that $a_1 \in A^{a \rightharpoonup b}$ and $a_m \in A^{b \rightharpoonup a}$.
We show that $\mathbb{D}$ is a non-degenerate hybrid domain on $\mathcal{L}^A$ in the next lemma.

\begin{lemma}
Domain $\mathbb{D}$ is an $(a_p, a_q)$-hybrid domain on $\mathcal{L}^A$ for some $1< q-p< m-1$.
\end{lemma}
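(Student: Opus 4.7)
The plan is to promote $\mathbb{D}$ from a semi-hybrid domain on $\mathcal{T}^A$ to a non-degenerate hybrid domain on $\mathcal{L}^A$ in three moves: upgrading semi-hybridness to hybridness, straightening the tree to the line, and ruling out degeneracy.

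First I would show $\mathbb{D} \subseteq \mathbb{D}_{\textrm{H}}(\mathcal{T}^A, a, b)$. Because $\mathbb{D}$ is a tops-only domain, no strategy-proof rule violates the tops-only property, so Fact \ref{fact:PNTrule} rules out the existence of any critical spot in $\mathcal{T}^A$, in particular any critical spot $(x,y)$ with $x, y \in A^{a \rightharpoonup b}\backslash\{a\}$ or $x, y \in A^{b \rightharpoonup a}\backslash\{b\}$. The contrapositive of statement (ii) of Proposition \ref{prop:nontopsonly} then delivers the desired containment. Because $\mathbb{D}$ satisfies diversity and is contained in $\mathbb{D}_{\textrm{H}}(\mathcal{T}^A, a, b)$, the hybrid domain itself contains two completely reversed preferences; as recorded below Definition \ref{def:h}, this is equivalent to $\mathcal{T}^A$ being a line. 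Using the labeling carried by $\underline{P}_i$ and $\overline{P}_i$, we have $\mathcal{T}^A = \mathcal{L}^A$, so $a = a_p$ and $b = a_q$ for some $1 \leq p < q \leq m$.

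Next I would check the three defining conditions of Definition \ref{def:asph}. Condition (i) is immediate from the containment together with path-connectedness of $\mathbb{D}$. For condition (ii), suppose toward contradiction that there exist a tree $\widehat{\mathcal{T}}^A$ and dual-thresholds $\hat{a}, \hat{b}$ with $\mathbb{D} \subseteq \mathbb{D}_{\textrm{H}}(\widehat{\mathcal{T}}^A, \hat{a}, \hat{b})$ and $\langle \hat{a}, \hat{b}|\widehat{\mathcal{T}}^A\rangle \subset \langle a_p, a_q|\mathcal{L}^A\rangle$. Since $\mathbb{D}_{\textrm{H}}(\widehat{\mathcal{T}}^A, \hat{a}, \hat{b}) \subseteq \mathbb{D}_{\textrm{SH}}(\widehat{\mathcal{T}}^A, \hat{a}, \hat{b})$, the same containment holds with $\mathbb{D}_{\textrm{SH}}$ in place of $\mathbb{D}_{\textrm{H}}$, violating condition (ii) of Definition \ref{def:asspsh} that $\mathbb{D}$ inherits as an $(a,b)$-semi-hybrid domain on $\mathcal{T}^A$. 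For condition (iii), if $q - p = 1$ then $a_p$ and $a_q$ are adjacent in $\mathcal{L}^A$ and $\mathbb{D}_{\textrm{H}}(\mathcal{L}^A, a_p, a_q) = \mathbb{D}_{\textrm{SP}}(\mathcal{L}^A)$, making $\mathbb{D}$ a single-peaked domain on $\mathcal{L}^A$ and, by statement (i) of the present corollary, furnishing an anonymous strategy-proof rule, which contradicts the working hypothesis.

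Finally I would establish non-degeneracy, i.e., $q - p < m - 1$. Suppose the hybrid structure is degenerate: $A^{a \rightharpoonup b} = \{a\}$ and $A^{b \rightharpoonup a} = \{b\}$, so that the free zone $\langle a_p, a_q|\mathcal{L}^A\rangle$ is all of $A$. Then statement (ii) of Theorem \ref{thm:invariance} forces every two-voter, tops-only and strategy-proof rule to behave like a dictatorship on all of $A$, hence to be a dictatorship; the Ramification Theorem propagates this conclusion to $n$-voter rules. Because $\mathbb{D}$ is a tops-only domain, every strategy-proof rule on $\mathbb{D}$ is tops-only and therefore a dictatorship, contradicting non-dictatoriality of $\mathbb{D}$. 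Hence $A^{a \rightharpoonup b} \neq \{a\}$ or $A^{b \rightharpoonup a} \neq \{b\}$, which under the line labeling becomes $p > 1$ or $q < m$, equivalently $q - p < m - 1$. The main obstacle I anticipate is the minimality step (condition (ii)): one must cleanly exploit the inclusion $\mathbb{D}_{\textrm{H}} \subseteq \mathbb{D}_{\textrm{SH}}$ to transfer minimality from the semi-hybrid structure handed over by Corollary \ref{cor:anonymity} to the finer hybrid structure; the remaining steps are essentially repackagings of earlier results.
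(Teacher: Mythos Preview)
Your proposal is correct and follows essentially the same route as the paper. One small refinement: the line $\mathcal{T}^A$ need not literally equal $\mathcal{L}^A$ on the free zone, so the paper more carefully proves that $a_1$ and $a_m$ are the leaves of $\mathcal{T}^A$, that the two outer segments agree with those of $\mathcal{L}^A$, and hence $\mathbb{D}_{\textrm{H}}(\mathcal{T}^A,a,b)=\mathbb{D}_{\textrm{H}}(\mathcal{L}^A,a_p,a_q)$; the paper also obtains $q-p>1$ directly from the semi-hybrid structure rather than via your appeal to statement~(i), and closes the non-degeneracy step with the Aswal--Chatterji--Sen ramification theorem instead of the paper's own Ramification Theorem.
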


\begin{proof}
Since $\mathbb{D} \subseteq \mathbb{D}_{\textrm{SH}}(\mathcal{T}^A, a, b)$ is a tops-only domain,
Fact \ref{fact:PNTrule} and statement (ii) of Proposition \ref{prop:nontopsonly} together imply $\mathbb{D} \subseteq \mathbb{D}_{\textrm{H}}(\mathcal{T}^A, a, b)$.
Furthermore, the presence of the completely reversed preferences $\underline{P}_i$ and $\overline{P}_i$ implies that $\mathcal{T}^A$ must be a line.
We next show that the two leaves of $\mathcal{T}^A$ are $a_1$ and $a_m$.
Suppose by contradiction that $a_1$ is not a leaf of the line $\mathcal{T}^A$.
Thus, $a_1$ has two neighbors in $\mathcal{T}^A$.
Consequently, we must have some $c \in \mathcal{N}^A(a_1)$ such that $c \neq a_m$ and $a_1 \in \langle c, a_m|\mathcal{T}^A\rangle$.
Since $a_1 \in A^{a \rightharpoonup b}$ and $a_m \in A^{b \rightharpoonup a }$, $a_1 \in \langle c, a_m|\mathcal{T}^A\rangle$ implies $c \in A^{a  \rightharpoonup b}$.
Then, by $(a,b)$-hybridness on $\mathcal{T}^A$, $r_1(\overline{P}_i) = a_m \in A^{b \rightharpoonup a }$ implies $c\mathrel{\overline{P}_i}a_1$,
which contradicts the fact that $a_1$ is bottom-ranked in $\overline{P}_i$.
Therefore, $a_1$ is not a leaf of the line $\mathcal{T}^A$.
Symmetrically, $a_m$ is the other leaf of the line $\mathcal{T}^A$.
Now, let $\langle a_1, a|\mathcal{T}^A\rangle = (x_1, \dots, x_{p})$ and
$\langle b, a_m|\mathcal{T}^A\rangle = (x_q, \dots, x_m)$,
where $p = |\langle a_1, a|\mathcal{T}^A\rangle|$,
$q-p+1 = |\langle a, b|\mathcal{T}^A\rangle|$, and
$m-q+1 = |\langle b, a_m|\mathcal{T}^A\rangle|$.
Since $\mathbb{D}$ is an $(a, b)$-semi-hybrid domain on $\mathcal{T}^A$,
we have $|\langle a_p, a_q|\mathcal{L}^A\rangle| = |\langle a, b|\mathcal{T}^A\rangle| \geq 3$ and hence $q-p>1$.
This confirms condition (iii) of Definition \ref{def:asph}.
Furthermore, according to the labeling of alternatives in $\underline{P}_i$ and $\overline{P}_i$,
we infer that $x_k = a_k$ for all $k \in \{1,\dots, p\}\cup \{q, \dots, m\}$.
Thus, $\mathcal{T}^A$ is a combination of the line $(a_1, \dots, a_p)$,
the interval $\langle a, b|\mathcal{T}^A\rangle$ and the line $(a_q, \dots, a_m)$, and
hence $\mathbb{D}_{\textrm{H}}(\mathcal{T}^A, a, b) = \mathbb{D}_{\textrm{H}}(\mathcal{L}^A, a_p, a_q)$.
Therefore, we have
$\mathbb{D} \subseteq \mathbb{D}_{\textrm{H}}(\mathcal{T}^A, a, b) = \mathbb{D}_{\textrm{H}}(\mathcal{L}^A, a_p, a_q)$, which, in conjunction with the fact that $G_{\sim}^A$ is a connected graph, confirms condition (i) of Definition \ref{def:asph}.
Next, since $\mathbb{D}$ is an $(a, b)$-semi-hybrid domain on $\mathcal{T}^A$,
by condition (ii) of Definition \ref{def:asspsh},
there exist no tree $\widehat{\mathcal{T}}^A$ and dual-thresholds $\hat{a}$ and $\hat{b}$ such that
$\mathbb{D} \subseteq \mathbb{D}_{\textrm{SH}}(\widehat{\mathcal{T}}^A, \hat{a}, \hat{b})$ and $\langle \hat{a}, \hat{b}|\widehat{\mathcal{T}}^A\rangle \subset \langle a, b|\mathcal{T}^A\rangle = \langle a_p, a_q|\mathcal{L}^A\rangle$.
This implies that there exist no tree $\widehat{\mathcal{T}}^A$ and dual-thresholds $\hat{a}$ and $\hat{b}$ such that
$\mathbb{D} \subseteq \mathbb{D}_{\textrm{H}}(\widehat{\mathcal{T}}^A, \hat{a}, \hat{b})$ and $\langle \hat{a}, \hat{b}|\widehat{\mathcal{T}}^A\rangle \subset \langle a_p, a_q|\mathcal{L}^A\rangle$.
This confirms condition (ii) of Definition \ref{def:asph}.
In conclusion, $\mathbb{D}$ is an $(a_p, a_q)$-hybrid domain on $\mathcal{L}^A$.

We last show that $\mathbb{D}$ is a non-degenerate hybrid domain, i.e., $q-p < m-1$.
Suppose by contradiction $q - p = m-1$.
Thus, $\mathbb{D}$ is an $(a_1, a_m)$-hybrid domain on $\mathcal{L}^A$, which is
a degenerate hybrid domain.
Consequently, by the proof of \textbf{Statement (ii)} of Theorem \ref{thm:invariance},
we know that every two-voter, tops-only and strategy-proof rule behaves like a dictatorship on $\langle a_1, a_m|\mathcal{L}^A\rangle = A$.
Hence, every two-voter, tops-only and strategy-proof rule is a dictatorship.
Consequently, by the tops-only-domain hypothesis, every strategy-proof rule is a dictatorship, which further by the ramification theorem of \citet{ACS2003} implies that $\mathbb{D}$ is a dictatorial domain - a contradiction.
This proves the lemma and completes the verification of the ``only if part'' of statement (ii).
\end{proof}

\section{Some Clarifications}\label{app:clarification}

\begin{clarification}\label{cla:ssp}
Fixing a tree $\mathcal{T}^A$, a threshold $\bar{x} \in A$ and the semi-single-peaked domain \\
$\mathbb{D}_{\emph{SSP}}(\mathcal{T}^A, \bar{x})$,
the following two statements hold:
\begin{itemize}
\item[\rm (i)] the adjacency graph of $\mathbb{D}_{\emph{SSP}}(\mathcal{T}^A, \bar{x})$ coincides with $\mathcal{T}^A$, and

\item[\rm (ii)]
if a domain $\mathbb{D} \subseteq \mathbb{D}_{\emph{SSP}}(\mathcal{T}^A, \bar{x})$ satisfies diversity, then $|\mathcal{N}^A(\bar{x})| \leq 2$.
Conversely, if $|\mathcal{N}^A(\bar{x})| \leq 2$, then $\mathbb{D}_{\emph{SSP}}(\mathcal{T}^A, \bar{x})$ satisfies diversity.
\end{itemize}
\end{clarification}

It is clear that any two alternatives form an edge in $\mathcal{T}^A$ are adjacent in the semi-single-peaked domain $\mathbb{D}_{\textrm{SSP}}(\mathcal{T}^A, \bar{x})$, and hence form an edge in the corresponding adjacency graph.
Therefore, $\mathcal{T}^A$ is a subgraph of the adjacency graph of the semi-single-peaked domain $\mathbb{D}_{\textrm{SSP}}(\mathcal{T}^A, \bar{x})$.
To prove statement (i), it suffices to show that any two alternatives that never form an edge in $\mathcal{T}^A$, are not adjacent.
Suppose not, i.e., there exist $a,b \in A$ such that $(a,b) \notin \mathcal{E}^A$ and $a \sim b$.
Thus, $v \geq 3$ and we have $\hat{P}_i, \tilde{P}_i \in \mathbb{D}_{\textrm{SSP}}(\mathcal{T}^A, \bar{x})$ such that $r_1(\hat{P}_i) = r_2(\tilde{P}_i) = a$ and
$r_1(\tilde{P}_i) = r_2(\hat{P}_i) = b$.
Let $\langle a, b|\mathcal{T}^A\rangle = (x_1, \dots, x_v)$ denote path in $\mathcal{T}^A$ that connects $a$ and $b$.
Moreover, let $\textrm{Proj}(\bar{x}, \langle a, b|\mathcal{T}^A\rangle) = x_{\bar{k}}$ for some $\bar{k} \in \{1, \dots, v\}$.
Note that either $1 < \bar{k} < v$ or $\bar{k} \in \{1,v\}$ holds.
If $1 < \bar{k} < v$, then $\textrm{Proj}(b, \langle a, \bar{x}|\mathcal{T}^A\rangle) = \textrm{Proj}(\bar{x}, \langle a, b|\mathcal{T}^A\rangle) = x_{\bar{k}}$ implies $x_{\bar{k}}\mathrel{P_i} b$ for all $P_i \in \mathbb{D}_{\textrm{SSP}}(\mathcal{T}^A, \bar{x})$ with $r_1(P_i) = a$.
This contradicts the presence of the aforementioned preference $\hat{P}_i$.
If $\bar{k} = 1$, then $a, x_2 \in \langle b, \bar{x}|\mathcal{T}^A\rangle$ and $x_2 \in \langle b, a|\mathcal{T}^A\rangle$,
which imply $x_{2}\mathrel{P_i} a$ for all $P_i \in \mathbb{D}_{\textrm{SSP}}(\mathcal{T}^A, \bar{x})$ with $r_1(P_i) = b$.
This contradicts the presence of the aforementioned preference $\tilde{P}_i$.
Symmetrically, if $\bar{k} = v$, we can induce a contradiction to the presence of the preference $\hat{P}_i$.
This proves statement (i).\medskip

For statement (ii), we first show that given a domain $\mathbb{D} \subseteq \mathbb{D}_{\textrm{SSP}}(\mathcal{T}^A, \bar{x})$,
if it satisfies diversity, then $|\mathcal{N}^A(\bar{x})| \leq 2$.
Let $\pi = (x_1, \dots, x_v)$ denote the path in $\mathcal{T}^A$ connecting $a_1$ and $a_m$.
We first claim $\bar{x} \in \pi$. Suppose not, i.e., $\bar{x} \notin \pi$. Then, we have $\mathop{\textrm{Proj}}(\bar{x}, \pi) = x_k$ for some $1 \leq k \leq v$. Furthermore, if $k = 1$, semi-single-peakedness on $\mathcal{T}^A$ w.r.t.~$\bar{x}$ implies
$a_1\mathrel{\overline{P}_i}\bar{x}$, which contradicts the fact that $a_1$ is bottom-ranked in $\overline{P}_i$;
if $k = v$,
we have $a_m\mathrel{\underline{P}_i}\bar{x}$ by semi-single-peakedness on $\mathcal{T}^A$ w.r.t.~$\bar{x}$, which contradicts the fact that $a_m$ is bottom-ranked in $\underline{P}_i$;
if $1< k < v$, we have $x_k\mathrel{\underline{P}_i}\bar{x}$ and $x_k\mathrel{\overline{P}_i}\bar{x}$ by
by semi-single-peakedness on $\mathcal{T}^A$ w.r.t.~$\bar{x}$, which contradicts the fact that $\underline{P}_i$ and $\overline{P}_i$ are complete reversals.
Therefore, $\bar{x} \in \pi$.
Now, we show $|\mathcal{N}^A(\bar{x})| \leq 2$.
Suppose not, i.e., $|\mathcal{N}^A(\bar{x})| \geq 3$.
Consequently, we have $x \in \mathcal{N}^A(\bar{x})$ such that
$x \notin \pi$. Clearly, $x \notin \langle a_1, \bar{x}|\mathcal{T}^A\rangle$,
$\mathop{\textrm{Proj}}(x, \langle a_1, \bar{x}|\mathcal{T}^A\rangle) = \bar{x}$,
$x \notin \langle a_m, \bar{x}|\mathcal{T}^A\rangle$ and
$\mathop{\textrm{Proj}}(x, \langle a_m, \bar{x}|\mathcal{T}^A\rangle) = \bar{x}$.
Consequently, we have
$\bar{x}\mathrel{\underline{P}_i}x$ and $\bar{x}\mathrel{\overline{P}_i}x$ by
by semi-single-peakedness on $\mathcal{T}^A$ w.r.t.~$\bar{x}$, which contradicts the fact that $\underline{P}_i$ and $\overline{P}_i$ are complete reversals.

Next, we show that if
$|\mathcal{N}^A(\bar{x})| \leq 2$, then the semi-single-peaked domain $\mathbb{D}_{\textrm{SSP}}(\mathcal{T}^A, \bar{x})$ satisfies diversity.
Since $|\mathcal{N}^A(\bar{x})| \leq 2$,
we have a path $(x_1, \dots, x_v)$ in $\mathcal{T}^A$ such that
(i) $\bar{x} = x_{\bar{k}}$ for some $1 \leq \bar{k} \leq v$,
(ii) $x_1, x_v \in \textrm{Leaf}(\mathcal{T}^A)$ and
(iii) $[1< \bar{k} <v] \Rightarrow [\mathcal{N}^A(\bar{x}) = \{x_{\bar{k}-1}, x_{\bar{k}+1}\}]$.
First, let $\bar{k} = 1$.
Since $\bar{x} = x_1 \in \textrm{Leaf}(\mathcal{T}^A)$,
we by definition have a preference $P_i \in \mathbb{D}_{\textrm{SSP}}(\mathcal{T}^A, x_1)$ such that $r_1(P_i) = x_v$ and $r_m(P_i) = x_1$.
Next, we construct a linear order $P_i'$ that is a complete reversal of $P_i$.
Clearly, $P_i' \in \mathbb{D}_{\textrm{SSP}}(\mathcal{T}^A, x_1)$.
Symmetrically, if $\bar{k} = v$, we also have two completely reversed preferences in $\mathbb{D}_{\textrm{SSP}}(\mathcal{T}^A, \bar{x})$.
Last, let $1< \bar{k} <v$. Thus, $\mathcal{N}^A(\bar{x}) = \{x_{\bar{k}-1}, x_{\bar{k}+1}\}$.
It is natural that $x_{\bar{k}-1}$ and $x_{\bar{k}}$ are dual-thresholds in $\mathcal{T}^A$, and
$x_{\bar{k}}$ and $x_{\bar{k}+1}$ are dual-thresholds in $\mathcal{T}^A$.
Then, we identify the two subsets $A^{x_{\bar{k}-1} \rightharpoonup x_{\bar{k}}}$ and $A^{x_{\bar{k}+1} \rightharpoonup x_{\bar{k}}}$.
Note that $x_{\bar{k}} \notin A^{x_{\bar{k}-1} \rightharpoonup x_{\bar{k}}}$, $x_{\bar{k}} \notin A^{x_{\bar{k}+1} \rightharpoonup x_{\bar{k}}}$,
$A^{x_{\bar{k}-1} \rightharpoonup x_{\bar{k}}}\cap A^{x_{\bar{k}+1} \rightharpoonup x_{\bar{k}}} = \emptyset$ and
$A^{x_{\bar{k}-1} \rightharpoonup x_{\bar{k}}}\cup \{x_{\bar{k}}\}\cup A^{x_{\bar{k}+1} \rightharpoonup x_{\bar{k}}} = A$.
Now, pick arbitrary $P_i, P_i' \in \mathbb{D}_{\textrm{SSP}}(\mathcal{T}^A, x_{\bar{k}})$ such that $r_1(P_i) = x_1$ and $r_1(P_i') = x_v$.
Then, we construct two linear orders: $\hat{P}_i$ and $\hat{P}_i'$ over $A$ such that
(i) for all $x \in A^{x_{\bar{k}-1} \rightharpoonup x_{\bar{k}}}$ and $y \in A^{x_{\bar{k}+1} \rightharpoonup x_{\bar{k}}}$,
$x\mathrel{\hat{P}_i} x_{\bar{k}}$, $x_{\bar{k}}\mathrel{\hat{P}_i}y$,
$y\mathrel{\hat{P}_i'} x_{\bar{k}}$ and $x_{\bar{k}}\mathrel{\hat{P}_i'}x$,
(ii) $\hat{P}_i$ and $P_i$ agree on the relative rankings over $A^{x_{\bar{k}-1} \rightharpoonup x_{\bar{k}}}$, i.e., for all $x, y \in A^{x_{\bar{k}-1} \rightharpoonup x_{\bar{k}}}$, $[x\mathrel{\hat{P}_i}y]\Leftrightarrow [x\mathrel{P_i}y]$, and
$\hat{P}_i$ and $P_i'$ completely disagree on the relative rankings over $A^{x_{\bar{k}+1} \rightharpoonup x_{\bar{k}}}$, i.e., for all $x, y \in A^{x_{\bar{k}+1} \rightharpoonup x_{\bar{k}}}$, $[x\mathrel{\hat{P}_i}y]\Leftrightarrow [y\mathrel{P_i'}x]$, and
(iii) $\hat{P}_i'$ and $P_i'$ agree on the relative rankings over $A^{x_{\bar{k}+1} \rightharpoonup x_{\bar{k}}}$, i.e., for all  $x, y \in A^{x_{\bar{k}+1} \rightharpoonup x_{\bar{k}}}$, $[x\mathrel{\hat{P}_i'}y]\Leftrightarrow [x\mathrel{P_i'}y]$, and
    $\hat{P}_i'$ and $P_i$ completely disagree on the relative rankings over $A^{x_{\bar{k}-1} \rightharpoonup x_{\bar{k}}}$, i.e., for all $x, y \in A^{x_{\bar{k}-1} \rightharpoonup x_{\bar{k}}}$, $[x\mathrel{\hat{P}_i'}y]\Leftrightarrow [y\mathrel{P_i}x]$.
It is easy to show that $\hat{P}_i$ and $\hat{P}_i'$ are complete reversals, and both are semi-single-peaked on $\mathcal{T}^A$ w.r.t. $x_{\bar{k}}$.
This proves statement (ii).

\begin{clarification}\label{cla:sh}
Fixing a tree $\mathcal{T}^A$, dual-thresholds $a,b \in A$ and the semi-hybrid domain\\
$\mathbb{D}_{\emph{SH}}(\mathcal{T}^A, a, b)$,
the following two statements hold:
\begin{itemize}
\item[\rm (i)] the adjacency graph of $\mathbb{D}_{\emph{SH}}(\mathcal{T}^A, a,b)$
is a combination of the adjacency subgraph $G_{\sim}^{A^{a \rightharpoonup b}}$, which coincides with the subtree $\mathcal{T}^{A^{a \rightharpoonup b}}$, the adjacency subgraph $G_{\sim}^{\langle a, b|\mathcal{T}^A\rangle}$,
which is a complete subgraph, and
the adjacency subgraph $G_{\sim}^{A^{b \rightharpoonup a}}$, which coincides with $\mathcal{T}^{A^{b \rightharpoonup a}}$, and

\item[\rm (ii)]
if a domain $\mathbb{D} \subseteq \mathbb{D}_{\emph{SH}}(\mathcal{T}^A, a,b)$ satisfies diversity, then we have
$[A^{a \rightharpoonup b} \neq \{a\}] \Rightarrow [a \in \emph{Leaf}(\mathcal{T}^{A^{a \rightharpoonup b}})]$ and
$[A^{b \rightharpoonup a} \neq \{b\}] \Rightarrow [b \in \emph{Leaf}(\mathcal{T}^{A^{b \rightharpoonup a}})]$.
Conversely, if we have
$[A^{a \rightharpoonup b} \neq \{a\}] \Rightarrow [a \in \emph{Leaf}(\mathcal{T}^{A^{a \rightharpoonup b}})]$ and
$[A^{b \rightharpoonup a} \neq \{b\}] \Rightarrow [b \in \emph{Leaf}(\mathcal{T}^{A^{b \rightharpoonup a}})]$,
the semi-hybrid domain $\mathbb{D}_{\emph{SH}}(\mathcal{T}^A, a, b)$ satisfies diversity.
\end{itemize}
\end{clarification}

Similar to the verification of statement (i) of Clarification \ref{cla:ssp},
we know that the adjacency subgraph $G_{\sim}^{A^{a \rightharpoonup b}}$ coincides with the subtree $\mathcal{T}^{A^{a \rightharpoonup b}}$, and
the adjacency subgraph $G_{\sim}^{A^{b \rightharpoonup a}}$ coincides with the subtree $\mathcal{T}^{A^{b \rightharpoonup a}}$.
Furthermore, since any two distinct alternatives in the set $\langle a, b|\mathcal{T}^A\rangle$ are adjacent in the the semi-hybrid domain $\mathbb{D}_{\textrm{SH}}(\mathcal{T}^A, a, b)$, the adjacency subgraph $G_{\sim}^{\langle a, b|\mathcal{T}^A\rangle}$ is a complete subgraph.
To complete the verification, we show that the adjacency graph of the semi-hybrid domain $\mathbb{D}_{\textrm{SH}}(\mathcal{T}^A, a, b)$ is simply a combination of the subtree $G_{\sim}^{A^{a \rightharpoonup b}}=\mathcal{T}^{A^{a \rightharpoonup b}}$,
the complete subgraph $G_{\sim}^{\langle a, b|\mathcal{T}^A\rangle}$ and the subtree $G_{\sim}^{A^{b \rightharpoonup a}}=\mathcal{T}^{A^{b \rightharpoonup a}}$.
Since all these three subgraphs are nested in the adjacency graph of $\mathbb{D}_{\textrm{SH}}(\mathcal{T}^A, a, b)$,
it suffices to show that for all $x,y \in A$ with $x \sim y$,
either $(x,y) \in \mathcal{E}_{\sim}^{A^{a \rightharpoonup b}}$, or $(x,y) \in \mathcal{E}_{\sim}^{\langle a, b|\mathcal{T}^{A}\rangle}$, or $(x,y) \in \mathcal{E}_{\sim}^{A^{b \rightharpoonup a}}$ holds.
Given distinct $x,y\in A$ with $x \sim y$, suppose by contradiction that $(x,y) \notin \mathcal{E}_{\sim}^{A^{a \rightharpoonup b}} \cup \mathcal{E}_{\sim}^{\langle a, b|\mathcal{T}^{A}\rangle} \cup \mathcal{E}_{\sim}^{A^{b \rightharpoonup a}}$.
Since $x \sim y$, we have a preference $P_i^{\ast} \in \mathbb{D}_{\textrm{SH}}(\mathcal{T}^A, a, b)$ such that
$r_1(P_i^{\ast}) = x$ and $r_2(P_i^{\ast}) = y$.
We know that one of the following five cases must occur:
(1) $x \in A^{a \rightharpoonup b}\backslash \{a\}$,
(2) $x \in A^{b \rightharpoonup a}\backslash \{b\}$,
(3) $x \in \langle a, b|\mathcal{T}^A\rangle \backslash \{a,b\}$,
(4) $x = a$, or
(5) $x = b$.
In case (1), by the contradictory hypothesis, it must be true that $y \notin A^{a\rightharpoonup b}$.\footnote{Otherwise, $y \in A^{a\rightharpoonup b}$. Thus, $x, y \in A^{a\rightharpoonup b}$. Then, $x \sim y$ implies $(x, y) \in \mathcal{E}_{\sim}^{A^{a \rightharpoonup b}}$, which contrasts the contradictory hypothesis.}
Then, by $(a,b)$-semi-hybridness on $\mathcal{T}^A$,
we have $a\mathrel{P_i}y$ for all $P_i \in \mathbb{D}_{\textrm{SH}}(\mathcal{T}^A, a, b)$ with $r_1(P_i) = x$,
which contradicts the presence of the aforementioned preference $P_i^{\ast}$.
Similarly, in case (2), we can show $b\mathrel{P_i}y$ for all $P_i \in \mathbb{D}_{\textrm{SH}}(\mathcal{T}^A, a, b)$ with $r_1(P_i) = x$,
which contradicts the presence of the aforementioned preference $P_i^{\ast}$ as well.
In case (3), by the contradictory hypothesis, it must be true that $y \notin \langle a,b|\mathcal{T}^A\rangle$,\footnote{Otherwise,
$y \in \langle a,b|\mathcal{T}^A\rangle$. Thus, $x, y \in \langle a,b|\mathcal{T}^A\rangle$. Then, $x \sim y$ implies $(x, y) \in \mathcal{E}_{\sim}^{\langle a, b|\mathcal{T}^A\rangle}$, which contrasts the contradictory hypothesis.} which further implies
$y \in A^{a \rightharpoonup b}\backslash \{a\}$ or $y \in A^{b \rightharpoonup a}\backslash \{b\}$.
Then, by $(a,b)$-semi-hybridness on $\mathcal{T}^A$,
we know that
if $y \in A^{a \rightharpoonup b}\backslash \{a\}$, then $a\mathrel{P_i}y$ for all $P_i \in \mathbb{D}_{\textrm{SH}}(\mathcal{T}^A, a, b)$ with $r_1(P_i) = x\big]$;
if $y \in A^{b \rightharpoonup a}\backslash \{b\}$, then $b\mathrel{P_i}y$ for all $P_i \in \mathbb{D}_{\textrm{SH}}(\mathcal{T}^A, a, b)$ with $r_1(P_i) = x\big]$.
This contradicts the presence of the aforementioned preference $P_i^{\ast}$.
In case (4), by the contradictory hypothesis, it must be true that $y \in A^{b \rightharpoonup a}\backslash \{b\}$.\footnote{Otherwise, $y \in A^{a \rightharpoonup b}$ or $y \in \langle a, b|\mathcal{T}^A\rangle$.
Then, $x \sim y$ implies either $(x, y) \in \mathcal{E}_{\sim}^{A^{a \rightharpoonup b}}$ or $(x, y) \in \mathcal{E}_{\sim}^{\langle a, b|\mathcal{T}^A\rangle}$, which contrasts the contradictory hypothesis.}
Then, by $(a,b)$-semi-hybridness on $\mathcal{T}^A$,
we have $b\mathrel{P_i}y$ for all $P_i \in \mathbb{D}_{\textrm{SH}}(\mathcal{T}^A, a, b)$ with $r_1(P_i) = x$.
This contradicts the presence of the aforementioned preference $P_i^{\ast}$.
Similarly, in case (5), we can show $a\mathrel{P_i}y$ for all $P_i \in \mathbb{D}_{\textrm{SH}}(\mathcal{T}^A, a, b)$ with $r_1(P_i) = x$,
which contradicts the presence of the preference $P_i^{\ast}$ as well.
This proves statement (i).
\medskip

For statement (ii),
we first notice that if $|\langle a, b|\mathcal{T}^A\rangle| = 2$, $\mathbb{D}_{\textrm{SH}}(\mathcal{T}^A, a, b) = \mathbb{D}_{\textrm{SSP}}(\mathcal{T}^A, a) \cap \mathbb{D}_{\textrm{SH}}(\mathcal{T}^A, b)$, and then statement (ii) of Clarification \ref{cla:ssp} implies the result.
Henceforth, we assume $|\langle a, b|\mathcal{T}^A\rangle| \geq 3$.
We first show that given a domain $\mathbb{D} \subseteq \mathbb{D}_{\textrm{SH}}(\mathcal{T}^A, a, b)$,
if $\mathbb{D}$ satisfies diversity, then we have $[A^{a \rightharpoonup b} \neq \{a\}] \Rightarrow [a \in \textrm{Leaf}(\mathcal{T}^{A^{a \rightharpoonup b}})]$ and
$[A^{b \rightharpoonup a} \neq \{b\}] \Rightarrow [b \in \textrm{Leaf}(\mathcal{T}^{A^{b \rightharpoonup a}})]$.
The verifications for both cases $A^{a \rightharpoonup b} \neq \{a\}$ and
$A^{b \rightharpoonup a} \neq \{b\}$ are symmetric. We hence assume w.l.o.g.~that $A^{a \rightharpoonup b} \neq \{a\}$.
By diversity, we have $\underline{P}_i, \overline{P}_i \in \mathbb{D}$.
Furthermore, since $A^{a \rightharpoonup b} \neq \{a\}$,
$(a,b)$-semi-hybridness on $\mathcal{T}^A$ implies either $a_1 \in A^{a\rightharpoonup b}\backslash \{a\}$ and $a_m \in A\backslash A^{a \rightharpoonup b}$,
or $a_1 \in A\backslash A^{a \rightharpoonup b}$ and $a_m \in A^{a\rightharpoonup b}\backslash \{a\}$.
We can assume w.l.o.g.~that $a_1 \in A^{a\rightharpoonup b}\backslash \{a\}$ and $a_m \in \langle a, b|\mathcal{T}^A\rangle\cup A^{b \rightharpoonup a}$.
Suppose that $a \notin \textrm{Leaf}(\mathcal{T}^{A^{a \rightharpoonup b}})$.
Thus, we have distinct $x, y\in \mathcal{N}^{A^{a \rightharpoonup b}}(a)$.
Note that at least one of $\{x, y\}$ is not located in $\langle a_1, a|\mathcal{T}^{A^{a \rightharpoonup b}}\rangle$.
We assume w.l.o.g.~$x \notin \langle a_1, a|\mathcal{T}^{A^{a \rightharpoonup b}}\rangle$.
Consequently, given $a_1 \in A^{a\rightharpoonup b}\backslash \{a\}$ and $a_m \in A\backslash A^{a \rightharpoonup b}$,
by $(a,b)$-semi-hybridness on $\mathcal{T}^A$, we have
$a\mathrel{\underline{P}_i}x$ and $a\mathrel{\overline{P}_i}x$,
which contradict the fact that $\underline{P}_i$ and $\overline{P}_i$ are complete reversals.

Last, we show that if we have
$[A^{a \rightharpoonup b} \neq \{a\}] \Rightarrow [a \in \textrm{Leaf}(\mathcal{T}^{A^{a \rightharpoonup b}})]$ and
$[A^{b \rightharpoonup a} \neq \{b\}] \Rightarrow [b \in \textrm{Leaf}(\mathcal{T}^{A^{b \rightharpoonup a}})]$,
then the semi-hybrid domain $\mathbb{D}_{\textrm{SH}}(\mathcal{T}^A, a, b)$ satisfies diversity.
If $A^{a \rightharpoonup b} = \{a\}$ and $A^{b \rightharpoonup a} = \{b\}$,
then $\mathbb{D}_{\textrm{SH}}(\mathcal{T}^A, a, b) = \mathbb{P}$ includes two completely reversed preference.
Next, let $A^{a \rightharpoonup b} = \{a\}$ and $A^{b \rightharpoonup a} \neq \{b\}$.
It is evident that $b$ has a unique neighbor in the line $\langle a, b|\mathcal{T}^A\rangle$, and
the hypothesis implies that $b$ has a unique neighbor in the subtree $\mathcal{T}^{A^{b \rightharpoonup a}}$.
Since $\mathcal{T}^A$ is a union of the line $\langle a, b|\mathcal{T}^A\rangle$ and the subtree $\mathcal{T}^{A^{b \rightharpoonup a}}$,
we have $|\mathcal{N}^A(b)| = 2$.
Then, Clarification 2 implies that $\mathbb{D}_{\textrm{SSP}}(\mathcal{T}^A, b)$ contains two completely reversed preferences.
Moreover, since $A^{a \rightharpoonup b} = \{a\}$, it is true that $\mathbb{D}_{\textrm{SSP}}(\mathcal{T}^A, b) \subset \mathbb{D}_{\textrm{SH}}(\mathcal{T}^A, a, b)$,
which implies that $\mathbb{D}_{\textrm{SH}}(\mathcal{T}^A, a, b)$ satisfies diversity.
Symmetrically, if $A^{a \rightharpoonup b} \neq \{a\}$ and $A^{b \rightharpoonup a} = \{b\}$,
$\mathbb{D}_{\textrm{SH}}(\mathcal{T}^A, a, b)$ satisfies diveristy.
Last, we consider the situation $A^{a \rightharpoonup b} \neq \{a\}$ and $A^{b \rightharpoonup a} \neq \{b\}$.
Thus, the hypothesis implies $a \in \textrm{Leaf}(\mathcal{T}^{A^{a \rightharpoonup b}})$ and $b \in \textrm{Leaf}(\mathcal{T}^{A^{b \rightharpoonup a}})$.
Let $\bar{a}$ be the unique neighbor of $a$ in $\mathcal{T}^{A^{a \rightharpoonup b}}$, and
$\bar{b}$ be the unique neighbor of $b$ in $\mathcal{T}^{A^{b \rightharpoonup a}}$.
Note that $\mathbb{D}_{\textrm{SP}}(\mathcal{T}^A) \subset \mathbb{D}_{\textrm{SH}}(\mathcal{T}^A, a, b)$.
We fix arbitrary $x,y \in \textrm{Leaf}(\mathcal{T}^A)$ such that $x \in A^{a \rightharpoonup b}$ and $y \in A^{b \rightharpoonup a}$.
Clearly, $x \notin \{a,b\}$ and $y \notin \{a,b\}$.
According to $\mathbb{D}_{\textrm{SP}}(\mathcal{T}^A)$, we fix two single-peaked preferences $P_i$ and $P_i'$ such that
$r_1(P_i) = x$ and $r_1(P_i') = y$.
Clearly, $P_i$ and $P_i'$ completely disagree on the relative rankings over $\langle a, b|\mathcal{T}^A\rangle$, i.e.,
for all $z, z' \in \langle a, b|\mathcal{T}^A\rangle$, $[z\mathrel{P_i}z'] \Leftrightarrow [z'\mathrel{P_i'}z]$.
Now, we construct two linear orders $\hat{P}_i$ and $\hat{P}_i'$ over $A$ satisfying the following three conditions:
(i) for all $z \in A^{a \rightharpoonup b}\backslash \{a\}$, $z' \in \langle a, b|\mathcal{T}^A\rangle$ and $z'' \in A^{b\rightharpoonup a}\backslash \{b\}$,
$z\mathrel{\hat{P}_i}z'$, $z'\mathrel{\hat{P}_i}z''$, $z''\mathrel{\hat{P}_i'}z'$ and $z'\mathrel{\hat{P}_i'}z$,
(ii) $\hat{P}_i$ and $P_i$ agree on the relative rankings over $A^{a \rightharpoonup b}\cup \langle a, b|\mathcal{T}^A\rangle$, and
$\hat{P}_i$ and $P_i'$ completely disagree on the relative rankings over $A^{b \rightharpoonup a}\backslash \{b\}$, and
(iii) $\hat{P}_i'$ and $P_i'$ agree on the relative rankings over $\langle a, b|\mathcal{T}^A\rangle\cup A^{b \rightharpoonup a}$, and
$\hat{P}_i'$ and $P_i$ completely disagree on the relative rankings over $A^{a \rightharpoonup b}\backslash \{a\}$.
By construction, it is easy to show that $\hat{P}_i$ and $\hat{P}_i'$ are complete reversals and $(a, b)$-semi-hybrid on $\mathcal{T}^A$.
This proves statement (ii).

\begin{clarification}\label{cla:hybridrule}
Given a tree $\mathcal{T}^A$ and dual-thresholds $a,b\in A$,
an $(a,b)$-hybrid rule on $\mathcal{T}^A$ is a tops-only and strategy-proof rule on a domain
$\mathbb{D} \subseteq \mathbb{D}_{\emph{SH}}(\mathcal{T}^A, a, b)$.
\end{clarification}

Fixing a voter $i \in N$, we consider the following $(a,b)$-hybrid rule on $\mathcal{T}^A$:
for all $P \in \mathbb{D}^n$,
\begin{align*}
f(P) = \left\{
\begin{array}{ll}
r_1(P_i) & \textrm{if}\; r_1(P_i) \in \langle a, b|\mathcal{T}^A\rangle,\\
\mathop{\textrm{Proj}}\big(a, \mathcal{T}^{\Gamma(P)}\big) & \textrm{if}\; r_1(P_i) \in A^{a \rightharpoonup b}\backslash \{a\},\; \textrm{and}\\
\mathop{\textrm{Proj}}\big(b, \mathcal{T}^{\Gamma(P)}\big) & \textrm{if}\; r_1(P_i) \in A^{b \rightharpoonup a}\backslash \{b\}.
\end{array}
\right.
\end{align*}
By definition, it is clear that the hybrid rule $f$ satisfies unanimity and the tops-only property.
We hence focus on showing its strategy-proofness.

First, we consider the possible manipulation of a voter other than $i$, say $j \in N\backslash \{i\}$.
Given two profiles $P=(P_j, P_{-j})$ and $P'=(P_j', P_{-j})$,
let $f(P) \neq f(P')$. We show $f(P)\mathrel{P_j}f(P')$.
Since $f(P) \neq f(P')$, by the definition of $f$,
it must be true that either $r_1(P_i) \in A^{a \rightharpoonup b}\backslash \{a\}$ or $r_1(P_i) \in A^{b \rightharpoonup a}\backslash \{b\}$ holds.
The two situations are symmetric, and we hence focus on verifying the first one.
Let $r_1(P_i) \in A^{a \rightharpoonup b}\backslash \{a\}$.
Thus, by definition, $f(P) = \textrm{Proj}\big(a, \mathcal{T}^{\Gamma(P)}\big)$ and $f(P') = \textrm{Proj}\big(a, \mathcal{T}^{\Gamma(P')}\big)$.
We consider two cases: $r_1(P_j) \in \langle a, b|\mathcal{T}^A\rangle \cup A^{b \rightharpoonup a}$ and
$r_1(P_j) \in A^{a \rightharpoonup b}\backslash \{a\}$.
In the first case, we know that $a \in \Gamma(P)$ and hence $f(P)=a$, and moreover
$(a, b)$-semi-hybridness on $\mathcal{T}^A$ implies $\max^{P_j}(A^{a \rightharpoonup b}) =a$.
Meanwhile, since $f(P') = \mathop{\textrm{Proj}}\big(a, \mathcal{T}^{\Gamma(P')}\big) \in \langle r_1(P_i), a|\mathcal{T}^A\rangle \subseteq
A^{a \rightharpoonup b}$ and $f(P) \neq f(P')$,
it must be the case that $f(P)\mathrel{P_j} f(P')$, as required.
In the second case, $P_2$ is semi-single-peaked on $\mathcal{T}^{A}$ w.r.t.~$a$.
Then, by the proof of the sufficiency part of the Theorem of \citet{CSS2013},
we immediately have $f(P)\mathrel{P_j} f(P')$, as required.
In conclusion, voter $j \in N\backslash \{i\}$ has no incentive to manipulate.

Next, we consider the possible manipulation of voter $i$.
Given two profiles $P=(P_i,  P_{-i})$ and $P'=(P_i', P_{-i})$, let $f(P) \neq f(P')$.
We show $f(P)\mathrel{P_i}f(P')$.
Note that if $r_1(P_i) \in \langle a, b|\mathcal{T}^A\rangle$, then $f(P) = r_1(P_i)$, which immediately implies $f(P)\mathrel{P_i}f(P')$.
Henceforth, we assume that either $r_1(P_i) \in A^{a \rightharpoonup b}\backslash \{a\}$ or $r_1(P_i) \in A^{b \rightharpoonup a}\backslash \{b\}$ holds.
There are six cases to consider:\\
(1) $r_1(P_i) \in A^{a \rightharpoonup b}\backslash \{a\}$ and $r_1(P_i') \in A^{a \rightharpoonup b}\backslash \{a\}$,~~~~
(2) $r_1(P_i) \in A^{b \rightharpoonup a}\backslash \{b\}$ and $r_1(P_i') \in A^{b \rightharpoonup a}\backslash \{b\}$,\\
(3) $r_1(P_i) \in A^{a \rightharpoonup b}\backslash \{a\}$ and $r_1(P_i') \in A^{b \rightharpoonup a}\backslash \{b\}$,~~~~\,
(4) $r_1(P_i) \in A^{b \rightharpoonup a}\backslash \{b\}$ and $r_1(P_i') \in A^{a \rightharpoonup b}\backslash \{a\}$,\\
(5) $r_1(P_i) \in A^{a \rightharpoonup b}\backslash \{a\}$ and $r_1(P_i') \in \langle a, b|\mathcal{T}^A\rangle$, and
(6) $r_1(P_i) \in A^{b \rightharpoonup a}\backslash \{b\}$ and $r_1(P_i') \in \langle a, b|\mathcal{T}^A\rangle$.\\
The first two cases are symmetric, and we hence focus on the verification of case (1).
In case (1), $P_i$ is semi-single-peaked on $\mathcal{T}^A$ w.r.t.~$a$,
$f(P) = \mathop{\textrm{Proj}}\big(a, \mathcal{T}^{\Gamma(P)}\big)$ and $f(P') = \mathop{\textrm{Proj}}\big(a, \mathcal{T}^{\Gamma(P')}\big)$.
Then, by the proof of the sufficiency part of the Theorem of \citet{CSS2013},
we immediately have $f(P)\mathrel{P_i} f(P')$, as required.
Cases (3) and (4) are symmetric, and we hence focus on the verification of case (3).
In case (3), we know $f(P) = \mathop{\textrm{Proj}}\big(a, \mathcal{T}^{\Gamma(P)}\big) \in \langle r_1(P_i), a|\mathcal{T}^A\rangle$ and
$f(P') = \mathop{\textrm{Proj}}\big(b, \mathcal{T}^{\Gamma(P')}\big) \in \langle r_1(P_i'), b|\mathcal{T}^A\rangle \subseteq A^{b \rightharpoonup a}$.
Moreover, since $r_1(P_i) \in A^{a \rightharpoonup b}\backslash \{a\}$,
by $(a,b)$-semi-hybridness on $\mathcal{T}^A$,
we know $P_i$ is semi-single-peaked on $\mathcal{T}^A$ w.r.t.~$a$ which implies $\min^{P_i}(\langle r_1(P_i), a|\mathcal{T}^A\rangle) = a$ and $a\mathrel{P_i} b$, and $\max^{P_i}(A^{b \rightharpoonup a}) = b$.
Therefore, by transitivity of $P_i$, it must be the case that $f(P)\mathrel{P_i}f(P')$, as required.
Cases (5) and (6) are symmetric, and we hence focus on the verification of case (5).
In case (5), we know $f(P) = \mathop{\textrm{Proj}}\big(a, \mathcal{T}^{\Gamma(P)}\big) \in \langle r_1(P_i), a|\mathcal{T}^A\rangle$ and
$f(P') = r_1(P_i') \in \langle a,b|\mathcal{T}^A\rangle$.
Moreover, since $r_1(P_i) \in A^{a \rightharpoonup b}\backslash \{a\}$,
we know $P_i$ is semi-single-peaked on $\mathcal{T}^A$ w.r.t.~$a$ which implies $\min^{P_i}(\langle r_1(P_i), a|\mathcal{T}^A\rangle) = a$ and  and $\max^{P_i}(\langle a, b|\mathcal{T}^A\rangle) = a$.
Therefore, by transitivity of $P_i$, it must be the case that $f(P)\mathrel{P_i}f(P')$, as required.
In conclusion, voter $i$ has no incentive to manipulate.
Therefore, the hybrid rule $f$ is strategy-proof.

\begin{clarification}\label{cla:leafsymmetry}
Fixing a tree $\mathcal{T}^A$ and a threshold $\bar{x} \in A$,
a semi-single-peaked domain $\mathbb{D}$ on $\mathcal{T}^A$ w.r.t.~$\bar{x}$
satisfies leaf symmetry if and only if either $\bar{x} \notin \emph{Leaf}(\mathcal{T}^A)$,
or $\bar{x} \in \emph{Leaf}(\mathcal{T}^A)$ and $\mathbb{D}\subseteq \mathbb{D}_{\emph{SSP}}(\mathcal{T}^A, \bar{x})\cap \mathbb{D}_{\emph{SSP}}(\mathcal{T}^A, x)$, where $\mathcal{N}^A(\bar{x}) = \{x\}$.
\end{clarification}

First, if $\bar{x} \notin \textrm{Leaf}(\mathcal{T}^A)$,
semi-single-peakedness on $\mathcal{T}^A$ w.r.t.~$\bar{x}$ immediately implies $|\mathcal{S}(\mathbb{D}^z)|=1$ for all $z \in \textrm{Leaf}(\mathcal{T}^A) = \textrm{Leaf}(G_{\sim}^A)$.
Similarly, given $\bar{x} \in \textrm{Leaf}(\mathcal{T}^A)$ and $\mathbb{D}\subseteq \mathbb{D}_{\textrm{SSP}}(\mathcal{T}^A, \bar{x})\cap \mathbb{D}_{\textrm{SSP}}(\mathcal{T}^A, x)$,
where $\mathcal{N}^A(\bar{x}) = \{x\}$,
we know that $x \notin \textrm{Leaf}(\mathcal{T}^A)$ and then
semi-single-peakedness on $\mathcal{T}^A$ w.r.t.~$x$ immediately implies $|\mathcal{S}(\mathbb{D}^z)|=1$ for all $z \in \textrm{Leaf}(\mathcal{T}^A)=\textrm{Leaf}(G_{\sim}^A)$.
Conversely, let $\mathbb{D}$ satisfies leaf symmetry. We show that either $\bar{x} \notin \textrm{Leaf}(\mathcal{T}^A)$,
or $\bar{x} \in \textrm{Leaf}(\mathcal{T}^A)$ and $\mathbb{D}\subseteq \mathbb{D}_{\textrm{SSP}}(\mathcal{T}^A, \bar{x})\cap \mathbb{D}_{\textrm{SSP}}(\mathcal{T}^A, x)$, where $\mathcal{N}^A(\bar{x}) = \{x\}$.
Suppose not, i.e., $\bar{x} \in \textrm{Leaf}(\mathcal{T}^A)$ and $\mathbb{D}\nsubseteq \mathbb{D}_{\textrm{SSP}}(\mathcal{T}^A, x)$.
Since $\bar{x} \in \textrm{Leaf}(\mathcal{T}^A)$ and $\mathcal{N}^A(\bar{x}) = \{x\}$,
we know that for all $z \in A\backslash \{\bar{x}\}$,
$x \in \langle z, \bar{x}|\mathcal{T}^A\rangle$.
Then, by semi-single-peakedness on $\mathcal{T}^A$ w.r.t.~$\bar{x}$,
we know that all preferences with the peak other than $\bar{x}$ are also semi-single-peaked on $\mathcal{T}^A$ w.r.t.~$x$.
Consequently, $\mathbb{D}\nsubseteq \mathbb{D}_{\textrm{SSP}}(\mathcal{T}^A, x)$ implies that
some $P_i^{\ast} \in \mathbb{D}^{\bar{x}}$ is not semi-single-peaked on $\mathcal{T}^A$ w.r.t.~$x$.
Note that given $\bar{x} \in \textrm{Leaf}(\mathcal{T}^A)$ and $\mathcal{N}^A(\bar{x}) = \{x\}$,
in any preference with the peak $\bar{x}$ that is semi-single-peaked on $\mathcal{T}^A$ w.r.t.~$x$, the alternative $x$ must be the second best. Therefore, we must have $r_2(P_i^{\ast}) \neq x$.
Furthermore, since $\mathcal{T}^A = G_{\sim}^A$, $\bar{x} \in \textrm{Leaf}(\mathcal{T}^A) = \textrm{Leaf}(G_{\sim}^A)$ and $\mathcal{N}^A(\bar{x}) = \{x\}$, we know
$(\bar{x}, x) \in \mathcal{E}_{\sim}^A$.
Then, by leaf symmetry, we have a preference $P_i' \in \mathbb{D}$ such that
$r_1(P_i') \notin \{\bar{x}, x\}$ and $r_2(P_i') = \bar{x}$, which contradicts semi-single-peakedness on $\mathcal{T}^A$ w.r.t.~$\bar{x}$.

\begin{clarification}\label{cla:example}
An example of a degenerate semi-hybrid domain.
\end{clarification}

\begin{table}[t]
\centering
\begin{tabular}{ccccccccc}
$P_1$ & $P_2$ & $P_3$ & $P_4$ & $P_5$ & $P_6$ & $P_7$ & $P_8$ & $P_9$ \\
$a_1$ & $a_1$ & $a_1$ & $a_2$ & $a_2$ & $a_3$ & $a_3$ & $a_3$ & $a_4$ \\[-0.2em]
$a_2$ & $a_2$ & $a_3$ & $a_1$ & $a_3$ & $a_1$ & $a_2$ & $a_4$ & $a_3$ \\[-0.2em]
$a_3$ & $a_4$ & $a_2$ & $a_3$ & $a_1$ & $a_3$ & $a_1$ & $a_2$ & $a_2$ \\[-0.2em]
$a_4$ & $a_3$ & $a_4$ & $a_4$ & $a_4$ & $a_4$ & $a_4$ & $a_1$ & $a_1$
\end{tabular}
\caption{Domain $\mathbb{D}$}\label{tab:gap}
\end{table}

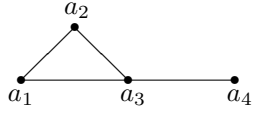
\begin{figure}[t]
\hspace{1.2em}
\begin{tikzpicture}
        \put(175, -30){\circle*{3}}
        \put(195, -10){\circle*{3}}
        \put(215, -30){\circle*{3}}
        \put(255, -30){\circle*{3}}
		\put(170, -38){\footnotesize{$a_1$}}
        \put(191, -5){\footnotesize{$a_2$}}
		\put(212, -38){\footnotesize{$a_3$}}
		\put(252, -38){\footnotesize{$a_4$}}
        \put(195, -10){\line(-1,-1){20}}
        \put(195, -10){\line(1,-1){20}}
        \put(175,-30){\line(1,0){80}}
\end{tikzpicture}
\vspace{2.5em}
\caption{The adjacency graph $G_{\sim}^A$}\label{fig:gap}
\end{figure}

Let $A = \{a_1, a_2, a_3, a_4\}$. All 9 preferences of the domain $\mathbb{D}$ and the adjacency graph $G_{\sim}^A$ are respectively specified in Table \ref{tab:gap} and Figure \ref{fig:gap}.
First, domain $\mathbb{D}$ satisfies path-connectedness according to Figure \ref{fig:gap},
diversity according to preferences $P_1$ and $P_9$ in Table \ref{tab:gap}, and
leaf symmetry, i.e., $\textrm{Leaf}(G_{\sim}^A) = \{a_4\}$ and $|\mathcal{S}(\mathbb{D}^{a_4})| = 1$.
Second, since $|\mathcal{S}(\mathbb{D}^{a_4})| = 1$, the unique seconds property is also satisfied.
Last, we observe that $G_{\sim}^A$ contains a cycle, coincides with the adjacency graph of the semi-hybrid domain $\mathbb{D}_{\textrm{SH}}(\mathcal{L}^A, a_1, a_3)$, and is strictly included in the adjacency graph of $\mathbb{D}_{\textrm{SH}}(\mathcal{L}^A, a_1, a_4)$ which is a complete graph. Moreover,
since $\mathbb{D} \subset \mathbb{P} = \mathbb{D}_{\textrm{SH}}(\mathcal{L}^A, a_1, a_4)$ and
$P_2$ in Table \ref{tab:gap} is not $(a_1, a_3)$-semi-hybrid on $\mathcal{L}^A$, we infer that $\mathbb{D}$ is an $(a_1,a_4)$-semi-hybrid domain on $\mathcal{L}^A$, and hence a degenerate semi-hybrid domain.

\begin{clarification}\label{cla:RS2019}
The relation to \citet{RS2019}.
\end{clarification}

First, recall the notion of weak adjacency introduced in footnote \ref{footnote:weakadjacency}.
A rich domain $\mathbb{D}$ of \citet{RS2019} satisfies the following three conditions:
\begin{enumerate}
\item domain $\mathbb{D}$ is weakly path-connected,

\item $[a \in \mathcal{S}(\mathbb{D}^b)] \Leftrightarrow [b \in \mathcal{S}(\mathbb{D}^a)]$ for all $a,b \in A$, and

\item given $a, b, c \in A$ with $a \nsim b$ and $b\nsim c$, there exist $P_i \in \mathbb{D}^a$ and $P_i' \in \mathbb{D}^c$ such that
given $d \in A$ with $d\mathrel{P_i}c$ and $d\mathrel{P_i'}a$, we have $d\mathrel{P_i''} c$ or $d\mathrel{P_i''}a$ at some $P_i'' \in \mathbb{D}^b$.
\end{enumerate}

It is clear that path-connectedness implies the first condition.
Next, if we strengthen the notion of weak adjacency in the hypothesis of the third condition by the notion of adjacency,
it is easy to identify preferences in the domain satisfying the requirement of the third condition.
Given $a, b, c \in A$ such that $a \sim b$ and $b \sim c$,
we have preferences $P_i, \hat{P}_i, P_i', \hat{P}_i' \in \mathbb{D}$ such that
$r_1(P_i) = r_2(\hat{P}_i) = a$, $r_1(\hat{P}_i) = r_2(P_i) = b$ and $r_k(P_i) = r_k(\hat{P}_i)$ for all $k \in \{3, \dots, m\}$, and
$r_1(P_i') = r_2(\hat{P}_i') = c$, $r_1(\hat{P}_i') = r_2(P_i') = b$ and $r_k(P_i') = r_k(\hat{P}_i')$ for all $k \in \{3, \dots, m\}$.
Thus, we have $P_i \in \mathbb{D}^a$, $P_i'\in \mathbb{D}^c$ and $\hat{P}_i, \hat{P}_i' \in \mathbb{D}^b$.
Furthermore, given $d \in A$ with $d\mathrel{P_i}c$ and $d\mathrel{P_i'}a$, we have $d\mathrel{\hat{P}_i}c$ and $d\mathrel{\hat{P}_i'}a$, either of which meets the requirement of the third condition.

Compared to leaf symmetry, the second condition is too demanding.
For instance, recall the semi-single-peaked domain $\mathbb{D}_{\textrm{SSP}}(\mathcal{L}^A, a_{3})$,
where $|A| =m \geq 4$.
Clearly, a preference with the peak $a_{3}$ has no restriction on the rankings of other alternatives.
Therefore, any alternative other than $a_{3}$ can be second ranked in a semi-single-peaked preference with the peak $a_{3}$, i.e.,
$\mathcal{S}\big(\mathbb{D}_{\textrm{SSP}}^{a_{3}}(\mathcal{L}^A, a_{3})\big) = \{a_1, a_2, a_4, \dots, a_m\}$, and hence $\big|\mathcal{S}\big(\mathbb{D}_{\textrm{SSP}}^{a_{3}}(\mathcal{L}^A, a_{3})\big)\big|>1$.
Meanwhile, by semi-single-peakedness on $\mathcal{L}^A$ w.r.t.~$a_{3}$, we know $a_2\mathrel{P_i}a_{3}$ for all preference $P_i \in \mathbb{D}_{\textrm{SSP}}(\mathcal{L}^A, a_{3})$ with $r_1(P_i) = a_1$,
which implies $a_{3} \notin \mathcal{S}\big(\mathbb{D}_{\textrm{SSP}}^{a_1}(\mathcal{L}^A, a_{3})\big)$.
Therefore, the second condition above is violated here.
This however does not contrast leaf symmetry since $a_{3}$ is not a leaf in the adjacency graph $G_{\sim}^A = \mathcal{L}^A$.

\begin{clarification}\label{cla:violatepath-connectedness}
Verify strategy-proofness of the SCF $f$ constructed in Example \ref{exm:violatepath-connectedness}.
\end{clarification}

Fix a profile $(P_i, P_j)$.
We first verify all possible manipulations of voter $i$.
First, let $r_1(P_j) \in A\backslash B$, Then, by construction, we know $f(P_i', P_j) =r_1(P_j)$ for all $P_i' \in \mathbb{D}_2$.
Hence, no manipulation of voter $i$ is profitable.
Next, let $r_1(P_j) \in B$. Then, by construction, we know $f(P_i', P_j) \in B$ for all $P_i' \in \mathbb{D}_2$.
Hence, $f(P_i, P_j) = \max^{P_i}(B)$ implies that no manipulation of voter $i$ is profitable.
Therefore, voter $i$ has no incentive to manipulate.
We next verify all possible manipulation of voter $j$.
First, let $r_1(P_j) \in A\backslash B$. Then, $f(P_i, P_j) =r_1(P_j)$ implies that no manipulation of voter $j$ is profitable.
Next, let $r_1(P_j) \in B$. Thus, $f(P_i, P_j) = \max^{P_i}(B) \in B$.
Consider a possible manipulation $P_j' \in \mathbb{D}_2$.
If $r_1(P_j') \in B$, we have $f(P_i, P_j') = \max^{P_i}(B)=f(P_i, P_j)$, which implies that the manipulation is not profitable.
If $r_1(P_j') \in A\backslash B$, we have $f(P_i, P_j') = r_1(P_i') \in A\backslash B$.
Since $r_1(P_j) \in B$, we know that $P_j$ must be one of the first six preferences in Table \ref{tab:violatepath-connectedness}, and hence
voter $j$ prefers each alternative of $B$ to each alternative of $A\backslash B$.
Hence, we have $f(P_i, P_j)\mathrel{P_j}f(P_i, P_j')$, as required.
Therefore, voter $j$ has no incentive to manipulate.
In conclusion, $f$ is strategy-proof.

\begin{clarification}\label{cla:strategy-proofness}
Verify strategy-proofness of the SCF $f$ constructed in Example \ref{exm:indispensability}.
\end{clarification}

Since $f$ is anonymous, it suffices to focus on all possible manipulations of voter $i$.
Given two distinct profiles $(P_i, P_j)$ and $(P_i', P_j)$,
it is clear that if $f(P_i, P_j) = r_1(P_i)$ or $f(P_i, P_j) = f(P_i', P_j)$,
voter $i$ has no incentive to manipulate at $(P_i, P_j)$ via $P_i'$.
We henceforth assume $f(P_i, P_j) \neq r_1(P_i)$ and $f(P_i, P_j) \neq f(P_i', P_j)$.
There are four cases:
(1) $f(P_i, P_j) = d \neq r_1(P_i)$ and $f(P_i', P_j) \neq d$,
(2) $f(P_i, P_j) = a \neq r_1(P_i)$ and $f(P_i', P_j) \neq a$,
(3) $f(P_i, P_j) = c \neq r_1(P_i)$ and $f(P_i', P_j) \neq c$, and
(4) $f(P_i, P_j) = b \neq r_1(P_i)$ and $f(P_i', P_j) \neq b$.
Note that the first three cases are symmetric. We hence focus on the verification of cases (1) and (4).

First, we consider case (1).
By construction, $f(P_i, P_j) = d$ implies one of the following subcases:
(i) $P_i = P_7$ and $P_j \in \mathbb{D}_3^d$, or (ii) $P_i \in \mathbb{D}_3^d$ and $P_j = P_7$,
or (iii) $f(P_i, P_j) =\textrm{Proj}\big(b, \langle r_1(P_i), r_1(P_j)|\mathcal{T}^A\rangle\big)$.
Immediately, we rule out subcase (ii) since $r_1(P_i) \neq d$, and also rule out subcase (iii) since
$d = f(P_i, P_j) =\textrm{Proj}\big(b, \langle r_1(P_i), r_1(P_j)|\mathcal{T}^A\rangle\big)$ implies $r_1(P_i) = r_1(P_j) = d$,
which contrasts the hypothesis $r_1(P_i) \neq d$.
Therefore, subcase (i) holds.
Thus, we have $f(P_i, P_j) = d$, $f(P_i', P_j) \neq d$, $P_i = P_7$ and $P_j \in \mathbb{D}_3^d$.
By the construction of $f$, we observe that given $P_j \in \mathbb{D}_3^d$,
there exists no $P_i' \in \mathbb{D}_3$ such that $f(P_i', P_j) = a$.
Therefore, $r_1(P_i) = r_1(P_7)=a \neq f(P_i', P_j)$, $r_2(P_i) = r_2(P_7)=d = f(P_i, P_j)$ and $f(P_i, P_j)\neq f(P_i', P_j)$ imply $f(P_i, P_j)\mathrel{P_i}f(P_i', P_j)$.
In conclusion, in case (1), voter $i$ has no incentive to manipulate at $(P_i, P_j)$ via $P_i'$.

Last, we consider case (4).
Since $r_1(P_i) \neq b$, one of the following three situations must occur: $P_i \in \mathbb{D}_3^a$, or $P_i \in \mathbb{D}_3^c$,
or $P_i \in \mathbb{D}_3^a$.
These three situations are symmetric, and we hence assume w.l.o.g.~that $P_i \in \mathbb{D}_3^a=\{P_1, P_7\}$.
Since $f(P_i, P_j) = b$, the construction of $f$ implies $b = f(P_i, P_j) = \textrm{Proj}\big(b,\langle r_1(P_i),r_1(P_j)|\mathcal{T}^A\rangle\big) = \textrm{Proj}\big(b,\langle a,r_1(P_j)|\mathcal{T}^A\rangle\big)$,
which further implies $r_1(P_j) \neq a$.
Furthermore, if $r_1(P_j) = b$, it is clear that $f(P_i', P_j) =
\textrm{Proj}\big(b,\langle r_1(P_i),b|\mathcal{T}^A\rangle\big)=b$ by construction, which contradicts the hypothesis $f(P_i', P_j) \neq b$.
Therefore, $P_j \in \mathbb{D}_3^c \cup \mathbb{D}_3^d = \{P_5, P_8, P_6, P_9\}$.
Moreover, since $f(P_i, P_j) = b \neq a$ and $P_i \in \mathbb{D}_3^a$, the construction of $f$ further implies $P_j \neq P_8$.
Therefore, $P_j \in \{P_5, P_6, P_9\}$.
Now, according to $P_i \in \{P_1, P_7\}$, we consider two subcases: (i) $P_i = P_1$ and (ii) $P_i = P_7$.
In subcase (i), suppose by contradiction that $f(P_i', P_j)\mathrel{P_i}f(P_i, P_j)$.
This, by $f(P_i, P_j) = b$ and $P_i = P_1$, implies $f(P_i', P_j) = a$.
Since $P_j \notin \mathbb{D}_3^a$ and $P_j \neq P_8$,
by the construction of $f$, we must have
$a = f(P_i', P_j) = \textrm{Proj}\big(b, \langle r_1(P_i'), r_1(P_j)|\mathcal{T}^A\rangle \big)
=\textrm{Proj}\big(r_1(P_i'), \langle b, r_1(P_j)|\mathcal{T}^A\rangle \big) \in \langle b, r_1(P_j)|\mathcal{T}^A\rangle$
This implies $r_1(P_j) = a$, which contradicts the fact $P_j \notin \mathbb{D}_3^a$.
In subcase (ii), since $P_i = P_7$ and $f(P_i, P_j) = b \neq d$, the construction of $f$ implies $P_j \notin \mathbb{D}_3^d = \{P_6, P_9\}$.
Hence, we have $P_j = P_5 \in \mathbb{D}_3^c$.
Thus, by the construction of $f$, we know that either $P_i' = P_9$ and $f(P_i', P_j) = c$,
or $f(P_i', P_j) = \textrm{Proj}\big(b, \langle r_1(P_i'), r_1(P_j)|\mathcal{T}^A\rangle\big)
= \textrm{Proj}\big(b, \langle r_1(P_i'), c|\mathcal{T}^A\rangle\big) = \textrm{Proj}\big(r_1(P_i'), \langle b, c|\mathcal{T}^A\rangle\big)
\in \langle b, c|\mathcal{T}^A\rangle = \{b,c\}$ holds.
Furthermore, since $f(P_i', P_j) \neq b$, we have $f(P_i', P_j) = c$.
Consequently, according to $P_i = P_7$, we have $f(P_i,P_j)\mathrel{P_i}f(P_i',P_j)$.
Overall, in both subcases, voter $i$ has no incentive to manipulate.
In conclusion, $f$ is strategy-proof.

\section{A brief extension to multidimensional models}\label{app:extension}

In this section,
we provide an application of Theorem \ref{thm:invariance} to a multidimensional setting.
Let a finite alternative set $A$ be decomposed as a Cartesian product, i.e.,
$A = \times_{s \in M}A^s$, where $M = \{1, \dots, \ell\}$, $\ell>1$, and
$|A^s| \geq 2$ for each $s \in M$.
Thus, an alternative is uniquely assembled by $\ell$ elements of $A^1, \dots, A^\ell$, i.e.,
$x = (x^1, \dots, x^{\ell}) = (x^s, x^{-s})$.
Given $a,b\in A$, let $M(a,b)=\{s \in M: a^s \neq b^s\}$ denote the set of components on which $a$ and $b$ disagree. Thus, two alternatives $a,b\in A$ are said \textbf{similar} if $|M(a,b)| = 1$.
Fix a domain $\mathbb{D}$ of preferences over $A$.
Given a preference $P_i \in \mathbb{D}$ and a nonempty subset $B \subset A$, let $P_i|_B$ denote the preference over $B$ induced via removing all alternatives out of $B$ in $P_i$.
Given $s \in M$ and $a^{-s} \in A^{-s}$, let $(A^s, a^{-s}) = \{(a^s, a^{-s}) \in A: a^s \in A^s\}$ and $\mathbb{D}^{(A^s, a^{-s})} = \{P_i \in \mathbb{D}: r_1(P_i) \in (A^s, a^{-s})\}$.
In particular, a preference $P_i \in \mathbb{D}$ is \textbf{separable} if there exists a \emph{marginal preference} (or a linear order) $P_i^s$ over $A^s$ for each $s \in M$ such that for all similar alternatives $a,b \in A$, say $M(a,b) = \{s\}$, we have $[a^s\mathrel{P_i^s}b^s] \Rightarrow [a\mathrel{P_i}b]$.

As noted earlier, the domain of all separable preferences is excluded from unidimensional domains due to the violation of adjacency. We introduce a new notion between two alternatives which generalizes the notion of adjacency to cover domains that contain separable preferences.
Fixing a domain $\mathbb{D}$,
two alternatives $a,b \in A$ are \textbf{adjacent\textsuperscript{+}}, denoted $a \sim^{+} b$, if they are similar, say $M(a,b) = \{s\}$, and there exist two separable preferences $P_i \in \mathbb{D}^a$ and $P_i' \in \mathbb{D}^b$ satisfying the following two conditions:
\begin{itemize}
\item[\rm (i)] given $x^{-s} \in A^{-s}$, $(a^s, x^{-s}) = r_k(P_i) = r_{k+1}(P_i')$ and
$(b^s, x^{-s}) = r_k(P_i') = r_{k+1}(P_i)$ for some $1 \leq k < |A|$, and

\item[\rm (ii)] given $c \in A$, $\big[c^s \notin \{a^s, b^s\}\big]\Rightarrow \big[c = r_k(P_i) = r_k(P_i')\;
\textrm{for some}\; 2< k \leq |A|\big]$.
\end{itemize}
Accordingly, given $s \in M$ and $a^{-s} \in A^{-s}$, we induce a graph $G_{\sim^+}^{(A^s, a^{-s})}$, where the vertex set is $(A^s, a^{-s})$,
and two vertices of $(A^s, a^{-s})$ form an edge if and only if they are adjacent\textsuperscript{+}.
A domain $\mathbb{D}$ satisfies \textbf{path-connectedness\textsuperscript{$+$}} if
$G_{\sim^+}^{(A^s,\, a^{-s})}$ is a connected graph for each $s \in M$ and $a^{-s}\in A^{-s}$.
Moreover, a domain $\mathbb{D}$ satisfies \textbf{diversity\textsuperscript{$+$}} if there exists a pair of separable preferences $\underline{P}_i, \overline{P}_i \in \mathbb{D}$ that are complete reversals, and furthermore for each $s \in M$, there exist $a^{-s} \in A^{-s}$ and $P_i, P_i' \in \mathbb{D}^{(A^s, a^{-s})}$ such that $P_i|_{(A^s, a^{-s})}$ and $P_i'|_{(A^s, a^{-s})}$ are complete reversals.
Henceforth, a domain $\mathbb{D}$ is called a \textbf{multidimensional domain} if
it satisfies both path-connectedness\textsuperscript{$+$} and diversity\textsuperscript{$+$}.\footnote{The notion of adjacency\textsuperscript{$+$} was originally introduced by \citet{CZ2019}. The multidimensional domains introduced by \citet{CZ2019} are restricted domains:
all preferences are required to be top-separable \citep[introduced by][]{LW1999}.
Moreover, their investigation focuses on a subclass of multidimensional domains where
all preferences are well organized in the manner that is analogous to no-restoration.
The class of multidimensional domains introduced here is significantly less demanding,
as it is only concerned with the richness of separable preferences included in the domain,
and consequentially the universal domain is covered as a special case.}

Now fix a multidimensional domain $\mathbb{D}$ and a two-voter, tops-only and strategy-proof rule $f: \mathbb{D}^2 \rightarrow A$. As a first step towards characterizing such a rule,
we fix an arbitrary component $s \in M$.
By diversity\textsuperscript{$+$}, we have $a^{-s} \in A^{-s}$ and preferences $P_i, P_i' \in \mathbb{D}^{(A^s, a^{-s})}$ such that $P_i|_{(A^s, a^{-s})}$ and $P_i'|_{(A^s, a^{-s})}$ are complete reversals.
We then restrict attention to $f$ on the subdomain $\mathbb{D}^{(A^s, a^{-s})}$.

Let $\mathbb{D}|_{(A^s, a^{-s})} = \big\{P_i|_{(A^s, a^{-s})}: P_i \in \mathbb{D}^{(A^s, a^{-s})}\big\}$.
Note that $\mathbb{D}|_{(A^s, a^{-s})}$ satisfies path-connectedness and diversity.\footnote{Given distinct $a^s, b^s \in A^s$, let $P_i \in \mathbb{D}^{(a^s, a^{-s})}$ and $P_i' \in \mathbb{D}^{(b^s, a^{-s})}$ be two separable preferences that indicate $(a^s, a^{-s}) \sim^+ (b^s, a^{-s})$.
Then, the induced preferences $P_i|_{(A^s, a^{-s})}$ and $P_i'|_{(A^s, a^{-s})}$ satisfy the following conditions: (i) $r_1\big(P_i|_{(A^s, a^{-s})}\big) = r_2\big(P_i'|_{(A^s, a^{-s})}\big) = (a^s, a^{-s})$ and $r_1\big(P_i'|_{(A^s, a^{-s})}\big) = r_2\big(P_i|_{(A^s, za^{-s})}\big) = (b^s, a^{-s})$, and
(ii) $r_k\big(P_i|_{(A^s, a^{-s})}\big) = r_k\big(P_i'|_{(A^s, a^{-s})}\big)$ for all $k \in \{3, \dots, |A^s|\}$.
This indicates that in the induced domain $\mathbb{D}|_{(A^s, a^{-s})}$, alternatives $(a^s, a^{-s})$ and $(b^s, a^{-s})$ are adjacent.
Consequently, the adjacency graph $G_{\sim}^{(A^s, a^{-s})}$ induced according to $\mathbb{D}|_{(A^{s}, a^{-s})}$ must contain the connected graph $G_{\sim^+}^{(A^s, a^{-s})}$.
Therefore, $\mathbb{D}|_{(A^{s}, a^{-s})}$ is a path-connected domain.}
Also, note that given $P_1, P_2, P_1', P_2' \in \mathbb{D}^{(A^s, a^{-s})}$,
if $P_1|_{(A^s, a^{-s})} = P_1'|_{(A^s, a^{-s})}$ and $P_2|_{(A^s, a^{-s})} = P_2'|_{(A^s, a^{-s})}$,
then $r_1(P_1) = r_1(P_1')$, $r_1(P_2) = r_1(P_2')$, and hence $f(P_1,P_2) = f(P_1', P_2')$ by the tops-only property.
Similar to statement (i) of Lemma \ref{lem:transitivity},
we can easily show $f(P_1, P_2) \in (A^s, a^{-s})$ for all $P_1, P_2 \in \mathbb{D}^{(A^s, a^{-s})}$.
Then, we can induce a tops-only and strategy-proof rule $g: \big[\mathbb{D}|_{(A^s, a^{-s})}\big]^2 \rightarrow (A^s, a^{-s})$ such that
\begin{align*}
g\big(P_1|_{(A^s, a^{-s})}, P_2|_{(A^s, a^{-s})}\big) =f(P_1, P_2)\;
\textrm{for all} \; P_1, P_2 \in \mathbb{D}^{(A^s, a^{-s})}.
\end{align*}

According to the two completely reversed preferences in $\mathbb{D}|_{(A^s, a^{-s})}$,
we know that $g$ is either invariant or not.
If $g$ is invariant, then by \textbf{Statement (i)} of Theorem \ref{thm:invariance},
$g$ is a projection rule, and hence $f$ \textbf{behaves like a projection rule on $(A^s, a^{-s})$}),
i.e., there exist a tree $\mathcal{T}^{(A^s, a^{-s})}$ and an alternative $(\hat{x}^s, a^{-s}) \in (A^s, a^{-s})$ such that for all $a, b \in (A^s, a^{-s})$,
\begin{align*}
f(a, b) = g(a, b)
=  \textrm{Proj}\big((\hat{x}^s, a^{-s}), \langle a, b|\mathcal{T}^{(A^s, a^{-s})}\rangle \big).
\end{align*}
If $g$ is not invariant, then by the proof of \textbf{Statement (ii)} of Theorem \ref{thm:invariance},
$g$ is a hybrid rule, and hence $f$ \textbf{behaves like a hybrid rule on $(A^s, a^{-s})$}),
i.e., there exist a tree $\mathcal{T}^{(A^s, a^{-s})}$ and dual-thresholds $(\underline{x}^s, a^{-s})$ and $(\overline{x}^s, a^{-s})$ in $\mathcal{T}^{(A^s, a^{-s})}$,
which induce the following three exhaustive subsets of $(A^s, a^{-s})$:
\begin{align*}
\underline{(A^s, a^{-s})}
= & \big\{a \in (A^s, a^{-s}): (\underline{x}^s, a^{-s}) \in \langle a, (\overline{x}^s, a^{-s})|\mathcal{T}^{(A^s, a^{-s})}\rangle\big\},\\
\widehat{(A^s, a^{-s})} = & \big\{a \in (A^s, a^{-s}): a \in \langle (\underline{x}^s, a^{-s}), (\overline{x}^s, a^{-s}) |\mathcal{T}^{(A^s, a^{-s})}\rangle\big\},\; \textrm{and}\\
\overline{(A^s, a^{-s})}
= & \big\{a \in (A^s, a^{-s}): (\overline{x}^s, a^{-s}) \in \langle a, (\underline{x}^s, a^{-s})|\mathcal{T}^{(A^s, a^{-s})}\rangle\big\},
\end{align*}
such that one of the following two cases must occur:
(i) for all $a,b \in (A^s, a^{-s})$,
\begin{align*}
f(a, b )
= g(a, b)
=
\left\{
\begin{array}{ll}
a & \textrm{if}\; a \in \widehat{(A^s, a^{-s})},\\[0.4em]
\textrm{Proj}\big((\underline{x}^s, a^{-s}), \langle a, b|\mathcal{T}^{(A^s, a^{-s})}\rangle\big) & \textrm{if}\; a \in \underline{(A^s, a^{-s})}\big\backslash \{(\underline{x}^s, a^{-s})\},\; \textrm{and}\\[0.4em]
\textrm{Proj}\big((\overline{x}^s, a^{-s}), \langle a, b|\mathcal{T}^{(A^s, a^{-s})}\rangle\big) & \textrm{if}\; a \in \overline{(A^s, a^{-s})}\big\backslash \{(\overline{x}^s, a^{-s})\},
\end{array}
\right.
\end{align*}
or (ii) for all $a,b \in (A^s, a^{-s})$,
\begin{align*}
f(a, b )
= g(a, b)
=
\left\{
\begin{array}{ll}
b & \textrm{if}\; b \in \widehat{(A^s, a^{-s})},\\[0.4em]
\textrm{Proj}\big((\underline{x}^s, a^{-s}), \langle a, b|\mathcal{T}^{(A^s, a^{-s})}\rangle\big) & \textrm{if}\; b \in \underline{(A^s, a^{-s})}\big\backslash \{(\underline{x}^s, a^{-s})\},\;\textrm{and}\\[0.4em]
\textrm{Proj}\big((\overline{x}^s, a^{-s}), \langle a, b|\mathcal{T}^{(A^s, a^{-s})}\rangle\big) & \textrm{if}\; b \in \overline{(A^s, a^{-s})}\big\backslash \{(\overline{x}^s, a^{-s})\}.
\end{array}
\right.
\end{align*}

The characterization of $g$ provided above constitutes an essential ingredient of a full characterization of $f$.
We leave the task of fully characterizing a rule on a multidimensional domain and extracting the domain implications of the strategy-proofness of such a rule for future work.\footnote{All our analysis so far is based on strategy-proof SCFs which are only concerned with unilateral deviations at a preference profile. The investigation on multidimensional domains can be enriched when we expand to SCFs that are immune to group deviations. It is known from \citet{BBM2016} that on a multidimensional domain, the number of voters and dimensions affect the elicitation of preference restriction when the SCF is required to satisfy some stronger notion of incentive compatibility that is related to group deviations, like group-strategy-proofness or immunity to credible deviations.}

}
\end{document}